%% file: main.tex
\documentclass[12pt]{article}
\usepackage[utf8]{inputenc}

\usepackage{svgcolor}
\usepackage{csquotes}
\def\showauthornotes{1}
\ifnum\showauthornotes=1

\newcommand{\jnote}[1]{\textcolor{blue}{ {\textbf{(Jeff: #1)}}}}
\newcommand{\anote}[1]{\textcolor{cyan}{ {\textbf{(Aaron: #1)}}}}
\else

\newcommand{\jnote}[1]{}
\newcommand{\anote}[1]{}
\fi

\input{math_commands.tex}

\usepackage[camera]{dtrt}
\usepackage{algorithm}
\usepackage{algorithmic}
\usepackage{amsmath}
\DeclareMathOperator{\arccosh}{arccosh}

\usepackage{comment}

\newcommand{\chebyerror}{\mathsf{ChebyshevCorrection}}

\allowdisplaybreaks

\begin{document}
\title{Sharp \Lovasz-Theta Bounds on Random Graphs}
\author{
Aaron Potechin 
\thanks{{University of Chicago}. \textit{potechin@uchicago.edu}. Supported by NSF grant CCF-2611927.}
\and
Jeff Xu
\thanks{{Toyota Technological Institute at Chicago}.\textit{jeffxusichao@ttic.edu}}
}
\maketitle
\input{theta-function/abstract}

\clearpage
\newpage
\thispagestyle{empty}
\setcounter{page}{0}

\thispagestyle{empty}
\setcounter{page}{0}
\thispagestyle{empty}
\setcounter{page}{0}
\tableofcontents
\clearpage \newpage

\input{theta-function/intro.tex}

\input{theta-function/overview.tex}
\input{theta-function/alt_norm}
\input{cheby_shape.tex}

\clearpage
\newpage

\bibliography{bib}
\bibliographystyle{alpha}
\clearpage
\newpage
\appendix
\crefalias{section}{appendix}

\input{appendix.tex}

\input{def-norm.tex}
\input{polyapprox.tex}
\input{trunc-convergence.tex}
\input{previous_attempt.tex}
\input{lovaszequivalence.tex}
\end{document}

%% file: math_commands.tex
\def \N {\mathbb{N}}

\def \E {\mathbb{E}}

\def \eps {\epsilon}
\def \al {\alpha}

\newcommand{\Var}{\mathrm{Var}}

 \def\1{\bm{1}}

\newcommand{\pE}{\widetilde{\E}}

\usepackage{amsmath,amssymb,amsthm,amsfonts,latexsym,bm,bbm,xspace,graphicx,float,mathtools,mathdots,physics}
\usepackage{braket,caption,subcaption,ellipsis,xcolor,textcomp,hhline,pifont,combelow,booktabs}
\usepackage[colorlinks=true, allcolors=blue]{hyperref}
\usepackage{color}
\usepackage{times}
\usepackage{fullpage}
\usepackage{tikz-cd}
\usepackage[shortlabels]{enumitem}
\usepackage{thm-restate}
\usepackage{cleveref}
\usepackage{mdframed}

\usepackage{mathrsfs}

\newtheorem{theorem}{Theorem}[section]
\newtheorem{lemma}[theorem]{Lemma}
\newtheorem{claim}[theorem]{Claim}
\newtheorem{proposition}[theorem]{Proposition}

\newtheorem{corollary}[theorem]{Corollary}

\newtheorem{definition}[theorem]{Definition}
\newtheorem{remark}[theorem]{Remark}

\newtheorem{observation}[theorem]{Observation}
\newtheorem{example}[theorem]{Example}

\newcommand{\poly}{\operatorname{poly}}

\DeclareMathAlphabet{\mathsfit}{\encodingdefault}{\sfdefault}{m}{sl}
\SetMathAlphabet{\mathsfit}{bold}{\encodingdefault}{\sfdefault}{bx}{n}

\mathchardef\mhyphen="2D
\newcommand{\val}{\text{val}}
\newcommand{\calP}{\mathcal{P}}

\newcommand{\calB}{\mathcal{B}}

\newcommand{\mul}{\text{mul}}

\newcommand{\gam}{\gamma}

\newcommand{\cm}{\mathsf{M} }

\newcommand{\fp}{\mathsf{P}}

\newcommand{\cmp}{\widetilde{\cm}}

\newcommand{\Erdos}{Erd\H{o}s\xspace}
\newcommand{\Renyi}{R\'enyi\xspace}
\newcommand{\Lovasz}{Lov\'asz\xspace}

%% file: theta-function/abstract.tex
It is well known that the \Lovasz-Theta function of a random graph $G(n,\tfrac{1}{2})$ is $\Theta(\sqrt{n})$. More precisely, it is tightly concentrated in the interval
\(
[\sqrt{n},\, 2\sqrt{n}],
\)
where the upper bound follows from an explicit dual witness for the associated semidefinite program. Numerical evidence and heuristic arguments suggest that the true value is $(1+o(1))\sqrt{n}$. However, closing this gap has remained a longstanding challenge, resisting existing techniques even in light of recent progress on sharp algorithmic thresholds and non-asymptotic free probability. In this work, we resolve this question by proving that the \Lovasz-Theta function of $G(n,\tfrac{1}{2})$ is $(1+o_n(1))\sqrt{n}$ with high probability, determining its asymptotic value up to vanishing relative error.

To prove this result, we design a novel and highly intricate iterative process to construct an explicit solution for the SDP in terms of \emph{graph matrices}---a family of structured random matrices whose entries are polynomials of the entries of some underlying input. Our analysis boils down to understanding the precise behavior of a family of graph matrices which we call backbone-correction matrices. We show the following:
\begin{enumerate}
\item With high probability, each normalized backbone-correction matrix $\cmp_{\tau}$ behaves similarly to a Wigner matrix except that it may not be symmetric. In particular, with high probability, each such matrix has norm $2+o_n(1)$ and the empirical distribution of its singular values approaches $\frac{\sqrt{4 - x^2}}{\pi}1_{x \in [0,2]}$ (i.e., the positive part of a semicircle distribution) as $n \to \infty$.
\item Backbone-correction matrices satisfy a form of free independence. In particular, a linear combination $\sum_{\tau}{c(\tau) \cdot \cmp_{\tau}}$ of normalized backbone-correction matrices 
behaves similarly to $\sqrt{\sum_{\tau}{c(\tau)^2}}$ times a Wigner matrix.
\end{enumerate}

\paragraph{AI Disclosure}
This is the full version of the work that will appear in FOCS 2026. All ideas are our own. In the early stage of this attempt (Jan. 2026), J.X. used ChatGPT and Cursor to empirically test the spectrum of several random matrices, and later on to help polish the writing for both the conference submission and the final version.

%% file: theta-function/intro.tex
\section{Introduction}

A central theme in theoretical computer science is to understand the power and limitations of efficient algorithms for combinatorial optimization problems. A dominant approach proceeds via \emph{convex relaxations}, especially semidefinite programs (SDPs). Among the SDP relaxations within this paradigm,
the \emph{\Lovasz-Theta function\footnote{While this is the degree $2$ sum of squares relaxation for independent set rather than the standard definition of the \Lovasz-Theta function, we confirm they are equivalent in \cref{sec:lovaszequivalence}.}} $\vartheta(G)$ is perhaps the most canonical example. 

\begin{definition}[\Lovasz-Theta SDP (Primal)] \label{def:primal-sdp}
For a graph $G$ on $n$ vertices,
\begin{align*}
\vartheta(G) \coloneqq \max \sum_{i\in[n]} \langle v_i, v_0\rangle
\quad &\text{s.t.} \quad
\langle v_i,v_i\rangle=\langle v_i,v_0\rangle\ \forall i,\;
\langle v_0,v_0\rangle=1,\\
& \;\langle v_i,v_j\rangle=0\ \forall \{i,j\} \in E.
\end{align*}
\end{definition}

Introduced by Lovász~\cite{lovasz1979shannon}, the Lovász-Theta function has emerged as a unifying object across multiple areas over the years. In information theory, it upper bounds the Shannon capacity of a graph~\cite{lovasz1979shannon}; in approximation algorithms, it serves as a benchmark for the strength of SDP relaxations; in quantum information, it captures entanglement-assisted parameters~\cite{duan2013zeroerror}; and in spectral graph theory, it interpolates between eigenvalue-based bounds and global combinatorial constraints~\cite{knuth1994sandwich}. From this perspective, $\vartheta(G)$ is not just an SDP relaxation, but a canonical lens for understanding the interplay between continuous methods and discrete optimization. We refer the reader to~\cite{lovasz1979shannon} and the scribe notes~\cite{gupta_witmer2011theta} for other equivalent formulations and a comprehensive introduction to the Lovász-Theta function.

A natural and fundamental question, especially from the perspective of average-case complexity, is: \emph{how sharp is this relaxation on typical instances?}
In this work, we are primarily interested in understanding the \Lovasz-Theta function for random graphs, particularly the \Erdos-\Renyi model $G(n,1/2)$. Over more than four decades, a sequence of results has progressively sharpened our understanding: the asymptotic order $\Theta(\sqrt{n})$ was established by~\cite{juhasz1982theta}, and later refined by~\cite{Coj05}, who showed that $\vartheta(G)$ lies in the interval
\(
\left[\tfrac{1}{2}\sqrt{n},\, 2\sqrt{n}\right],
\)
i.e., within a constant factor of $4$. A related, though somewhat orthogonal, line of work by~\cite{bhaskara_theta} established strong concentration of $\vartheta(G)$ in $G(n,p)$, albeit without identifying the precise location of its expectation.

In the sparse regime, recent advances in random matrix theory and statistical inference—most notably the success of non-backtracking operators \cite{BLM} for sparse random graphs of constant average degree—have led to substantial progress. In particular,~\cite{BKM19, banks2019vectorcoloringsrandomramanujan} identify the sharp threshold for random graphs of bounded average degree. However, their techniques crucially rely on the sparsity of the graph sample and do not translate to the dense regime.

Thus, despite this long line of work, a seemingly basic question remains open:

\begin{center}
What is the \emph{exact leading constant} of $\vartheta(G)$ for $G(n,1/2)$?
\end{center}
A simple argument based on self-complementarity already narrows the bounds to within a factor of two:
\[
\sqrt{n} \;\le\; \vartheta(G) \;\le\; 2\sqrt{n}.
\]
Pinning down the correct constant has proved surprisingly elusive—even in light of numerical evidence suggesting that $\sqrt{n}$ is the correct threshold. This question has recently attracted renewed attention~\cite{lovasz_circulant_blog}, with~\cite{feige2025upperboundsthetafunction} conjecturing a “safe” upper bound of $1.55\sqrt{n}$, drawing on recent developments from non-asymptotic free probability, albeit without a rigorous proof. 

\paragraph{From Higher-Degree SoS Back to Degree-$2$ SoS}

From the perspective of statistical inference, this question fits into the broader agenda of obtaining a fine-grained understanding of higher-degree Sum-of-Squares algorithms on average-case problems. In fact, this problem can be viewed as a necessary hurdle to overcome: since the theta function corresponds to the base level of the SoS hierarchy, resolving it is a natural prerequisite for obtaining sharp phase transitions at higher degrees.

In this work, we revisit this question with the techniques developed for higher-degree Sum-of-Squares (SoS) lower bounds over the years~\cite{BHKKMP16, PR20, Pang21, JPRTX, JPRX23, KPX24, NGCA24, Xu25, PX25}, hoping they can provide additional leverage. In particular, recent progress in understanding the fine-grained behavior of correlated random matrices (\emph{graph matrices}), together with experience in constructing and analyzing PSD matrices, may open up new avenues for attacking this question.

However, at first glance, it is far from clear that such techniques can solve this problem. On the one hand, they appear to be overkill: the \Lovasz-Theta function is captured by the basic SDP, for which the heavy machinery underlying higher-degree SoS lower bounds—such as the approximate PSD factorization of the moment matrix—appears unnecessary. On the other hand, these techniques have generally not been precise enough to capture exact constants. Higher-degree SoS lower bounds often have losses which are polylogarithmic in the dimension; results that are tight up to a constant have only been achieved relatively recently and required considerable effort.

In this paper, we show that, surprisingly, graph matrices and our techniques for analyzing them can be used to give a sharp bound on the value of the \Lovasz-Theta function of a random graph.

\subsection{Our Results}
\paragraph{A Sharp Bound for the \Lovasz-Theta Function on $G(n,1/2)$}
Our primary result is for the \Lovasz-Theta function on $G(n,1/2)$.
\begin{restatable}[Main Theorem for \Lovasz-Theta on $G(n,1/2)$]{theorem}{MainTheta}
\label{thm:main-theta}
For $G \sim G(n,1/2)$, with high probability, we have
\[
\vartheta(G) \leq (1+o_n(1)) \cdot \sqrt{n} \,.
\]
\end{restatable}

Along with the previously known lower bound, our result pins down the value of the \Lovasz-Theta function to an interval of width $o_n(\sqrt{n})$.
\begin{corollary}\(
	\sqrt{n } \leq \vartheta(G) \leq (1+o_n(1)) \cdot \sqrt{n} \,.\)
\end{corollary}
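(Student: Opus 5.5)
The corollary follows by pairing \cref{thm:main-theta} with a lower bound that is already known; no new construction is needed. The upper inequality $\vartheta(G)\le(1+o_n(1))\sqrt{n}$ is \cref{thm:main-theta} applied directly to $G\sim G(n,\tfrac12)$, and it holds with high probability. For the lower inequality $\vartheta(G)\ge\sqrt{n}$ I would cite the classical bound recalled in the introduction, where it is stated together with the upper bound $2\sqrt{n}$ and attributed to a self-complementarity argument. The two statements each hold with high probability, so a union bound gives the sandwich with high probability.

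To make the lower bound less dependent on a citation, I would also derive it from \cref{thm:main-theta} via Lovász's product inequality. This inequality holds for every $n$-vertex graph $G$ and reads
\[
\vartheta(G)\,\vartheta(\overline{G})\;\ge\;n .
\]
It is proved by taking an orthonormal representation of $G$ and one of $\overline{G}$ and tensoring them, or equivalently by combining the primal of \cref{def:primal-sdp} for $G$ with a dual certificate for $\overline{G}$. The complement map sends $G(n,\tfrac12)$ to itself, so $\overline{G}\sim G(n,\tfrac12)$. Applying \cref{thm:main-theta} to $\overline{G}$ gives $\vartheta(\overline{G})\le(1+o_n(1))\sqrt{n}$ with high probability. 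Hence
\[
\vartheta(G)\;\ge\;\frac{n}{(1+o_n(1))\sqrt{n}}\;=\;(1-o_n(1))\sqrt{n}
\]
with high probability, and a union bound over $G$ and $\overline{G}$ gives
\[
(1-o_n(1))\sqrt{n}\;\le\;\vartheta(G)\;\le\;(1+o_n(1))\sqrt{n}.
\]

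This second route only yields $(1-o_n(1))\sqrt{n}$ rather than exactly $\sqrt{n}$. Getting the constant exactly $\sqrt{n}$, as displayed, is the one step where I would rely on the previously known result from the introduction rather than rederive it. Either way, the substantive content, and the only real obstacle, is \cref{thm:main-theta} itself; given it, the corollary is a two-line consequence.
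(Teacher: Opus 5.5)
Your proposal matches the paper: the corollary is just \cref{thm:main-theta} for the upper bound combined with the previously known $\sqrt{n}$ lower bound recalled in the introduction, and the paper's own argument is exactly that. Your alternative route through $\vartheta(G)\vartheta(\overline{G})\ge n$ and the self-complementarity of $G(n,\tfrac12)$ is also essentially the paper's \cref{claim:lower-bound-from-upper}, which builds the needed pseudo-expectation explicitly from the dual witness; as you note, it only gives $(1-o_n(1))\sqrt{n}$.
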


Our approach follows the standard route of constructing a dual witness for the \Lovasz-Theta SDP—equivalently, a moment matrix for the degree-$2$ SoS relaxation. However, in contrast to higher-degree SoS lower bounds for average-case problems, where the pseudo-calibration technique gives a canonical candidate construction, no such candidate exists in our setting. Thus, a significant part of the challenge lies in identifying a suitable construction—moreover, one that is explicit and amenable to analysis.

A central contribution of our work is a novel iterative scheme for constructing such a witness. This scheme differs substantially from the pseudo-calibration approach.

\paragraph{Discussion on Graph Matrices and Free Probability}
As described in~\cref{sec:key-ideas} below, our construction is of the form $F(Q) + \text{small correction}$, where $Q$ is a linear combination of graph matrices of a collection of shapes that we call backbone-correction shapes (see \cref{def:backbone-correction}). A key reason why our construction works is that backbone-correction graph matrices are freely independent\footnote{Technically, $M_{\alpha}$ and $M_{\alpha^{T}}$ are not freely independent and we need an additional condition for the behavior of $M_{\alpha}$ and $\cm_{\alpha^{T}}$. See \cref{sec:trace-power-block-walks} for the mixed-moment calculation for these matrices.}.

As described in the technical overview, most of the technical work of this paper is devoted to proving this statement.

It should be noted that there are currently only a handful of results which precisely analyze restricted families of graph matrices (beyond Wigner matrices and random matrices with i.i.d. entries) or even determine the norms of these matrices up to a constant factor.
\begin{enumerate}
	\item Cai and Potechin \cite{cai2020spectrum,cai2022mixing} determined the spectrum of the singular values for a class of graph matrices called multi-Z-shaped graph matrices.
    \item Jones and Pesenti \cite{JP24} showed that a certain class of scalar graph matrices behaves like independent Gaussian random variables.
	\item When the underlying input is a matrix/tensor with Gaussian entries, several papers \cite{HKPX23, matrixchaos, KX26} have shown norm bounds which are tight up to a constant factor for restricted sets of graph matrices, though not for general graph matrices.
	\item Kothari, Potechin, and Xu \cite{KPX24} showed norm bounds which are tight up to a constant for graph matrices on sparse random graphs $G(n,\frac{d}{n})$ where $d=O(1)$ after deleting a small proportion of vertices from the graph, in particular vertices which have abnormally high degree.

\end{enumerate}
For the linear combinations of graph matrices we consider, since we allow the adjacency matrix of the underlying input to participate as a summand, there is \emph{no} entropy left in the remaining matrices in the sum:  all the remaining matrices can be uniquely determined once a single matrix is revealed. Since the summand matrices share the same underlying randomness, it is not clear whether existing black-box results from non-asymptotic free probability---even in light of the recent progress in~\cite{bandeira2024matrixconcentrationinequalitiesfree}---apply in our setting. Determining whether those techniques can nevertheless be adapted to recover our results is an interesting open question.

\paragraph{Follow-Up Work}
For the closely related ellipsoid fitting problem \cite{SPW13}, which can likewise be formulated as an SDP/degree-$2$ SoS program, the state of affairs was similar to that for the \Lovasz-Theta function. A few years ago, several works \cite{KD22, PTVW22} established bounds tight up to polylogarithmic factors using different techniques. Subsequent works improved these bounds to within constant factors, but did not determine the optimal constant. Among these previous works on the ellipsoid fitting conjecture \cite{bandeira2024fittingellipsoidquadraticnumber, TW25}, the one most closely related to the present work is~\cite{HKPX23}. Shortly after completing this work, together with other collaborators, we applied the framework developed here to resolve the ellipsoid fitting conjecture, establishing the sharp threshold at $d^2/4$ \cite{CPTX26}. Similar results for ellipsoid fitting have also been concurrently and independently obtained in \cite{MW26, KS26} using different techniques.

\paragraph{Acknowledgment}
J.X. thanks Tim Hsieh, Pravesh K. Kothari, and Sidhanth Mohanty for helpful discussions in a previous attempt that identified a construction which gives a constant improvement and is similar to the construction in Appendix E.

%% file: theta-function/overview.tex
\section{Preliminaries and Technical Overview}
\subsection{Folklore Bound of \texorpdfstring{$2\sqrt{n}$}{2 sqrt(n)}} 
To upper bound the value of the primal SDP (see \cref{def:primal-sdp}), we use the following dual SDP:
\begin{definition}[Dual SDP for \Lovasz-Theta]  \label{def:dual-sdp}
	\[ \min \lambda> 0 \quad  \text{ s.t. }  \quad \lambda\cdot I_n - M_G \succeq 0   \]
	where $M_G$ is a dual witness matrix that satisfies the non-edge constraints $M_G[i,j] = 1$ if $\{i,j\}\notin E$ or $i=j$, and $I_n$ is the $n$-dimensional identity matrix.
\end{definition}
The connection with the primal can be seen directly from weak duality.

\begin{claim}
\label{claim:dual-certifies-primal}
Any dual-feasible pair $(\lambda,M_G)$ certifies
\(
\vartheta(G)\leq \lambda.
\)
\end{claim}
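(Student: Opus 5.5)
Take any primal-feasible vectors $v_0,v_1,\dots,v_n$ from \cref{def:primal-sdp} and show directly that $\sum_i \langle v_i,v_0\rangle \le \lambda$. Taking the supremum over feasible vectors then gives $\vartheta(G)\le\lambda$. This is weak duality, but we run it by hand on the Gram matrix.

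\textbf{Step 1: Gram matrix and test vector.} Let $W\in\mathbb{R}^{n\times n}$ be the Gram matrix $W[i,j]=\langle v_i,v_j\rangle$ for $i,j\in[n]$. Then $W\succeq 0$. The constraints give:
\begin{itemize}
\item $W[i,j]=0$ for every edge $\{i,j\}\in E$;
\item $W[i,i]=\langle v_i,v_0\rangle$.
\end{itemize}
Set $s=\sum_i \langle v_i,v_0\rangle=\operatorname{tr}(W)$, the primal objective. Also set $u=\sum_i v_i$. Then
\[
\sum_{i,j\in[n]}W[i,j]=\|u\|^2 .
\]
By Cauchy--Schwarz and $\|v_0\|=1$,
\[
\|u\|^2\ge \langle u,v_0\rangle^2=s^2 .
\]

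\textbf{Step 2: Pair the dual constraint with $W$.} Since $\lambda I_n-M_G\succeq 0$ and $W\succeq 0$, we have $\langle \lambda I_n - M_G, W\rangle\ge 0$, i.e.
\[
\lambda\operatorname{tr}(W)\ge \langle M_G,W\rangle .
\]
Now compute $\langle M_G,W\rangle=\sum_{i,j}M_G[i,j]W[i,j]$ by splitting the pairs $(i,j)$:
\begin{itemize}
\item Edge pairs: $W[i,j]=0$, so the unconstrained entries of $M_G$ contribute nothing.
\item Non-edge pairs and diagonal pairs: $M_G[i,j]=1$.
\end{itemize}
Hence $\langle M_G,W\rangle=\sum_{i,j}W[i,j]=\|u\|^2\ge s^2$.

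\textbf{Step 3: Conclude.} Combining the two steps gives $\lambda s\ge s^2$. If $s>0$, dividing by $s$ yields $s\le\lambda$. Otherwise $s\le 0<\lambda$ trivially. So every primal value is at most $\lambda$, and hence $\vartheta(G)\le\lambda$.

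There is no real obstacle here. The only point needing care is Step 2: the edge entries of $M_G$ are arbitrary, and they must be killed by the orthogonality constraints $W[i,j]=0$. Once that is seen, the identity $\langle M_G,W\rangle=\|u\|^2$ follows, and Cauchy--Schwarz against the unit vector $v_0$ finishes the argument.
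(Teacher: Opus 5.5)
Your proof is correct and is essentially the paper's argument: in both, the orthogonality constraints kill the unconstrained edge entries, so $\langle M_G,X\rangle=\|\sum_i v_i\|^2\ge k^2$ by Cauchy--Schwarz against the unit vector $v_0$, while PSD pairing gives $\langle M_G,X\rangle\le\lambda\,\mathrm{tr}(X)=\lambda k$. The only cosmetic difference is that you handle the degenerate case via $\lambda>0$, whereas the paper divides by $k$ after noting $k=\mathrm{tr}(X)\ge 0$; both are fine.
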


\begin{proof}
Let $\{v_0,v_1,\ldots,v_n\}$ be primal-feasible, and let
\(
X[i,j]=\langle v_i,v_j\rangle,
k=\sum_{i\in[n]}\langle v_i,v_0\rangle.
\)
The primal diagonal constraints give $\Tr(X)=k$. Moreover, since $X[i,j]=0$
on edges and $M_G[i,j]=1$ on the diagonal and nonedges,
\[
\langle M_G,X\rangle
=
\sum_{i,j}X[i,j]
=
\left\|\sum_i v_i\right\|^2
\geq
\left\langle \sum_i v_i,v_0\right\rangle^2
=
k^2.
\]
On the other hand, $\lambda I_n-M_G\succeq0$ and $X\succeq0$ imply
\[
\langle M_G,X\rangle
\leq
\lambda\Tr(X)
=
\lambda k.
\]
Thus $k^2\leq\lambda k$. Since $k=\Tr(X)\geq0$, we conclude that
$k\leq\lambda$.
\end{proof}

We now describe the folklore bound of $2\sqrt{n}$ for the \Lovasz-Theta function using the $\pm{1}$-adjacency matrix of the input graph $G$.

\begin{definition}[$\pm{1}$ adjacency matrix $A_G$]
Let $A_G$ denote the $\pm 1$ adjacency matrix of a graph $G$ with zero diagonal, defined by
\[
A_G[i,j] =
\begin{cases}
\;\;\,1 & \text{if } i \neq j \text{ and } (i,j)\in E(G),\\
-1 & \text{if } i \neq j \text{ and } (i,j)\notin E(G),\\
\;\;\,0 & \text{if } i = j.
\end{cases}
\]
\end{definition}

\begin{claim}[Folklore] For $G\sim G(n,1/2)$, with high probability, we have  \[ 
(1-o_n(1))\cdot  \frac{1}{2} \sqrt{ n} \leq \vartheta(G) \leq  (1+o_n(1)) \cdot 2\sqrt{n}\,.
\]
\end{claim}

Consider the dual witness $M_G = I_n-A_G$. Note that this satisfies the constraint that $M_G[i,j] = 1$ whenever $i = j$ or $\{i,j\}\notin E(G)$.
To obtain \[
\lambda \cdot I_n \succeq M_G = I_n - A_G \,,
 \]
it suffices to set $\lambda = (1+o_n(1)) \cdot 2\sqrt{n}$ using the standard result from random matrix theory that with high probability, $\|A_G\| \leq (1+o_n(1))2\sqrt{n}$. The same matrix can also be used to establish a lower bound of value $\sqrt{n}/2$ for the primal program (see~\cref{app:dual-lowerbound}).

Note that there remains a significant gap between the bounds above and the $\sqrt{n}$ lower bound for the primal, which can be established via non-explicit arguments~\cite{mittal_theta_notes}. We highlight this gap as our goal is to obtain an explicit construction that closes it.

Additionally, it should be noted that the analysis for this choice of $M_G$ is essentially tight as it is well known that the spectrum of $A_G$ follows a semicircular law, implying that $M_G = -A_G$ exhibits large positive eigenvalues of magnitude around $2\sqrt{n}$. Thus, any improvement must arise from a different construction of $M_G$—one that preserves the non-edge constraints (i.e., $M_G[i,j] = 1$ whenever $\{i,j\} \notin E(G)$) while addressing the following question:

\begin{displayquote}
	How can we exploit the freedom of the other matrix entries to offset the large eigenvalues of $-A_G$?
\end{displayquote}
In particular, we observe that any such offset needs to be high-rank as it needs to cancel out all of the positive eigenvalues of $-A_G$ which are greater than $\sqrt{n}$, which is a significant portion of the eigenvalues of $-A_G$.

\paragraph{Our Concrete Target} 
Throughout this work, we adopt a formulation of a witness matrix equivalent to the dual SDP in \cref{def:dual-sdp} up to rescaling and a simple manipulation of the equation.

\begin{definition}[(Rescaled) Target Matrix $W$ with Value $c_k$] \label{def:concrete-target}
A symmetric matrix $W = Q(G) \in \mathbb{R}^{n\times n}$ is called a \emph{target matrix} with value $c_k$ if it can be written as
\[
W = I_n + \frac{c_k}{\sqrt{n}} \cdot A_G  + R_W,
\]
where the following hold:
\begin{enumerate}
    \item \textbf{(Non-edge constraints)} $R_W[i,j] = 0$ whenever $i = j$ or $\{i,j\} \notin E(G)$. 
    \item \textbf{(PSDness)} $W \succeq 0$.
\end{enumerate}
\end{definition}
\begin{remark}
    With a slight abuse of notation, we do not distinguish between edge and non-edge constraints, as it will be clear from context that we refer to the same type of constraints in the primal SDP.
\end{remark}
Given a target matrix $W$ with value $c_k$, we can obtain a dual witness matrix $M_G$ with value $\lambda = \frac{\sqrt{n}}{c_k} + 1 = (1+o_n(1))\frac{\sqrt{n}}{c_k}$ by taking 
\[
M_G = I_n - A_G - \frac{\sqrt{n}}{c_k}R_W = \left(\frac{\sqrt{n}}{c_k} + 1\right)I_n -\frac{\sqrt{n}}{c_k}W.
\]
 From this perspective, our goal is to construct such a matrix $W$ with the coefficient $c_k$ as large as possible. In particular, in order to show that the \Lovasz-Theta function is $(1+o_n(1))\sqrt{n} $, we need to give a construction of a target matrix such that $c_k$ approaches $1$. Note that the trivial witness $R_W = 0$ gives a baseline of $c_k = \frac{1}{2} - o_n(1)$.

\paragraph{A Snapshot of Our Construction} 
Before delving into the technical details, we briefly outline the overall construction of the target matrix. At a high level, we design a non-negative function \(F\) and an inner matrix \(Q\) such that
\[
    X
    :=
    \frac{1}{C_F}
    \left(
        F(\widetilde Q)
       +\text{(low-order correction term)}    \right)
\]
approximately satisfies the diagonal and non-edge constraints while achieving an objective value  \(\sqrt{n}\).  Moreover, the spectral estimates below show that
\[
    X \succeq - \eps I
\]
for some \(\eps=o_n(1)\).

We then define
\[
    W:=\varepsilon I+(1-\varepsilon)X 
\]
as our dual witness matrix. The main challenge, therefore, is to construct a function \(F\) whose image is non-negative and which already nearly satisfies the edge constraints. 
We now introduce the technical preliminaries needed to describe our construction and give a road map of our analysis.
\subsection{Graph Matrices: Our Trusty Hammer}
Graph matrices are a powerful tool for analyzing problems on random inputs. They have been crucial for proving SoS lower bounds for average-case problems \cite{BHKKMP16, PR20, Pang21, JPRX23, KPX24, NGCA24, Xu25, PX25}  and have also been used to analyze power-sum decompositions of polynomials \cite{BHKX22}, to analyze the ellipsoid fitting conjecture \cite{PTVW22, HKPX23}, and to analyze a class of first-order iterative algorithms including belief propagation and approximate message passing \cite{JP24}. We begin with a brief overview of graph matrices, specializing the framework to our setting.

\begin{definition}[Shapes]
A shape $\alpha$ is a tuple
\(
    \alpha = \bigl(V(\alpha), U_\alpha, V_\alpha, E(\alpha)\bigr)
\)
associated with a multigraph $\bigl(V(\alpha), E(\alpha)\bigr)$,
where \(U_\alpha, V_\alpha \subseteq V(\alpha)\). We call \(U_\alpha\)
and \(V_\alpha\) the left and right boundaries of \(\alpha\), respectively.

We say that \(\alpha\) is proper if $\bigl(V(\alpha), E(\alpha)\bigr)$ is a graph (i.e., it has no multi-edges or loops) and every isolated vertex of \(\alpha\) lies in \(U_\alpha \cup V_\alpha\).
\end{definition}

Given some underlying input on $n$ vertices, concretely, the  $\pm 1$-adjacency matrix $A_G$ of the graph $G$, we can associate with each shape a matrix whose entries are polynomials in the input entries.
\begin{definition}[Shape Transpose]\label{def:shape-transpose}
Given a shape $\al$, we define its \emph{transpose} $\al^\top$ to be the shape obtained by swapping its boundary vertices while preserving the underlying graph. Formally,
\begin{enumerate}
    \item $V(\al^\top) = V(\al)$ and $E(\al^\top) = E(\al)$;
    \item $U_{\al^\top} := V_\al$ and $V_{\al^\top} := U_\al$.
\end{enumerate}
\end{definition}
\begin{definition}[Graph Matrix of a Shape via Injective Maps] \label{def:graph-matrix-for-shape}
Given a shape $\alpha$, its associated graph matrix $M_{\alpha}$ is indexed by boundary labelings $(S,T)$ (as ordered tuples), with entries defined by
\[
\cm_\al[S,T] = \sum_{\substack{\sigma: V(\al) \to [n]: \ \sigma \text{ is injective,} \\ \sigma(U_{\alpha})=S,\ \sigma(V_{\alpha})=T \\ }}
\;\prod_{e=\{a,b\}\in E({\alpha})} A_G\bigl[\sigma(a),\sigma(b) \bigr].
\]

In this work, we reserve the notation $\cm$ to refer to graph matrices.
\end{definition}
Prior works established rough norm bounds for graph matrices with underlying input from $G(n,\frac{1}{2})$. While these rough norm bounds are sufficient for many applications, since we are proving sharp bounds on $\vartheta(G)$, we will need a much more precise analysis of the graph matrices which appear in our construction.
\begin{theorem}[Rough Norm Bounds for Graph Matrices \cite{AMP20,RT23,TW25norm}]\label{thm:roughnormbounds}
For all proper shapes $\al$, with high probability, $||\cm_{\al
}||$ is $\tilde{O}\left(n^{\frac{|V(\al)|-s_{\al}}{2}}\right)$ where $s_{\al}$ is the minimum size of a vertex separator separating $U_{\al}$ and $V_{\al}$.

More generally, for all improper shapes $\al$, with high probability, $||\cm_{\al}||$ is $\tilde{O}\left(n^{\frac{|V(\al)|-s_{\al} + |Iso(\al)|}{2}}\right)$ where $s_{\al}$ is the minimum size of a vertex separator separating $U_{\al}$ and $V_{\al}$ after deleting all edges with even multiplicity and $Iso(\al)$ is the set of vertices of $\al$ outside of $U_{\al} \cup V_{\al}$ which are isolated after deleting all edges with even multiplicity.
\end{theorem}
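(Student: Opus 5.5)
We prove the rough norm bound of \cref{thm:roughnormbounds} by the trace power method. For an even integer $q$, we bound $\E\,\Tr\bigl((\cm_\al \cm_\al^\top)^{q}\bigr)$ and then apply Markov's inequality with $q = \Theta(\log n)$. Taking the $2q$-th root turns the bound into a high probability norm bound. The loss from the $2q$-th root of combinatorial factors and of $n$-independent constants is only polylogarithmic, which is absorbed by the $\tilde{O}$.

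\textbf{Step 1: expand the trace.} We expand the trace as a sum over closed walks of $2q$ copies of $\al$, alternating $\al$ and $\al^\top$ and glued along boundaries. Each copy carries an injective labeling $\sigma_i\colon V(\al)\to[n]$, and consecutive copies share labels on their common boundary. Since the entries of $A_G$ are independent $\pm1$ with mean zero and $A_G[i,j]^2=1$, a term has nonzero expectation (equal to $1$) iff every labeled edge $\{a,b\}\subseteq[n]$ appears an even number of times in the union. For improper shapes, we first delete even-multiplicity edges within a single copy, since those are identically $1$. This is why $s_\al$ and $Iso(\al)$ are defined after that deletion: isolated non-boundary vertices contribute a free factor of $n$ per copy, giving $n^{|Iso(\al)|/2}$ after the root.

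\textbf{Step 2: count terms.} The expectation therefore equals the number of such "even" labelings. We bound it as $\sum_{\text{patterns}} n^{\#\text{distinct labels}}$, where a pattern records which vertices across the $2q$ copies receive the same label.

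\textbf{Step 3: charge labels via an encoding (the main obstacle).} We must show
\[
\#\text{distinct labels}\le q\bigl(|V(\al)|-s_\al\bigr)+O(|V(\al)|)
\]
up to an encoding cost of $(q|V(\al)|)^{O(q|V(\al)|)}$. The idea is to traverse the walk copy by copy. In each copy, by Menger's theorem there are $s_\al$ vertex-disjoint paths from $U_\al$ to $V_\al$. Every edge must be traversed at least twice, so each vertex that appears for the first time is paid for by an edge that must be revisited later.

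A more careful way to do this uses a weighted charging scheme. Give each vertex weight $1/2$ for its first appearance. The separator structure then implies that at most $|V(\al)|-s_\al$ new vertices can appear per pair of consecutive copies. The vertices on a minimum separator are forced to be "return" vertices: the flow from $U$ to $V$ through $s_\al$ disjoint paths must be matched by a second traversal of the corresponding edges. We would follow the encoding argument of \cite{AMP20}, recording each vertex as new, return, or high-multiplicity, with short labels.

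\textbf{Step 4: conclude.} Combining the steps gives
\[
\E\,\Tr\bigl((\cm_\al\cm_\al^\top)^q\bigr)\le n^{O(|V(\al)|)}\,(Cq)^{O(q|V(\al)|)}\,n^{q(|V(\al)|-s_\al+|Iso(\al)|)}.
\]
Taking $q=\Theta(\log n)$ and the $2q$-th root yields the claim. The delicate point is making the separator-based charging rigorous when labels repeat across non-adjacent copies. We would handle this by identifying vertices with equal labels and checking that identifications never increase the count beyond the bound.
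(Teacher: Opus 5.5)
Your outer framework (trace method with $q=\Theta(\log n)$, Markov's inequality, polylogarithmic loss from the $2q$-th root, and handling improper shapes by collapsing even-multiplicity edges and paying $n$ per isolated non-boundary vertex) is the standard route of the cited works. The paper only cites this bound, but it runs the same machinery for backbone-correction shapes in \cref{sec:norm}.

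The gap is Step 3. It is the whole content of the theorem, and you assert it rather than prove it. The mechanism you sketch is also wrong. Charging only first appearances and claiming at most $|V(\al)|-s_\al$ new vertices per pair of consecutive copies fails already for $\al=\fp_1$, where $|V(\al)|-s_\al=1$. Take the dominant tree-like term of $\Tr\bigl((\cm_\al\cm_\al^\top)^4\bigr)$ with labels $1,2,3,4,5,4,3,2,1$. Copies $3$ and $4$, which form a single factor $\cm_\al\cm_\al^\top$, introduce the two new labels $4$ and $5$. New labels are bounded only in an amortized sense. Likewise, which vertices are ``saved'' in a copy depends on the intersection pattern, not on a fixed minimum separator of $\al$. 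Finally, ``identify equal labels and check the count never increases'' is not an argument.

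The missing idea is two-sided charging combined with a per-copy separator lemma.

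\textbf{Separator lemma.} For each copy $i\notin\{1,2q\}$, let $S_i$ be the set of its vertices whose labels appear both in an earlier copy and in a later copy. Then $S_i$ separates $U_\al$ from $V_\al$. On any $U_\al$--$V_\al$ path, let $w$ be the first vertex that appears later; one exists, since the endpoint in $V_\al$ is shared with copy $i+1$.
\begin{itemize}
\item If $w\in U_\al$, it also appears earlier, being shared with copy $i-1$.
\item Otherwise its predecessor $w'$ does not appear later. By injectivity and properness, the labeled edge $\{w',w\}$ cannot occur twice inside copy $i$. By evenness it occurs in another copy, which must be earlier, so $w$ appears earlier.
\end{itemize}
Equivalently, each of your $s_\al$ Menger paths meets $S_i$, so $|S_i|\ge s_\al$.

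\textbf{Two-sided charging.} Give each distinct label a factor $\sqrt n$ at its first appearance and another at its last. In a proper shape every vertex either lies on the boundary or has an incident edge, so its label occurs in at least two copies. Hence in copy $i$:
\begin{itemize}
\item each vertex outside $S_i$ collects exactly one $\sqrt n$;
\item each vertex in $S_i$ collects none.
\end{itemize}
So copy $i$ collects at most $n^{(|V(\al)|-s_\al)/2}$, while copies $1$ and $2q$ collect at most $n^{|V(\al)|/2}$ each. This gives $\#\text{labels}\le q(|V(\al)|-s_\al)+s_\al$. Repeated labels across non-adjacent copies are handled automatically, because first and last appearances are defined globally.

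With this in hand, no refined AMP20-style encoding is needed. The crude bound $(2q|V(\al)|)^{2q|V(\al)|}$ on the number of intersection patterns costs only a polylogarithmic factor after the $2q$-th root.
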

In this work, we focus on shapes which have a particularly simple boundary structure: each shape has exactly one boundary vertex on each side, i.e., $|U_{\al}| = |V_{\al}| = 1$. This corresponds to $n \times n$ matrices whose rows and columns are indexed by vertices of $G$.
\begin{definition}
We say that a shape $\alpha$ has boundary size $1$ if $|U_{\alpha}| = |V_{\alpha}| = 1$. For shapes $\alpha$ with boundary size $1$, we define $u_{\alpha}$ to be the vertex in $U_{\alpha}$ and we define $v_{\alpha}$ to be the vertex in $V_{\alpha}$. Note that we may have $u_{\alpha} = v_{\alpha}$. 
\end{definition}

\paragraph{Path Shapes:}
An important class of shapes consists of the path shapes $\{\fp_j: j \in \mathbb{N}\}$ which are closely related to non-backtracking walks of length $j$.
\begin{definition}[Path shape $\fp_j = \fp_j(A_G)$]
For all $j \in \mathbb{N}$, we define the path shape $\fp_j$ to be the shape that is a path of length $j$ from $u_{\fp_j}$ to $v_{\fp_j}$. 
\end{definition}
\begin{figure*}[h]
	\begin{subfigure}[t]{0.48\textwidth}
    \centering
    \includegraphics[height=3cm]{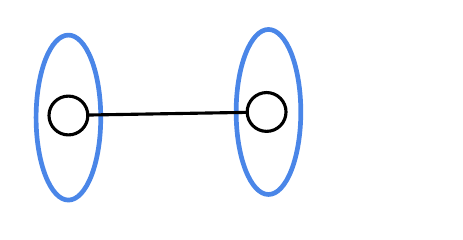}
    \caption{Adjacency Matrix ($\fp_1$)}
    \label{fig:p1}
    \caption*{{$\cm_{\fp_1}[i,j]=  A_G[i,j]$. }}
\end{subfigure}
\hfill
\begin{subfigure}[t]{0.48\textwidth}
    \centering
    \includegraphics[height=3cm]{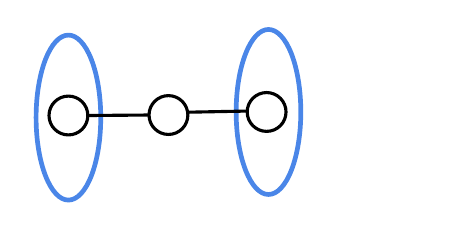}
    \caption{Length-$2$ Non-backtracking Walk ($\fp_2$)}
    \label{fig:p2}
    \caption*{{$\cm_{\fp_2}[i,j]= 1_{i \neq j}\sum_{t\notin\{i,j\}} A_G[i,t] \cdot A_G[t,j]$. }}
\end{subfigure}
\caption{Examples of path shapes}
\end{figure*}

Throughout this work, unless otherwise specified, we use $\fp_j$ as shorthand for $\fp_j(A_G)$. We will later extend this notation and define $\fp_j(M)$ for a general matrix $M$ in \cref{def: j-way-product}, once the full construction is introduced.
For our analysis, it is convenient to normalize the graph matrices we are working with so that their norm is $\Theta(1)$.
\begin{definition}[Normalized Graph Matrices] \label{def:normalized-graph-matrix}
Given a proper shape $\al$ with boundary size $1$, we define the normalized graph matrix $\cmp_{\al }$ to be $\cmp_\al  = n^{-\frac{|V(\al)| - 1_{\text{there is a path from } u_{\al} \text{ to } v_{\al} \text{ in } \al}}{2}}\cm_{\al }$.
\end{definition}
\begin{remark}
This normalization factor is chosen so that with high probabilty, $||\cmp_{\al}||$ is $\tilde{O}(1)$. Until \cref{sec:cheby-shape-approx}, we will focus on proper shapes $\al$ which have boundary size $1$ and a path between $u_{\al}$ and $v_{\al}$ so we will have that $\cmp_{\al}  = n^{-\frac{|V(\al)| - 1}{2}}\cm_{\al}$.
\end{remark}
\begin{example}
For all $j \in \mathbb{N}$, $\cmp_{\fp_j} = n^{-\frac{j}{2}} \cdot \cm_{\fp_j}$.
\end{example}
In Section \ref{sec:graphmatrixmultiplication}, we will use path shapes to illustrate key concepts for our analysis such as shape composition and intersection terms.
\paragraph{Visualizing the Non-Edge Constraints} In our setting, one benefit of working with shapes and graph matrices is that they offer an intuitive way for us to satisfy the non-edge constraints posed by the dual SDP (see \cref{def:dual-sdp}, and the $R_W$-matrix in \cref{def:concrete-target}).

To ensure the non-edge constraints are satisfied, for each off-diagonal entry $i \neq j$, we aim to express
\[
W[i,j]
=
\frac{c_k}{\sqrt{n}} \, A_G[i,j]
+
(1 + A_G[i,j]) \cdot h(i,j),
\]
for some function $h(i,j)$. The key idea is to factor out $(1 + A_G[i,j])$ from all terms beyond the leading contribution of $A_G[i,j]$.  When $(i,j)\notin E(G)$, we have $A_G[i,j] = -1$, and hence $(1 + A_G[i,j]) = 0$, which eliminates all higher-order terms. Thus, this factor ensures that the non-edge constraints are satisfied.

We can impose such an indicator by simply making sure that the shapes come in pairs $\beta$ and $\bar{\beta}$ where $\bar{\beta}$ is defined as follows:
\begin{definition}[Correction Shape $\bar{\beta}$ for $\beta$]
Given a shape $\beta$ with boundary size $1$ such that $u_{\beta} \neq v_{\beta}$ and $\{u_{\beta},v_{\beta}\} \notin E(\beta)$, we define the correction $\bar{\beta}$ for $\beta$ to be the shape obtained by adding the edge $\{u_{\beta},v_{\beta}\}$ to $\beta$. More precisely, $\bar{\beta}$ is the shape such that $V(\bar{\beta}) = V(\beta)$, $u_{\bar{\beta}} = u_{\beta}$, $v_{\bar{\beta}} = v_{\beta}$, and $E(\bar{\beta}) = E(\beta) \cup \{u_{\beta},v_{\beta}\}$.
\end{definition}
\begin{remark}
We use the letter $\beta$ in this definition as the main shapes we will correct for will be backbone shapes $\beta$ (see \cref{def:backbone-correction}).
\end{remark}
For example, while $\cm_{\fp_2}$ does not satisfy the non-edge constraints, $\cm_{\fp_2} + \cm_{\overline{{\fp}}_2}$ does satisfy the non-edge constraints as $\cm_{\fp_2}[i,j] + \cm_{\overline{\fp}_2}[i,j] = 0$ whenever $\{i,j\} \notin E(G)$.
\begin{figure}[h]
    \centering
    \includegraphics[height=3cm]{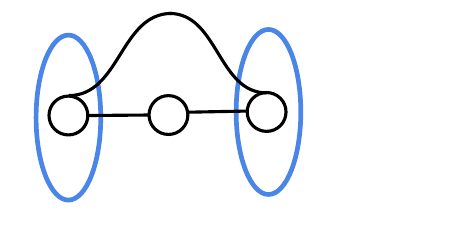}
    \caption{$\overline{\fp}_2$: $\cm_{\overline{\fp}_2}[i,j]= \sum_{t\notin\{i,j\}} A_G[i,t] \cdot A_G[t,j]\cdot A_G[i,j] $}
    \label{fig:p2bar}
        \label{fig:P2'}
\end{figure}
Thus, our task can be rephrased in the language of graph matrices as follows:
we seek a target matrix
\[
    W
    =
    I_n + \frac{c_k}{\sqrt n} A_G + R_W
\]
such that \(W \succeq 0\), where \(R_W\) is a linear combination of backbone
shapes paired with their corresponding correction terms. Concretely, we require
\[
    R_W
    =
    \sum_{\substack{
        \beta:\ |V(\beta)| \ge 2,\\
        \{u_\beta,v_\beta\} \notin E(\beta)
    }}
    c_{\beta}
    \bigl(\cm_\beta+\cm_{\bar{\beta}}\bigr),
\]
for some coefficients $\{c_\beta:\ |V(\beta)| \ge 2,\ \{u_\beta,v_\beta\}\notin E(\beta)\}$. 
The goal is to maximize the leading coefficient \(c_k\).
\subsection{Graph Matrix Multiplication: Shape Concatenation and Intersection Terms}\label{sec:graphmatrixmultiplication}
In order to describe our construction and analysis of this construction, we need to describe what happens when we multiply two or more graph matrices together.
\paragraph{Overview of Graph Matrix Multiplication}
Consider the product of two graph matrices $M_{\al_1}$ and $M_{\al_2}$. Since 
\begin{enumerate}
\item $\cm_{\al_1}[S,T] = \sum_{\substack{\sigma: V(\al_1) \to [n]: \ \sigma \text{ is injective,} \\ \sigma(U_{\al_1})=S,\ \sigma(V_{\al_1})=T \\ }}
\;\prod_{e=\{a,b\}\in E(\al_1)} A_G\bigl[\sigma(a),\sigma(b)\bigr]$
\item $\cm_{\al_2}[S',T'] = \sum_{\substack{\sigma': V(\al_2) \to [n]: \ \sigma' \text{ is injective,} \\ \sigma'(U_{\al_2})=S',\ \sigma'(V_{\al_2})=T' \\ }}
\;\prod_{e=\{a,b\}\in E(\al_2)} A_G\bigl[\sigma'(a),\sigma'(b)\bigr]$
\end{enumerate}
we have that 
\[
\cm_{\al_1}\cm_{\al_2}[S,T'] = \sum_{\substack{\sigma'': V(\al_1) \cup V(\al_2) \to [n]: \ \sigma'' \text{ is injective on } V(\al_1), \\
\sigma'' \text{ is injective on } V(\al_2), \ \sigma''(V_{\al_1}) = \sigma''(U_{\al_2}), \\ \sigma''(U_{\al_1})=S, \ \sigma''(V_{\al_2})=T'}}
\;\prod_{e=\{a,b\}\in E(\alpha) \cup E(\beta)} A_G\bigl[\sigma''(a),\sigma''(b)\bigr]
\]

In this expression, it would be nice if we had the condition that $\sigma''$ is injective (except for setting $\sigma''(V_{\al_1}) = \sigma''(U_{\al_2})$) rather than the conditions that $\sigma''$ is injective on $V(\al_1)$ and injective on $V(\al_2)$. In this idealized setting, multiplying $\cm_{\al_1}$ and $\cm_{\al_2}$ would correspond to concatenating $\al_1$ and $\al_2$ (see \cref{def:shape-concate} below). 

However, this clean picture breaks down because injectivity is enforced only within $V(\al_1)$ and $V(\al_2)$ individually. In general, vertices in $V(\al_1) \setminus V_{\al_2}$ and vertices in $V(\al_2) \setminus U_{\al_1}$ may collide. This leads to additional \emph{intersection terms}, which capture all such unintended identifications. These intersection terms are well known in prior works to be a technically challenging component of the analysis, often requiring delicate control. 

In most SoS lower bounds on average-case problems \cite{BHKKMP16, PR20, JPRTX, JPRX23, KPX24, PX25}, it turns out that intersection terms are not the dominant terms in the analysis so the focus is on the main terms where there are no unexpected intersections. For these SoS lower bounds, intersection terms are a subtle nuisance which needs to be carefully handled but which can be ignored for the purpose of giving the high-level ideas.

For our setting, intersection terms are not just a technicality but a central part of our analysis. In particular, there is a specific class of \emph{well-behaved} intersection terms that we call \emph{backtracking intersections} which give important terms in our analysis. 

\paragraph{Proper Concatenation} In light of this discussion, we first isolate the contribution corresponding to the idealized setting where no unintended vertex collisions occur. This leads to the notion of \emph{proper concatenation}, which captures the clean decomposition of shapes along their shared boundary without introducing any additional intersections. We now formalize this operation.
\begin{definition}[(Proper) Shape Concatenation]\label{def:shape-concate}
Given shapes $\al_1$ and $\al_2$ such that $|V_{\al_1}| = |U_{\al_2}|$, we take the concatenation $\al_1 \circ \al_2$ of $\al_1$ and $\al_2$ to be the shape such that:
\begin{enumerate}
		\item $V(\al_1 \circ \al_2) = V(\al_1) \cup V(\al_2)$ where the vertices in $V_{\al_1}$ are identified with the vertices in $U_{\al_2}$.
        \item $U_{\al_1 \circ \al_2} = U_{\al_1}$ and $V_{\al_1 \circ \al_2} = V_{\al_2}$.
		\item $E(\al_1 \circ \al_2) = E(\al_1) \cup E(\al_2)$. 
\end{enumerate}
In other words, $\al_1 \circ \al_2$ is the shape obtained by gluing $\al_1$ and $\al_2$ together along $V_{\al_1}$ and $U_{\al_2}$, keeping all other vertices distinct, and taking all of the edges in both $\al_1$ and $\al_2$.
	
This definition extends naturally to a sequence of $j$ shapes $\al_1,\ldots,\al_j$ such that for each $i \in [j-1]$, $|V_{\al_i}| = |U_{\al_{i+1}}|$. We can obtain the concatenation $\al_1 \circ \cdots \circ \al_j$ by concatenating these shapes together one by one (it is not hard to check that the order of the concatenations does not matter). We say that such shapes $\al_1,\ldots,\al_j$ are composable and call this a $j$-way concatenation.
\end{definition}

We give some examples of proper shape concatenations below. Throughout this work, we use blue ovals to illustrate the left and right sides of shapes and use purple dotted ovals to denote shape boundaries which were glued together during concatenations.
\begin{figure}[h]
    \centering
    \begin{subfigure}[t]{0.48\textwidth}
        \centering
        \includegraphics[width=\linewidth]{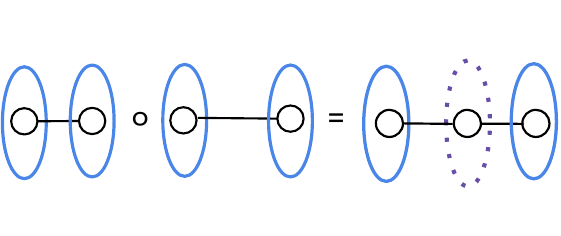}
        \caption{$\fp_1 \circ \fp_1 = \fp_2$}
        \label{fig:path-concatenation}
    \end{subfigure}
    \hfill
    \begin{subfigure}[t]{0.48\textwidth}
        \centering
        \includegraphics[width=\linewidth]{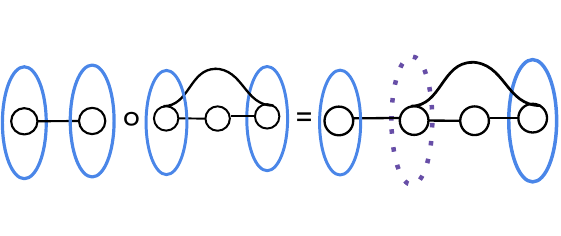}
        \caption{Concatenation of $2$ different shapes}
        \label{fig:mixed-concatenation}
    \end{subfigure}
    
      \centering
    \begin{subfigure}[t]{0.45\textwidth}
        \centering
        \includegraphics[width=\linewidth]{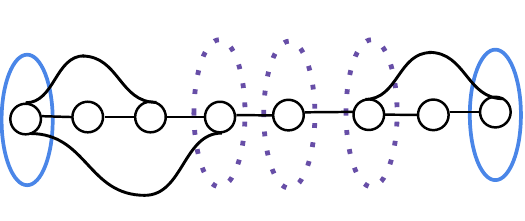}
        \caption{A 4-way concatenation}
        \label{fig:multi-concate}
    \end{subfigure}
    \hfill
    \begin{subfigure}[t]{0.45\textwidth}
        \centering
        \includegraphics[width=\linewidth]{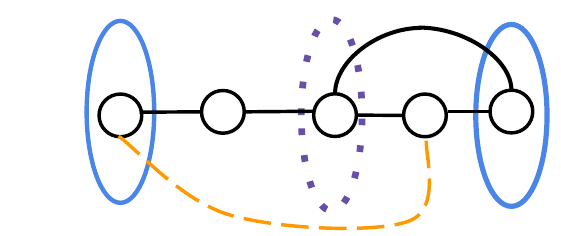}
        \caption{Intersection pattern example}
        \label{fig:eg-intersection}
    \end{subfigure}
    \caption{Examples of proper concatenations and intersections.}
\end{figure}

\paragraph{Intersection Patterns}
We now discuss the intersection terms.
\begin{definition}[Intersection patterns]
Given composable shapes $\al_1,\ldots,\al_j$, we define an intersection pattern $I$ on $\al_1,\ldots,\al_j$ to be a partition of the vertices of $\al_1 \circ \ldots \circ \al_j$ into equivalence classes such that for all $i \in [j]$, no two vertices of $\al_i$ are in the same equivalence class. 

We define $Int_{\al_1,\ldots,\al_j}$ to be the set of possible intersection patterns on $\al_1,\ldots,\al_j$.
\end{definition}
\begin{remark}
In this paper, unlike many previous papers, we consider the trivial intersection pattern where each equivalence class contains a single vertex (i.e., there are no intersections) to be a valid intersection pattern.
\end{remark}
Visually, we represent an intersection pattern $I$ by starting with $\al_1 \circ \ldots \circ \al_j$ and drawing constraint/intersection edges between vertices which are in the same equivalence class of $I$.
\begin{definition}
Given an intersection pattern $I \in Int_{\al_1,\ldots,\al_j}$,
\begin{enumerate}
\item When vertices $u,v \in V(\al_1 \circ \ldots \circ \al_j)$ are in the same equivalence class of $I$, we denote this by writing $u \sim v$.
\item For each $v \in V(\al_1 \circ \ldots \circ \al_j)$, we define $\hat{v}$ to be the equivalence class of $I$ containing $v$.
\end{enumerate}
\end{definition}
\begin{definition}
Given an intersection pattern $I \in Int_{\tau_1,\ldots,\tau_j}$, we define the resulting shape $\gamma_I$ to be the (improper) shape obtained by starting with $\al_1 \circ \ldots \circ \al_j$ and then identifying all of the vertices in each equivalence class of $I$.
\end{definition}
\begin{proposition}
For all shapes $\al_1$ and $\al_2$ such that $|V_{\al_1}| = |U_{\al_2}|$, 
${\cm}_{\al_1}{\cm}_{\al_2} = \sum_{I \in Int_{\al_1,\al_2}}{{\cm_{\gamma_I}}}$. More generally, for all composable shapes $\al_1,\ldots,\al_j$, ${\cm}_{\al_1}\ldots{\cm}_{\al_j} = \sum_{I \in Int_{\al_1,\ldots,\al_j}}{{\cm_{\gamma_I}}}$.
\end{proposition}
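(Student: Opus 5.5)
The plan is to expand the matrix product entrywise and then reorganize the resulting sum according to which vertices of the concatenated shape receive the same label. I treat the two-shape case first and then the general $j$-shape case, which follows by the identical argument.

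\textbf{Expanding the product.} By \cref{def:graph-matrix-for-shape} and the computation displayed just before \cref{def:shape-concate},
\[
\cm_{\al_1}\cm_{\al_2}[S,T'] = \sum_{\sigma''} \prod_{e=\{a,b\} \in E(\al_1\circ\al_2)} A_G\bigl[\sigma''(a),\sigma''(b)\bigr].
\]
Here $\sigma''$ ranges over maps $V(\al_1\circ\al_2)\to[n]$ that satisfy three conditions: $\sigma''$ is injective on $V(\al_1)$, $\sigma''$ is injective on $V(\al_2)$, and $\sigma''(U_{\al_1})=S$, $\sigma''(V_{\al_2})=T'$. The gluing of $V_{\al_1}$ with $U_{\al_2}$ is already built into $V(\al_1\circ\al_2)$, so summing over the shared boundary labeling is the same as the sum over the middle index in matrix multiplication.

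Two small points need to be settled here. First, the ordered-tuple indexing must be consistent: a row index $S$ of $\cm_{\al_2}$ prescribes the label of each vertex of $U_{\al_2}$, and this matches the identification $V_{\al_1}=U_{\al_2}$ made in the concatenation. Second, the sum over middle indices $T=S'$ collapses exactly onto the labels $\sigma''(V_{\al_1})$.

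\textbf{Partitioning by intersection pattern.} Each such $\sigma''$ induces a partition $I(\sigma'')$ of $V(\al_1\circ\al_2)$, namely the fibers of $\sigma''$. Because $\sigma''$ is injective on each $V(\al_i)$, no two vertices of a single $\al_i$ lie in the same class, so $I(\sigma'')\in Int_{\al_1,\al_2}$. Conversely, for a fixed $I$ there is a bijection between the maps $\sigma''$ with $I(\sigma'')=I$ and the injective maps $\sigma:V(\gamma_I)\to[n]$ with the prescribed boundary images. The correspondence is $\sigma''=\sigma\circ\pi_I$, where $\pi_I$ is the quotient map $v\mapsto\hat v$.

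Under this correspondence the edge product is preserved: the edges of $\gamma_I$ are exactly the images of the edges of $\al_1\circ\al_2$, kept with multiplicity. This gives
\[
\prod_{e\in E(\al_1\circ\al_2)} A_G\bigl[\sigma''(a),\sigma''(b)\bigr] = \prod_{e\in E(\gamma_I)} A_G\bigl[\sigma(a),\sigma(b)\bigr].
\]
It remains to match the boundaries. The boundaries of $\gamma_I$ are $\pi_I(U_{\al_1})$ and $\pi_I(V_{\al_2})$. Summing the correspondence over $I$ therefore yields $\cm_{\al_1}\cm_{\al_2}[S,T']=\sum_{I}\cm_{\gamma_I}[S,T']$.

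\textbf{The main subtlety: defining $\cm_{\gamma_I}$ for improper shapes.} The one step needing care is that \cref{def:graph-matrix-for-shape} must be read for improper shapes $\gamma_I$. Such a shape can have multi-edges, which must be kept with multiplicity rather than reduced using $A_G^2=1$. It can also have loops, since an edge $\{a,b\}$ can become a loop if $a\sim b$. A loop evaluates to the diagonal entry $A_G[i,i]=0$, so the corresponding term vanishes on both sides consistently.

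Similarly, if $U_{\al_1}$ and $V_{\al_2}$ intersect after identification, the boundaries of $\gamma_I$ overlap. The entry is then supported only where the prescribed labels agree, and this matches the left side. Once these conventions are fixed, the identity is purely combinatorial.

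\textbf{The $j$-shape case.} The general statement follows by the same argument applied directly to the $j$-fold product. Alternatively, it follows by induction, using that an intersection pattern on $\al_1,\ldots,\al_j$ is the same as a partition of $V(\al_1\circ\cdots\circ\al_j)$ that is injective on each $V(\al_i)$.
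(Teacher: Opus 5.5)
Your proof is correct and is essentially the argument the paper has in mind: the paper expands $\cm_{\al_1}\cm_{\al_2}[S,T']$ as a sum over labelings $\sigma''$ that are injective on each $V(\al_i)$ separately and then groups these labelings by which vertices collide, each group giving $\cm_{\gamma_I}$, exactly as in your bijection $\sigma''=\sigma\circ\pi_I$. One small correction is that your worry about loops is vacuous: every edge of $\al_1\circ\al_2$ lies inside a single $\al_i$, and $I$ never identifies two vertices of the same $\al_i$, so identifications can create multi-edges but never new loops.
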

To gain intuition for multiplying graph matrices and our analysis, it is very useful to consider what happens when we multiply ${\cmp}_{\fp_{j}}$ and ${\cmp}_{\fp_1}$.
\begin{lemma}\label{lem:pathproduct}
For all $j \in \mathbb{N}$, ${\cmp}_{\fp_{j}}{\cmp}_{\fp_1} \approx {\cmp}_{\fp_{j+1}} + {\cmp}_{\fp_{j-1}}$. More precisely, with high probability, $||{\cmp}_{\fp_{j}}{\cmp}_{\fp_1} - ({\cmp}_{\fp_{j+1}} + {\cmp}_{\fp_{j-1}})||$ is $o_n(1)$.
\end{lemma}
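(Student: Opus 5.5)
Our plan is to expand the product using the intersection-pattern formula from the Proposition above, $\cm_{\fp_j}\cm_{\fp_1} = \sum_{I \in Int_{\fp_j,\fp_1}} \cm_{\gamma_I}$. Label the path $\fp_j$ as $u = w_0, w_1, \ldots, w_j$ and the single edge $\fp_1$ as $w_j, x$. Since $\fp_1$ has only one non-glued vertex $x$, an intersection pattern is either trivial or identifies $x$ with exactly one $w_i$, $0 \le i \le j-1$. So the product has exactly $j+1$ terms.

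\begin{enumerate}
\item \textbf{Trivial pattern.} This gives $\cm_{\fp_{j+1}}$. After normalizing by $n^{-(j+1)/2}$, it is exactly $\cmp_{\fp_{j+1}}$.

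\item \textbf{Backtracking pattern $x \sim w_{j-1}$.} The edge $\{w_{j-1},w_j\}$ now appears twice. Since $A_G[a,b]^2 = 1$ for $a\neq b$, the resulting improper shape is the path $w_0,\ldots,w_{j-1}$ with right endpoint $w_{j-1}$, plus an isolated non-boundary vertex $w_j$. This vertex $w_j$ ranges over the $n-j$ labels not used by the path. Hence this term equals $(n-j)\cm_{\fp_{j-1}}$. Multiplying by $n^{-(j+1)/2}$ gives $(1-j/n)\cmp_{\fp_{j-1}}$.

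For $j = 1$ this case gives $(n-1) I$ and $\fp_0$ is the identity shape; the conclusion is the same. The error $\frac{j}{n}\cmp_{\fp_{j-1}}$ has norm $\tilde O(1)\cdot j/n = o_n(1)$ by \cref{thm:roughnormbounds}.

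\item \textbf{Remaining patterns $x \sim w_i$, $0 \le i \le j-2$.} Here the shape $\gamma_I$ has $j$ vertices and no doubled edges. Its edges are those of the path $w_0,\ldots,w_j$ together with the chord $\{w_j, w_i\}$, so it contains a cycle $w_i,\ldots,w_j,w_i$. The right boundary is $w_i$ and the left boundary is $w_0$.
\begin{itemize}
\item If $i \ge 1$, the shape is proper with $|V| = j$, and $w_i$ is a separator of size $1$. \cref{thm:roughnormbounds} gives norm $\tilde O(n^{(j-1)/2})$. After the factor $n^{-(j+1)/2}$ this is $\tilde O(n^{-1}) = o_n(1)$.
\item If $i = 0$, then $u = v = w_0$, giving a diagonal-type shape. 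The separator is still size $1$, so the same bound holds.
\end{itemize}
Summing over at most $j$ such terms keeps the total $o_n(1)$ for fixed $j$, or for $j = o(n/\mathrm{polylog})$.
\end{enumerate}

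Putting the three cases together gives
\[
\cmp_{\fp_j}\cmp_{\fp_1} - \bigl(\cmp_{\fp_{j+1}} + \cmp_{\fp_{j-1}}\bigr) = -\tfrac{j}{n}\cmp_{\fp_{j-1}} + \sum_{i \le j-2} n^{-\frac{j+1}{2}} \cm_{\gamma_{I_i}},
\]
and the right-hand side has norm $o_n(1)$ with high probability.

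The main subtlety we expect is bookkeeping rather than a hard estimate. One must check that the doubled edge in the backtracking term really collapses to the indicator-free factor $A_G^2 = 1$ off the diagonal. One must also check that the injectivity constraints make the isolated vertex range over exactly $n - j$ values, rather than introducing further dependent corrections. Finally, the polylog losses in \cref{thm:roughnormbounds} must be absorbed by the spare factor $n^{-1/2}$ or better in the non-backtracking intersection terms.
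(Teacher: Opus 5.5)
Your proof is correct and is essentially the paper's argument: the same three-way split into the null pattern giving $\cmp_{\fp_{j+1}}$, the backtracking pattern giving $(1-j/n)\cmp_{\fp_{j-1}}$, and the chord patterns $x\sim w_i$ with $i\le j-2$, which are bounded by \cref{thm:roughnormbounds}. One slip: those chord shapes have $j+1$ vertices $w_0,\ldots,w_j$, not $j$. So each has norm $\tilde O(n^{j/2})$, which is $\tilde O(n^{-1/2})$ after normalization (as the paper states) rather than $\tilde O(n^{-1})$; this is still $o_n(1)$ for fixed $j$, but your aside about allowing $j=o(n/\mathrm{polylog})$ should be dropped.
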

\begin{proof}
Let $u_{\fp_j},w_1,\ldots,w_{j-1},v_{\fp_j}$ be the vertices of $\fp_{j}$ and let $v^{*} = v_{\fp_1}$. Observe that $Int_{\fp_j,\fp_1}$ consists of the following intersection patterns:
\begin{enumerate}
\item The null intersection pattern $I_{null}$ with no intersections. Note that $\gamma_{I_{null}} = \fp_j \circ \fp_1 = \fp_{j+1}$.
\item The intersection pattern $I_{back}$ where $v^{*} \sim w_{j-1}$. Note that $\cm_{\gamma_{I_{back}}} = (n-j)\cm_{P_{j-1}}$ as $\gamma_{I_{back}}$ is the improper shape consisting of $\fp_{j-1}$ together with a double edge between $v_{\gamma_{I_{back}}}$ (which is the equivalence class $v^{*} \sim w_{j-1}$) and $v_{\fp_j}$. Since the entries of $A_G$ are $\pm{1}$, the double edge vanishes leaving $v_{\fp_j}$ as an isolated vertex. Since there are $j$ other vertices $\{u_{\fp_j},w_1,\ldots,w_{j-2}.v_{\gamma_{I_{back}}}\}$ in $\gamma_{I_{back}}$, once these vertices have been mapped to $[n]$, there are $n-j$ choices for where $v_{\fp_j}$ is mapped to.

We call ${I_{back}}$ a backtracking intersection; see \cref{def:backtracking-intersections} for the general definition.
\item For each $j' \in \{0,1,\ldots,j-2\}$, we have an intersection pattern $I_{j'}$ where $v^{*} \sim w_{j-1}$ (where we take $w_0 = u_{\fp_j}$). Note that with high probability, $||M_{\gamma_{j'}}||$ is $\tilde{O}\left(n^{\frac{j}{2}}\right)$ as $\gamma_{I_{j'}}$ is a proper shape with $j+1$ vertices which contains a path between $u_{\gamma_{I_{j'}}}$ and $v_{\gamma_{I_{j'}}}$.
\end{enumerate}
Putting these pieces together, we have that 
\begin{align*}
{\cmp}_{\fp_{j}}{\cmp}_{\fp_1} = n^{-\frac{j+1}{2}}{\cm}_{\fp_{j}}{\cm}_{\fp_1} &= n^{-\frac{j+1}{2}}\left(\cm_{\fp_{j+1}} + (n-j)\cm_{\fp_{j-1}} + \sum_{j'=0}^{j-2}{\cm_{\gamma_{I_{j'}}}}\right)\\
&= \cmp_{\fp_{j+1}} + \cmp_{\fp_{j-1}} +  n^{-\frac{j+1}{2}}\left(-j\cm_{\fp_{j-1}} + \sum_{j'=0}^{j-2}{\cm_{\gamma_{I_{j'}}}}\right)
\end{align*}
By \cref{thm:roughnormbounds}, with high probability, $||{\cmp}_{\fp_{j}}{\cmp}_{\fp_1} - ({\cmp}_{\fp_{j+1}} + {\cmp}_{\fp_{j-1}})||$ is $o_n(1)$, as needed.
\end{proof}
\paragraph{Small powers of $\cmp_{\fp_1}$}
To gain further intuition for our analysis, it is instructive to consider powers of $\cmp_{\fp_1}$. By \cref{lem:pathproduct}, the first few powers of $\cmp_{\fp_1}$ can be approximated as follows.
\begin{enumerate}
\item $(\cmp_{\fp_1})^2 \approx \cmp_{\fp_2} + Id_n$.
\item $(\cmp_{\fp_1})^3 \approx (\cmp_{\fp_2} + Id_n)\cmp_{\fp_1} \approx \cmp_{\fp_3} + 2\cmp_{\fp_1}$.
\item $(\cmp_{\fp_1})^4 \approx (\cmp_{\fp_3} + 2\cmp_{\fp_1})\cmp_{\fp_1} \approx \cmp_{\fp_4} + 3\cmp_{\fp_2} + 2Id_n$.
\item $(\cmp_{\fp_1})^5 \approx (\cmp_{\fp_4} + 3\cmp_{\fp_2} + 2Id_n)\cmp_{\fp_1} \approx \cmp_{\fp_5} + 4\cmp_{\fp_3} + 5\cmp_{\fp_1}$.
\item $(\cmp_{\fp_1})^6 \approx (\cmp_{\fp_5} + 4\cmp_{\fp_3} + 5\cmp_{\fp_1})\cmp_{\fp_1} \approx \cmp_{\fp_6} + 5\cmp_{\fp_4} + 9\cmp_{\fp_2} + 5Id_n$.
\end{enumerate}
It is illuminating to reverse this process to write $\cmp_{\fp_j}$ as a polynomial in $\cmp_{\fp_1}$. The first few such polynomials are as follows:
\begin{enumerate}
\item $\cmp_{\fp_2} \approx (\cmp_{\fp_1})^2 - Id_n$.
\item $\cmp_{\fp_3} \approx (\cmp_{\fp_1})^3 - 2\cmp_{\fp_1}$.
\item $\cmp_{\fp_4} \approx (\cmp_{\fp_1})^4 - 3((\cmp_{\fp_1})^2 - Id_n) - 2Id_n \approx (\cmp_{\fp_1})^4 - 3(\cmp_{\fp_1})^2 + Id_n$.
\item $\cmp_{\fp_5} \approx (\cmp_{\fp_1})^5 - 4((\cmp_{\fp_1})^3 - 2\cmp_{\fp_1}) - 5\cmp_{\fp_1} \approx (\cmp_{\fp_1})^5 - 4(\cmp_{\fp_1})^3 + 3\cmp_{\fp_1}$.
\item \begin{align*}
\cmp_{\fp_6} &\approx (\cmp_{\fp_1})^6 -5\left((\cmp_{\fp_1})^4 - 3(\cmp_{\fp_1})^2 + Id_n\right) - 9\left((\cmp_{\fp_1})^2 - Id_n\right) - 5Id_n \\
&\approx (\cmp_{\fp_1})^6 -5(\cmp_{\fp_1})^4 + 6(\cmp_{\fp_1})^2 - Id_n.
\end{align*}
\end{enumerate}
Note that if we replace $(\cmp_{\fp_1})$ by $2x$, this gives the Chebyshev polynomials of the second kind! In~\cref{sec:graph-mat-cheb}, we elaborate on this connection and observe that it holds for a whole class of graph matrices, backbone-correction shapes, which is the main class of graph matrices that we analyze in this work.

\subsection{Initial Attempts with Constant Improvement}\label{sec:initialattempts}
Before discovering our construction, we made several initial attempts that gave a better bound than $2\sqrt{n}$ but fell well short of $\sqrt{n}$. We present these attempts as they give further intuition for the problem.

Our first attempt was as follows:
\paragraph{Explicit Factorization with Value 1.732} We start with
$\left(a\ \cdot Id + b \cdot \cmp_{\fp_1} \right)^2$ as an initial PSD guess and make corrections as needed. Observe that 
\[
\left(a \cdot Id + b \cdot \cmp_{\fp_1} \right)^2 \approx (a^2 + b^2)Id + 2ab \cdot \cmp_{\fp_1}  + {b^2}\cdot \cmp_{\fp_2}\,.
\]
In order to satisfy the clique constraints, we need to add ${b^2}\cdot  \cmp_{\overline{\fp}_2}$. However, the resulting matrix may not be PSD. Since $||{\cmp_{\fp_2}}|| \approx 2$, we need to add $2{b^2}Id$ to ensure that the matrix is PSD. Thus, our final candidate is approximately
\[
(a^2 + 3b^2)Id + 2ab {\cmp_{\fp_1}} + {b^2} \cmp_{\fp_2} + {b^2}\cmp_{\overline{\fp}_2}
\]
The ratio $\frac{2ab}{a^2 + 3b^2}$ is maximized when $b = \frac{a}{\sqrt{3}}$ at which it takes a value of $\frac{1}{\sqrt{3}} \approx 0.577$. Renormalization gives us an upper bound value of $\approx 1.732$.

We can improve this ratio significantly by considering more terms. For example, as we show in~\cref{sec:previous-attempt-appendix}, by considering 
$\left(aId + b\frac{\cm_{\fp_1}}{\sqrt{n}} + c {\cmp_{\fp_2}} + d \cmp_{\overline{\fp}_2}\right)^2$ and then making corrections as needed, we can obtain an upper bound value of $1.388$.

While this perspective can yield constant-factor improvements over the baseline $2\sqrt{n}$ bound, it remains far from clear how to develop a systematic construction that approaches the correct constant of $1$, or even $1+\eps$ for a small constant $\eps>0$.

\paragraph{Attempt via Polynomial Approximation}
Another approach, inspired by the line of work for the sparse regime \cite{banks2019vectorcoloringsrandomramanujan, BKM19}, is to consider functions/polynomials in the normalized adjacency matrix which are PSD and then add correction terms as needed.

For example, among nonnegative functions with second moment $2$ under the radius-$2$ semicircle distribution $SC(2)$, the following function maximizes $\E_{x\sim SC(2)}[xF(x)]$:
\[ 
F(x) = \begin{cases}
	2x \quad &\text{if } x \in [0,2]\\
	0 \quad &\text{if } x\in [-2, 0]\,.
\end{cases}
\]
To analyze $F(\cmp_{\fp_1})$, it turns out to be very useful to decompose $F$ in terms of Chebyshev polynomials. Let
\[
P_j(x) = U_j\!\left(\frac{x}{2}\right)
\]
where $U_j$ is the $j$-th Chebyshev polynomial of the second kind. As we show in \cref{sec:cheby-shape-approx}, these polynomials have the following properties:
\begin{enumerate}
\item The polynomials $\{P_j(x): j \in \mathbb{N}\}$ are an orthonormal basis for the semicircle distribution $f(x) = \frac{\sqrt{4-x^2}}{2\pi}$.
\item $P_j({\cmp_{P_1}}) \approx \cmp_{\fp_j}$.
\end{enumerate}
Expanding the above via Chebyshev polynomials (of the second kind), we have 
\[ 
F(x) = C_F + x + \sum_{j=2}^{\infty} b_j \cdot P_j(x) 
\]
where
\[ 
C_F \coloneqq \E_{x\sim SC(2)}[F(x)] = \frac{8}{3\pi}\,.
\] 
By symmetry, $\E_{x\sim SC(2)}[F(x)^2] = 2\E_{x\sim SC(2)}[x^2] = 2$. Parseval's identity, including the unit coefficient of $P_1(x)=x$, therefore gives
\[
2 = \E_{x\sim SC(2)}[F(x)^2] = C_F^2 + 1 + \sum_{j=2}^{\infty} b_j^2.
\]
Thus $\sum_{j=2}^{\infty} b_j^2 = 1 - C_F^2$, while $\operatorname{Var}_{x\sim SC(2)}(F(x)) = 2 - C_F^2$.

If we consider the matrix 
\[
F(\cmp_{\fp_1}) \approx {C_F}I_n + {\cmp_{\fp_1}} + \sum_{j=2}^{\infty}{b_j} \cmp_{\fp_j}
\]
we would need to add a correction $\sum_{j=2}^{\infty}{{b_j}\cmp_{\overline{\fp}_j}}$ in order to satisfy the clique constraints. The norm of this correction term would be $2\sqrt{\sum_{j=2}^{\infty}{b_j^2}} = 2\sqrt{1 - C_F^2}$ so after adding $2\sqrt{1 - C_F^2}Id$ to restore PSDness, we obtain a ratio of $\frac{1}{C_F + 2\sqrt{1 - C_F^2}} \approx 0.5246$ which is worse than our first attempt. That said, as we describe below, our construction is a modification of this attempt.

\subsection{Our Construction of a Witness Matrix} \label{sec:key-ideas}
Now that we have the needed background on graph matrices and have discussed some initial attempts which fell short, we are ready to describe our construction of a witness matrix showing that with high probability, the \Lovasz-Theta number of a random graph is at most $(1+o_n(1))\sqrt{n}$.

The key idea for our construction is that instead of considering $F({\cmp}_{\fp_{1}})$, we can consider $F(Q)$ for any matrix $Q$ which has the same behavior as a Wigner random matrix with spectrum contained within $[-2,2]$.  Using this freedom, we can choose $Q$ so that up to small errors which we sweep under the rug for now, no correction terms are needed because $F(Q)$ contains its own correction terms! It turns out that when we do this, the coefficient of ${\cmp}_{\fp_{1}}$ in $F(Q)$ is $C_F$ so the coefficients of $I_n$ and $\cmp_{\fp_1}$ in $F(Q)$ are both $C_F$. Taking $W \approx \frac{F(Q)}{C_F}$ gives a target matrix with value $c_k = 1$ which proves our result.

\paragraph{Notation.}
We fix some notation. Our construction of $Q$ is an iterative construction. We index the iteration levels by $i$ and we use $j$ to denote the number of shapes in a concatenation.

For a linear combination of normalized graph matrices $Q_i$, let $\calB(Q_i)$ denote the collection of shapes that appear in $Q_i$, which we call base shapes. For each base shape $\tau \in \calB(Q_i)$, we denote its (normalized) coefficient by $c({\tau})$. In other words, we have that $Q_i = \sum_{\tau \in \calB(Q_i)}{c(\tau)\cdot \cmp_{\tau}}$.

In our construction, $\calB(Q_i)$ will grow monotonically with $i$ (i.e., for all $i < i'$, $\calB(Q_i) \subseteq \calB(Q_{i'})$) and once a shape $\tau$ is added to some $\calB(Q_i)$, its coefficient $c_{\tau}$ never changes. As in our second attempt in Section \ref{sec:initialattempts}, we take 
\[ 
F(x) = \begin{cases}
	2x \quad &\text{if } x \in [0,2]\\
	0 \quad &\text{if } x\in [-2, 0]\,.
\end{cases}
\]
and we have the decomposition $F(x) = C_F + x + \sum_{j=2}^{\infty} b_j \cdot P_j(x)$ where $C_F = \frac{8  }{3 \pi }$ and $\sum_{j=2}^{\infty}{b_j^2} = 1 - C_F^2$.

\paragraph{Iterative Construction}

We start by describing our initialization and the first iteration. For clarity, we present the recursive construction in terms of normalized graph matrices.
\begin{mdframed}[linewidth=0.4pt]
\textbf{Initialization and first iteration:}

\begin{enumerate}
\item \textbf{Iteration level $i=0$.}  
Define
\[
Q_0 \coloneqq {C_F}\cdot  \cmp_{\fp_1}.
\]
The base shape collection for $Q_0$ is
\[
\calB(Q_0) = \{\fp_1\}.
\]
\item \textbf{Iteration level $i=1$.}  
When we take $F(Q_0)$, we have the terms $\sum_{j \geq 2}{ b_j {C_F^j} \cmp_{\fp_j}}$ which need to be corrected. To correct these terms, we define
\[
Q_1 \coloneqq Q_0 + \sum_{j \geq 2}{ b_j {C_F^j} \cmp_{\overline{\fp}_j}} .
\]
The base-shape collection for $Q_1$ is
\[
\calB(Q_1) = \{\fp_1\} \cup \{\overline{\fp}_j: j \geq 2\},
\]
where $c(\fp_1) = C_F$ and for all $j \geq 2$, $c({\overline{\fp}_j}) = b_j{C_F^j}$.
\end{enumerate}

\end{mdframed}
Our general recursive step is as follows:
\begin{mdframed}[linewidth=0.4pt]
\textbf{Recursive Update ($i \to i+1$):}

\begin{enumerate}
\item For each $j \geq 2$, consider the set of $j$-way concatenations of base shapes in $Q_i$:
\[
\calP_j(Q_i) :=
\{\tau_1 \circ \dots \circ \tau_j : \tau_t \in \calB(Q_i) \forall i \in [j]\}.
\]
\item Partition $\calP_j(Q_i)$ depending on whether the product uses a newly introduced base shape:
\[
\textsf{Old-}\calP_j(Q_i)
=
\{\beta \in \calP_j(Q_i) :
\tau_t \in \calB(Q_{i-1}) \text{ for all } t \in [j]\},
\]
\[
\textsf{New-}\calP_j(Q_i)
=
\{\beta \in \calP_j(Q_i) :
\exists t \text{ such that }
\tau_t \in \calB(Q_i)\setminus \calB(Q_{i-1})\}.
\]
\item For each $\beta=\tau_1\circ\dots\circ\tau_j \in \textsf{New-}\calP_j(Q_i)$, note that $\cmp_{\beta}$ appears in $F(Q_i)$ with coefficient
\[
c({\beta}) \coloneqq b_j \cdot  \prod_{t=1}^{j} c({\tau_t}),
\]
and $Q_i$ does not contain a correction for this term.

\item To correct for these terms, we take $c(\bar{\beta}) \coloneqq    c(\beta)$ and make the update 
\[
Q_{i+1}-Q_i
\coloneqq
\sum_{j\ge 2}
\sum_{\beta \in \textsf{New-}\calP_j(Q_i)}
c({\bar{\beta}}) \cdot \cmp_{\bar{\beta}} .
\]
Equivalently, the base shape collection evolves as
\[
\calB(Q_{i+1})
\coloneqq
\calB(Q_i)
\cup
\{\bar{\beta} : \beta \in \textsf{New-}\calP_j(Q_i)\},
\]
where for each $\beta \in \textsf{New-}\calP_j(Q_i)$, $c({\bar{\beta}}) = b_j\prod_{t=1}^{j} c(\tau_t)$.
\end{enumerate}

\end{mdframed}
\begin{remark}
    In our formal analysis, we will consider an inner-truncation such that for any $Q_i$ during the iterative process, we only consider $P_j(Q_i)$ for 
    $j\leq D$.
\end{remark}

For our final $Q$, we will take $Q = Q_{t}$ for some truncation parameter $t$ and then handle the remaining small correction terms directly. This is addressed in~\cref{sec:real-world-proof}. That said, for the purposes of this overview, we pretend that we take $Q$ to be the limit of $Q_i$ as $i \to \infty$.
\paragraph{Backbone and Backbone-Correction Shapes.}
Based on this construction, we define backbone shapes and backbone-correction shapes as follows:
\begin{definition}[Backbone Shapes and Backbone-Correction Shapes]  \label{def:backbone-correction} \ 
\begin{enumerate}
\item We say that $\tau$ is a backbone-correction shape if $\tau = \fp_1$ or $\tau = \bar{\beta}$ for some backbone shape $\beta$.
\item We say that $\beta$ is a backbone shape if $\beta = \tau_1 \circ \cdots \circ \tau_j$ for some backbone-correction shapes $\tau_1,\ldots,\tau_j$ where $j \geq 2$.
\end{enumerate}
Equivalently, a backbone shape is a shape $\beta$ such that 
\begin{enumerate}
\item \textbf{(Non-backtracking Path through All Vertices)} $V(\beta) = \{w_j: j \in [l] \cup \{0\}\}$ for some $l \in \mathbb{N}$, $U_{\beta} = \{w_0\}$, $V_{\beta} = \{w_l\}$, and 
$\{\{w_{i-1},w_i\}: i \in [l]\} \subseteq E(\beta)$.
\item \textbf{(No Crossings)} No two edges of $\beta$ cross. More precisely, there are no indices $i_1 < i_2 < i_3 < i_4 \in [l] \cup \{0\}$ such that $\{w_{i_1},w_{i_3}\} \in E(\beta)$ and $\{w_{i_2},w_{i_4}\} \in E(\beta)$.
\item \textbf{(No Boundary Edge)} $\{w_0,w_l\} \notin E(\beta)$.
\end{enumerate}
Backbone-correction shapes can be defined in the same way except that we have the edge $\{w_0,w_l\}$. In particular, a backbone-correction shape is a shape $\tau$ such that 
\begin{enumerate}
\item \textbf{(Non-backtracking Path through All Vertices)} $V(\tau) = \{w_j: j \in [l] \cup \{0\}\}$ for some $l \in \mathbb{N}$, $U_{\tau} = \{w_0\}$, $V_{\tau} = \{w_l\}$, and $\{\{w_{i-1},w_i\}: i \in [l]\} \subseteq E(\tau)$.
\item \textbf{(No Crossings)} No two edges of $\tau$ cross. More precisely, there are no indices $i_1 < i_2 < i_3 < i_4 \in [l] \cup \{0\}$ such that $\{w_{i_1},w_{i_3}\} \in E(\tau)$ and $\{w_{i_2},w_{i_4}\} \in E(\tau)$.
\item \textbf{(Boundary Edge)} $\{w_0,w_l\} \in E(\tau)$.
\end{enumerate}
\end{definition}
\begin{remark}
    We consider the shape $\fp_1$ corresponding to the adjacency matrix to be a degenerate case of a backbone-correction shape.
\end{remark}

\begin{figure*}[h]
	\begin{subfigure}[t]{0.5\textwidth} 
    \centering
    \includegraphics[height=3cm]{diagrams/P2p.pdf}
    \caption{$\overline{\fp}_2(A_G)$: Backbone Correction for  ($\fp_2(A_G)$)}
        \label{fig:backbone-p2-correction}
    \caption*{{$\overline{\fp}_2[i,j]= \sum_{t\notin\{i,j\}} A_G[i,t] \cdot A_G[t,j]\cdot A_G[i,j] $. } }\end{subfigure}
\hfill
\begin{subfigure}[t]{0.5\textwidth}
    \centering
    \includegraphics[height=3cm]{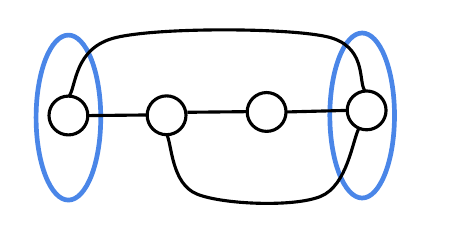}
    \caption{Example of Backbone-Correction}
    \label{fig:eg-backbone}
    \caption*{{$\cm_\tau[i,j]= \sum_{a\neq b\neq i\neq j} A_G[i,a] \cdot A_G[a,b]\cdot  A_G[b,j]\cdot A_G[i,j] \cdot A_G[a,t] $. }}
\end{subfigure}

    \centering
    \begin{subfigure}[t]{0.45\textwidth}
        \centering
        \includegraphics[width=\linewidth]{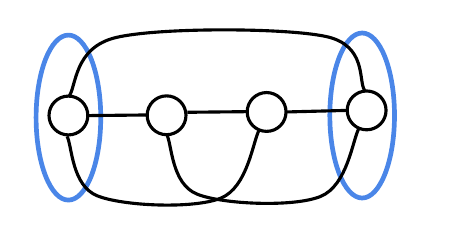}
        \caption{Illegal due to Crossing Edges}
        \label{fig:illegal-1}
    \end{subfigure}
    \hfill
    \begin{subfigure}[t]{0.45\textwidth}
        \centering
        \includegraphics[width=\linewidth]{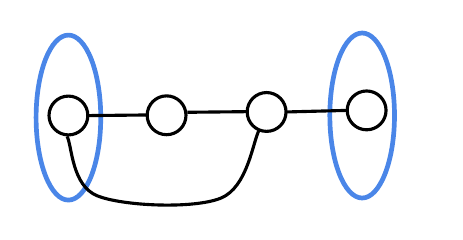}
        \caption{Illegal due to Missing the Boundary Edge}
        \label{fig:illegal-2}
    \end{subfigure}
    \caption{Examples of Legal and Illegal Backbone-Correction Shapes }
    \label{fig:backbone-correction}
\end{figure*}

\begin{figure}
    \centering
    \begin{subfigure}[t]{0.45\textwidth}
        \centering
        \includegraphics[width=\linewidth]{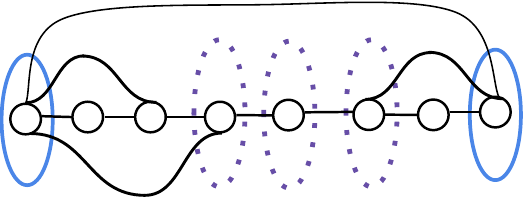}
    \end{subfigure}
        \caption{Backbone-Correction Shape (for \cref{fig:multi-concate}) }
\end{figure}
We record a structural property of the matrices that arise throughout the iterative process, namely that each $Q_i$ and $F(Q_i)$ is symmetric.
\begin{definition}[Symmetric Linear Combination] \label{def:sym-lin-comb}
Let $\calB $ be a finite collection of shapes, and let
\(
M = \sum_{\tau \in \calB} c_\tau \cdot \cm_\tau
\)
be a linear combination of the corresponding graph matrices. We say that $M$ is a \emph{symmetric linear combination} if for all $\tau \in \calB$, we have $\tau^{\top} \in \calB$ and $c_{\tau} = c_{\tau^\top}$.
\end{definition}
As a consequence, any symmetric linear combination $M$ defines a symmetric matrix, even though the individual summands $M_\tau$ need not be symmetric.

\begin{observation} In the above process, for all $i \in \mathbb{N} \cup \{0\}$, $Q_i$ and $F(Q_i)$ are symmetric linear combinations.
\end{observation}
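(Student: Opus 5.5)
We prove the observation by induction on the iteration level $i$, keeping the claim that $Q_i$ is a symmetric linear combination as the inductive hypothesis. The fact that $F(Q_i)$ is also symmetric will then follow at each level.

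\textbf{Base cases.} For $Q_0 = C_F \cmp_{\fp_1}$ the claim is immediate, since $\fp_1^\top = \fp_1$. For $Q_1$, observe that $\overline{\fp}_j^\top = \overline{\fp}_j$: reversing the path $w_0,\ldots,w_j$ together with the edge $\{w_0,w_j\}$ gives the same shape up to relabeling. So every shape in $\calB(Q_1)$ is self-transpose and there is nothing to check.

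\textbf{Key combinatorial step.} For composable shapes,
\[
(\tau_1\circ\cdots\circ\tau_j)^\top = \tau_j^\top\circ\cdots\circ\tau_1^\top,
\]
because transposition only swaps boundaries and concatenation glues right to left boundaries. Adding the edge $\{u_\beta,v_\beta\}$ also commutes with transposition, so $\overline{\beta^\top} = \bar\beta^\top$.

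\textbf{Inductive step.} Suppose $Q_i$ and $Q_{i-1}$ are symmetric linear combinations, so $\calB(Q_i)$ and $\calB(Q_{i-1})$ are closed under transpose with $c(\tau)=c(\tau^\top)$.

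\begin{itemize}
\item If $\beta=\tau_1\circ\cdots\circ\tau_j\in\textsf{New-}\calP_j(Q_i)$, then $\beta^\top=\tau_j^\top\circ\cdots\circ\tau_1^\top$ is again a $j$-way concatenation of shapes in $\calB(Q_i)$.
\item It is still \textsf{New}: some $\tau_t\notin\calB(Q_{i-1})$, so $\tau_t^\top\notin\calB(Q_{i-1})$ by closure of $\calB(Q_{i-1})$ under transpose.
\item Its coefficient is unchanged, since $b_j\prod_t c(\tau_t^\top)=b_j\prod_t c(\tau_t)$.
\end{itemize}

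Hence the new corrections $\bar\beta$ and $\overline{\beta^\top}=\bar\beta^\top$ are both added, with equal coefficients, and $Q_{i+1}$ is a symmetric linear combination.

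\textbf{The matrix $F(Q_i)$.} By definition, $F(Q_i)$ is formed by summing $b_j\prod_t c(\tau_t)\,\cmp_{\tau_1\circ\cdots\circ\tau_j}$ over all tuples (plus the constant and linear terms). The reversal bijection on tuples pairs each shape with its transpose at equal coefficient. As a sanity check, since $\cm_{\alpha^\top}=\cm_\alpha^\top$, a symmetric $Q_i$ gives a symmetric matrix, and then $F(Q_i)$ is symmetric as a matrix function.

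\textbf{Expected obstacle.} The only subtle point is bookkeeping. The shape-level claims must be read as shapes up to isomorphism, and the normalization factor in \cref{def:normalized-graph-matrix} must be transpose-invariant. It is, because it depends only on $|V(\alpha)|$ and the existence of a $u$--$v$ path.
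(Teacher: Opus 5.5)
Your argument is correct. The paper records this statement without proof. The closest justification it offers is the remark immediately afterwards, which asserts $c(\tau)=C_F^{|V(\tau)|-1}\prod_{C}b_{l(C)-1}$ (product over chordless cycles $C$ of $\tau$), so $c(\tau)$ depends only on the underlying graph and $c(\tau)=c(\tau^\top)$ is immediate.

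That route gives a stronger invariance: the coefficient is unchanged under any choice of adjacent boundary pair on the Hamiltonian cycle, not just the swap $u\leftrightarrow v$. But the remark is stated for the limiting $Q$ and concerns only coefficients. At a finite level $i$, closure of $\calB(Q_i)$ under transposition still needs an argument. Membership in $\calB(Q_i)$ depends on how deeply the faces are nested relative to the boundary edge, which reflection preserves but re-rooting does not.

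Your induction supplies exactly this missing half, through three ingredients:
\begin{enumerate}
\item the reversal identity $(\tau_1\circ\cdots\circ\tau_j)^\top=\tau_j^\top\circ\cdots\circ\tau_1^\top$;
\item the commutation $\overline{\beta^\top}=\bar{\beta}^\top$;
\item the transpose-invariance of membership in $\textsf{New-}\calP_j(Q_i)$, which is why you rightly keep both $Q_i$ and $Q_{i-1}$ in the hypothesis.
\end{enumerate}

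The one step to tighten is $F(Q_i)$. It is not true by definition that $F(Q_i)=C_FI+Q_i+\sum_{j\ge2}b_j\fp_j(Q_i)$; that is only the idealized picture of the overview. The exact graph-matrix expansion of each $P_j(Q_i)$ carries all intersection terms. Compare \cref{thm:formal-equiv-cheby-shape}, which in addition assumes $\Var(M)=1$, whereas $S_i<1$ at every finite level. Your matrix-level sanity check only shows that $F(Q_i)$ is a symmetric matrix, which is weaker than being a symmetric linear combination.

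The fix is your own reversal idea applied one level down. Expand $Q_i^k$ over pairs $(\tau_1,\ldots,\tau_k;I)$ with $I\in Int_{\tau_1,\ldots,\tau_k}$, and send each pair to $(\tau_k^\top,\ldots,\tau_1^\top;I)$. The glued vertex set is the same, coefficients and normalizations are unchanged, and the resulting shape is $\gamma_I^\top$. Hence every polynomial in $Q_i$ is a symmetric linear combination of (possibly improper) shapes. In particular this covers $F_D(Q_i)$, the truncation actually used in the final construction.
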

\begin{remark}
In fact, the coefficients $c(\tau)$ for $Q$ satisfy the stronger property that $c(\tau)$ only depends on the underlying graph of $\tau$. In other words, changing the boundary vertices of $\tau$ does not affect $c(\tau)$ (as long as they are still adjacent vertices on the Hamiltonian cycle of $\tau$ so that we still have a backbone-correction shape). To see this, observe that by the nature of the iterative construction, 
\[
c(\tau) = C_F^{|V(\tau)|-1}\prod_{\text{cycles } C \text{ of } \tau \text{ with no chords}}{b_{l(C) - 1}}
\]
\end{remark}
\paragraph{Intuition for the Construction.}
The recursive construction of $Q$ is designed so that the resulting matrix satisfies three key structural properties, which together ensure that $F(Q)$ has the desired spectral behavior.

\emph{(1) Wigner-type behavior of $Q$.}
After normalization, the matrix $Q$ behaves analogously to a Wigner random matrix. Concretely, we show that $Q$ has spectral norm bounded by \[\|Q\| \leq (2+o_n(1)) \cdot \sqrt{\sum_{\tau \in \calB(Q)}{c({\tau})^2}}\]
and for all $k \in \mathbb{N}$, $\E\left[\Tr\left((QQ^T)^k\right)\right] = n \cdot \left(C_k\left(\sum_{\tau \in \calB(Q)}{c({\tau})^2}\right)^k + o_n(1)\right)$ where $C_k = \frac{1}{k+1}\binom{2k}{k}$ is the $k$th Catalan number. This implies that the limiting distribution of the spectrum of $Q$ as $n \to \infty$ is a semicircle with radius $2\sqrt{\sum_{\tau \in \calB(Q)}{c({\tau})^2}}$.

This is further discussed in~\cref{sec:norm-bound-overview}, and culminates in~\cref{cor:linear-combination-norm-bound} for matrix norm bounds that are sharp in the leading constant. The formal analysis appears in \cref{sec:trace-power-block-walks}.

\emph{(2) Equivalence with concatenation.}
For each $j \ge 1$, the matrix $P_j(Q)$ arising from the Chebyshev expansion behaves like a $j$-way concatenation of shapes from $Q$. This is a generalization of the behavior of $\fp_1$ and its powers which we discussed in \cref{sec:graphmatrixmultiplication}. In \cref{sec:graph-mat-cheb}, we discuss why we expect such a property to hold for any linear combination $Q' = \sum_{\tau}{c(\tau)\cmp_{\tau}}$ of backbone-correction shapes such that $\sum_{\tau}{c(\tau)^2} = 1$ (see \cref{eq:equivalence-dream}). We formally prove that this property holds in \cref{sec:cheby-shape-approx}.

The recursive construction ensures that whenever such concatenated shapes appear, their corresponding backbone-correction shapes are introduced at the same stage, maintaining the required structural pairing. 

\emph{(3) Variance normalization.} Our main theorem for norm bounds establishes that the spectral norm of $Q$ is determined by its variance proxy given by the coefficients. We further show that
the coefficients satisfy
\[
\sum_{\tau \in B(Q)} c(\tau)^2 = 1,
\]
so that the overall scale of $Q$ remains controlled throughout the recursion (up to truncations).

To see this, let $S_i = \sum_{\tau \in \calB(Q_i)}{c({\tau})^2}$ and observe that for all $i \in \mathbb{N} \cup \{0\}$,
\begin{align*}
S_{i+1} - S_i &= \sum_{j = 2}^{\infty}{b_j^2\sum_{\beta = \tau_1 \circ \cdots \circ \tau_j \in \textsf{New-}P_j(Q_i)} \prod_{t=1}^j c(\tau_t)^2}\\
&= \sum_{j = 2}^{\infty}{b_j^2(S_i^j - S_{i-1}^j)}
\end{align*}
where we take $S_0 = C_F^2$ and $S_{-1} = 0$. Summing this expression from $i = 0$ to $k$ gives that 
\[
S_{k+1} = S_0 + \sum_{j=2}^{\infty}{{b_j^2}S_k^j}
\]
Letting $S_{\infty} = \lim_{k \to \infty}{S_k}$ (assuming this limit exists), we have that 
$S_{\infty} = C_F^2 + \sum_{j=2}^{\infty}{{b_j^2}S_{\infty}^j}$. Since $\sum_{j=2}^{\infty}{b_j^2} = 1 - C_F^2$, $S_{\infty} = 1$ is a solution of this equation.
This is formally established in~\cref{claim:variance-evolution-recurrence}.

Taken together, these properties imply that $F(Q)$ behaves like a polynomial transformation of a Wigner-type matrix with controlled variance. In particular, this leads to a spectral norm bound of the correct order, up to lower-order error terms. The remainder of the analysis is devoted to formalizing these properties and making the above heuristic precise. 
\subsection{Graph Matrices Meet Chebyshev Polynomials}\label{sec:graph-mat-cheb}
In this section, we elaborate on the connection between the polynomials $P_j(x) = U_j\!\left(\frac{x}{2}\right)$ (which are a rescaling of the Chebyshev polynomials of the second kind) and non-backtracking walks which we previously alluded to in \cref{sec:graphmatrixmultiplication} and \cref{sec:key-ideas}. In particular, let
$
M = \sum_{\tau \in \calB(M)} c(\tau)\,\cmp_\tau
$
be a linear combination of backbone-correction matrices such that $\sum_{\tau \in \calB(M)}{c(\tau)^2} = 1$. For each $j$, let $P_j(M)$ denote the matrix obtained by applying the polynomial $P_j(x) = U_j\!\left(\frac{x}{2}\right)$ to $M$. Our goal is to show
\begin{displayquote}
$P_j(M)$ admits a \emph{simple} graph-matrix representation using shapes derived from $\calB(M)$.
\end{displayquote}
In particular,  it corresponds to a collection of shapes that can be viewed as $j$-way concatenations using shapes from the base set $\calB(M)$ (see \cref{def: j-way-product}). 

\paragraph{Backtracking Intersection Patterns} We first define backtracking intersection patterns, which are a generalization of the intersection pattern for $\cmp_{\fp_j}\cmp_{\fp_1}$ which backtracks and gives $\cmp_{\fp_{j-1}}$.
\begin{definition}[Backtracking intersections]\label{def:backtracking-intersections}
Given a shape $\alpha = \tau_1 \circ \ldots \circ \tau_j$ which is the concatenation of $j \geq 1$ backbone-correction shapes and a backbone-correction shape $\tau_{j+1}$, we say that $I \in Int_{\alpha,\tau_{j+1}}$ is a backtracking intersection pattern if the following conditions hold:
\begin{enumerate}
\item $\tau_{j+1} = \tau_{j}^{\top}$;
\item For all $v \in V(\tau_j) \setminus v_{\tau_j}$, letting $v'$ be the copy of $v$ in $\tau_{j+1}$, $v \sim v'$. Note that this implies that $v_{\tau_{j-1}} = u_{\tau_{j}} \sim v_{\tau_{j+1}}$.
\item For all vertices $u \in V(\alpha) \setminus V(\tau_j)$, there are no other vertices of $V(\alpha \circ \tau_{j+1})$ in the same equivalence class as $u$.
\end{enumerate}
Note that for each $\alpha = \tau_1 \circ \ldots \circ \tau_j$, there is a unique backbone-correction shape $\tau_{j+1}$ and a unique intersection pattern $I \in Int_{\alpha,\tau_{j+1}}$ such that $I$ is a backtracking intersection pattern.
\end{definition}
\begin{proposition}
Given a shape $\alpha = \tau_1 \circ \ldots \circ \tau_j$ which is the concatenation of $j \geq 1$ backbone-correction shapes, letting $I_{back} \in Int_{\alpha,\tau_{j}^{\top}}$ be the backtracking intersection pattern for $\alpha$, 
\[
\cm_{\gamma_{I_{back}}} = \left(\prod_{i=1}^{|V(\tau_j)| - 1}{(n - |V(\alpha) \setminus V(\tau_j)| - i)}\right)\cm_{\tau_1 \circ \ldots \circ \tau_{j-1}}
\]
Note that this implies that the intersection term for $\cmp_{\alpha}\cmp_{\tau_j^{\top}}$ resulting from $I_{back}$ is approximately $\cmp_{\tau_1 \circ \ldots \circ \tau_{j-1}}$.
\end{proposition}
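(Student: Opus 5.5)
Recall that $\alpha = \tau_1 \circ \cdots \circ \tau_j$ is glued to $\tau_{j+1}=\tau_j^{\top}$ along $v_{\tau_j} = u_{\tau_{j+1}}$. The plan is to describe the shape $\gamma_{I_{back}}$ explicitly and then evaluate $\cm_{\gamma_{I_{back}}}$ directly from \cref{def:graph-matrix-for-shape}. I will do this in three steps: identify the vertices of $\gamma_{I_{back}}$, identify its edges (after cancelling double edges), and count the labelings of the resulting isolated vertices.

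\textbf{Vertices.} Under $I_{back}$, every vertex $v \in V(\tau_j)\setminus\{v_{\tau_j}\}$ is merged with its copy $v'$ in $\tau_j^{\top}$. The vertex $v_{\tau_j}$ is already identified with $u_{\tau_{j+1}}$, which is the copy of $v_{\tau_j}$ in the transpose. So $V(\gamma_{I_{back}})$ is in bijection with $V(\alpha)$. Its left boundary is $u_{\tau_1}$, and its right boundary is the class of $v_{\tau_{j+1}}$, i.e.\ the class of $u_{\tau_j} = v_{\tau_{j-1}}$. When $j=1$ this right boundary is $u_{\tau_1}$ itself, and the shape $\tau_1\circ\cdots\circ\tau_0$ is to be read as the empty product, i.e.\ the identity.

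\textbf{Edges.} Every edge $\{a,b\}\in E(\tau_j)$ has a copy $\{a',b'\}$ in $\tau_j^{\top}$ with $a\sim a'$ and $b\sim b'$, so it appears with multiplicity exactly $2$ in $\gamma_{I_{back}}$. Condition 3 of \cref{def:backtracking-intersections} ensures there are no other identifications, so the edges of $\tau_1,\ldots,\tau_{j-1}$ keep multiplicity $1$. Since $A_G[\cdot,\cdot]^2=1$, every doubled edge contributes the factor $1$. What remains is the edge set of $\tau_1\circ\cdots\circ\tau_{j-1}$ on the vertices of $\alpha$. The $|V(\tau_j)|-1$ vertices of $V(\tau_j)\setminus\{u_{\tau_j}\}$ have all their incident edges inside $\tau_j$, so they become isolated non-boundary vertices. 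The vertex $u_{\tau_j}$ survives as the right boundary of $\tau_1\circ\cdots\circ\tau_{j-1}$.

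\textbf{Counting.} Write an injective labeling $\sigma$ of $\gamma_{I_{back}}$ as a labeling of $\tau_1\circ\cdots\circ\tau_{j-1}$ together with labels for the isolated vertices. The product of entries depends only on the first part, and the first part ranges exactly over the labelings in the definition of $\cm_{\tau_1\circ\cdots\circ\tau_{j-1}}$. Injectivity of $\sigma$ on $V(\gamma_{I_{back}})$ is exactly what the pattern requires: the merged classes are distinct, and injectivity within each of $\alpha$ and $\tau_{j+1}$ holds automatically. So, given the $|V(\alpha)\setminus V(\tau_j)|+1$ labels already used (including $u_{\tau_j}$), the isolated vertices can be labeled in
\[
\prod_{i=1}^{|V(\tau_j)|-1}\bigl(n-|V(\alpha)\setminus V(\tau_j)|-i\bigr)
\]
ways. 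This product is independent of $S$ and $T$, so it factors out of every entry and gives the claimed identity.

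\textbf{Normalization.} For the normalized statement, multiply by the factors $n^{-(|V(\alpha)|-1)/2}\, n^{-(|V(\tau_j)|-1)/2}$. The product above is $n^{|V(\tau_j)|-1}(1-O(1/n))$, so the term equals $(1-o(1))\,n^{-(|V(\alpha)|-|V(\tau_j)|)/2}\,\cm_{\tau_1\circ\cdots\circ\tau_{j-1}}$. Since $|V(\tau_1\circ\cdots\circ\tau_{j-1})| = |V(\alpha)|-|V(\tau_j)|+1$, this is $(1-o(1))\,\cmp_{\tau_1\circ\cdots\circ\tau_{j-1}}$.

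The main obstacle is the bookkeeping for the right boundary, namely checking that $u_{\tau_j}$ is the correct endpoint and is not itself counted among the isolated vertices, together with the degenerate case $j=1$. The rest is direct.
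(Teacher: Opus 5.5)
Your proposal is correct and follows the route the paper intends. The paper gives no separate proof, but its intended argument is the one written out for $\cmp_{\fp_j}\cmp_{\fp_1}$ in \cref{lem:pathproduct}: the edges of $\tau_j$ are doubled and vanish because the entries of $A_G$ are $\pm 1$, which leaves $|V(\tau_j)|-1$ isolated vertices, and their labels are counted injectively against the $|V(\alpha)\setminus V(\tau_j)|+1$ labels already used. Your explicit handling of the right boundary $u_{\tau_j}$ and of the degenerate case $j=1$ (where the remaining shape gives the identity) supplies bookkeeping the paper leaves implicit.
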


 \begin{figure}[h]
     \vspace{0.5em}
    
     \begin{subfigure}[t]{0.5\textwidth}
         \centering
         \includegraphics[width=\linewidth]{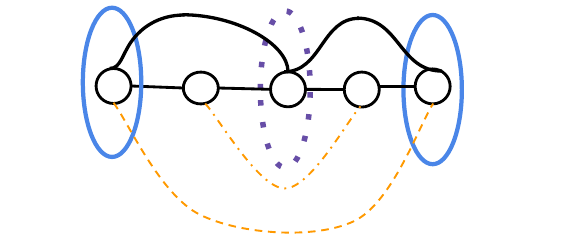}
      \end{subfigure}
     \hfill
 \begin{subfigure}[t]{0.48\textwidth}
         \centering
         \includegraphics[width=\linewidth]{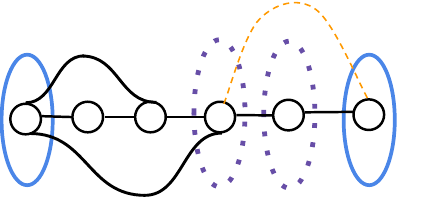}
     \end{subfigure}
     \hfill
              \caption{Examples of Backtracking Intersections}
       \label{fig:example-intersection}
       \label{fig:example-backtracking-intersection}
     \end{figure}
 We illustrate this with two examples of backtracking intersections in \cref{fig:example-backtracking-intersection}. In contrast, the reader should note that the intersection in \cref{fig:eg-intersection} is \emph{not} a backtracking intersection.

\paragraph{Products of backbone-correction matrices}
We now describe the approximate behavior of products of graph matrices. In \cref{sec:cheby-shape-approx}, we show that when we consider a product of the form $\cmp_{\tau_1 \circ \ldots \circ \tau_j}\cmp_{\tau_{j+1}}$, the main intersection patterns are the null intersection pattern with no intersections and the backtracking intersection pattern (if it exists). This leads to the following claim, which is the key claim we need for our analysis.
\begin{claim}
For all $j \in \mathbb{N}$ and all backbone-correction shapes $\tau_1,\ldots,\tau_{j+1}$, for the purposes of our analysis, $\cmp_{\tau_1 \circ \ldots \circ \tau_{j}}\cmp_{\tau_{j+1}} \approx \cmp_{\tau_1 \circ \ldots \circ \tau_{j+1}} + 1_{\tau_{j+1} = \tau_{j}^{\top}}\cmp_{\tau_1 \circ \ldots \circ \tau_{j-1}}$.
More precisely, 
\begin{align*}
\cmp_{\tau_1 \circ \ldots \circ \tau_{j}}\cmp_{\tau_{j+1}} \approx \cmp_{\tau_1 \circ \ldots \circ \tau_{j+1}} &= \cmp_{\tau_1 \circ \ldots \circ \tau_{j+1}} + 1_{\tau_{j+1} = \tau_{j}^{\top}}\cmp_{\tau_1 \circ \ldots \circ \tau_{j-1}} + (\text{terms with norm } o_n(1)) \\
&+ (\text{terms whose correction has norm } o_n(1))
\end{align*}
and when we multiply terms which have small norm or whose corrections have small norm by another backbone-correction shape, the resulting terms still either have small norm or a correction which has small norm.
\end{claim}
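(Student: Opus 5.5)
We will prove the claim by expanding the product over intersection patterns, $\cm_{\alpha}\cm_{\tau_{j+1}} = \sum_{I \in Int_{\alpha,\tau_{j+1}}} \cm_{\gamma_I}$ with $\alpha = \tau_1\circ\cdots\circ\tau_j$, and classifying each $I$ into one of four groups: (i) the null pattern, (ii) the backtracking pattern $I_{back}$ (present only when $\tau_{j+1}=\tau_j^{\top}$), (iii) patterns whose normalized contribution has norm $o_n(1)$, and (iv) patterns whose contribution need not be small but whose associated correction shape has small norm. The null pattern gives exactly $\cmp_{\tau_1\circ\cdots\circ\tau_{j+1}}$, since the normalization exponent $|V(\gamma)|-1$ is additive under concatenation along one boundary vertex. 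For (ii) we invoke the backtracking proposition preceding the claim. The falling factorial there equals $n^{|V(\tau_j)|-1}(1-O(|V(\alpha)|^2/n))$, which matches the normalization gap exactly, so this term is $\cmp_{\tau_1\circ\cdots\circ\tau_{j-1}}$ up to an $o_n(1)$ multiple of a matrix of norm $\tilde O(1)$.

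For a general pattern $I$ we count the vertices lost to identification: $|V(\gamma_I)| = |V(\alpha)|+|V(\tau_{j+1})|-1-k$, where $k$ is the number of nontrivial identifications. We then bound $\|\cm_{\gamma_I}\|$ using \cref{thm:roughnormbounds}. Each double edge produced by the pattern vanishes because $A_G$ has $\pm1$ entries; this may create isolated vertices, which contribute factors of $n$, or lower the separator size. The key combinatorial lemma we need is
\[
|V(\gamma_I)| - s_{\gamma_I} + |Iso(\gamma_I)| \le |V(\alpha)|+|V(\tau_{j+1})| - 2 - 1
\]
whenever $I$ is neither null nor backtracking, and whenever a path from $u$ to $v$ survives in $\gamma_I$. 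This lemma supplies at least an $n^{-1/2}$ gain, up to polylogarithmic factors. To prove it we would use the structure of backbone-correction shapes, namely a Hamiltonian path with non-crossing chords and a boundary edge. Each intersected vertex that is not matched by a full reflection of $\tau_j$ leaves an edge of odd multiplicity, so either the separator does not shrink or no isolated vertex is created. We would argue this by walking along the Hamiltonian path of $\tau_{j+1}$ from its left endpoint and charging each identification.

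The remaining difficulty is group (iv): patterns such as partial backtracking, where only a prefix of $\tau_{j+1}$ folds back onto $\tau_j$. These can produce proper shapes with full norm $\Theta(1)$ that are not backbone shapes, for example shapes with crossings or shapes with an extra edge landing on the boundary. We expect this to be the main obstacle. Our plan is to show that every such $\gamma_I$ is a concatenation $\gamma' \circ \rho$, where $\rho$ either contains a vanishing double-edge-induced loss or is a shape whose $\bar\rho$-correction is non-planar. For the latter, its correction, which is a cycle with crossing chords, has norm $o_n(1)$ by a trace-method count. To define the class precisely, we would introduce an explicit ``small-or-small-correction'' class $\mathcal{E}$. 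Closure of $\mathcal{E}$ under right multiplication by any backbone-correction $\cmp_\tau$ then follows by rerunning the same intersection-pattern expansion on $\gamma\circ\tau$: the defect witnessing membership in $\mathcal{E}$ (a lost vertex, or a crossing) is supported away from $\tau$ or is preserved by every intersection pattern, since identifications only merge vertices and cannot remove a crossing or restore a lost factor of $n$.

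Throughout, the number of patterns is at most $(|V(\alpha)|+|V(\tau_{j+1})|)^{O(|V(\tau_{j+1})|)}$. This is $n^{o(1)}$ under the truncation $j\le D$ with $D$ polylogarithmically bounded, so the sum of the error terms remains $o_n(1)$ by a union bound over high-probability events.
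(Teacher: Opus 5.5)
There is a genuine gap in your group (iv): it misidentifies which non-main terms survive at constant norm. Partial folding does not produce such terms. Suppose not every edge of $\tau_{j+1}$ is doubled. Then either no vertex becomes isolated at all, or the two-coloring lemma (\cref{lem:2-vtx-disjoint-paths}) gives at least two vertices of $\tau_{j+1}$ that lie on both a doubled and a surviving edge. These are identified vertices (or $v_{\tau_j}$) that are not isolated, so $|Iso(\gamma_I)|\le k-1$, and whenever $\gamma_I$ keeps a $u$--$v$ path this costs $n^{-1/2}$. That is exactly your inequality from (iii), and the structural lemma is how one would prove it. So the connected shapes you list in (iv), with crossings or an extra edge to the boundary, are already small, and (iii) and (iv) as written contradict each other.

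The terms that really have norm $\Theta(1)$ without being main terms come from \emph{complete} identification of $\tau_{j+1}$ with $\tau_j$ along its Hamiltonian cycle, but with a rotated boundary. Concretely, $\tau_{j+1}={\tau'_j}^{\top}$, which starts at $v_{\tau_j}$ and ends at the vertex preceding $v_{\tau_j}$ on the cycle. Every edge of $\tau_j$ is then doubled, $|V(\tau_j)|-2$ vertices become isolated, and the pattern contributes $\approx\cmp_{\tau_1\circ\cdots\circ\tau_{j-1}|\emptyset}$. This is a disconnected shape of normalized norm $\Theta(1)$. Your (iii) explicitly excludes it (there is no surviving $u$--$v$ path), and nothing in (iv) treats it.

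The mechanism you propose for (iv) also cannot work for any connected term. If a proper shape $\bar\gamma$ already has a $u$--$v$ path, adding $\{u,v\}$ does not change the normalization. Moreover, $\|\cm_{\bar\gamma}\|\ge\|\cm_{\bar\gamma}\|_F/\sqrt n\approx n^{(|V(\bar\gamma)|-1)/2}$, so $\|\cmp_{\bar\gamma}\|=\Omega(1)$ whether or not its chords cross; a "cycle with crossing chords" is not small. A correction is small only when the added edge is what first connects $u$ to $v$. In that case the coefficient $n^{-|V(\gamma)|/2}$ of $\cm_{\bar\gamma}$ equals $n^{-1/2}$ times the normalization of $\cmp_{\bar\gamma}$. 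This is precisely the disconnected case of \cref{clm:split-boundary-correction}.

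Consequently the closure assertion has to be proved for disconnected shapes. Your argument that "the defect is preserved" does not address the real danger: a later intersection pattern could reconnect the two components into a connected constant-norm term whose correction is not small. What is needed is the case analysis of \cref{thm:backbonecorrectionproductapproximation}, which the paper proves via slack blocks in the trace method. For a concatenation $\beta$ of backbone-correction shapes, it requires the following:
\begin{enumerate}
\item $\cmp_{\beta|\emptyset}\cmp_{\tau}\approx\cmp_{\beta|\tau}$.
\item $\cmp_{\beta|\tau_j}\cmp_{\tau_{j+1}}$ returns to $\cmp_{\beta|\emptyset}$ for both $\tau_{j+1}=\tau_j^{\top}$ and $\tau_{j+1}={\tau'_j}^{\top}$.
\item With a longer right component, an $R'$ step strands a component touching neither boundary vertex, and this term must be shown to be small.
\end{enumerate}
Only the small-norm part of closure is trivial, via $\|\cmp_{\tau}\|\le 2+o_n(1)$.
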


\paragraph{Correspondence between Chebyshev polynomials and shape concatenation}
We are now ready to describe the effect of applying the polynomial $P_j(x) = U_j(\frac{x}{2})$ to a linear combination of backbone-correction shapes. We begin by defining the collection of concatenated shapes that will ultimately appear in the resulting representation. For notational convenience, we will use $\calP_j(\calB)$ to refer to the collection of properly concatenated shapes arising from the $j$-way concatenations from a base set $\calB$ of backbone-correction shapes.

\begin{definition}[$j$-Way Concatenations from Base Shapes] \label{def: j-way-product}
Let $
M = \sum_{\tau \in \calB} c(\tau) \cdot \cm_\tau
$
be a linear combination of shapes from a base set 
\(\calB\) of backbone-correction shapes.
We define the set of its \emph{$j$-way concatenations}, denoted by 
\(\calP_j(M)\), to be the collection of shapes satisfying:

\begin{enumerate}
    \item Each shape $\al$ in \(\calP_j(\calB)\) is obtained by properly concatenating \(j\) base shapes, i.e., 
    \[
    \alpha = \tau_1 \circ \tau_2 \circ \cdots \circ \tau_j,
    \]
    where \(\tau_i \in \calB\) for every \(i \in [j]\).

    \item Each concatenated shape 
    \(\alpha = \tau_1 \circ \cdots \circ \tau_j\) 
    is assigned coefficient
    \(
    c(\al) = \prod_{t=1}^{j} c(\tau_t).
    \)
    In other words, the coefficient of a concatenated shape equals the product of the coefficients of its constituent base shapes in \(M\).
\end{enumerate}
We define the matrix $\mathsf{P}_j(M)$ corresponding to concatenating $j$ copies of $M$ to be 
\[
\mathsf{P}_j(M) 
= \sum_{\al\in\calP_j(M)}c(\al) \cdot \cmp_{\al}\,.
\]
Note that $\mathsf{P}_j(M)$ is not the same as the matrix $P_j(M)$ obtained by applying the polynomial $P_j(x) = U_j(\frac{x}{2})$ to $M$. That said, when $M$ is a properly normalized linear combination of backbone-correction shapes, $P_j(M) \approx \mathsf{P}_j(M)$ and this is the key result of our analysis.
\end{definition}
\begin{lemma}[(Informal version of \cref{thm:formal-equiv-cheby-shape})] \label{eq:equivalence-dream}
If $M = \sum_{\tau \in \calB}{c(\tau)\cmp_{\tau}}$ is a linear combination of backbone-correction matrices such that $\sum_{\tau \in \calB}{c(\tau)^2} = 1$ then $P_j(M) \approx \fp_j(M)$. More precisely, 
\[
P_j(M) = \fp_j(M) +(\text{terms with norm } o_n(1)) 
+ (\text{terms whose correction has norm } o_n(1))\,.
\]
\end{lemma}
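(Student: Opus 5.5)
The plan is to argue by strong induction on $j$, using the three-term recurrence of the rescaled Chebyshev polynomials of the second kind,
\[
P_{j+1}(x) = x P_j(x) - P_{j-1}(x), \qquad P_0(x)=1,\ P_1(x)=x,
\]
and to prove a matching recurrence on the concatenation side:
\[
\fp_j(M)\, M \approx \fp_{j+1}(M) + \fp_{j-1}(M),
\]
where $\approx$ means equality up to terms of norm $o_n(1)$ and terms whose correction has norm $o_n(1)$. The base cases are immediate: $P_0(M)=I_n=\fp_0(M)$ (empty concatenation) and $P_1(M)=M=\fp_1(M)$.

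For the inductive step, write $P_{j+1}(M)=P_j(M)M-P_{j-1}(M)$ and substitute the inductive hypothesis. Two things must be checked here.

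\begin{enumerate}
\item The error class is stable under right multiplication by $M$. A term with norm $o_n(1)$ times $M$ still has norm $o_n(1)$, since $\|M\|=O(1)$ by the Wigner-type norm bound for linear combinations of backbone-correction matrices (property (1) in \cref{sec:key-ideas}). A term whose correction is small, multiplied by $M$, stays in that class by the last sentence of the key product claim.
\item We need to expand $\fp_j(M)M=\sum_{\alpha,\tau} c(\alpha)c(\tau)\,\cmp_\alpha\cmp_\tau$ with $\alpha=\tau_1\circ\cdots\circ\tau_j$.
\end{enumerate}

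Apply the key claim to each product $\cmp_\alpha\cmp_\tau$. The null intersection gives $\cmp_{\alpha\circ\tau}$; summed with coefficients, this is exactly $\fp_{j+1}(M)$. The backtracking intersection occurs only when $\tau=\tau_j^\top$. It contributes $c(\tau_1)\cdots c(\tau_{j-1})\,c(\tau_j)c(\tau_j^\top)\,\cmp_{\tau_1\circ\cdots\circ\tau_{j-1}}$. By the symmetric-linear-combination property, $c(\tau_j^\top)=c(\tau_j)$. Summing over $\tau_j$ then gives the factor $\sum_\tau c(\tau)^2=1$, which yields exactly $\fp_{j-1}(M)$. This is where the normalization hypothesis enters. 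Everything else goes into the error classes, so $P_{j+1}(M)\approx \fp_{j+1}(M)+\fp_{j-1}(M)-\fp_{j-1}(M)=\fp_{j+1}(M)$.

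The main obstacle is that the key claim controls each product individually, while the sums over $\alpha$ and $\tau$ involve growing numbers of shapes (possibly infinitely many before truncation). Adding up $o_n(1)$ norms naively would blow up, so the errors must be aggregated. I would handle the non-null, non-backtracking intersection terms for the whole linear combination at once. Each intersection type would be grouped as a linear combination over the underlying shapes, and its norm bounded using the sharp trace-power/norm bound for linear combinations (a bound of $O(\sqrt{\sum c^2})$ times a factor $n^{-\Omega(1)}$ for each non-backtracking collision), rather than by a triangle inequality over shapes. This uses that $\sum c(\alpha)^2=(\sum c(\tau)^2)^j=1$.

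The number of intersection patterns is bounded in terms of $j$ and the maximum shape size, which are at most $D$ under truncation, so it is $n^{o(1)}$. The same holds for the multiplicative $(1-O(|V|/n))$ defects in the backtracking factor. Consequently the total error after $j\le D$ steps is $D^{O(D)}n^{-\Omega(1)}=o_n(1)$ for $D$ growing slowly. The "small correction" terms should be tracked symbolically, as shapes together with their pending corrections, rather than by norm, and the final accounting deferred to the formal version \cref{thm:formal-equiv-cheby-shape}.
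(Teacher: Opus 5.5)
Your inductive argument is exactly the paper's proof, both in the overview sketch and in the formal \cref{thm:formal-equiv-cheby-shape}. It uses the recurrence $P_{j+1}(x)=xP_j(x)-P_{j-1}(x)$ and expands $\fp_j(M)M$ into null intersections, which give $\fp_{j+1}(M)$, and backtracking intersections with $\tau=\tau_j^\top$, which give $\bigl(\sum_\tau c(\tau)c(\tau^\top)\bigr)\fp_{j-1}(M)=\fp_{j-1}(M)$. Errors stay stable under right multiplication by $M$ because $\|M\|\le 2+o_n(1)$, and the disconnected constant-norm terms are tracked symbolically together with their corrections. You are also right that the symmetric-combination hypothesis is what makes $\sum_\tau c(\tau)c(\tau^\top)=1$.

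You depart from the paper in the error accounting. There the step you call the main obstacle is misdiagnosed, and your proposed fix is not supported.
\begin{itemize}
\item For the informal statement with a fixed $M$, there are finitely many error terms, each of norm $o_n(1)$, so the triangle inequality already suffices.
\item For the quantitative version, the paper also uses only the triangle inequality with $\ell_1$ mass. The constant-norm terms of $P_j(M)$ have total coefficient magnitude at most $(4B)^j$, where $B=\sum_\tau|c(\tau)|$. Each product contributes error at most $2D_V^{D_V}n^{-1/(400D_V^2)}$ by \cref{thm:backbonecorrectionproductapproximation}. The resulting bound $(4B)^jD_V^{D_V+1}n^{-1/(400D_V^2)}$ is made $o(1)$ by the truncation bound $B\le C^{D^D}$ together with the choice $D=\Theta(\log\log n/\log\log\log n)$.
\item The $\ell_2$ bound $O(\sqrt{\sum c^2})\,n^{-\Omega(1)}$ you invoke for grouped error terms does not follow from the paper's machinery. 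The sharp linear-combination bound is proved only for combinations of backbone-correction shapes. If you expand the trace of a grouped error term into its constituent blocks, \cref{thm:linearcombinationnormbound} charges every slack step $\sum_\tau|c(\tau)|$ rather than using a Cauchy--Schwarz pairing. Every trace pattern consistent with an error term has at least $2q$ slack blocks, so an $\ell_1$ factor survives; removing it would need a genuinely new argument.
\item Shape sizes are not bounded by $D$: iterated concatenation over $t^*$ levels gives up to about $D^D$ vertices, so $D_V\approx 2D^{D+1}$.
\item The per-product decay is only $n^{-1/(400D_V^2)}$, not a uniform $n^{-\Omega(1)}$, which is precisely why $D$ must grow so slowly.
\end{itemize}
Replace your aggregation step with the paper's crude $\ell_1$ bookkeeping and the proof goes through.
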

\begin{proof}[Proof sketch]
We prove this by induction. The base cases $j = 0$ and $j = 1$ are trivial. For the inductive step, we use the recurrence relation $U_{j+1}(x) = 2xU_{j}(x) - U_{j-1}(x)$ for Chebyshev polynomials which implies that $P_{j+1}(x) = xP_{j}(x) - P_{j-1}(x)$.

Assume that for all $j' \leq j$, $P_j(M) \approx \fp_j(M)$ and consider $P_j(M)M$. Observe that 
\begin{align*}
\fp_j(M)M &= \left(\sum_{\alpha' \in \calP_{j-1}(M)}{\sum_{\tau_j \in \calB}{c(\alpha')c(\tau_j)\cmp_{\alpha' \circ \tau_j}}}\right)\left(\sum_{\tau_{j+1} \in \calB}{c(\tau_{j+1})\cmp_{\tau_{j+1}}}\right)\\
&\approx \sum_{\alpha' \in \calP_{j-1}(M)}{\sum_{\tau_j \in \calB}{\sum_{\tau_{j+1} \in \calB}{c(\alpha')c(\tau_j)c(\tau_{j+1})\left(\cmp_{\alpha' \circ \tau_j \circ \tau_{j+1}} + 1_{\tau_{j+1} = \tau_j^{\top}}\cmp_{\alpha'}\right)}}} \\
&= \fp_{j+1}(M) + \sum_{\alpha' \in \calP_{j-1}(M)}{\sum_{\tau_j \in \calB}{c(\alpha')c(\tau_j)^{2}\cmp_{\alpha'}}} = \fp_{j+1}(M) + \fp_{j-1}(M)
\end{align*}
Thus, $P_{j+1}(M) = P_{j}(M)M - P_{j-1}(M) \approx \fp_{j}(M)M - \fp_{j-1}(M) \approx \fp_{j+1}(M)$, as needed.
\end{proof}

\input{norm_bound_preview.tex}

\subsection{Putting Everything Together}\label{sec:real-world-proof}
In this section, we describe how to put everything together to prove our main result. 
\paragraph{Summary of the Argument in an Ideal World}
We start by summarizing how our argument would work if we could take the limit as the number of iterations goes to infinity when constructing $Q$ and could ignore smaller order terms.

\begin{enumerate}
\item As described in \cref{sec:key-ideas}, we construct $Q = \sum_{\tau}{c(\tau)\cmp_{\tau}}$ so that $Var(Q) := \sum_{\tau}{c(\tau)^2} = 1$.
\item By \cref{eq:equivalence-dream}, since $Var(Q) = 1$, for all $j \in \mathbb{N}$, $P_j(Q) \approx \fp_j(Q)$ where $\fp_j(Q)$ is the sum of possible concatenations of $j$ backbone-correction shapes of $Q$.
\item Letting $F(x) = C_F + x + \sum_{j=2}^{\infty}{b_jP_j(x)}$, we have that for all $x \in [-2,2]$, $F(x) = 1_{x \geq 0}(2x)$. Since we start our iteration procedure with $Q_0 = {C_F}\cmp_{\fp_1} = \frac{C_F}{\sqrt{n}}A_G$, the first two terms of $F(Q)$ are ${C_F}Id + {C_F}\cmp_{\fp_1}$. As described in \cref{sec:key-ideas}, due to the iterative procedure for constructing $Q$, $F(Q) - {C_F}Id - {C_F}\cmp_{\fp_1}$ satisfies the non-edge constraints as it contains its own correction terms.
\item We can take our target matrix to be $W = \frac{F(Q)}{C_F}$ (see \cref{def:concrete-target}). Since $||Q|| \approx 2$ and $F(x) \geq 0$ for all $x \in [-2,2]$, $W \succeq 0$. This gives an upper bound of $(1+o_n(1))\sqrt{n}$ on $\vartheta(G)$.
\end{enumerate}
However, as hinted by the title of this section, there are three issues that we have so far swept under the rug: \begin{enumerate}
	\item Ideally we want to have a finite construction, and therefore we would like to terminate the recursive process at some point.
	\item Our matrix analysis requires a bound on the sizes of the shapes involved so we would like to truncate the sizes of the shapes we consider as well.
	\item We need to handle $F(\pm (2+\delta) )$ for some tiny $\delta$ due to the deviation at the edge of the spectrum.
\end{enumerate}
Intuitively, each item should not pose a real concern as we expect certain decay phenomena that allow us to focus on the "low-degree" terms of the Chebyshev polynomial expansion - consistent with our knowledge that for average case SoS lower bounds, truncating the pseudo-expectation values given by pseudo-calibration generally works well.

\paragraph{Real-World Analysis}
We now describe how we address these concerns. First, instead of starting at $Q_0 = C_F \cmp_{\fp_1}$, we may apply the recursive process to \[ 
\widetilde{Q}_0 = (1+c_\gamma) \cdot C_F \cdot  \cmp_{\fp_1}
\]
where we will show $c_\gam$ measures the vanishing slack we have to the optimal value. Let the inner matrices built across the iterative process be $\widetilde{Q}_i$.

Next, we work with the following parameters for truncations to be chosen. 

\begin{definition}[Parameter $t^*$: Termination of the Iterative Process]
    For $t^*$ to be picked, we will let it denote the termination level for the iterative process. In other words, we will construct a sequence of inner matrices $\widetilde{Q}_1, \widetilde{Q}_2,\dots  $ up to $\widetilde{Q}_{t^*}$. 
\end{definition}

\begin{definition}[Parameter $D$: Inner Truncation for Chebyshev Polynomials] \label{def:inner-trunc}
	We define $D$ to be the truncation parameter for $P(\widetilde{Q}_i)$ at each level of $\widetilde{Q}_i$. In other words, at each level of $\widetilde{Q}_i$, we only consider $P_{j}(\widetilde{Q}_i)$ for $j\leq D$.
\end{definition}

 Let the final matrix be $\widetilde{Q}  \coloneqq \widetilde{Q}(t^*, D, c_\gamma)   $. Since we stop the process early and truncate $P_{j\leq D_j}$ at each level of $Q_i$,  the variance of the resulting matrix is not exactly $1$ but $1-o_{D,t^*}(1) $ by the following lemma. Crucially, the variance continues to go to $1$ by taking sufficiently large truncation thresholds $t^*$ and $D$, and this allows us to pick $c_\gam = c_\gam(t^*,D)$ such that the variance is $1$ for the inner matrix.

\begin{lemma}[Truncated Variance is still $1$ (Informal version of \cref{lem:truncated-var})]
There is a choice of $c_\gam$ such that the inner matrix has variance $1$; moreover, $c_\gam =o_{t^*, D}(1) $.
\end{lemma}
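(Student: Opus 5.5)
We treat the variance of the final inner matrix as a function of the scaling parameter and show it is a continuous, increasing function that crosses $1$ near $c_\gam=0$. Starting from $\widetilde{Q}_0=(1+c_\gam)C_F\cmp_{\fp_1}$, the same bookkeeping as in the derivation of $S_{k+1}=S_0+\sum_j b_j^2 S_k^j$ gives a truncated recurrence. Define $S_0(c_\gam)=(1+c_\gam)^2C_F^2$ and, for $1\le k\le t^*-1$,
\[
S_{k+1}(c_\gam)=S_0(c_\gam)+\sum_{j=2}^{D}b_j^2\,S_k(c_\gam)^j .
\]
Here we use that the $j$-way concatenations of distinct tuples of backbone-correction shapes are distinct shapes, so the squared coefficients simply add; this is the content of \cref{claim:variance-evolution-recurrence}, specialized to the inner truncation $j\le D$. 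The variance of $\widetilde{Q}=\widetilde{Q}_{t^*}$ is then $g(c_\gam):=S_{t^*}(c_\gam)$. This is a polynomial in $c_\gam$ with nonnegative coefficients in $(1+c_\gam)$, hence continuous and strictly increasing for $c_\gam\ge 0$.

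\textbf{Step 1: $g(0)<1$.} By induction, $S_k(0)\le 1$ for all $k$: if $S_k\le 1$ then $S_{k+1}\le C_F^2+\sum_{j\le D}b_j^2\le C_F^2+(1-C_F^2)=1$. Strict inequality $g(0)<1$ follows from the truncation unless all $b_j$ with $j>D$ vanish, which they do not, since $F$ is not a polynomial. So a solution of $g(c_\gam)=1$ exists with $c_\gam>0$, provided $g$ eventually exceeds $1$. It does: once $(1+c_\gam)^2C_F^2\ge 1$ we already have $g\ge 1$. By the intermediate value theorem and monotonicity there is a unique $c_\gam\in\bigl(0,\,C_F^{-1}-1\bigr]$ with $g(c_\gam)=1$.

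\textbf{Step 2: $c_\gam=o_{t^*,D}(1)$.} It suffices to show that for every fixed $\eps>0$, $g(\eps)>1$ once $t^*$ and $D$ are large. Consider the untruncated map
\[
\phi_\eps(s)=(1+\eps)^2C_F^2+\sum_{j\ge 2}b_j^2 s^j .
\]
Since $\phi_0(1)=1$ and $\phi_\eps(s)>\phi_0(s)$ for every $s$, there are two possibilities. Either the fixed-point equation $\phi_\eps(s)=s$ has no solution in $[0,1]$, or the iteration from $s=0$ climbs past $1$. In both cases, the iterates of $\phi_\eps$ starting from $0$ exceed $1+\delta$ after some finite number $k_0(\eps)$ of steps.

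The key point is that $\phi_\eps(s)-s\ge \phi_\eps(s)-\phi_0(s)\ge c\eps>0$ for all $s\in[0,1]$. This uses $\phi_0(s)\ge s$ on $[0,1]$, which holds because $\phi_0$ is convex, $\phi_0(1)=1$, and $\phi_0'(1)=\sum_j j b_j^2\ge$ \ldots. This is the delicate point: one must check $\phi_0(s)\ge s$ on $[0,1]$, i.e. that $1$ is the smallest fixed point. If $\phi_0'(1)>1$, there would be a smaller fixed point $s^*<1$ and the untruncated process would stall there, breaking the heuristic $S_\infty=1$. So I would first verify $\sum_j j\,b_j^2\le 1$ from the explicit Chebyshev coefficients of $F(x)=2x\,1_{x\ge0}$. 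The even coefficients decay like $j^{-2}$ (from the kink at $0$), so this sum is an explicit convergent series that can be bounded numerically.

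Granting this, every step increases $s$ by at least $c\eps$ while $s\le 1$, so $k_0(\eps)\le 1/(c\eps)+1$. Then choose $D$ large enough that the tail $\sum_{j>D}b_j^2\cdot 2^{j}$ evaluated at $s\le 1+\delta$ is negligible over those $k_0$ steps. This ensures the truncated iterates track the untruncated ones up to error $o(\eps)$, giving $g(\eps)>1$ for $t^*\ge k_0(\eps)$. Taking $\eps\to0$ along $t^*,D\to\infty$ yields $c_\gam=o_{t^*,D}(1)$.

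The main obstacle is the fixed-point check $\sum_j j\,b_j^2\le 1$, or some equivalent argument that the iteration from $0$ reaches $1$ rather than a smaller fixed point. A secondary obstacle is keeping the truncation tail controlled for $s$ slightly above $1$. If that tail control is awkward, I would instead run the argument entirely at $s\le 1$, noting that the iterates reach $1$ itself after $O(1/\eps)$ steps.
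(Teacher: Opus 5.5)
Your proposal is correct once you carry out the check you defer and use your fallback for the tail, but it argues in the opposite direction from the paper. The check is exactly what the paper computes. From $b_j=\frac{8}{\pi}\frac{\sin((j-1)\pi/2)}{(j-1)(j+3)}$ one gets $\beta=\sum_{j\ge 2}j\,b_j^2=\frac{16}{\pi^2}-1<1$, so convexity gives $\phi_0(s)-s\ge(1-\beta)(1-s)\ge 0$ on $[0,1]$. Drop your first tail plan: since $b_j^2\asymp j^{-4}$, the series $\sum_j b_j^2 s^j$ has radius of convergence exactly $1$, so the untruncated $\phi_\eps$ is infinite for every $s>1$ and $\sum_{j>D}b_j^2\,2^j$ diverges. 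The fallback works in one line. With $R_D=\sum_{j>D}b_j^2$, the truncated map satisfies $\phi_{\eps,D}(s)-s\ge 2C_F^2\eps-R_D$ on $[0,1]$, and the truncated iterates are nondecreasing. So if $R_D\le C_F^2\eps$, they exceed $1$ within $O(1/\eps)$ steps and stay above it, giving $c_\gamma=O(1/t^*+D^{-3})$. The paper instead works at $c_\gamma=0$. It bounds the deficit $\delta=1-S_{t^*}(C_F^2)$ by iterating $1-S_{i+1}\le R_D+\beta(1-S_i)$. It then uses the sensitivity bound $S_i(a')-S_i(a)\ge a'-a$ in the initial variance $a=(1+c_\gamma)^2C_F^2$, which follows by induction on the recurrence, to place the root in $[C_F^2,C_F^2+\delta]$; hence $c_\gamma\le\delta/(2C_F^2)=O(D^{-3}+(t^*)^{-3})$. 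Both routes hinge on the same fact $\beta<1$, and your overshoot argument suffices for the informal $o_{t^*,D}(1)$ claim. The paper's deficit-plus-sensitivity argument buys more: a rate that is actually geometric in $t^*$, and, from the same deficit estimate, the bound on the omitted final update that is later used for $\mathsf{EarlyTerm}$.
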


One important step of our work is to show the inner matrix has spectrum $[-2,2]$ (up to $o_n(1)$  deviation). Towards that end, we combine our meta norm bound from \cref{cor:linear-combination-norm-bound} and our variance calculation to obtain the following spectral estimate for the inner matrix:
\begin{restatable}{lemma}{finalnorm}
\label{lem:final-norm} (Informal version of \cref{cor:linear-combination-norm-bound})
For the inner matrix $\widetilde{Q}$, with high probability,
\[
\|\widetilde{Q} \|
\le (1+o_n(1))\cdot 2.
\]
\end{restatable}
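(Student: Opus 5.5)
The plan is to derive \cref{lem:final-norm} as a direct combination of the meta norm bound \cref{cor:linear-combination-norm-bound} and the truncated variance lemma (\cref{lem:truncated-var}). As previewed in the intuition for the construction, \cref{cor:linear-combination-norm-bound} states that any finite linear combination $\sum_{\tau \in \calB}{c(\tau)\cmp_{\tau}}$ of normalized backbone-correction matrices satisfies, with high probability,
\[
\Bigl\|\sum_{\tau \in \calB}{c(\tau)\cmp_{\tau}}\Bigr\| \leq (2+o_n(1))\sqrt{\sum_{\tau \in \calB}{c(\tau)^2}},
\]
provided every shape in $\calB$ has size bounded by a quantity that does not grow too fast with $n$. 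So the first step is to check that $\widetilde{Q} = \widetilde{Q}(t^*,D,c_\gamma)$ meets the hypotheses.

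\textbf{Step 1: hypotheses.} Every shape in $\calB(\widetilde{Q})$ is $\fp_1$ or $\bar{\beta}$ for a backbone shape $\beta$ built in the recursion, so each is a backbone-correction shape. Because the process stops at level $t^*$ and only $j$-way concatenations with $j \le D$ are used, the number of vertices of every shape is at most $D^{t^*}+1$, and $|\calB(\widetilde{Q})|$ is finite, depending only on $t^*$ and $D$. Since $t^*$ and $D$ are fixed (or grow sufficiently slowly with $n$), the size bound needed by the trace-power analysis of \cref{sec:trace-power-block-walks} holds. The matrix $\widetilde{Q}$ is also a symmetric linear combination, so its norm is its largest absolute eigenvalue. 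There is one further check: the coefficients $c(\tau)$ must not depend on $n$, or, if they do, the $o_n(1)$ in \cref{cor:linear-combination-norm-bound} must be uniform over coefficient vectors of unit $\ell_2$ norm. I expect the trace method to give this uniformity, because the bound is obtained through $\E\Tr((\widetilde{Q}\widetilde{Q}^\top)^k)$ expanded multilinearly in the coefficients.

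\textbf{Step 2: variance.} By \cref{lem:truncated-var}, $c_\gamma$ is chosen so that $\sum_{\tau \in \calB(\widetilde{Q})}{c(\tau)^2} = 1$. Substituting this into the meta bound gives $\|\widetilde{Q}\| \le (2+o_n(1))\cdot 1 = (1+o_n(1))\cdot 2$ with high probability.

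\textbf{Main obstacle.} The real content is \cref{cor:linear-combination-norm-bound} itself, and the point I would be most careful about is the interplay between the parameters. The $o_n(1)$ error must survive a number of shapes that is exponential in $D^{t^*}$, and the trace power $k$ must be taken to be about $\log n$ times a large factor. The approach is to choose $t^*, D \to \infty$ slowly enough, for example $D^{t^*} \le \log\log n$. Then the non-dominant walk contributions in the trace expansion (those whose patterns are neither null nor backtracking intersections) are suppressed by factors $n^{-1/2}\,\mathrm{poly}(D^{t^*} k)$. The count of shapes is at most $2^{O(D^{t^*})}$, which is $n^{o(1)}$ under this choice, so the $k$-th root still gives $1+o_n(1)$. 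Once this parameter regime is fixed, the lemma follows.
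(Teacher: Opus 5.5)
Your proposal is correct and matches the paper's argument. The paper likewise obtains \cref{lem:final-norm} by applying \cref{cor:linear-combination-norm-bound} to $\widetilde{Q}$, which is a linear combination of backbone-correction shapes with at most $D^{t^*}+1$ vertices, together with the exact normalization $\Var(\widetilde{Q})=1$ from \cref{lem:truncated-var}; the corollary depends on the coefficients only through $D_V$ and $B\le\sqrt{|\calB(\widetilde{Q})|}$, which the paper controls with $t^*=D=\Theta(\log\log n/\log\log\log n)$, a somewhat faster-growing regime than your $D^{t^*}\le\log\log n$ but equally valid for the corollary's hypotheses.
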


Next, we apply outer truncation to $F(\widetilde{Q})$: take $F_D$ to be the degree-$D$ truncation of $F$, with the truncation threshold $D$ chosen according to \cref{def:inner-trunc}.

\begin{lemma}[Approximate PSDness of $F_D(\widetilde{Q})$ ]
    Suppose $\mathsf{spec}(\widetilde{Q}) \subseteq [-2-\delta,\,2+\delta]$. Then
\[
F_{D}(\widetilde{Q})
\succeq
- \tilde{O}\!\left(\frac{1}{D} + \delta \cdot \poly(D)\cdot \exp(D\sqrt{\delta})\right)\cdot I.
\]
(See also \cref{prop:FD-psd}.)
\end{lemma}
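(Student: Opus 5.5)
Since $\widetilde{Q}$ is symmetric, spectral calculus reduces the statement to a scalar bound: it suffices to show that
\[
F_D(x) \ge -\tilde{O}\!\left(\tfrac1D + \delta\,\poly(D)\,\exp(D\sqrt{\delta})\right) \quad \text{for all } x\in[-2-\delta,2+\delta].
\]
Here $F_D(x)=C_F + x + \sum_{j=2}^{D} b_j P_j(x)$ is the degree-$D$ truncation of the Chebyshev expansion of $F(x)=2x\,1_{x\ge 0}$. I would handle $x\in[-2,2]$ and the two edge intervals $2<|x|\le 2+\delta$ separately.

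\emph{Inside $[-2,2]$.} Substitute $x=2\cos\theta$, so that $P_j(x)=\sin((j+1)\theta)/\sin\theta$. First compute the coefficients explicitly:
\[
b_j=\int F P_j \, d\mathrm{SC}(2).
\]
Since $F$ is piecewise linear with a single kink at $0$, integration by parts should give $b_j=O(j^{-2})$. The natural comparison is the expansion of $G(\theta)=F(2\cos\theta)\sin\theta$ in the sine series $\sin((j+1)\theta)$, because $F_D(2\cos\theta)\sin\theta$ is exactly the $(D+1)$-term partial sine sum of $G$. The function $G$ is continuous and piecewise smooth on $[0,\pi]$, vanishes at the endpoints, and has a kink only at $\theta=\pi/2$. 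Because the coefficients are $O(j^{-2})$, the tail satisfies
\[
\Bigl|\sum_{j>D} b_j \sin((j+1)\theta)\Bigr| \le \sum_{j>D} |b_j| = O(1/D)
\]
uniformly in $\theta$. Dividing by $\sin\theta$ is the delicate part, since it blows up near $\theta\in\{0,\pi\}$ (that is, near $x=\pm2$). There I would instead use the direct bound $|P_j(x)|\le j+1$ together with the slack that $F\ge0$ already provides. Near $x=2$, $F\approx 4$, which easily absorbs any tail, so the real issue is only near $x=-2$, where $F=0$. To handle it, I would write the tail $\sum_{j>D}b_jP_j(x)$ with $x=-2\cos\phi$, $\phi$ small, and use that the kink of $G$ sits at $\pi/2$, away from the endpoints. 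Summation by parts with Dirichlet-type kernels then gives a tail of size $\tilde O(1/D)$ uniformly, including near the endpoints, once one checks that $G$ is smooth near $0$ and $\pi$ and that its oscillatory part localizes. If this exact route is messy, a fallback is to replace the raw truncation by a Fejér/Jackson-damped truncation, which preserves positivity up to $O(1/D)$. Either way, this yields $F_D\ge -\tilde O(1/D)$ on $[-2,2]$.

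\emph{Edge intervals.} For $|x|=2+\eta$ with $\eta\le\delta$, write $x=\pm2\cosh s$ with $s\approx\sqrt{\eta}$. Then $|P_j(x)|=|\sinh((j+1)s)/\sinh s|\le (j+1)e^{js}$, and the derivative of $P_j$ is bounded by $\poly(j)e^{js}$ on this range. By the mean value theorem,
\[
|F_D(x)-F_D(\pm2)| \le \delta\sum_{j\le D}|b_j|\,\poly(j)\,e^{j\sqrt\delta} \le \delta\,\poly(D)\,\exp(D\sqrt\delta).
\]
Combining this with the interior bound at $\pm2$ gives the claim.

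\emph{Main obstacle.} The step I expect to be hardest is the uniform $\tilde O(1/D)$ lower bound on the truncation error near $x=-2$. The $1/\sin\theta$ factor there can amplify the Gibbs-type oscillations coming from the kink at $0$. I would first attempt the explicit summation-by-parts estimate on the closed-form $b_j$; if that fails, I would switch to a damped (Jackson-kernel) truncation and adjust the definition of $F_D$ accordingly.
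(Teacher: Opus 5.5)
Your overall architecture matches the paper's: functional calculus reduces the claim to a scalar bound, then a uniform $O(1/D)$ truncation bound on $[-2,2]$ is combined with a $\cosh$/mean-value estimate on the two edge intervals. Your edge estimate is fine.

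\textbf{The gap.} The problem is the interior bound near $x=\pm2$. You correctly identify it as the crux, but you do not establish it.

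\emph{Near $x=2$.} The slack $F\approx4$ cannot ``absorb any tail'' via $|P_j(x)|\le j+1$. Since $|b_{2k}|(2k+1)\approx\frac{4}{\pi k}$, the sum $\sum_{j>D}|b_j|(j+1)$ diverges. Splitting at $j\approx1/\theta$ and using $|P_j(2\cos\theta)|\le1/\sin\theta$ for large $j$ does not help either: for $\theta\lesssim1/D$ it only gives $O(\log\frac{1}{D\theta})$, which is unbounded as $\theta\to0$.

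\emph{Both endpoints are equally delicate.} At $x=\pm2$ the tail equals $\sum_{k>m}(-1)^{k-1}a_k(2k+1)$ (with $a_k$ as below), which converges only conditionally. So no bound using only $|b_j|=O(j^{-2})$ can control it, at either endpoint, not just at $x=-2$.

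\emph{Near $x=-2$ and the fallback.} Near $x=-2$ you only assert that summation by parts ``localizes.'' Your fallback, a Jackson-damped truncation, changes $F_D$. It would therefore prove a statement about a different matrix from the one the construction uses, whose coefficients $b_0=C_F$, $b_1=1$ and the exact $b_j$ feed the variance recurrence.

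\textbf{The missing idea.} You need the sign structure of the coefficients together with the bounded differences of consecutive $P_j$'s. The explicit integral gives $b_j=0$ for odd $j\ge3$ and $b_{2k}=(-1)^{k-1}a_k$ with $a_k=\frac{8}{\pi(2k-1)(2k+3)}$. Hence $a_k=O(k^{-2})$ and $a_k-a_{k+1}=O(k^{-3})$. Moreover $P_{2k+2}(2\cos\theta)-P_{2k}(2\cos\theta)=2\cos((2k+2)\theta)$. Pairing consecutive nonzero terms,
\[
\|b_{2k}P_{2k}+b_{2k+2}P_{2k+2}\|_{L^\infty[-2,2]}\le(2k+1)|a_k-a_{k+1}|+2a_{k+1}=O(k^{-2}),
\]
so the tail beyond degree $D$ is $O(1/D)$ uniformly on $[-2,2]$, endpoints included. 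This needs no division by $\sin\theta$ and no endpoint case analysis; it is the paper's route.

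Your sine-series/localization idea can be rescued. Writing $G_D(\theta)=F_D(2\cos\theta)\sin\theta$, show $|(G-G_D)'(\theta)|=O(1/D)$ for $\theta$ away from $\pi/2$. Then use $(G-G_D)(0)=(G-G_D)(\pi)=0$ to divide by $\sin\theta$. However, that derivative bound requires exactly the same alternating cancellation (Abel summation against $(-1)^{k}e^{i(2k+1)\theta}$), and your proposal never supplies it.
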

We now formally describe the construction of our witness matrix. For any parameter $\eps>0$, we define the following matrix
\[
M_{\mathrm{final}} = M_{\mathrm{final}}(\eps)
\coloneqq \eps \cdot I
+ F_D(\widetilde{Q})
+ \chebyerror + \mathsf{EarlyTerm}
\]
 with final correction terms
\begin{enumerate}
	\item \[ \chebyerror(\widetilde{Q}) \coloneqq \sum_{j=1}^D b_j\,\mathrm{CorrectionError}(j) \] uses the corrections from \cref{thm:formal-equiv-cheby-shape} for the Chebyshev expansion of $F_D$;
	\item $\mathsf{EarlyTerm}(\tilde{Q})$, whose norm is bounded in \cref{lem:corr-small}, corrects the errors from terminating at $Q_{t^*}$.
\end{enumerate}
We suppress the dependence of $M_{\mathrm{final}}$ on $\eps$, while we will complete our final construction by picking $\eps= \eps(D,t^*)$ in the following discussion. Before that, we verify the above matrix satisfies the non-edge constraints by design of the final correction terms.
\begin{restatable}[Verification of (Non-)Edge Constraints]{lemma}{EdgeConstraintsVerification}
(See also \cref{lem:edge-constraints-verification}) $M_{\mathrm{final}}$ satisfies the non-edge constraints in \cref{def:concrete-target} after normalizing its diagonal.
\end{restatable}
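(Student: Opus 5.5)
The plan is to expand $M_{\mathrm{final}}$ into a linear combination of graph matrices, one shape at a time, and check that every off-diagonal entry at a non-edge $\{i,j\}$ equals the required value. After dividing by the diagonal value, the required value is $\frac{c_k}{\sqrt n}A_G[i,j]$ with a coefficient on $\cmp_{\fp_1}$ matching the normalized identity coefficient. Since $\eps I$ is diagonal, it contributes nothing off the diagonal. The work is therefore to expand $F_D(\widetilde Q)=\sum_{j\le D} b_j P_j(\widetilde Q)$, with $b_0=C_F$ and $b_1=1$, and then add the two correction terms.

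\textbf{Step 1.} By \cref{thm:formal-equiv-cheby-shape}, each $P_j(\widetilde Q)$ equals $\fp_j(\widetilde Q)$ plus two kinds of error: terms of norm $o_n(1)$, and terms whose correction has norm $o_n(1)$. The matrix $\chebyerror=\sum_j b_j\,\mathrm{CorrectionError}(j)$ is defined to be exactly the sum of these corrections. I would argue the errors pair up: for every error shape $\alpha$ with no edge $\{u_\alpha,v_\alpha\}$, the same coefficient appears on $\bar\alpha$ in $\chebyerror$. The remaining errors are either diagonal (they have $u=v$ and are absorbed when the diagonal is normalized) or already contain the boundary edge. Any term with the boundary edge carries a factor $A_G[i,j]$, which I will handle in Step 3.

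\textbf{Step 2.} Consider the main part $\sum_{j\ge2} b_j \fp_j(\widetilde Q)$. Each shape $\beta=\tau_1\circ\cdots\circ\tau_j$ in it is a backbone shape with coefficient $b_j\prod_t c(\tau_t)$. If every $\tau_t\in\calB(\widetilde Q_{t^*-1})$, then $\beta$ belongs to $\textsf{New-}\calP_j(\widetilde Q_s)$ for some $s<t^*$. By the recursive update, $\bar\beta$ then appears in $\widetilde Q$ with the same coefficient, and hence as the $j=1$ term of $F_D$. This uses $j\le D$, which is consistent with the inner truncation. The shapes that use a base shape from the final level $\calB(\widetilde Q_{t^*})\setminus\calB(\widetilde Q_{t^*-1})$ are exactly the ones $\mathsf{EarlyTerm}$ is defined to correct: I take $\mathsf{EarlyTerm}=\sum_{j\ge2}\sum_{\beta\in\textsf{New-}\calP_j(\widetilde Q_{t^*})}c(\beta)\cmp_{\bar\beta}$. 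Each pair then gives $c(\beta)(\cmp_\beta+\cmp_{\bar\beta})$, and this combination vanishes at non-edges because the factor $1+A_G[i,j]$ is zero there.

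\textbf{Step 3.} What remains off the diagonal is the $j=1$ term $\widetilde Q$ itself, minus its corrections $\bar\beta$ that were already paired with terms in Step 2. Its leftover leading part is $(1+c_\gamma)C_F\cmp_{\fp_1}$. After dividing by the diagonal coefficient, which is $C_F+\eps$ plus the diagonal errors, this gives the $\frac{c_k}{\sqrt n}A_G$ term, so the leftover matrix $R_W$ is a sum of pairs $\cmp_\beta+\cmp_{\bar\beta}$. This is exactly the form required by \cref{def:concrete-target}.

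The main obstacle is the bookkeeping in Step 1. I have to check that every error shape produced by \cref{thm:formal-equiv-cheby-shape} is either diagonal or appears together with its bar-partner with equal coefficients. In particular, a shape $\alpha$ that already contains the edge $\{u_\alpha,v_\alpha\}$ must not be double counted as a correction. I would handle this by induction on $j$, following the recurrence $P_{j+1}=xP_j-P_{j-1}$, and verify that the pairing is preserved under right multiplication by a backbone-correction shape.
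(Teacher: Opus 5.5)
Your Steps 2 and 3 match the paper's argument. Old concatenations in $\sum_{j\ge 2}b_j\fp_j(\widetilde Q)$ are cancelled at non-edges by their bars, which already sit in $b_1\widetilde Q$. Concatenations using a final-level base shape are cancelled by $\mathsf{EarlyTerm}$. What survives of $\widetilde Q$ is $(1+c_\gamma)C_F\cmp_{\fp_1}$.

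The gap is in Step 1, where two of your claims about the Chebyshev error terms fail.

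\textbf{Diagonal errors.} These cannot be ``absorbed when the diagonal is normalized'', because they are not multiples of $I_n$. For example, the $i$-th diagonal entry of $P_3(\cmp_{\fp_1})=\cmp_{\fp_1}^3-2\cmp_{\fp_1}$ is $n^{-3/2}\sum_{a\neq b}A_G[i,a]A_G[a,b]A_G[b,i]$, which varies with $i$. Dividing by a scalar cannot turn such a diagonal into $I_n$. Dividing by a diagonal matrix instead would destroy the uniform coefficient $c_k/\sqrt n$ on $A_G$ at the non-edges. Yet \cref{def:concrete-target} needs $R_W[i,i]=0$ exactly.

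\textbf{Error shapes containing $\{u,v\}$.} You promise to handle these in Step 3, but Step 3 only treats $\widetilde Q$. Such a term has the form $A_G[i,j]\,h(i,j)$ with $h$ random. At a non-edge it equals $-h(i,j)\neq 0$ unless it is paired with its edge-deleted partner or removed.

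The missing idea is that small-norm error terms should not be paired with bars but subtracted outright. Their small norm makes this harmless for PSDness, since the $\eps I$ slack absorbs the $o_n(1)$ perturbation. This is how the paper's $\mathrm{CorrectionError}(j)$ is built:
\begin{enumerate}
\item It removes every lower-order term, including the diagonal ones and those already containing $\{u,v\}$.
\item It adds bars only for the constant-norm disconnected terms $\tau_1\circ\cdots\circ\tau_i|\tau_{i+1}\circ\cdots\circ\tau_j$, whose bars are small by \cref{clm:split-boundary-correction}.
\end{enumerate}
Pairing every small-norm $\alpha$ with $\bar\alpha$, as you propose, would also need norm bounds on each $\cmp_{\bar\alpha}$, which the paper never proves.

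The result of this construction is exactly the guarantee of \cref{thm:formal-equiv-cheby-shape}: $\mathrm{Error}(j)+\mathrm{CorrectionError}(j)$ vanishes on the diagonal and on non-edges. The paper simply cites this rather than redoing the induction.

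With it, the diagonal of $F_D(\widetilde Q)+\chebyerror+\mathsf{EarlyTerm}$ is exactly $b_0=C_F$. This holds because $\widetilde Q$, $\fp_j(\widetilde Q)$ and $\mathsf{EarlyTerm}$ are built from shapes with $u\neq v$, so they have zero diagonal. Hence $M_{\mathrm{final}}$ has constant diagonal $C_F+\eps$, and dividing by this scalar gives the target form with $c_k=(1+c_\gamma)C_F/(C_F+\eps)$.
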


\paragraph{Bounds on Final Correction Terms}
Next, we establish an error bound for the nontrivial intersection terms and correction terms for edge constraints due to terminating at $Q_{t^*}$.
\begin{lemma}[Correction Bounds for Chebyshev Approximation via Graph Matrices (Informal version of \cref{thm:formal-equiv-cheby-shape})] We have
	\[ \|\chebyerror(\widetilde{Q})\|   = o_n(1) \]
\end{lemma}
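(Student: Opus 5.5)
We bound $\chebyerror(\widetilde{Q}) = \sum_{j=1}^D b_j\,\mathrm{CorrectionError}(j)$ by controlling each $\mathrm{CorrectionError}(j)$ separately and then summing. Since $\sum_j b_j^2 \le 1 - C_F^2$, Cauchy--Schwarz gives $\sum_{j\le D}|b_j| \le \sqrt{D}$, and $D = D(n)$ grows slowly. It therefore suffices to show that, with high probability, $\max_{j \le D}\|\mathrm{CorrectionError}(j)\| \le n^{-\Omega(1)}\cdot \poly(D)^{D}$. Choosing $D$ at most, say, $\log n/\log\log n$ then makes the total $o_n(1)$.

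\textbf{Step 1: write out the remainder.} I follow the inductive proof of \cref{eq:equivalence-dream}, now tracking the remainder explicitly. Define $\mathcal{E}_j := P_j(\widetilde{Q}) - \fp_j(\widetilde{Q})$. The recurrence $P_{j+1}(x) = xP_j(x) - P_{j-1}(x)$ gives
\[
\mathcal{E}_{j+1} = \mathcal{E}_j \widetilde{Q} - \mathcal{E}_{j-1} + \bigl(\fp_j(\widetilde{Q})\widetilde{Q} - \fp_{j+1}(\widetilde{Q}) - \fp_{j-1}(\widetilde{Q})\bigr).
\]
The last bracket is a sum over non-null, non-backtracking intersection patterns in $Int_{\alpha,\tau}$, together with the $(n-\cdot)$ versus $n$ normalization slack in backtracking terms. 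Unrolling the recursion expresses $\mathcal{E}_j$ as a sum over $j' < j$ of these one-step intersection terms multiplied on the right by $P_{j-j'-1}$-type factors of $\widetilde{Q}$.

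\textbf{Step 2: classify the intersection terms.} Each non-backtracking intersection term falls into one of two classes.
\begin{enumerate}
\item[(a)] The resulting shape $\gamma_I$ has strictly fewer vertices than its normalization factor accounts for, after removing double edges and isolated vertices. By \cref{thm:roughnormbounds}, such terms have norm $\tilde{O}(n^{-1/2})$ times $|V(\gamma_I)|^{O(|V(\gamma_I)|)}$. The number of patterns is also at most $(jD)^{O(jD)}$ by a direct count.
\item[(b)] The shape is not small in norm, but it contains the boundary edge $\{u,v\}$ or is otherwise of correction type, so that its paired correction shape is small in norm.
\end{enumerate}
$\mathrm{CorrectionError}(j)$ is by definition the collection of class (a) terms plus the corrections of class (b) terms. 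We need the closure property from the key claim in \cref{sec:graph-mat-cheb}: multiplying a class (a) or class (b) term by a further backbone-correction matrix yields terms that are again in class (a) or class (b). To prove this, I would track a potential equal to the normalization exponent minus the separator-based exponent from \cref{thm:roughnormbounds}, and show that it never decreases under multiplication.

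\textbf{Step 3: propagate through the polynomial factors.} The right factors $P_{k}(\widetilde{Q})$ have norm at most $\max_{|x|\le 2+\delta}|U_k(x/2)| \le (k+1)e^{O(k\sqrt{\delta})}$, using \cref{lem:final-norm}. Hence the accumulated error is at most $\poly(D)\cdot(\text{one-step error})$. The coefficients are all bounded: $\sum_\tau c(\tau)^2 = 1$ after the choice of $c_\gamma$, and the number of base shapes of size at most the truncation level is finite. This yields the target bound.

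\textbf{Main obstacle.} The hardest part is the closure property in Step 2 for class (b) terms. In the norm-bound argument, intersections can cancel a gained factor of $\sqrt{n}$, and this can happen after many multiplications. The approach I would try first is an explicit combinatorial invariant: every vertex of $\gamma_I$ that is identified non-backtrackingly contributes a deficit of at least $1/2$ in the exponent, and this deficit survives later concatenations because later intersections only merge further vertices.
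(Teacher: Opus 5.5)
Your outline (small-norm terms plus terms whose corrections are small, closed under right multiplication) matches the paper's. However, the step that actually carries the proof is missing, and Step 3 does not supply it.

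Besides proper concatenations, the constant-norm terms come from the second way of returning along a block. Multiply $\cmp_{\tau_1\circ\cdots\circ\tau_j}$ by ${\tau'_j}^{\top}$ (the edges of $\tau_j$, traversed from $v_{\tau_j}$ to its other neighbour on the Hamiltonian cycle) and identify every vertex. This yields $\cmp_{\tau_1\circ\cdots\circ\tau_{j-1}|\emptyset}$ with coefficient $1+o(1)$. These are disconnected shapes, not shapes containing $\{u,v\}$. A connected constant-norm shape containing $\{u,v\}$ typically has a constant-norm partner (e.g.\ $\bar{\fp}_2$ versus $\fp_2$), so your class (b) points at the wrong terms. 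The corrections of the disconnected shapes are small because adding $\{u,v\}$ creates a $u$--$v$ path, so the same prefactor gives $n^{-|V(\alpha)|/2}\cm_{\bar\alpha}=n^{-1/2}\cmp_{\bar\alpha}$ (\cref{clm:split-boundary-correction}).

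Your unrolling $\mathcal{E}_j=\sum_{j'<j}\Delta_{j'}P_{j-1-j'}(\widetilde Q)$ handles the small-norm part of $\Delta_{j'}$ well. For these constant-norm terms, though, the estimate $\|\Delta_{j'}P_k(\widetilde Q)\|\le\|\Delta_{j'}\|\,\|P_k(\widetilde Q)\|$ gives nothing. Also, the correction of $\Delta_{j'}P_k(\widetilde Q)$ is not $\overline{\Delta_{j'}}\,P_k(\widetilde Q)$, since the boundary vertex moves with each factor. You have to re-expand in shapes and show three things:
\begin{itemize}
\item up to $o(1)$ norm, multiplying $\cmp_{\tau_1\circ\cdots\circ\tau_i|\tau_{i+1}\circ\cdots\circ\tau_j}$ by a backbone-correction matrix again gives shapes of that split form (cases 2--4 of \cref{thm:backbonecorrectionproductapproximation});
\item the constant-norm coefficient mass grows at most like $(4B)^j$;
\item only then does the correction bound apply term by term.
\end{itemize}
This is exactly the closure property you flag as the main obstacle. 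Your proposed invariant (an exponent deficit that survives later merges) only concerns terms that are already small, and those need no closure argument since $\|X\widetilde Q\|\le 3\|X\|$.

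There are two further gaps. First, you never show that every pattern other than the null, backtracking and ${\tau'_j}^{\top}$ ones is small. Defining class (a) as ``the small terms'' leaves this exhaustiveness claim open. The paper obtains it from the trace method on $\cm_{\gamma_{I'}}\cm_{\gamma_{I'}}^{\top}$: every trace pattern consistent with a non-main $I'$ has at least $2q$ slack blocks, giving decay $n^{-1/(400D_V^2)}$. This includes patterns that leave a component attached to neither $u$ nor $v$. To use \cref{thm:roughnormbounds} instead, you would need to prove, uniformly in the shape sizes, that $|V(\gamma_I)|-s_{\gamma_I}+|Iso(\gamma_I)|$ sits at least one below the normalization exponent for every remaining pattern. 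For connected $\alpha$ this means $|V(\gamma_I)|-s_{\gamma_I}+|Iso(\gamma_I)|\le|V(\alpha)|+|V(\tau)|-3$.

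Second, your bookkeeping underestimates sizes. With $t^*=D$ rounds of up-to-$D$-fold concatenation, base shapes have up to $D^D+1$ vertices. The quantity $B=\sum_\tau|c(\tau)|$ is only bounded by $3^{D^D+1}$: variance $1$ controls the $\ell_2$ mass, while summing pattern errors by the triangle inequality costs $B^{j}$. So the count $(jD)^{O(jD)}$ and the target $n^{-\Omega(1)}\poly(D)^D$ are not what the argument yields. Moreover, $D\le\log n/\log\log n$ would allow shapes with $n^{1-o(1)}$ vertices. The paper needs $D=O(\log\log n/\log\log\log n)$, so that $D_V=(\log n)^{c+o(1)}$ and $(O(BD))^{D}D_V^{D_V+1}n^{-1/(400D_V^2)}=o(1)$.
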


\begin{lemma}[Informal version of \cref{lem:corr-small}]
    The correction terms for early termination at level $t^*$ have small norm, i.e., \[
\left\|
\mathsf{EarlyTerm}(\widetilde{Q})
\right\|
=  o_{t^*}(1)\,.\]
\end{lemma}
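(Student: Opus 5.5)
The plan is to identify exactly which terms $\mathsf{EarlyTerm}(\widetilde{Q})$ must correct, write it as a linear combination of backbone-correction graph matrices, and bound its norm with the linear-combination norm bound \cref{cor:linear-combination-norm-bound}. That bound controls the norm by $(2+o_n(1))$ times the square root of the sum of squared coefficients.

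First, identify the uncorrected terms. When the recursion stops at $\widetilde{Q}=\widetilde{Q}_{t^*}$, the degree-$\le D$ Chebyshev expansion $F_D(\widetilde{Q})\approx \sum_{j\le D} b_j\fp_j(\widetilde{Q})$ contains shapes $\beta=\tau_1\circ\cdots\circ\tau_j$ whose correction $\bar\beta$ is not yet in $\calB(\widetilde{Q})$. These are precisely the shapes $\beta\in\textsf{New-}\calP_j(\widetilde{Q}_{t^*})$, i.e.\ those using at least one base shape introduced at the last level. We set
\[
\mathsf{EarlyTerm}(\widetilde{Q}) \coloneqq \sum_{j=2}^{D} b_j\sum_{\beta\in\textsf{New-}\calP_j(\widetilde{Q}_{t^*})}\Bigl(\prod_{t}c(\tau_t)\Bigr)\cmp_{\bar\beta}.
\]
This is a linear combination of backbone-correction matrices. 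Each $\bar\beta$ arises from a unique decomposition, since the Hamiltonian path together with the outer cycle determines the factors. The construction is also symmetric.

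Second, bound the variance proxy. Write $S_i=\sum_{\tau\in\calB(\widetilde{Q}_i)}c(\tau)^2$. Exactly as in the variance computation of \cref{sec:key-ideas}, the sum of squared coefficients of $\mathsf{EarlyTerm}$ equals
\[
\sum_{j=2}^{D} b_j^2\bigl(S_{t^*}^j-S_{t^*-1}^j\bigr)=S_{t^*+1}-S_{t^*}.
\]
The sequence $S_i$ is monotone increasing and bounded: with the choice of $c_\gamma$ from \cref{lem:truncated-var}, it converges to the fixed point $1$ of $S=(1+c_\gamma)^2C_F^2+\sum_{j\le D}b_j^2S^j$. Its increments therefore tend to $0$ as $t^*\to\infty$. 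To get a rate, I would use that the map $g(S)=(1+c_\gamma)^2C_F^2+\sum_{j\le D} b_j^2 S^j$ is convex and increasing with $g(1)=1$. This gives $1-S_{i+1}\le g'(1)(1-S_i)$ only if $g'(1)<1$; if instead $g'(1)\ge 1$ at the fixed point, I would fall back on the trivial telescoping fact $\sum_i (S_{i+1}-S_i)\le 1$. That fact forces the increments to $0$, possibly non-uniformly, which still gives $o_{t^*}(1)$.

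Third, convert to a norm bound. Applying \cref{cor:linear-combination-norm-bound} gives, with high probability,
\[
\|\mathsf{EarlyTerm}\|\le(2+o_n(1))\sqrt{S_{t^*+1}-S_{t^*}}=o_{t^*}(1).
\]
We also need to bound the gap between $\sum_j b_j\fp_j$ and $\mathsf{EarlyTerm}$ caused by intersection terms. Those are already absorbed into $\chebyerror$ by \cref{thm:formal-equiv-cheby-shape}.

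The main obstacle is that the corollary needs uniform control over shape sizes and the number of shapes. The shapes here have at most $D^{t^*}$-ish vertices, so we need $t^*,D$ growing slowly enough relative to $n$ that the $o_n(1)$ errors in the norm bound stay negligible. This requires choosing $t^*,D$ after fixing the target $o(1)$, and checking that the corollary's hypotheses (bounded shape size, coefficients with controlled total $\ell_2$ mass) hold uniformly.
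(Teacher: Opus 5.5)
Your first and third steps match the paper. $\mathsf{EarlyTerm}(\widetilde Q)$ is the combination of the corrections $\bar\beta$ for $\beta\in\textsf{New-}\calP_j(\widetilde Q_{t^*})$, its variance is $\sum_{j=2}^D b_j^2(S_{t^*}^j-S_{t^*-1}^j)=S_{t^*+1}-S_{t^*}$, and \cref{cor:linear-combination-norm-bound} turns this into a norm bound.

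\textbf{The gap is in your second step}, where you argue that this increment tends to $0$.

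\emph{The sequence changes with $t^*$.} With $c_\gamma=c_\gamma(t^*,D)$ chosen as in \cref{lem:truncated-var}, the whole sequence $(S_i)$ depends on $t^*$, and it hits $1$ exactly at step $t^*$. Moreover $1$ is in general not a fixed point of $g$, since
\[
g(1)-1=(1+c_\gamma)^2C_F^2-C_F^2-\sum_{j>D}b_j^2 .
\]
So the sequence does not ``converge to the fixed point $1$''. Your convexity inequality $1-S_{i+1}\le g'(1)(1-S_i)$ is also unavailable, because it presupposes $g(1)=1$.

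\emph{Telescoping controls the wrong quantity.} You need the $t^*$-th increment of a sequence whose initialization itself changes with $t^*$. The bound $\sum_i(S_{i+1}-S_i)\le 1$ only controls the increments of one fixed sequence, or their average up to $t^*$. It says nothing about this diagonal quantity.

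\emph{The fallback for $g'(1)\ge1$ cannot work.} If $\sum_{j\ge2}jb_j^2$ exceeded $1$, the untruncated iteration from $C_F^2$ would stall at a fixed point strictly below $1$. Forcing $S_{t^*}=1$ would then require $c_\gamma$ bounded away from $0$, and the omitted update $g(1)-1$ would be bounded away from $0$ as well. The contraction is therefore essential, not optional.

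\textbf{The missing ingredient} is the explicit contraction constant
\[
\beta=\sum_{j\ge2}jb_j^2=\tfrac{16}{\pi^2}-1<1,
\]
computed from the closed form of $b_j$. The paper uses it as follows; this is the last part of \cref{lem:truncated-var}. Write $R_D=\sum_{j>D}b_j^2$.
\begin{enumerate}
\item For the fixed initialization $a_0=C_F^2$, the recursion $E_{i+1}\le R_D+\beta E_i$ gives $1-S_{t^*-1}(a_0)=O(D^{-3}+(t^*)^{-3})$.
\item Monotonicity in the initial variance transfers this to the $t^*$-dependent initialization.
\item Since $S_{t^*}=1$, the omitted update equals $\sum_{j=2}^D b_j^2(1-S_{t^*-1}^j)\le\beta(1-S_{t^*-1})$.
\end{enumerate}

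Alternatively, your own setup gives the bound in one line once you use \cref{lem:truncated-var} in full. By the identity above, the omitted update is at most $(2c_\gamma+c_\gamma^2)C_F^2=O(D^{-3}+(t^*)^{-3})$, because that lemma gives $c_\gamma=O(D^{-3}+(t^*)^{-3})$, and its proof is exactly where $\beta<1$ enters.

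Either way the norm is $O(D^{-3/2}+(t^*)^{-3/2})$, which is the quantitative form of the statement.
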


\paragraph{Wrapping Up the Proof}
We now show PSDness of the final matrix and verify its objective value. For given $D$ and $t^*$, we would like to pick a small $\epsilon>0$  to ensure the PSDness of the final matrix.

\begin{restatable}[Final Parameter Choice for $\eps_{\mathrm{spec}}$]{claim}{FinalParamChoiceSpec}\label{claim:final-parameter-choice-for-eps-spec}
Let $t^*,D \leq O\!\left(\frac{\log \log n}{\log \log \log n}\right)$. 
Then it suffices to take
\(
\eps_{\mathrm{spec}} = \tilde{\Theta}\!\left(\frac{1}{D}\right)
\) 
such that
\[
M_{\mathrm{final}}
\coloneqq \eps_{\mathrm{spec}} I
+ F_D(\widetilde{Q})
+ \chebyerror + \mathsf{EarlyTerm}
\;\succeq\; 0 \, .
\]
We write \(\tilde{\Theta}(\cdot)\) to suppress polylogarithmic factors in \(D\) (but not in \(n\)).
\end{restatable}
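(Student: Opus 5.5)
The plan is to write $M_{\mathrm{final}}$ as $\eps_{\mathrm{spec}} I + F_D(\widetilde{Q})$ plus two error matrices. We then show that the smallest eigenvalue of $F_D(\widetilde{Q})$ and the operator norms of both error terms are each bounded by $\tilde{O}(1/D)$ with high probability. By Weyl's inequality, it then suffices to take $\eps_{\mathrm{spec}}$ equal to (a constant times) the sum of these bounds.

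\textbf{Step 1: spectral window for $\widetilde{Q}$.} By \cref{lem:final-norm} (the formal version is \cref{cor:linear-combination-norm-bound}, which uses the truncated variance being exactly $1$ from \cref{lem:truncated-var}), with high probability $\|\widetilde{Q}\| \le 2+\delta$. Here $\delta = o_n(1)$, and I expect it to be at most $n^{-c}$ or $\polylog$-inverse in $n$ once the shape sizes are bounded by $O(t^* D)$. The key bookkeeping point is that the shapes in $\widetilde{Q}$ have at most $D^{O(t^*)}$ vertices. With $t^*,D \le O(\log\log n/\log\log\log n)$ this is $(\log n)^{O(1)}$ at worst. Concretely, $D^{t^*} = \exp(t^*\log D) \le \exp(O(\log\log n))$, so the graph-matrix trace-method error terms (which scale like $(\text{shape size})^{O(\text{shape size})}/\sqrt{n}$ or similar) stay $o_n(1)$, and in fact polynomially small in $n$. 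I would verify this is the regime in which \cref{cor:linear-combination-norm-bound} applies, giving $\delta \le n^{-\Omega(1)}$ or at least $\delta \le 1/\polylog(n)$.

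\textbf{Step 2: approximate PSDness of $F_D(\widetilde{Q})$.} I apply \cref{prop:FD-psd}:
\[
F_D(\widetilde{Q}) \succeq -\tilde{O}\!\left(\tfrac{1}{D} + \delta\,\poly(D)\, e^{D\sqrt{\delta}}\right) I.
\]
Since $D = O(\log\log n)$ and $\delta$ is polynomially (or at least polylogarithmically) small in $n$, we get $D\sqrt{\delta} = o(1)$ and $\delta \poly(D) = o(1/D)$. So the lower bound is $-\tilde{O}(1/D)$.

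\textbf{Step 3: the correction terms.} By \cref{thm:formal-equiv-cheby-shape}, $\|\chebyerror(\widetilde{Q})\| = o_n(1)$. Here I need the quantitative form: each $\mathrm{CorrectionError}(j)$ has norm at most $(\text{size})^{O(\text{size})} n^{-\Omega(1)}$, and the $|b_j|$ are bounded since $\sum b_j^2 \le 1$. Summing over $j \le D$ therefore gives $o_n(1) \le 1/D$ under the stated parameter range. For $\mathsf{EarlyTerm}$, \cref{lem:corr-small} gives norm $o_{t^*}(1)$. I would need its explicit rate, presumably geometric decay in $t^*$ coming from the variance recursion $S_{k+1} = S_0 + \sum b_j^2 S_k^j$ converging to $1$. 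Then I would choose $t^*$ large relative to $D$ (e.g.\ $t^* \ge C\log D$) so that this is at most $1/D$.

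\textbf{Conclusion and main obstacle.} Combining the three steps, $M_{\mathrm{final}} \succeq (\eps_{\mathrm{spec}} - \tilde{O}(1/D))I$, so $\eps_{\mathrm{spec}} = \tilde{\Theta}(1/D)$ suffices. The main obstacle is Step 1 together with the quantitative error in Step 3. Everything hinges on the $o_n(1)$ terms in the sharp norm bound and in the Chebyshev–concatenation equivalence remaining small when the shapes have size growing like $D^{t^*}$. The factor $e^{D\sqrt{\delta}}$ is why $\delta$ must beat $1/D^2$ by a margin. I would first check that the block-walk trace-method bound degrades only like $(\text{size})^{O(\text{size}\cdot q)}$ at trace power $q \approx \log n$. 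This is what forces the $\log\log n/\log\log\log n$ restriction on $t^*$ and $D$.
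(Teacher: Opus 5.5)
This is essentially the paper's argument: it likewise bounds $\|\widetilde{Q}\|\le 2+\delta_{sp}$ via \cref{cor:linear-combination-norm-bound}, feeds this into \cref{prop:FD-psd}, controls $\chebyerror$ and $\mathsf{EarlyTerm}$ by \cref{thm:formal-equiv-cheby-shape} and \cref{lem:corr-small}, and absorbs everything into $\eps_{\mathrm{spec}}=\tilde{\Theta}(1/D)$.

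Two quantitative corrections are needed. First, the error scales in your Steps 1 and 3 are $n^{-\Theta(1/D_V^2)}$ times factors like $D_V^{D_V+1}$ and $(O(BD))^D$, where $D_V=(\log n)^{c+o(1)}$ and $\log B=O(D^D)$. This gives only $\exp(-(\log n)^{1-2c+o(1)})$ rather than $n^{-\Omega(1)}$, so the constant $c$ in $D=c\log\log n/\log\log\log n$ must be taken small. Second, the paper sets $t^*=D$ and uses the stated $O(D^{-3/2}+(t^*)^{-3/2})$ bound, whereas your choice $t^*\ge C\log D$ needs the geometric rate $\beta^{t^*}$, which appears only inside the proof of \cref{lem:truncated-var}.
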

We verify this calculation in~\cref{sec:final-parameter}. This follows by combining the above results and bounding the negative spectrum of $F_D$ as a function of $D$ and $t^*$, as well as the spectral norm of the final correction term needed. Our final parameter choice sets $D=t^*$ for simplicity. Importantly, the above claim allows us to choose the truncation parameter $D$ to grow slowly with $n$, which in turn enables us to achieve a value of $(1+o_n(1))\sqrt{n}$.

By \cref{lem:edge-constraints-verification}, the corrected sum $F_D(\widetilde{Q})+\chebyerror+\mathsf{EarlyTerm}$ has diagonal $C_F$. Adding $\eps_{\mathrm{spec}}I$ gives $M_{\mathrm{final}}$ diagonal $C_F+\eps_{\mathrm{spec}}$. We therefore normalize by setting
\[
\widetilde{M}_{\mathrm{final}} \coloneqq \frac{M_{\mathrm{final}}}{C_F+\eps_{\mathrm{spec}}}\,.
\]

\begin{proposition}[Verification of Objective Value]\label{prop:value-verification}
With the final choice $t^*=D$, the matrix $\widetilde{M}_{\mathrm{final}}$ is a target matrix in \cref{def:concrete-target} with value
\[
c_k=\frac{(1+c_\gamma)C_F}{C_F+\eps_{\mathrm{spec}}}
=1-\tilde{O}\!\left(\frac{1}{D}\right).
\]
\end{proposition}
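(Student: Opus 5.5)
We verify the three requirements of \cref{def:concrete-target} directly for $\widetilde{M}_{\mathrm{final}}$. The first step is to identify the coefficient of $\cmp_{\fp_1}$. Since $\widetilde{Q}_0=(1+c_\gamma)C_F\cmp_{\fp_1}$ and the Chebyshev expansion of $F_D$ has unit coefficient on $P_1(x)=x$, the linear term of $F_D(\widetilde{Q})$ contributes $(1+c_\gamma)C_F\cmp_{\fp_1}=\frac{(1+c_\gamma)C_F}{\sqrt n}A_G$.

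We must check that no other summand of $M_{\mathrm{final}}$ carries a pure $\cmp_{\fp_1}$ component. The shapes $\overline{\beta}$ added to $\widetilde Q$ are backbone-correction shapes with at least three vertices, so they are not $\fp_1$. The higher Chebyshev terms $b_jP_j(\widetilde Q)$ for $j\ge 2$, after adding $\chebyerror$, become $b_j\fp_j(\widetilde Q)$ plus terms paired with their corrections. The $j$-way concatenations with $j\ge2$ have at least three vertices and no boundary edge, so they are not $\fp_1$ either. $\mathsf{EarlyTerm}$ consists only of correction shapes $\overline\beta$.

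By \cref{lem:edge-constraints-verification}, every term other than $C_F I$ and $(1+c_\gamma)C_F\cmp_{\fp_1}$ is a sum of pairs $c_\beta(\cm_\beta+\cm_{\bar\beta})$. These pairs vanish on non-edges and on the diagonal. The $\eps_{\mathrm{spec}} I$ term only affects the diagonal. After dividing by $C_F+\eps_{\mathrm{spec}}$, we therefore have
\[
\widetilde{M}_{\mathrm{final}} = I_n + \frac{(1+c_\gamma)C_F}{(C_F+\eps_{\mathrm{spec}})\sqrt n}A_G + R_W,
\]
with $R_W$ satisfying the non-edge constraints. This reads off $c_k=\frac{(1+c_\gamma)C_F}{C_F+\eps_{\mathrm{spec}}}$.

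PSDness follows from \cref{claim:final-parameter-choice-for-eps-spec}, since dividing by the positive scalar $C_F+\eps_{\mathrm{spec}}$ preserves $M_{\mathrm{final}}\succeq0$.

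For the asymptotic value, we use $C_F=\frac{8}{3\pi}$, which is a positive constant, together with $\eps_{\mathrm{spec}}=\tilde\Theta(1/D)$. Then
\[
\frac{C_F}{C_F+\eps_{\mathrm{spec}}}=1-\tilde O(1/D).
\]
By \cref{lem:truncated-var}, $c_\gamma\ge0$ and $c_\gamma=o_{t^*,D}(1)$, so the factor $1+c_\gamma$ only helps. To keep the two-sided statement $1-\tilde O(1/D)$, we need $c_\gamma=\tilde O(1/D)$ when $t^*=D$. I would get this by quantifying \cref{lem:truncated-var} through the recurrence $S_{k+1}=S_0+\sum_{j\le D}b_j^2S_k^j$.

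Two contributions must be controlled. The tail $\sum_{j>D}b_j^2$ decays polynomially in $D$, because $F$ has a kink at $0$, which gives $b_j=O(j^{-3/2})$ and hence a tail of $O(1/D^2)$. The deficit from stopping after $t^*$ iterations converges geometrically, or at least polynomially, in $t^*$. Solving for the $c_\gamma$ that restores variance $1$ then gives $c_\gamma=O(\text{deficit})=\tilde O(1/D)$.

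This quantitative control of $c_\gamma$ is the step I expect to be the main obstacle. The recurrence has fixed point $1$ with derivative $\sum_j jb_j^2$ there, which may be close to $1$, so convergence of the iteration could be slow. I would first try to bound that derivative strictly below $1$. Failing that, I would accept the weaker statement $c_k\ge 1-\tilde O(1/D)$, since only this lower bound on $c_k$ matters for the main theorem.
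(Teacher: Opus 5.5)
Your proposal is correct and follows the paper's proof. The steps match: you read off the $\cmp_{\fp_1}$ coefficient using \cref{lem:edge-constraints-verification}, normalize the diagonal by $C_F+\eps_{\mathrm{spec}}$, take PSDness from \cref{claim:final-parameter-choice-for-eps-spec}, and combine $c_\gamma\ge 0$ small with $\eps_{\mathrm{spec}}=\tilde\Theta(1/D)$. The step you flag as the main obstacle is already settled by the formal \cref{lem:truncated-var}: it uses your contraction idea with $\beta=\sum_j j\,b_j^2=16/\pi^2-1<1$, together with the tail bound $\sum_{j>D}b_j^2=O(D^{-3})$ (the coefficients are actually $O(j^{-2})$, sharper than your $O(j^{-3/2})$), to give $c_\gamma=O(D^{-3})$ when $t^*=D$.
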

\begin{proof}
By the initialization $\widetilde{Q}_0=(1+c_\gamma)C_F\cmp_{\fp_1}$ and \cref{lem:edge-constraints-verification},
\[
F_D(\widetilde{Q})+\chebyerror+\mathsf{EarlyTerm}
=C_FI+\frac{(1+c_\gamma)C_F}{\sqrt n}A_G+R,
\]
where $R$ is symmetric and vanishes on the diagonal and nonedges. Adding $\eps_{\mathrm{spec}}I$ and dividing by $C_F+\eps_{\mathrm{spec}}$ gives
\[
\widetilde{M}_{\mathrm{final}}
=I+\frac{c_k}{\sqrt n}A_G+\frac{R}{C_F+\eps_{\mathrm{spec}}}\,,\]
for
\(
c_k=\frac{(1+c_\gamma)C_F}{C_F+\eps_{\mathrm{spec}}}.
\) For $t^*=D$, \cref{lem:truncated-var} gives $c_\gamma=O(D^{-3})$. With $\eps_{\mathrm{spec}}=\tilde{\Theta}(1/D)$ as chosen above, for all sufficiently large $D$,
\[
1-c_k
=\frac{\eps_{\mathrm{spec}}-C_Fc_\gamma}{C_F+\eps_{\mathrm{spec}}}
=\tilde{O}(1/D).
\]
Positive semidefiniteness follows from \cref{claim:final-parameter-choice-for-eps-spec} and $C_F+\eps_{\mathrm{spec}}>0$.
\end{proof}

The main theorem for the \Lovasz-Theta function then follows as a corollary of the above.
\MainTheta*
\begin{proof}
    Take $t^*=D = C_1 \cdot \frac{\log \log n}{\log \log \log n}$ for some small constant $C_1>0$, and $\eps_{\mathrm{spec}} = C_2\cdot \poly\log(D)/D$.

\paragraph{Objective Value}
By \cref{prop:value-verification}, the final matrix $\widetilde{M}_{\mathrm{final}}$ is a target matrix with value
\[
c_k=1-\tilde{O}\!\left(\frac{1}{D}\right).
\]
Using the conversion following \cref{def:concrete-target}, this yields a solution for the dual SDP in~\cref{def:dual-sdp} of value
\[
\lambda=1+\frac{\sqrt n}{c_k}
=1+
\left(1 + \tilde{O}\left(\frac{1}{D}\right)\right)\sqrt{n} = (1+o_n(1)) \cdot  \sqrt{n}\,,
\]
where we plug in our value of $D = \omega(1)$ s.t. $\tilde{O}(1/D)=o_n(1)$.

\paragraph{PSDness and Non-Edge Constraints}
PSDness follows from \cref{claim:final-parameter-choice-for-eps-spec} together with $C_F > 0$. 
The (non-)edge constraints are verified in \cref{lem:edge-constraints-verification} and are preserved under rescaling.  

Finally, the upper bound for the primal SDP (\cref{def:primal-sdp}) follows by duality. The lower bound follows by the folklore result.
\end{proof}

\begin{remark}
    Our result also gives an explicit matrix with a value up to $o_n(1)$ slack for lower bounding the primal SDP.
\end{remark}

%% file: norm_bound_preview.tex
\subsection{Norm Bounds for Linear Combinations of Backbone-Correction Shapes}
\label{sec:norm-bound-overview}
For our analysis, it is crucial to obtain precise bounds on the norms of the matrices which appear, especially
the inner matrix \(Q\). We now describe the techniques we use to obtain these tight norm bounds. As a byproduct, the same analysis also identifies the limiting spectrum of \(Q\), revealing a free-independence phenomenon among its underlying graph-matrix
constituents, i.e., backbone-correction shapes.

\paragraph{Trace Moment Method and Factor Assignment Scheme}
To bound the norms of our matrices, we use the trace moment method. For any matrix $M$, 
\[
    \|M\|
    \le
    \Tr \!\bigl( (MM^\top)^q\bigr)^{1/2q},
\]
so to bound the norm of $M$, it is sufficient to obtain bounds on the trace moments \(\Tr((MM^\top)^q)\).

For fixed natural numbers $q$, we determine the leading-order term of \(\Tr((MM^\top)^q)\). This determines the limiting distribution of the singular values of $M$ but is not sufficient to give good norm bounds on $M$, as $\Tr \!\bigl( (MM^\top)^q\bigr)^{1/2q}$ will be roughly $\sqrt[2q]{n}$ times as large as $\|M\|$. To prove our norm bounds on $M$, we consider values of $q$ much larger than $\log(n)$. For this range of $q$, our bounds on \(\Tr((MM^\top)^q)\) are not quite as accurate but are sufficient to determine the norm of $M$ up to a factor of $(1 \pm o_n(1))$.

When the trace method is applied to graph matrices, each contribution to
\(\Tr((M_{\tau}M_{\tau}^{\top})^q)\) has a concrete combinatorial interpretation:
it corresponds to a labeling of the vertices in \(q\) copies each of
\(\tau\) and \(\tau^\top\), arranged alternately, by elements of \([n]\), namely vertices of the
underlying input graph \(G\), subject to boundary consistency. In other words, we can view it as a walk of length $2q$ with each step being a \emph{block-step}: a collection of edges prescribed by the underlying shape as opposed to an ordinary edge as in the graph case. See the following figure for an example of part of a trace walk for the shape $\bar{\fp}_2$.

\begin{figure}[h]
    \centering
    \begin{subfigure}[t]{0.7\textwidth}
        \centering
        \includegraphics[width=\linewidth]{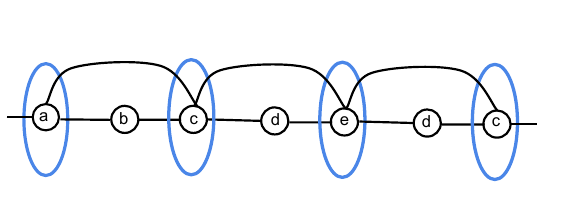}
        \end{subfigure}
        \caption{Snippet of a Trace Walk for $\bar{\fp}_2$}
\end{figure}
Next, we consider the concrete contribution of a walk to the trace.
\begin{proposition} For any $q\in \N$, 
	\[ 
	\E[\Tr(\cm_{\tau }\cdot   \cm_{\tau }^\top )^{q}]  \leq \sum_{P:\text{trace-walk of length }2q} \val(P)\,,
	\]
    where we define \[ 
    \val(P) \coloneqq \E_G \left[ \prod_{i\in [q]} \cm_{\tau} [U_{\tau_i}, V_{\tau_{i+1}}] \cdot \cm_{\tau} [U_{\tau^T_i}, V_{\tau^T_{i+1}}] \right] = \prod_{e = (I,J)\in E(P)} \E_G[(G_e)^{\mul_P(e)} ]
    \]
    where $E(P)$ denotes the set of edges (random variables) used by any step in $P$, and $\mul_P(e)$ denotes the number of times an edge $e$ appears in $P$.
\end{proposition}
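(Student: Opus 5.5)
We expand the trace directly and then use independence of the edge variables. Write $(\cm_\tau\cm_\tau^\top)^q$ as an alternating product of $2q$ factors $\cm_\tau,\cm_{\tau^\top},\cm_\tau,\dots$, using $\cm_\tau^\top=\cm_{\tau^\top}$, which holds by \cref{def:shape-transpose} since transposing swaps the boundary labelings. Expanding the trace gives a sum over sequences of row/column indices $i_0,i_1,\dots,i_{2q}=i_0$ in $[n]$, with consecutive indices playing the role of the left and right boundaries of the corresponding factor.

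Next substitute \cref{def:graph-matrix-for-shape} into each factor. Each $\cm_\tau[i_{t-1},i_t]$ becomes a sum over injective labelings $\sigma_t:V(\tau)\to[n]$ whose boundary values are $i_{t-1},i_t$. A choice of all $2q$ labelings, consistent at the shared boundaries, is exactly a trace-walk $P$ of length $2q$ in the block-step sense described above. Exchanging the finite sums gives
\[
\Tr\bigl((\cm_\tau\cm_\tau^\top)^q\bigr)=\sum_{P}\prod_{t=1}^{2q}\prod_{\{a,b\}\in E(\tau)}A_G[\sigma_t(a),\sigma_t(b)].
\]
Taking $\E_G$ and using linearity gives equality with $\sum_P\val(P)$, where $\val(P)$ is the expectation of the product of the $2q$ block entries, as defined in the statement.

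To get the product formula, group the factors by the underlying unordered pair $e=\{I,J\}$ of $[n]$. The product becomes $\prod_{e\in E(P)}(G_e)^{\mul_P(e)}$, where $\mul_P(e)$ counts the occurrences of $e$ over all block-steps. Since each $\sigma_t$ is injective, no loops arise. For $G\sim G(n,\tfrac12)$ the entries $A_G[I,J]$ over distinct unordered pairs are independent $\pm1$ variables, so the expectation factorizes into $\prod_{e}\E[(G_e)^{\mul_P(e)}]$. Each factor is $1$ if the multiplicity is even and $0$ otherwise, so every term is nonnegative. The only step needing care is the bookkeeping of the boundary identifications, so that walks are neither double-counted nor omitted.

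With an exact definition of trace-walks we get equality. The inequality $\le$ as stated still holds if one parametrizes walks more loosely, for example by allowing labelings that are not forced to agree with injectivity across blocks, because every $\val(P)\ge 0$ and the genuine terms form a sub-collection. There is no real obstacle here.
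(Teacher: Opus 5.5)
Your argument is correct and is essentially the paper's own: in the formal treatment of the trace power method, the paper likewise expands $\Tr(\cm_{\tau_1}\cdots\cm_{\tau_j})$ as a sum over trace walks $W$ of $\chi_{E(H_W)}(G)$. It then uses independence of the $\pm1$ edge variables to get $\E[\chi_{E(H_W)}(G)]\in\{0,1\}$ according to the parity of edge multiplicities, which gives the bound with equality, exactly as you observe. Your remarks that $\cm_\tau^\top=\cm_{\tau^\top}$ and that injectivity of each $\sigma_t$ keeps the zero diagonal entries of $A_G$ out of the product are the right bookkeeping points; they are only implicit in the paper.
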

An immediate observation---central to many trace method arguments for random matrices---is that only even walks contribute. In particular, any walk that traverses an edge (i.e., uses an underlying random variable) exactly once has zero contribution in expectation. Thus, it suffices to restrict attention to walks in which every edge appears an \emph{even} number of times.

\paragraph{Identifying the Dominant Walks}
To obtain our norm bound, we first isolate the walks that give the dominant
contribution to the trace moment. At a high level, these walks generalize the
tree-like walks that arise in the trace-moment calculation for Wigner matrices.
The key difference is that each ordinary edge-step is replaced by a block-step
of shape \(\tau\), namely a collection of edges corresponding to the graph
matrix under consideration. 
The guiding intuition is that the internal vertices of a backbone-correction
shape---that is, the vertices in
\(
    V(\tau)\setminus (u_\tau \cup v_\tau)
\)
---can be effectively ignored: their combinatorial contribution is exactly
balanced by the normalization factors associated with the graph-matrix entry.

With this simplification, the walk across block-steps simply
reduces to an ordinary walk across edges (as opposed to a collection of edges at each step), bringing us back to
the familiar Wigner-type picture.  The example below illustrates the intuitive simplification we would like to make.

\begin{figure}[h]
    \centering
    \begin{subfigure}[t]{0.48\textwidth}
        \centering
        \includegraphics[width=\linewidth]{diagrams/shapewalk.pdf}
    \end{subfigure}
    \hfill
    \begin{subfigure}[t]{0.48\textwidth}
        \centering
        \includegraphics[width=\linewidth]{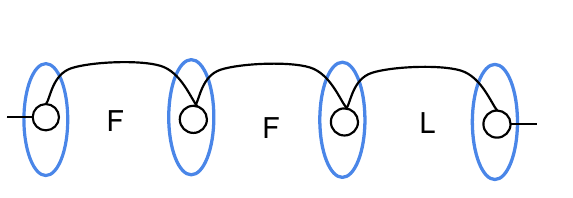}
    \end{subfigure}
    \caption{Simplified View}
\end{figure}

We next explain the \(F/L\) labels in the diagram above. For the purposes of
this overview, we give an informal definition that suppresses several technical
details needed in the full spectral-norm argument.  

\begin{definition}[\(F/L\) Walks (Informal)]
An \(F/L\) walk is a walk that starts from a root vertex \(s\) and consists of two types of steps, called \(F\)-steps and \(L\)-steps:
\begin{enumerate}
\item An \(F\)-step starts at the current vertex \(u\) and goes to a new vertex. This is analogous to going away from the root in a tree walk.
\item An \(L\)-step starts at a vertex \(v\) and returns to its parent. This is analogous to going towards the root in a tree walk.

\end{enumerate}
We consider an $F/L$ walk with $j$ steps to be specified by a sequence of $j$ letters, each of which is $F$ or $L$, as this determines the walk up to renaming the vertices.
\end{definition}

At a high level, we show that the $F/L$ walks are the only dominant terms. 
Concretely, in the normalization used for \(\cmp_\tau\), the graph matrix has
variance one, and we show that
\[
    \E\!\left[
        \Tr\!\left( \cmp_\tau \cmp_\tau^\top \right)^q
    \right]
    =
    (1+o_n(1))\, n\, \cdot C_q + o(n)
\]
for every fixed \(q=O(1)\), where \(C_q\) is the \(q\)-th Catalan number.

For readers familiar with the standard trace-moment calculation for Wigner
matrices and the semicircle law, the dominant walks identified above should look
familiar: they are precisely the same tree-like walks that give the leading
contribution in the Wigner case. Thus, this also reveals that each backbone-correction
shape exhibits a semicircular limiting spectrum, with variance determined by
the corresponding graph-matrix variance.

As discussed above, for our norm bounds, we need to bound $\E\!\left[\Tr\!\left( \cmp_\tau \cmp_\tau^\top \right)^q\right]$ when $q$ is much larger than $log(n)$ which requires more technical details. Before discussing these details, we describe why backbone-correction shapes are freely independent.

\paragraph{Free Independence from Non-Crossing Matchings of Shapes}

First, we consider generalizing trace walks to the setting where $M$ is a linear combination of backbone-correction shapes. See the figure above for an example and note that the only distinction from a single shape is that each step in the trace walk may use a different underlying shape.
\begin{figure}[h]
    \centering
    \begin{subfigure}[t]{0.6\textwidth}
        \centering
        \includegraphics[width=\linewidth]{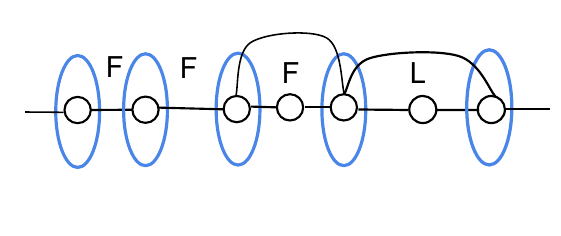}
        \end{subfigure}
        \caption{Trace Walk for a Linear Combination }  \end{figure}\label{fig:linearcombinationtracewalkexample}

In particular, we show that the intuition from the single shape applies to the linear combination as well:
\begin{displayquote}
Each step in the dominant walk is an $F/L$ step as prescribed above, so the walk can be viewed as a walk on a tree.
\end{displayquote}
Moreover, in the dominant walk, the underlying shapes used by the steps can be paired in a non-crossing manner, analogously to the pairing of $F/L$ steps in the trace calculation for a single shape. We illustrate such a matching in the following diagram.

\begin{figure}[h]
    \centering
    \begin{subfigure}[t]{0.6\textwidth}
        \centering
        \includegraphics[width=\linewidth]{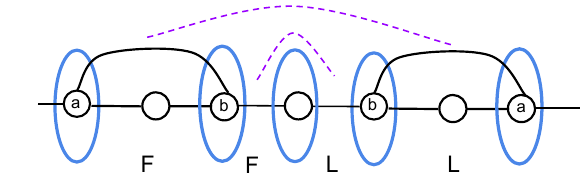}
        \end{subfigure}
        \caption{Matching of Block-Steps}
        \label{fig:match}
        \end{figure}

In particular, we prove that steps can be matched in a strong sense:
\begin{enumerate}
	\item \textbf{Non-crossing Matching of Shapes of Underlying Steps}: For an $F/L$ walk, each $L$ step is matched with the most recent $F$ step that has not yet been matched and this gives a prefect non-crossing matching between the $F$ and $L$ steps. Whenever an $F$ step is matched with an $L$ step, if the shape for the $F$ step is $\tau$ then the shape for the $L$ step must be $\tau^T$. 
	\item \textbf{Matching of Steps}: Next, as our argument for a single shape reveals, whenever an $F$ step with shape $\tau$ is matched with an $L$ step with shape $\tau^T$, the vertices of $\tau$ must be intersected with their mirror images in $\tau^T$ (i.e., each vertex of $\tau$ should be in the same equivalence class as its mirror image in $\tau^T$). Note that this implies that these steps will correspond to the same set of edges of $G$ so each edge of $G$ will appear an even number of times.
\end{enumerate}
This culminates in the following spectrum estimate for a linear combination of backbone-correction shapes. For simplicity of presentation, we state the result
in the variance-one normalization.
\begin{proposition}[Informal version of \cref{thm:shape-sequence-moments}] For a linear combination of backbone-correction shapes in $\calB(M)$,\[
M = \sum_{\tau \in \calB(M)} c(\tau) \cdot \cmp_\tau 
\]
we have
	\[
    \E\!\left[
        \Tr\!\left( M \cdot M^\top \right)^q
    \right]
    =
    (1+o_n(1))\, n\, \cdot C_q + o(n)
\]
for every fixed \(q=O(1)\), where \(C_q\) is the \(q\)-th Catalan number, provided \[ 
\Var(M) \coloneqq \sum_{\tau \in \calB(M) } c(\tau)^2 = 1\,.
\]
\end{proposition}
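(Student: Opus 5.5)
We would prove this by the moment method, expanding $\Tr((MM^\top)^q)$ multilinearly:
\[
\E\!\left[\Tr\!\left( M M^\top \right)^q\right] = \sum_{\tau_1,\sigma_1,\ldots,\tau_q,\sigma_q \in \calB(M)} \prod_{i=1}^q c(\tau_i)c(\sigma_i)\,\E\!\left[\Tr\!\left(\cmp_{\tau_1}\cmp_{\sigma_1}^\top\cdots\cmp_{\tau_q}\cmp_{\sigma_q}^\top\right)\right].
\]
Each mixed trace is a sum over trace walks of length $2q$ whose $t$-th block-step uses the shape $\tau_i$ or $\sigma_i^\top$. Every such walk is a labeling of the vertices of the $2q$ blocks by $[n]$, and we group labelings by their intersection pattern, meaning which block vertices receive equal labels. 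Since $q=O(1)$ and $\calB(M)$ is finite (or truncated), only finitely many patterns occur. The value of a pattern is $0$ unless every $G$-edge has even multiplicity, and otherwise it is $n^{\#\text{distinct labels}}$ times the normalization $\prod_t n^{-(|V(\tau_t)|-1)/2}$.

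The heart of the argument is a vertex-counting lemma. For any even pattern, the number of distinct labels is at most $1+\sum_t (|V(\tau_t)|-1)/2$. Equality holds exactly when the block-steps form an $F/L$ walk in which each $L$ step is matched to the most recent unmatched $F$ step, the $L$ step uses the transposed shape, and every vertex of the $F$ block is identified with its mirror image in the $L$ block.

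We would prove this by induction on the number of blocks, using a charging argument:
\begin{itemize}
\item Each block contributes $|V(\tau)|-1$ ``new vertex'' credits. Internal vertices are fresh only on first appearance.
\item Evenness forces every edge to be revisited, so each block must be ``paid for'' by a later block covering all of its edges.
\item Since backbone-correction shapes are $2$-connected (a Hamiltonian cycle with non-crossing chords), any block whose edges are each covered by other blocks without full mirror identification loses at least one vertex credit. This is exactly where the boundary edge $\{w_0,w_l\}$ and the non-crossing structure are used: the shape has no cut vertex, so partial overlaps cannot be cheap.
\item Once the pairing is forced to be a full mirror identification, it pairs $\tau$ with $\tau^\top$. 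Planarity of the nesting then gives a non-crossing matching of the $2q$ steps, in alternating $M/M^\top$ positions, as in the Wigner tree-walk argument.
\end{itemize}
Non-maximal patterns contribute $O(n^{-1/2})$ relative to the leading term, hence $o(n)$ after summing over the finitely many patterns.

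For the dominant patterns, the count is $n^{1+\sum_t(|V(\tau_t)|-1)/2}(1+o_n(1))$, which exactly cancels the normalization. Hence each dominant walk contributes $n(1+o_n(1))$ times the product of coefficients. A matched pair $(\tau,\tau^\top)$ contributes $c(\tau)c(\tau^\top)=c(\tau)^2$, by symmetry of $M$ as a symmetric linear combination. Alternatively, we can track directly which of $M$ and $M^\top$ each position holds, since the transposed factor supplies the $\tau^\top$ orientation.

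Summing over shape assignments on a fixed non-crossing pairing of the $2q$ positions (alternately from $M$ and $M^\top$) gives $\prod_{\text{pairs}}\sum_\tau c(\tau)^2=\Var(M)^q=1$. The number of such $F/L$ walks, equivalently non-crossing pairings compatible with the alternation, is $C_q$, exactly as in the computation for $\cmp_\tau\cmp_\tau^\top$. This yields $(1+o_n(1))\,n\,C_q+o(n)$.

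The main obstacle is the rigidity step of the counting lemma. We must show that two blocks of different shapes, or of the same shape without full mirror identification, cannot jointly cover each other's edges while losing no vertex credit. This is also where $M_\tau$ and $M_{\tau^\top}$ fail to be free, so the matching rule must be verified to be exactly ``$F$ with $\tau$ pairs with $L$ with $\tau^\top$''. We would first attempt this via an edge-to-vertex accounting on the union multigraph, bounding distinct vertices by $|E|/2+1$ minus the number of non-tree cycles. We would then use 2-connectivity plus non-crossing chords to show that each block's edges, when covered twice, contribute an independent cycle unless the covering is a mirror copy.
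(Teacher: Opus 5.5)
\paragraph{Verdict.} Your overall plan is the paper's: expand multilinearly, group trace walks by intersection pattern, show the dominant patterns are non-crossing $F/L$ walks in which a forward block of shape $\tau$ is matched with a return block of shape $\tau^{\top}$ under mirror identification, and sum $c(\tau)^2$ over each of the $q$ opposite-parity pairs of the $C_q$ pairings. That final bookkeeping is correct, and as your second justification shows, it needs no symmetry of $M$. The gap is the vertex-counting lemma. It is the entire content of the result, you leave it as the ``main obstacle,'' and the route you sketch would not close it.

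\paragraph{Why your sketched route fails.} Your cycle-space reformulation is only a restatement. When every edge has multiplicity exactly two, $|V(\gamma_I)|\le 1+\frac12\sum_t(|V(\tau_t)|-1)$ is equivalent to $\mathrm{cyc}(\gamma_I)\ge\frac12\sum_t\mathrm{cyc}(\tau_t)$, so nothing has been reduced. More seriously, the local rigidity claim you plan to prove is false. That claim is that a block whose edges are covered twice without mirror identification loses a vertex credit. A counterexample:
\begin{itemize}
\item Let a forward block $\tau$ have Hamiltonian cycle $w_0w_1\cdots w_lw_0$ with $l\ge 2$.
\item Let a later block have the shape $\tau'^{\top}$, whose backbone is $w_lw_0w_1\cdots w_{l-1}$, identified vertex-for-vertex with $\tau$. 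It ends at $w_{l-1}$ rather than $w_0$.
\item The two blocks have identical vertex and edge sets. So every edge is covered exactly twice, and the union graph and its cycles are exactly as for the mirror pairing with $\tau^{\top}$.
\item The second block's set of vertices appearing both earlier and later can be just $\{w_{l-1}\}$, so no credit is lost.
\end{itemize}
This $R'$ pairing is genuinely of constant order in products. It is the source of the disconnected terms $\cmp_{\tau_1\circ\cdots\circ\tau_{j-1}|\emptyset}$ in \cref{thm:backbonecorrectionproductapproximation}. In the trace it is excluded only by a global argument. After such a step, the current vertex lies in a component cut off from the starting vertex, so a closed walk can return only through a slack block.

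\paragraph{The missing idea.} What you need is the paper's per-block separator accounting.
\begin{itemize}
\item Let $S_I(\tau_i)$ be the set of vertices of $\tau_i$ that appear both in an earlier and in a later block.
\item Charging $\sqrt n$ to every first and every last appearance gives the exact identity $|V(\gamma_I)|=1+\frac12\sum_i\bigl(|V(\tau_i)|-|S_I(\tau_i)|\bigr)$.
\item Evenness makes $S_I(\tau_i)$ a $u$--$v$ separator of $\tau_i$. On any $u$--$v$ path, the first vertex that appears later also appears earlier: either it is $u$, or the path edge entering it would otherwise be used only once.
\item Hence $|S_I(\tau_i)|\ge 1$, and the upper bound is immediate.
\end{itemize}

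For the equality case:
\begin{itemize}
\item The boundary edge forces $S_I(\tau_i)\in\{\{u_{\tau_i}\},\{v_{\tau_i}\}\}$, so each block is purely forward or purely return.
\item $2$-connectivity, in the form of \cref{lem:2-vtx-disjoint-paths}, forces each block's edge set to equal exactly one other block's edge set. Otherwise two vertices land in some block's canonical separator.
\item The closure argument above rules out $R'$ steps. What remains are $R$ steps with shape $\tau^{\top}$ and mirror identification.
\end{itemize}
Every other pattern has $\sum_i(|S_I(\tau_i)|-1)\ge 1$, so it contributes at most $n^{1/2}$. There are $O_{q,D_V}(1)$ such patterns, so their total is $o(n)$.
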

This implies that if $M$ is symmetric then the limiting distribution of the eigenvalues of $M$ as $n \to \infty$ is the semicircle $SC_2(x) = 1_{x \in [-2,2]}\frac{\sqrt{4-x^2}}{2\pi}$ (even if $M$ is not symmetric, we still have that the limiting distribution of the singular values of $M$ as $n \to \infty$ is the right half of the semicircle $f(x) = 1_{x \in [0,2]}\frac{\sqrt{4-x^2}}{\pi}$).  

\paragraph{Analyzing trace walks for larger $q$} 
Finally, we note that our discussion so far applies for larger $q$ as well. However, there is one additional challenge that is worth highlighting. To illustrate this challenge, it suffices to focus on the analysis for a single backbone-correction shape. The subtlety is that our guiding intuition
treats the internal vertices of such a shape as negligible, effectively replacing
a block-step by an ordinary edge-step:
\[
    \text{block-step of a backbone-correction shape \(\tau\) from \(a\) to \(b\)}
    \quad \approx \quad
    \text{edge-step from \(a\) to \(b\)} .
\]
 \begin{figure}[h]
    \centering
    \begin{subfigure}[t]{0.6\textwidth}
        \centering
        \includegraphics[width=0.8\linewidth]{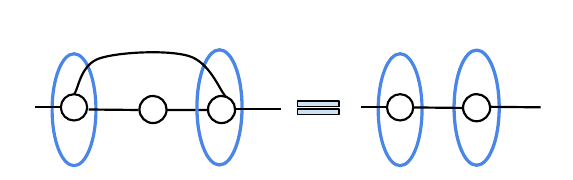}
        \end{subfigure}
        \caption{(Desired) Simplified View}
        \end{figure}    
For larger $q$, this is not the full picture. The reason is that for constant $q$, having a single extra factor of $\frac{1}{\sqrt{n}}$ is sufficient to ensure that the term is not dominant. for larger $q$, we instead need to show that each block which is badly behaved gives a factor of $\frac{1}{poly(n)}$ in order to account for the fact that there will be many such terms. When we carry out this analysis, it turns out that while we can guarantee that block steps which are matched with each other must still have the same underlying edge set, we can no longer guarantee that they have the same boundary vertices! Thus, we need to handle the following question:
\begin{displayquote}
        Given an edge set $E\subseteq \binom{n}{2}$ and a starting vertex $u$, how many possibilities are there for the next boundary vertex $v$?
\end{displayquote}
 \begin{figure}[h]
    \centering
           \begin{subfigure}[t]{0.45\textwidth}
        \centering
        \includegraphics[width=0.8\linewidth]{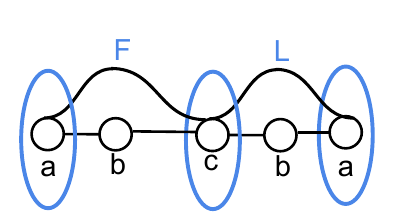}
    \end{subfigure}
    \hfill
    \begin{subfigure}[t]{0.45\textwidth}
        \centering
        \includegraphics[width=0.8\linewidth]{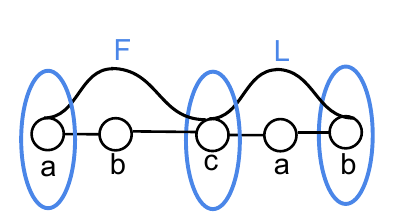}
    \end{subfigure}
    \caption{An Example of Possible Boundary Confusion}
       \label{fig:boundary-confusion}
        \end{figure}
Trivially, a bound of $|V(E)|-1$ suffices (where $V(E)$ is the set of endpoints of edges in $E$) but this is too lossy for our purposes. Using the 
structure of nontrivial backbone-correction shapes, we can show that there are at most $2$ choices for $v$ (if the block is a single edge then there is only one choice for $v$). The reason is that nontrivial backbone-correction shapes are outerplanar, with a unique Hamiltonian cycle formed by the backbone and its boundary edge (see \cref{def:backbone-correction}). Thus, $v$ must be one of the neighbors of $u$ on this Hamiltonian cycle. 

While this observation is very helpful, it is not quite sufficient to prove our norm bounds. For our norm bounds, we need a upper bound of $2$ per block step as this leads to a bound that
\[
\E\left[\Tr (\cmp_\tau \cmp_\tau^\top)^q\right] = \poly(n) \cdot 2^{2q}
\]
which is slightly looser than our bound for constant $q$ but is still sufficient to prove that with high probability, $||\cmp_{\tau}||$ is $2 \pm o_n(1)$. However, for each new block step, we need to specify the following data:
\begin{enumerate}
\item Is the block step an $F$ or $L$ step?
\item If the block step is an $L$ step, what is the next boundary vertex $v$?
\end{enumerate}
Na\"{i}vely, it may take more than a factor of $2$ to specify this data for each block.  

Our key observation is that these two choices are not independent. In particular, as we show in \cref{lem:possibleblockwalksbound}, if we are given whether the current block step is an $F$ step or an $L$ step, a factor of $2$ is sufficient to specify the boundary vertex of the current block step (if it is an $L$ step) and whether the next step is an $F$ step or an $L$ step. The idea is as follows:

\begin{enumerate}
    \item The internal vertices of an $L$-step cannot make subsequent appearances in the walk (or we can treat this as a slack step that gives a negligible contribution to the trace). This implies that if any vertex $v$ is incident to an edge $e$ which only appeared once and is not part of the curret step then $v$ must appear later to ensure that $e$ has even multiplciity so $v$ must be the boundary vertex. Thus, in this case there is no boundary vertex confusion so we just need to specify whether the next block step is an $F$ or an $R$ step.
    \item A vertex $v$ can only start an $L$ step if it is incident to an edge which has only appeared once. This implies that if the current block step is an $L$ step but there is no vertex $v$ which is incident to an edge $e$ that only appeared once and is not part of the curret step then the next step must be an $F$ step. Thus, in this case it is sufficient to specify the boundary vertex $v$ of the current block.
 \end{enumerate}
The resulting sharp norm bound is summarized in the following theorem which is proved in \cref{sec:norm}. To state this theorem, we first define the normalized variance of a linear combination of graph matrices as follows.
\begin{definition}[Normalized Variance of a Linear Combination of Graph Matrices]\label{def:normalized-variance-combination}
	For a matrix \( M= \sum_{\tau \in \calB(M)} c(\tau) \cdot  \cmp_\tau\), we define the normalized variance of $M$ as \[ 
	\Var(M) \coloneqq \sum_{\tau \in \calB(M)} c(\tau)^2\,.
	\]
\end{definition}

 \begin{theorem}[Sharp norm bounds for backbone-correction shapes; informal] \label{thm:informal-main-norm} Let \(B\) be a collection of backbone-correction shapes, and let \[ M = \sum_{\tau\in B} c(\tau)  \cmp_\tau . \]  With high probability, \[ \|M\| \le (1+o_n(1)) \left( 2\sqrt{\Var(M)} + o_n(1) \right). \] More quantitatively, the formal version in \cref{cor:linear-combination-norm-bound} includes an explicit lower-order contribution.\end{theorem}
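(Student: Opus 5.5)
We prove \cref{thm:informal-main-norm} with the trace moment method at a large power $q$. For any $q$, $\|M\| \le \Tr\bigl((MM^\top)^q\bigr)^{1/2q}$. Markov's inequality then gives, with high probability, $\|M\| \le \bigl(n^{2}\,\E\Tr((MM^\top)^q)\bigr)^{1/2q}$. We take $q = \omega(\log n)$ but $q \le \mathrm{polylog}(n)$, so any $\poly(n)$ prefactor costs only a $1+o_n(1)$ factor. The goal is therefore
\[
\E\Tr\bigl((MM^\top)^q\bigr) \le \poly(n)\cdot\bigl(2\sqrt{\Var(M)}+o_n(1)\bigr)^{2q}.
\]

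\textbf{Expansion into walks.} Expanding $M=\sum_\tau c(\tau)\cmp_\tau$, the trace becomes a sum over sequences $(\tau_1,\ldots,\tau_{2q})$ of shapes alternating between $\tau$ and $\tau^\top$. Each term is weighted by $\prod_i |c(\tau_i)|$ times the normalized sum over block trace-walks. By the proposition on walk values, only walks in which every edge of $G$ has even multiplicity contribute. Each such walk contributes $1$ times the normalization $\prod_i n^{-(|V(\tau_i)|-1)/2}$.

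\textbf{Counting walks.} We encode each walk by the sequence of block-steps. In a dominant walk every block-step is an $F$ step, which introduces $|V(\tau_i)|-1$ new vertices and so pays exactly its normalization, or an $L$ step. The $L$ steps are paired non-crossingly with earlier $F$ steps of shape $\tau_i^\top$, with vertices identified with their mirror images. This is the free-independence matching described in the overview.

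Summing over shape sequences consistent with such a matching, each $F$/$L$ pair carries weight $\sum_\tau c(\tau)^2=\Var(M)$. The skeleton is counted as in the Wigner case, giving the bound $2^{2q}\Var(M)^q$ times the number of root labelings, which is at most $n$.

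\textbf{Main obstacle.} The difficulty is the non-dominant walks when $q \gg \log n$. Here we can only guarantee that matched blocks share their edge set, not their boundary vertices, so we must control boundary confusion.

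We use outerplanarity: each nontrivial backbone-correction shape has a unique Hamiltonian cycle, so the next boundary vertex $v$ of an $L$ step is one of the two cycle-neighbors of $u$. Following \cref{lem:possibleblockwalksbound}, we charge a single factor of $2$ per block jointly for two pieces of data: the $F$/$L$ type of the next step, and the boundary choice of the current step. The case split is whether some vertex outside the current block is incident to a singly-used edge. If so, the boundary vertex is forced. If not, the next step must be an $F$ step.

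Every other deviation costs at least $n^{-1/2}$ per bad block against at most $\mathrm{poly}(q\cdot\max|V(\tau)|)$ choices for encoding it. These deviations are a revisited internal vertex, a non-mirror intersection, or an edge of multiplicity greater than $2$. Since shape sizes are bounded by $D=O(\log\log n)$, the sum over deviation counts is a geometric series contributing a $(1+o_n(1))^{2q}$ factor.

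Shapes with $|c(\tau)|$ not appearing in matched pairs, together with cross terms, are absorbed into the additive $o_n(1)$. Taking the $2q$-th root completes the proof.
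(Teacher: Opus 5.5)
\textbf{Gap: the cost of non-dominant walks.} Your architecture matches the paper's: the trace method on $(MM^\top)^q$, block walks whose dominant part is a non-crossing $F$/$L$ matching, and the joint factor-$2$ encoding of \cref{lem:possibleblockwalksbound}. The gap is your claim that every deviation costs ``at least $n^{-1/2}$ per bad block.'' You give no argument for it, and it is false. Consider $\Tr(\cmp_{\tau}\cmp_{\fp_1}^{L}\cmp_{\bar{\fp}_2}\cmp_{\bar{\fp}_2})$, built as follows:
\begin{enumerate}
\item $\tau=\bar{\fp}_{L-1}$ is a forward step along a fresh cycle $w_0,\dots,w_{L-1}$.
\item $L-1$ single edges retrace its backbone back to $w_0$.
\item One more single edge crosses $\{w_0,w_{L-1}\}$.
\item Two triangles on $\{w_{L-1},a,w_0\}$ return to $w_0$: one from $w_{L-1}$ to $a$ via $w_0$, the other from $a$ to $w_0$ via $w_{L-1}$.
\end{enumerate}
Every edge has even multiplicity, and all $L+3$ blocks are non-mirror overlaps. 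Yet the pattern has only $L+1$ vertices, so its normalized contribution is $n^{L+1-\frac{L-1}{2}-\frac{L}{2}-2}=n^{-1/2}$. That is a total deficit of $n^{-3/2}$ relative to the main term, independent of $L$. So a bounded amount of separator excess can account for $\Theta(D_V)$ bad blocks.

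\textbf{What is missing.} You need the mechanism the paper actually uses.
\begin{enumerate}
\item The structural \cref{lem:2-vtx-disjoint-paths}, which is 2-connectivity of backbone-correction shapes through their Hamiltonian cycle, feeds \cref{lem:blockpairing}. That lemma shows that any block not exactly paired with a block of identical edge set is tied to some canonical separator of size $\ge 2$: in itself, in an overlapping block, or through two slack-witness vertices. This is false for general shapes; for example, a $\fp_2$ block whose two edges are retraced by two $\fp_1$ blocks costs nothing. So outerplanarity is needed here, not only for boundary confusion.
\item The charging map of \cref{lem:extractingslack} sends at most $12D_V$ slack blocks to each separator-excess pair. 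This yields only $n^{-\frac12\lceil s/(24D_V)\rceil}$, not $n^{-s/2}$.
\end{enumerate}
With these two ingredients your geometric series does go through. It requires $\bigl(\sum_\tau|c(\tau)|\bigr)(qD_V)^{D_V}\ll n^{1/(48D_V)}$, because each slack block carries a free shape choice and $(qD_V)^{D_V}$ vertex identifications. This is the role of the lower bound on $n$ in \cref{cor:linear-combination-norm-bound}, which involves the $\ell_1/\ell_2$ ratio of the coefficients.

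\textbf{Two smaller points.}
\begin{itemize}
\item An $R'$-matched pair has weight $\sum_\tau c(\tau)c(\tau')$ for a bijection $\tau\mapsto\tau'$. It must be bounded by $\Var(M)$ via Cauchy--Schwarz, as in \cref{thm:linearcombinationnormbound}. It cannot be absorbed into an additive $o_n(1)$: a walk $FR'FR'\cdots$ needs only $O(1)$ slack steps to close, so such pairs are not controlled by the slack count.
\item Your case split should ask whether a vertex \emph{of the hyperedge being deleted} lies on another singly-used hyperedge. As you wrote it (``some vertex outside the current block''), the condition holds after an $R'$ step has started a new component, even though the destination is then not forced.
\end{itemize}
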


%% file: theta-function/alt_norm.tex
\section{Spectral Norm Bounds for Backbone-Correction Shapes}
\label{sec:norm}

For our analysis, it is crucial to obtain precise control over the spectra of the matrices constructed along the way, most notably the inner matrix \(Q\). We now describe the techniques we use to obtain tight spectral control. As a byproduct, the same analysis also identifies the limiting spectrum of \(Q\), revealing a free-independence phenomenon among its underlying graph-matrix constituents, i.e., backbone-correction shapes.
\subsection{The Trace Power Method}
To analyze the spectra and norms of our matrices, we use the trace power method. In particular, we analyze $\E\left[\Tr\left(\cm_{\tau_1} \ldots \cm_{\tau_j}\right)\right]$ for arbitrary backbone-correction shapes $\tau_1,\ldots,\tau_j$. For this, it is helpful to view the terms of $\E\left[\Tr\left(\cm_{\tau_1} \ldots \cm_{\tau_j}\right)\right]$ in terms of trace walks. Throughout this section, $D_V$ is an upper bound on the number of vertices in each backbone-correction shape $\tau_i$.

Recall that $\Tr\left(\cm_{\tau_1} \ldots \cm_{\tau_j}\right) = \sum_{\text{trace walks } W \text{ for } \tau_1,\ldots,\tau_j}{\chi_{E(H_W)}(G)}$ where each trace walk $W$ assigns a label in $[n]$ to each vertex of $\tau_1 \circ \ldots \circ \tau_{j}$ and the multigraph $H_W$ consists of the vertices and edges of $\tau_1 \circ \ldots \circ \tau_{j}$ labeled according to $W$ where vertices with the same label are merged together. We first note that the only trace walks which contribute to $\E\left[\Tr\left(\cm_{\tau_1} \ldots \cm_{\tau_j}\right)\right]$ are trace walks $W$ such that every edge of $H_W$ has even multiplicity.
\begin{proposition}
Given a trace walk $W$ for $\tau_1,\ldots,\tau_j$, $\E\left[\chi_{E(H_W)}(G)\right] = 1$ if every edge of $H_W$ has even multiplicity and $\E\left[\chi_{E(H_W)}(G)\right] = 0$ otherwise.
\end{proposition}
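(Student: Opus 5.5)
The plan is to use independence of the entries of $A_G$ together with the fact that each entry is a uniform $\pm 1$ sign. For $G\sim G(n,\tfrac12)$, the off-diagonal entries $\{A_G[a,b]: a<b\}$ are mutually independent Rademacher variables. This is because each pair $\{a,b\}$ is an edge independently with probability $\tfrac12$, and we set $A_G[a,b]=1$ on edges and $-1$ on non-edges. Since the underlying graph of $\tau_1\circ\cdots\circ\tau_j$ has no loops and the labeling is injective within each shape, no edge of $H_W$ is a loop. Any edge produced by merging vertices from different shapes joins two distinct labels, so every edge of $H_W$ corresponds to an actual off-diagonal entry. (If merging produced a loop, the corresponding entry would be $0$ and the walk would contribute $0$ in any case; I will note this as a remark.)

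First I write $\chi_{E(H_W)}(G)=\prod_{e\in E(H_W)} A_G[e]^{\mul(e)}$. Here the product runs over distinct unordered pairs $e=\{a,b\}$ of labels, and $\mul(e)$ is the multiplicity of $e$ in the multigraph $H_W$. This is exactly the product $\prod_{\text{edges }\{x,y\}\text{ of the concatenation}} A_G[W(x),W(y)]$ regrouped by label pair. Next, by independence, the expectation factors as $\prod_{e}\E\left[A_G[e]^{\mul(e)}\right]$. For a Rademacher variable $X$, $X^2=1$, so $\E[X^m]=1$ when $m$ is even and $\E[X^m]=\E[X]=0$ when $m$ is odd.

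The two cases then follow. If every multiplicity is even, each factor equals $1$, so the product is $1$. If some multiplicity is odd, that factor is $0$, so the expectation is $0$.

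There is essentially no obstacle. The one point requiring care is the bookkeeping that identifies $\chi_{E(H_W)}$ with the regrouped product, together with checking that distinct unordered label pairs yield independent variables. This holds because $A_G$ is symmetric, so $A_G[a,b]$ and $A_G[b,a]$ are the same variable and are counted together in $\mul(e)$.
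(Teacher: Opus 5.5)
Your argument is correct and is the standard reasoning the paper relies on. The paper states this proposition without a separate proof, using the same factorization $\val(P)=\prod_{e}\E_G[(G_e)^{\mul_P(e)}]$ over distinct label pairs from the overview, together with the fact that for a uniform $\pm 1$ variable $X$, $\E[X^m]$ is $1$ or $0$ according to the parity of $m$. Your check that no edge of $H_W$ is a loop (each edge lies inside a single $\tau_i$, on which $W$ is injective) is the right bookkeeping, so the parenthetical remark about loops is unnecessary.
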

\begin{corollary}
$\E\left[\Tr\left(\cm_{\tau_1} \ldots \cm_{\tau_j}\right)\right]$ is equal to the number of trace walks $W$ for $\tau_1,\ldots,\tau_j$ such that every edge of $H_W$ has even multiplicity.
\end{corollary}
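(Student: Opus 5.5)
The corollary follows directly from the preceding proposition together with linearity of expectation, so the plan is short. First I expand the trace. By definition of matrix multiplication and of the graph matrices $\cm_{\tau_i}$ (\cref{def:graph-matrix-for-shape}), $\Tr\left(\cm_{\tau_1}\cdots\cm_{\tau_j}\right)$ is a sum over tuples of injective labelings $\sigma_i: V(\tau_i)\to[n]$. Consecutive labelings must agree on the glued boundary, $\sigma_i(v_{\tau_i})=\sigma_{i+1}(u_{\tau_{i+1}})$, with indices taken cyclically so that $\sigma_j(v_{\tau_j})=\sigma_1(u_{\tau_1})$. Each such tuple is exactly a trace walk $W$: a labeling of the vertices of $\tau_1\circ\cdots\circ\tau_j$ that is injective on each $\tau_i$, with the two ends identified.

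Next I record the summand. The contribution of $W$ is $\prod_i\prod_{\{a,b\}\in E(\tau_i)} A_G[\sigma_i(a),\sigma_i(b)]$. After merging equal labels this is $\prod_{e\in E(H_W)} A_G[e]^{\mul(e)}=\chi_{E(H_W)}(G)$, counting multiplicities. Injectivity within each $\tau_i$ guarantees that no edge becomes a loop, so every factor is a genuine off-diagonal $\pm1$ entry. This gives the identity stated just before the proposition.

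Then I take expectations termwise, which is legitimate because the sum is finite. The off-diagonal entries of $A_G$ for distinct pairs are independent uniform $\pm1$ variables. Hence $\E\left[\prod_e A_G[e]^{\mul(e)}\right]=\prod_e \E\left[A_G[e]^{\mul(e)}\right]$, and each factor equals $1$ when $\mul(e)$ is even and $0$ when it is odd. So $\E\left[\chi_{E(H_W)}(G)\right]=1$ exactly when all multiplicities are even, and $0$ otherwise, which is the preceding proposition.

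Summing these indicators over all trace walks gives precisely the number of trace walks in which every edge of $H_W$ has even multiplicity, which is the corollary. There is no real obstacle. The only point needing care is the bookkeeping: a multi-edge of $H_W$ must be counted with its full multiplicity across all blocks, which is why $H_W$ is defined as a multigraph rather than by collapsing repeated edges.
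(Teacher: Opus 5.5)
Your proof is correct and follows the paper's route. The paper writes $\Tr(\cm_{\tau_1}\cdots\cm_{\tau_j})=\sum_W \chi_{E(H_W)}(G)$ over trace walks, uses the preceding proposition that $\E[\chi_{E(H_W)}(G)]$ is $1$ or $0$ according to whether all multiplicities in $H_W$ are even, and sums by linearity of expectation; you also spell out the routine independence argument behind that proposition, which the paper states without proof.
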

It is helpful to partition trace walks based on their intersection patterns (i.e., which vertices of $\tau_1 \circ \ldots \circ \tau_{j}$ have the same label). Observe that if $W$ is a trace walk and $I$ is the intersection pattern for $W$ then 
\begin{enumerate}
\item Since $W$ gives $u_{\tau_1}$ and $v_{\tau_j}$ the same label, we must have that $u_{\tau_1} \sim v_{\tau_j}$.
\item $H_W$ is obtained by taking $\gamma_I$, labeling the vertices/equivalence classes of $\gamma_I$ according to $W$, and then forgetting the distinguished tuples of vertices $U_{\gamma_I}$, $V_{\gamma_I}$ so that we have a multigraph rather than a shape. Thus, $E\left[\chi_{E(H_W)}(G)\right] = 1$ if and only if every edge of $\gamma_I$ has even multiplicity.
\end{enumerate}
Based on this, we make the following definition.
\begin{definition}
Given shapes $\tau_1,\ldots,\tau_j$ with boundary size $1$, we define $Int^{Trace}_{\tau_1,\ldots,\tau_j}$ to be the set of intersection patterns $I$ for $\tau_1,\ldots,\tau_j$ such that $u_{\tau_1} \sim v_{\tau_j}$ and every edge of $\gamma_I$ has even multiplicity.
\end{definition}
\begin{proposition}
$\E\left[\Tr\left(\cm_{\tau_1} \ldots \cm_{\tau_j}\right)\right] = \sum_{I \in Int^{Trace}_{\tau_1,\ldots,\tau_j}}{\frac{n!}{(n-|V(\gamma_I)|)!}} \leq \sum_{I \in Int^{Trace}_{\tau_1,\ldots,\tau_j}}{n^{|V(\gamma_I)|}}$.
\end{proposition}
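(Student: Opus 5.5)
The plan is to sort the trace walks by their intersection pattern and count each class. Start from the corollary just above: $\E\left[\Tr\left(\cm_{\tau_1} \ldots \cm_{\tau_j}\right)\right]$ equals the number of trace walks $W$ for $\tau_1,\ldots,\tau_j$ such that every edge of $H_W$ has even multiplicity. Every trace walk $W$ induces a unique intersection pattern $I = I(W)$. This $I$ records which vertices of $\tau_1 \circ \ldots \circ \tau_j$ receive the same label in $[n]$, and it lies in $Int_{\tau_1,\ldots,\tau_j}$. The reason is that within each $\tau_i$ the labeling is injective by \cref{def:graph-matrix-for-shape}. So I would write the count as a sum over $I$ of the number of walks $W$ with $I(W) = I$ and all edges of $H_W$ of even multiplicity.

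\textbf{Step 1: which patterns contribute.} Using the two observations stated just before the definition of $Int^{Trace}$, I would check that a walk with pattern $I$ can contribute only if $I \in Int^{Trace}_{\tau_1,\ldots,\tau_j}$. First, the trace closes the walk by identifying $u_{\tau_1}$ with $v_{\tau_j}$, so $u_{\tau_1} \sim v_{\tau_j}$. Second, $H_W$ is $\gamma_I$ with its vertices relabeled by distinct elements of $[n]$. Hence every edge of $H_W$ has even multiplicity exactly when every edge of $\gamma_I$ does. This evenness condition depends only on $I$, not on the particular labels.

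\textbf{Step 2: counting walks for a fixed pattern.} Fix $I \in Int^{Trace}_{\tau_1,\ldots,\tau_j}$. Walks with $I(W) = I$ correspond bijectively to injective maps from the equivalence classes of $I$, which are the vertices of $\gamma_I$, into $[n]$. Injectivity across classes is forced: the pattern is defined as exactly the set of coincidences of labels, so distinct classes must get distinct labels. Conversely, any injective labeling of the classes gives a valid trace walk, because within each $\tau_i$ distinct vertices lie in distinct classes. So there are exactly $n(n-1)\cdots(n-|V(\gamma_I)|+1) = \frac{n!}{(n-|V(\gamma_I)|)!}$ such walks. Summing over $I$ gives the equality. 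The inequality follows from $\frac{n!}{(n-k)!} \le n^k$.

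\textbf{Main point of care.} The only subtlety is the bijection in Step 2, namely that the pattern is determined exactly by the labeling and not merely refined by it. This is resolved by defining $I(W)$ as the partition by equal labels. The closure identification $u_{\tau_1} \sim v_{\tau_j}$ must be treated as part of $I$, so that $\gamma_I$ is the closed shape whose vertices are counted.
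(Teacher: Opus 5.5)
Your proposal is correct and follows essentially the same argument as the paper: you partition trace walks by their intersection pattern, note that a walk contributes (with value exactly $1$) if and only if its pattern lies in $Int^{Trace}_{\tau_1,\ldots,\tau_j}$, and count the walks with a fixed pattern $I$ as injective labelings of the $|V(\gamma_I)|$ equivalence classes, giving $\frac{n!}{(n-|V(\gamma_I)|)!} \le n^{|V(\gamma_I)|}$. Your extra care about the bijection between labelings and patterns makes explicit what the paper states in one line.
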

\begin{proof}
Observe that specifying a trace walk $W$ with intersection pattern $I$ is equivalent to specifying distinct labels in $[n]$ for the $|V(\gamma_I)|$ equivalence classes of $I$ and there are $\frac{n!}{(n-|V(\gamma_I)|)!}$ ways to do this. The result follows by summing over $I \in Int^{Trace}_{\tau_1,\ldots,\tau_j}$ as a trace walk $W$ contributes to $E\left[\Tr\left(\cm_{\tau_1} \ldots \cm_{\tau_j}\right)\right]$ if and only if it has an intersection pattern $I$ such that every edge of $\gamma_I$ has even multiplicity.
\end{proof}
The intersection patterns $I$ which contribute the most to $\E\left[\Tr\left(\cm_{\tau_1} \ldots \cm_{\tau_j}\right)\right]$ are the intersection patterns $I \in Int^{Trace}_{\tau_1,\ldots,\tau_j}$ such that $|V(\gamma_I)|$ is as large as possible. We now analyze the intersection patterns $I \in Int^{Trace}_{\tau_1,\ldots,\tau_j}$ such that $|V(\gamma_I)|$ is large. 
\subsection{Blocks, Vertex Assignment Factors, and Vertex Separators for Blocks}
For our analysis, it is useful to split the factor of $n^{|V(\gamma_I)|}$ among the blocks $\tau_1,\ldots,\tau_j$ based on whether the vertices in each block are appearing for the first time, appearing for the last time, or are making a middle appearance.
\begin{definition}[Vertex appearances]
Given an intersection pattern $I \in Int^{Trace}_{\tau_1,\ldots,\tau_j}$, we say that a vertex $v \in V(\tau_i)$ appears in another block $\tau_{i'}$ if there is a vertex $w \in V(\tau_{i'})$ such that $v \sim w$. More generally, we say that $v$ appears in a set $S \subseteq V(\tau_{i'})$ of vertices of $\tau_{i'}$ if there is a vertex $w \in S$ such that $v \sim w$.

Similarly, we say that a vertex/equivalence class $\hat{v} \in V(\gamma_I)$ appears in a block $\tau_i$ if there is a vertex $w \in V(\tau_{i})$ such that $w \in \hat{v}$. More generally, we say that $\hat{v}$ appears in a set $S \subseteq V(\tau_{i})$ of vertices of $\tau_{i}$ if there is a vertex $w \in S$ such that $w \in \hat{v}$.
\end{definition}
\begin{definition}[First and last vertex appearances]
Given an intersection pattern $I \in Int^{Trace}_{\tau_1,\ldots,\tau_j}$, when we consider the $i$th block $\tau_i$,
\begin{enumerate}
\item We say that a vertex $v \in V(\tau_i)$ appears earlier if there is a $i' < i \in [j]$ such that $v$ appears in $\tau_{i'}$
(note that if $i > 1$ then $u_{\tau_i}$ automatically appears earlier as $u_{\tau_i} = v_{\tau_{i-1}}$). Otherwise, we say that $v$ is appearing for the first time in $\tau_i$.
\item We say that a vertex $v \in V(\tau_i)$ appears later if there is a $i' > i \in [j]$ such that $v$ appears in $\tau_{i'}$
(note that if $i < j$ then $v_{\tau_i}$ automatically appears later as $v_{\tau_i} = u_{\tau_{i+1}}$). Otherwise, we say that $v$ is appearing for the last time in $\tau_i$.
\end{enumerate}
\end{definition}
We use the following vertex assignment scheme for splitting the factor $n^{|V(\gamma_I)|}$ among the blocks $\tau_1,\ldots,\tau_j$.
\begin{enumerate}
\item Whenever a vertex appears for the first time in a block, we assign it a factor of $\sqrt{n}$.
\item Whenever a vertex appears for the last time in a block, we assign it a factor of $\sqrt{n}$.
\end{enumerate}
To distinguish vertices making middle appearances from those making first or last appearances, we follow the terminology of prior work and call vertices making middle appearances \emph{separator vertices}. These vertices no longer contribute a polynomial factor in our factor assignment scheme.
\begin{figure}[h]
    \centering
    \begin{subfigure}[t]{0.7\textwidth}
        \centering
        \includegraphics[width=\linewidth]{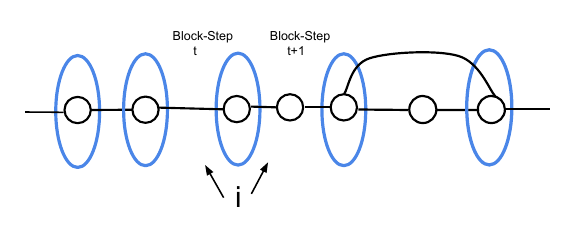}
        \end{subfigure}
        \caption{Example of Vertex Appearances in a Walk of Blocks: vertex $i$ appears in both steps $t$ and $t+1$.}
\end{figure}
As we discuss below, this vertex assignment scheme corresponds to assigning a certain vertex separator to each of the blocks $\tau_1,\ldots,\tau_j$.
\begin{definition}[Canonical vertex separator of a block]
Given an intersection pattern $I \in Int^{Trace}_{\tau_1,\ldots,\tau_j}$, for all $i \in [2,j-1]$, we define the canonical vertex separator $S_I(\tau_i)$ of the $i$th block $\tau_i$ to be the set of vertices in $\tau_i$ which appear both earlier and later. For the first and last blocks, we define $S_I(\tau_1) = U_{\tau_1} = \{u_{\tau_1}\}$ and $S_I(\tau_k) = V_{\tau_k} = \{v_{\tau_k}\}$.
\end{definition}
\begin{proposition}
For all $I \in Int^{Trace}_{\tau_1,\ldots,\tau_j}$ and all $i \in [j]$, $S_I(\tau_i)$ is a vertex separator of $\tau_i$. 
\end{proposition}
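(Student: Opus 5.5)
We need to show that for every $I \in Int^{Trace}_{\tau_1,\ldots,\tau_j}$ and every $i \in [j]$, the set $S_I(\tau_i)$ separates $U_{\tau_i}$ from $V_{\tau_i}$ in $\tau_i$. Recall what this means: every path in $\tau_i$ from $u_{\tau_i}$ to $v_{\tau_i}$ (including the degenerate case $u_{\tau_i}=v_{\tau_i}$, and counting the endpoints themselves) contains a vertex of $S_I(\tau_i)$.

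\textbf{Boundary blocks.} For $i=1$ we have $S_I(\tau_1)=\{u_{\tau_1}\}$, and for $i=j$ we have $S_I(\tau_j)=\{v_{\tau_j}\}$. Any path from $u$ to $v$ contains its own endpoints, so in both cases there is nothing to prove.

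\textbf{Middle blocks, $i\in[2,j-1]$.} Here the boundary observations in the definition of first and last appearances do the work.

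Since $i>1$, the vertex $u_{\tau_i}$ is identified with $v_{\tau_{i-1}}$, so it appears earlier.

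For $u_{\tau_i}$ to lie in $S_I(\tau_i)$, it must also appear later. We argue this using the trace structure and the evenness of edge multiplicities.

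Let $e$ be an edge of $\tau_i$ incident to $u_{\tau_i}$. Such an edge exists because $\tau_i$ is a backbone-correction shape, hence connected with no isolated vertices.

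Every edge of $\gamma_I$ has even multiplicity. So the class of $e$ must also be covered by some edge in another block $\tau_{i'}$ with $i'\neq i$. A backbone-correction shape is a simple graph, so this second copy cannot come from $\tau_i$ itself. Consequently, the class $\hat{u}_{\tau_i}$ appears in $\tau_{i'}$.

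\begin{itemize}
\item If $i'>i$, then $u_{\tau_i}$ appears later, as needed.
\item If $i'<i$, we instead use the trace condition $u_{\tau_1}\sim v_{\tau_j}$. Going around the cycle of blocks, $\hat{u}_{\tau_i}$ is reached again, so there is a way to argue that it appears both before and after $\tau_i$.
\end{itemize}

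The second case is delicate, so I would avoid relying on $u_{\tau_i}$ itself. Instead I would reduce to the simplest sufficient statement: the separator only needs to meet every $u$–$v$ path, not necessarily at an endpoint. I would take the following route.

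Suppose a $u_{\tau_i}$–$v_{\tau_i}$ path $p$ in $\tau_i$ avoids $S_I(\tau_i)$. Then every vertex of $p$ either
\begin{itemize}
\item appears for the first time in $\tau_i$, or
\item appears for the last time in $\tau_i$.
\end{itemize}
Note that $u_{\tau_i}$ appears earlier, so it is a last-appearance vertex (it does not appear later). Also, $v_{\tau_i}$ appears later, so it is a first-appearance vertex. Walking along $p$ from $u_{\tau_i}$ to $v_{\tau_i}$, there is therefore an edge $\{a,b\}$ of $p$ with the following properties:
\begin{itemize}
\item $a$ does not appear later;
\item $b$ does not appear earlier.
\end{itemize}
(If some vertex appears neither earlier nor later, take that vertex as $a$ or $b$ accordingly.)

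Now count copies of the edge class $\{\hat a,\hat b\}$. Any other copy lies in some block $\tau_{i'}$ with $i'\neq i$ that contains both $\hat a$ and $\hat b$.
\begin{itemize}
\item If $i'<i$, then $b$ appears earlier, a contradiction.
\item If $i'>i$, then $a$ appears later, a contradiction.
\end{itemize}
Hence $\{\hat a,\hat b\}$ has multiplicity one in $\gamma_I$, contradicting $I\in Int^{Trace}_{\tau_1,\ldots,\tau_j}$.

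\textbf{Main obstacle.} The step needing care is the cyclic wrap-around. I must check that "earlier" and "later" are used linearly in the index $i$, exactly as the definition states, and not cyclically. With the linear reading the argument above goes through directly, and the only input beyond the definitions is that $\tau_i$ is a simple graph.
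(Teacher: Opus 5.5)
Your final argument is correct and is essentially the paper's. The paper takes the first vertex on the backbone that appears later and uses the edge entering it, which must have a second copy. You run the same edge-multiplicity argument by contradiction, and you apply it to every $u_{\tau_i}$--$v_{\tau_i}$ path rather than only the backbone, which is what a separator actually requires given the boundary edge $\{u_{\tau_i},v_{\tau_i}\}$. You should delete the discarded first route, since its claim is false: $u_{\tau_i}$ need not appear later, and for a return block $S_I(\tau_i)=\{v_{\tau_i}\}$.
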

\begin{proof}
$S_I(\tau_1) = U_{\tau_1}$ and $S_I(\tau_k) = V_{\tau_k}$ are clearly vertex separators of $\tau_1$ and $\tau_k$, respectively. When $i \in [2,j-1]$, let $w$ be the first vertex on the backbone from $u_{\tau_i}$ to $v_{\tau_i}$ which appears later. If $w = u_{\tau_i}$ then $w \in S_I(\tau_i)$. Otherwise, $w$ must appear earlier as otherwise the edge leading to $w$ would only appear once in $\gamma_I$. Thus, $w \in S_I(\tau_i)$ in this case as well.
\end{proof}
We now express $|V(\gamma_I)|$ in terms of the sizes of the vertex separators $\{S_I(\tau_i): i \in [j]\}$.
\begin{proposition}
For all $I \in Int^{Trace}_{\tau_1,\ldots,\tau_j}$, for all $i \in [j]$, each vertex $v \in V(\tau_i)$ must appear earlier or later (or both).
\end{proposition}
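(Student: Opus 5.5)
We argue by contradiction. Fix $I \in Int^{Trace}_{\tau_1,\ldots,\tau_j}$ and a vertex $v \in V(\tau_i)$, and suppose that $v$ appears neither earlier nor later. Then the equivalence class $\hat{v}$ consists only of vertices of $\tau_i$. By the definition of an intersection pattern, no two vertices of $\tau_i$ lie in the same class, so $\hat{v} = \{v\}$.

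First dispose of the boundary vertices. If $v = u_{\tau_i}$ and $i > 1$, then $v \sim v_{\tau_{i-1}}$, so $v$ appears earlier. If $v = v_{\tau_i}$ and $i < j$, then $v$ appears later. In the remaining cases we use the trace condition $u_{\tau_1} \sim v_{\tau_j}$:
\begin{itemize}
\item if $i = 1$ and $v = u_{\tau_1}$, then $v$ appears in $\tau_j$;
\item if $i = j$ and $v = v_{\tau_j}$, then $v$ appears in $\tau_1$.
\end{itemize}
In either case the other block is different from $\tau_i$ when $j \geq 2$. The degenerate case $j = 1$ forces $u_{\tau_1} \sim v_{\tau_1}$ inside a single block. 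This is impossible for a backbone-correction shape, since $u_{\tau} \neq v_{\tau}$, so $Int^{Trace}_{\tau_1}$ is empty and the statement holds vacuously. So $v$ may be assumed to be an internal vertex of $\tau_i$.

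Now use the evenness condition. Every backbone-correction shape contains the non-backtracking path through all of its vertices (\cref{def:backbone-correction}), so the internal vertex $v$ is incident to at least one edge $\{v,w\}$ of $\tau_i$. In $\gamma_I$ this edge becomes $\{\hat{v},\hat{w}\}$. Any other edge of $\gamma_I$ landing on the same pair must come from an edge of some block with an endpoint in $\hat{v} = \{v\}$. Such an edge therefore lies in $\tau_i$ itself.

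Within $\tau_i$, the pair $\{v,w\}$ occurs exactly once, because $\tau_i$ is proper and has no multi-edges. Another edge $\{v,w'\}$ of $\tau_i$ with $w' \neq w$ cannot be identified with it, since $w' \not\sim w$ for two distinct vertices of the same block. Hence $\{\hat{v},\hat{w}\}$ has multiplicity exactly $1$ in $\gamma_I$. This contradicts the requirement in $Int^{Trace}$ that every edge has even multiplicity.

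The only delicate point is the bookkeeping of the boundary vertices. For $i = 1$ or $i = j$, their appearances come from the cyclic identification $u_{\tau_1} \sim v_{\tau_j}$ rather than from concatenation. With that handled, the argument is a direct multiplicity count.
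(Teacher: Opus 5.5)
Your proof is correct and is essentially the paper's argument: if $v$ appears in no other block, then $\hat{v}=\{v\}$, so each edge of $\tau_i$ at $v$ has multiplicity exactly one in $\gamma_I$, contradicting the evenness requirement of $Int^{Trace}_{\tau_1,\ldots,\tau_j}$. Your separate boundary-vertex casework is correct but not needed, since every vertex of a backbone-correction shape, including $u_{\tau_i}$ and $v_{\tau_i}$, is incident to an edge of $\tau_i$, so the same multiplicity count covers all vertices uniformly, as in the paper's one-line proof.
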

\begin{proof}
If there is a vertex $v \in V(\tau_i)$ which does not appear earlier or later then all of the edges incident to $v$ in $\tau_i$ only appear once in $\gamma_I$ which contradicts the assumption that $I \in Int^{Trace}_{\tau_1,\ldots,\tau_j}$.
\end{proof}
\begin{corollary}
For all intersection patterns $I \in Int^{Trace}_{\tau_1,\ldots,\tau_j}$, 
\[
|V(\gamma_I)| = 1+\frac{1}{2}\sum_{i=1}^{j}{\left(|V(\tau_i)| - |S_I(\tau_i)|\right)}
\]
\end{corollary}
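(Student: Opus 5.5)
The plan is a double counting over pairs (equivalence class, block in which it appears). Fix $I \in Int^{Trace}_{\tau_1,\ldots,\tau_j}$. For each class $\hat v \in V(\gamma_I)$, let $i_{\min}(\hat v)$ and $i_{\max}(\hat v)$ be the first and last blocks in which $\hat v$ appears. Since $I$ is an intersection pattern, no two vertices of a single block lie in the same class, so within each block $\tau_i$ the class $\hat v$ is represented by at most one vertex $v\in V(\tau_i)$. Hence the per-block counts of vertices are the same as counts of (class, block) incidences.

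\textbf{Accounting per block.} In block $\tau_i$, partition $V(\tau_i)$ into three groups: vertices appearing for the first time but not the last, for the last time but not the first, and vertices in neither case. Here we must treat the first and last blocks specially. For $i\in[2,j-1]$, the preceding proposition says every vertex appears earlier or later, so no vertex is both first and last in $\tau_i$. The vertices that are neither first nor last are exactly $S_I(\tau_i)$, giving
\[
\#\{\text{first}\} + \#\{\text{last}\} = |V(\tau_i)| - |S_I(\tau_i)| .
\]
For $\tau_1$, every vertex appears for the first time, and all except $u_{\tau_1}$ contribute one first appearance. The vertex $u_{\tau_1}$ is the one special class $\hat u$ containing $u_{\tau_1}\sim v_{\tau_j}$. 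Symmetrically for $\tau_j$ and $v_{\tau_j}$. We then check that for $\tau_1$, the first and last appearance counts sum to $|V(\tau_1)|-|S_I(\tau_1)|$, where $S_I(\tau_1)=\{u_{\tau_1}\}$. The same holds for $\tau_j$. Concretely, a non-boundary vertex of $\tau_1$ appears first there, and by the proposition it must appear later (it cannot appear earlier), so it is counted once. The same argument applies to $\tau_j$. The edge case $j=1$ is handled separately or excluded, since the trace condition forces heavy intersection there.

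\textbf{Summing over blocks.} Summing over $i$, $\sum_i(|V(\tau_i)|-|S_I(\tau_i)|)$ counts every first appearance and every last appearance of classes, with one exception. Take any class $\hat v \ne \hat u$. It has exactly one first-appearance block and one last-appearance block, and these are distinct: a class appearing in only one block would have all its incident edges of multiplicity one, contradicting even multiplicity. So such a class contributes exactly $2$. The special class $\hat u$ has its occurrences in $\tau_1$ and $\tau_j$ excluded via the separators. Any intermediate occurrence of $\hat u$ is neither first nor last and so lies in the separator. Hence $\hat u$ contributes $0$. Therefore the sum equals $2(|V(\gamma_I)|-1)$, which rearranges to the claimed identity.

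\textbf{Main obstacle.} The delicate point is the bookkeeping for the special class $\hat u$ and for the boundary blocks. We must verify that designating $S_I(\tau_1)=U_{\tau_1}$ and $S_I(\tau_j)=V_{\tau_j}$ exactly cancels $\hat u$, the class that is "first" in $\tau_1$ and "last" in $\tau_j$, which is what produces the additive $1$. We must also confirm that no non-boundary vertex of $\tau_1$ (respectively $\tau_j$) is simultaneously first and last. That is where the preceding proposition is invoked.
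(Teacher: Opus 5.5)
Your double-counting argument is correct and is essentially the paper's proof: the paper writes $|V(\gamma_I)|$ as a sum of half-weights $\frac{1}{2}(1_{\text{first}}+1_{\text{last}})$ over the vertices of each block and evaluates it block by block, getting the additive $1$ from $\tau_1$ and $\tau_j$ contributing $|V(\tau_1)|/2$ and $|V(\tau_j)|/2$, rather than by excluding the class $\hat u$ as you do. One small correction: for $j=1$ the set $Int^{Trace}_{\tau_1}$ is empty, because the condition $u_{\tau_1}\sim v_{\tau_1}$ would identify two vertices of the same block, so that case is vacuous rather than a case of ``heavy intersection''.
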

\begin{proof}
Observe that 
\begin{align*}
|V(\alpha_I)| &= \sum_{i=1}^{j}{\sum_{v \in V(\tau_i)}{\frac{1}{2}(1_{v \text{ appears for the first time in } \tau_i} + 1_{v \text{ appears for the last time in } \tau_i})}} \\
&= \frac{|V(\tau_1)|}{2} + \frac{|V(\tau_j)|}{2} + \frac{1}{2}\sum_{i=2}^{j-1}{(|V(\tau_i)| - |S_I(\tau_j)|)}\\
&= 1+\frac{1}{2}\sum_{i=1}^{j}{\left(|V(\tau_i)| - |S_I(\tau_i)|\right)}
\end{align*}
as all vertices in $\tau_1$ are appearing for the first time, all vertices in $\tau_k$ are appearing for the last time, $|S_I(\tau_1)| = |S_I(\tau_k)| = 1$, and for all $i \in [2,j-1]$, a vertex $v \in V(\tau_i)$ is not appearing for the first or last time in $\tau_i$ if and only if $v \in S_I(\tau_i)$.
\end{proof}
Thus, the intersection patterns which have the largest contributions are the intersection patterns  $I \in Int^{Trace}_{\tau_1,\ldots,\tau_j}$ where $|S_I(\tau_i)| = 1$ for all or almost all $i \in [j]$. 
\subsection{Forward Blocks, Return Blocks, and Slack Blocks}
\begin{proposition}\label{prop:earlierorlater}
For all intersection patterns $I \in Int^{Trace}_{\tau_1,\ldots,\tau_j}$, for each $i \in [j]$ such that $|S_I(\tau_i)| = 1$, one of the following two cases must hold:
\begin{enumerate}
\item All vertices of $\tau_i$ appear later and no vertex of $\tau_i$ appears earlier except $u_{\tau_i}$.
\item All vertices of $\tau_i$ appear earlier and no vertex of $\tau_i$ appears later except $v_{\tau_i}$.
\end{enumerate}
\end{proposition}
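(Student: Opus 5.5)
The plan is to use the fact that every vertex of $\tau_i$ must appear earlier or later (or both), together with the $2$-connectivity of backbone-correction shapes coming from their Hamiltonian cycle. I would first dispose of the boundary blocks and then treat a middle block $i \in [2,j-1]$.

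\textbf{Boundary blocks.} For $i=1$, no vertex of $\tau_1$ appears earlier at all. By the preceding proposition, every vertex of $\tau_1$ must then appear later. So case 1 holds, and the hypothesis $|S_I(\tau_1)|=1$ is automatic. For $i=j$ the situation is symmetric: no vertex appears later, so every vertex appears earlier, and case 2 holds.

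\textbf{Middle blocks.} Now fix $i\in[2,j-1]$ with $S_I(\tau_i)=\{s\}$. Every vertex of $V(\tau_i)\setminus\{s\}$ appears earlier or later, but not both. So we can partition $V(\tau_i)\setminus\{s\}$ into
\begin{itemize}
\item $E_i$, the vertices appearing earlier only, and
\item $L_i$, the vertices appearing later only.
\end{itemize}

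The key step is that no edge of $\tau_i$ joins $E_i$ to $L_i$. Suppose $\{a,b\}\in E(\tau_i)$ with $a\in E_i$ and $b\in L_i$. Since $I\in Int^{Trace}_{\tau_1,\ldots,\tau_j}$, the corresponding edge of $\gamma_I$ has even multiplicity. Blocks are proper shapes, so no block contains this edge twice. Hence some other block $\tau_{i'}$, with $i'\neq i$, contains an edge whose endpoints are equivalent to $a$ and $b$. If $i'<i$, then $b$ appears earlier, which is a contradiction. If $i'>i$, then $a$ appears later, again a contradiction. Therefore $\{s\}$ separates $E_i$ from $L_i$ inside the graph of $\tau_i$.

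\textbf{Connectivity.} By \cref{def:backbone-correction}, a nontrivial backbone-correction shape has a Hamiltonian cycle, namely the backbone path plus the boundary edge $\{w_0,w_l\}$. It is therefore $2$-connected, so deleting $s$ leaves a connected graph. For $\tau_i=\fp_1$ the same conclusion is trivial, since $|V(\tau_i)\setminus\{s\}|=1$. It follows that one of $E_i$ or $L_i$ is empty.

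\textbf{Locating $s$.} Since $i$ is a middle block, $u_{\tau_i}$ appears earlier and $v_{\tau_i}$ appears later. If $s\notin\{u_{\tau_i},v_{\tau_i}\}$, then $u_{\tau_i}\in E_i$ and $v_{\tau_i}\in L_i$, so both sets are nonempty, which is impossible. Hence $s\in\{u_{\tau_i},v_{\tau_i}\}$, and there are two cases.
\begin{itemize}
\item If $s=u_{\tau_i}$, then $v_{\tau_i}\in L_i$ forces $E_i=\emptyset$. So every vertex appears later, and only $u_{\tau_i}$ appears earlier. This is case 1.
\item If $s=v_{\tau_i}$, then $u_{\tau_i}\in E_i$ forces $L_i=\emptyset$. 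So every vertex appears earlier, and only $v_{\tau_i}$ appears later. This is case 2.
\end{itemize}
(When $u_{\tau_i}=v_{\tau_i}$ would collapse these cases, this cannot happen for backbone-correction shapes, since $u\neq v$ by definition.)

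\textbf{Main obstacle.} The only delicate point is the no-cross-edge step. One must check that the even-multiplicity partner of an edge really lies in a different block, and that "appearing" is defined through the equivalence classes of $I$, so that sharing an edge forces both endpoints to appear in $\tau_{i'}$. Both follow directly from properness of $\tau_i$ and the definition of vertex appearances.
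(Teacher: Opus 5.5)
Your proof is correct and uses the same idea as the paper's. The paper observes that a size-one separator of $\tau_i$ must be $\{u_{\tau_i}\}$ or $\{v_{\tau_i}\}$, since these two vertices are adjacent, and then appeals to even edge multiplicity to conclude that, for instance, when $S_I(\tau_i)=\{u_{\tau_i}\}$ every vertex appears later. Your argument makes explicit the step the paper leaves terse: the no-cross-edge claim together with the $2$-connectivity of $\tau_i$ shows why no vertex other than $s$ can appear earlier (resp.\ later), and it also pins down $s \in \{u_{\tau_i}, v_{\tau_i}\}$ without citing the separator proposition. One small point: the fact that no two edges of $\tau_i$ map to the same edge of $\gamma_I$ also uses that $I$ never identifies two vertices of the same block, not only that $\tau_i$ is proper.
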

\begin{proof}
Since $\{u_{\tau_i},v_{\tau_i}\} \in E(\tau_i)$, the only possible vertex separators of $\tau_i$ of size $1$ are $\{u_{\tau_i}\}$ and $\{v_{\tau_i}\}$. If $S_I(\tau_i) = \{u_{\tau_i}\}$ then $u_{\tau_i}$ is the only vertex which can appear earlier. In order to avoid having an edge of $\tau_i$ which only appears once in $\gamma_I$, all vertices of $\tau_i$ must appear later.

Similarly, if $S_I(\tau_i) = \{v_{\tau_i}\}$ then $v_{\tau_i}$ is the only vertex which can appear later. In order to avoid having an edge of $\tau_i$ which only appears once in $\gamma_I$, all vertices of $\tau_i$ must appear earlier.
\end{proof}
With a more careful analysis, we can show that in order to avoid having vertex separators of size greater than $1$, each block $\tau_i$ must be paired with another block $\tau_{i'}$ which has exactly the same edges.
\begin{definition}
Given an intersection pattern $I \in Int^{Trace}_{\tau_1,\ldots,\tau_j}$, we say that a vertex $\hat{v} \in V(\gamma_I)$ is a slack witness vertex if there exists an index $i \in [j]$ such that $|S_I(\tau_i)| \geq 2$ and $\hat{v}$ appears in  $S_I(\tau_i)$.
\end{definition}
\begin{lemma}\label{lem:blockpairing}
For all intersection patterns $I \in Int^{Trace}_{\tau_1,\ldots,\tau_j}$, for all $i \in [j]$, at least one of the following cases must hold:
\begin{enumerate}
\item $|S_I(\tau_i)| \geq 2$.
\item There exists a $k \in [j] \setminus \{i\}$ such that $E(\tau_i) = E(\tau_k)$ (after taking intersections into account). Moreover, for all $k' \in [j] \setminus \{i,k\}$, $E(\tau_i) \cap E(\tau_{k'}) = \emptyset$.
\item At least two slack witness vertices of $\alpha_I$ appear in $\tau_i$.
\item There exists an index $k \in [j] \setminus \{i\}$ such that $E(\tau_i) \cap E(\tau_k) \neq \emptyset$ and $|S_I(\tau_k)| \geq 2$.
\item There exists an index $k \in [j] \setminus \{i\}$ such that $E(\tau_i) \cap E(\tau_k) \neq \emptyset$ and at least two slack witness vertices of $\gamma_I$ appear in $\tau_k$.
\end{enumerate}
\end{lemma}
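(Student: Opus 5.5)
We argue by contradiction. Fix $i$ and assume none of cases 1, 3, 4, 5 holds; we then derive case 2. Since case 1 fails, $|S_I(\tau_i)| = 1$, so by \cref{prop:earlierorlater}, $\tau_i$ is either a forward block (all vertices appear later, only $u_{\tau_i}$ appears earlier) or a return block (the mirror statement). By the symmetry of reversing the order of the blocks, which swaps ``earlier'' and ``later'' and transposes shapes, we may assume $\tau_i$ is a forward block.

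\textbf{Step 1: a partner block exists.} Every edge of $\tau_i$ has even multiplicity in $\gamma_I$, so it must appear in some later block. Take the boundary edge $e_0=\{u_{\tau_i},v_{\tau_i}\}$, or any edge whose endpoints are both non-$u_{\tau_i}$ vertices. Such an edge can only reappear in blocks $\tau_k$ with $k>i$, since only $u_{\tau_i}$ appears earlier. Choose such a $k$ with $E(\tau_i)\cap E(\tau_k)\neq\emptyset$. Because case 4 fails, $|S_I(\tau_k)|=1$. Hence $\tau_k$ is itself a forward or return block by \cref{prop:earlierorlater}. If $\tau_k$ were a forward block, only $u_{\tau_k}$ could appear earlier than $\tau_k$. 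But the shared edge has at least one endpoint other than $u_{\tau_k}$, and that endpoint appears earlier, in $\tau_i$. So $\tau_k$ is a return block: all its vertices appear earlier, and only $v_{\tau_k}$ appears later.

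\textbf{Step 2: the edge sets coincide.} This is the main obstacle. The plan is to use the rigid structure of backbone-correction shapes: each is outerplanar, with a unique Hamiltonian cycle formed by the backbone plus the boundary edge. We track which earlier blocks the vertices of $\tau_k$ come from and which later blocks the vertices of $\tau_i$ go to.

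First suppose some vertex of $\tau_i$ other than $u_{\tau_i}$ appears in a block other than $\tau_k$. Then we would like to exhibit a block whose canonical separator has size at least $2$, coming from the middle appearance of that vertex. That block either meets $\tau_i$ in an edge, giving case 4, or carries two slack witnesses, giving case 3 or case 5. The key difficulty is showing that a single stray vertex forces \emph{two} slack witnesses rather than one. For this I would use that any vertex of $\tau_i$ lies on a cycle of $\tau_i$ through two neighbors. Consequently, diverting one vertex elsewhere forces its cycle-neighbors also to appear in a block with a size-$\ge 2$ separator.

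Once every vertex of $\tau_i$ except possibly $u_{\tau_i}$ is seen to reappear only in $\tau_k$, all edges of $\tau_i$ must be covered by $\tau_k$, so $E(\tau_i)\subseteq E(\tau_k)$. Running the symmetric argument from the return block $\tau_k$ gives the reverse containment, and hence $E(\tau_i)=E(\tau_k)$. Finally, if some third block $\tau_{k'}$ shares an edge with $\tau_i$, then that edge would have multiplicity at least $3$. Evenness then forces a fourth occurrence, and by the same reasoning as Step 1 this produces a block with a size-$\ge 2$ separator sharing an edge with $\tau_i$, which is case 4, or slack witnesses, which is case 3 or 5. This gives the disjointness clause of case 2.
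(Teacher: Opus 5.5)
Your Step 1 is fine, but Step 2 rests on a false claim. You want to show that if some vertex of $\tau_i$ other than $u_{\tau_i}$ appears in a block other than $\tau_k$, then case 3, 4 or 5 holds.

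\begin{itemize}
\item \textbf{Boundary vertex.} $v_{\tau_i}=u_{\tau_{i+1}}$ always appears in $\tau_{i+1}$, so your premise holds whenever $k\neq i+1$. Take the dominant walk $\tau,\sigma,\sigma^{\top},\tau^{\top}$ with $i=1$, $k=4$. Here $v_{\tau_1}$ appears in $\tau_2,\tau_3$, yet none of cases 1, 3, 4, 5 holds.
\item \textbf{Internal vertex.} Internal vertices can also be revisited. Let $\tau_1=\overline{\fp}_2$ go from $a$ to $b$ through $x$. Follow it by $\fp_1$-steps $b\to c\to x\to c\to b$, then by $\overline{\fp}_2^{\top}$ from $b$ back to $a$. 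Every edge has multiplicity $2$. The blocks through $x$ share no edge with $\tau_1$, and $x$ is the only slack witness in $\tau_1$. So again only case 2 holds for $i=1$.
\end{itemize}

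Hence ``every vertex of $\tau_i$ except $u_{\tau_i}$ reappears only in $\tau_k$'' cannot be derived. Your cycle-neighbour heuristic also has no force. A block that visits a vertex of $\tau_i$ without using an edge of $\tau_i$ imposes nothing on that vertex's neighbours. The constraints come from edge multiplicities, so the argument must be about edges, not vertices.

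The missing idea, which is how the paper argues, has four parts.
\begin{itemize}
\item \textbf{Minimal partner.} Take $k$ to be the \emph{first} index after $i$ with $E(\tau_i)\cap E(\tau_k)\neq\emptyset$. Your Step 1 picks an arbitrary partner, but minimality is essential.
\item \textbf{Forward inclusion.} Two-colour the edges of $\tau_i$ by membership in $E(\tau_k)$. If $E(\tau_i)\not\subseteq E(\tau_k)$, \cref{lem:2-vtx-disjoint-paths} (2-connectivity from the Hamiltonian cycle) gives two vertices $v,w$ incident to both colours. Each lies in $\tau_i$ and in $\tau_k$. Its edge outside $E(\tau_k)$ cannot recur before $\tau_i$, nor (by minimality of $k$) strictly between $\tau_i$ and $\tau_k$. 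So each also lies in a block after $\tau_k$. Thus $v,w\in S_I(\tau_k)$, contradicting $|S_I(\tau_k)|=1$.
\item \textbf{Reverse inclusion.} Run the mirror argument from the return block $\tau_k$, with $k'$ the \emph{last} index before $k$ whose block shares an edge with $\tau_k$. Either two vertices of $\tau_k$ land in $S_I(\tau_{k'})$, making them slack witnesses in $\tau_k$ (case 5). Or $E(\tau_k)\subseteq E(\tau_{k'})$; then $k'\neq i$ would give $i<k'<k$ with $\tau_{k'}$ meeting $\tau_i$, contradicting minimality of $k$.
\item \textbf{Disjointness.} No fourth occurrence is needed. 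A third block meeting $\tau_i$ must come after $\tau_k$. So the endpoints of the shared edge lie in $S_I(\tau_k)$, which is case 4 directly.
\end{itemize}
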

\begin{remark}
While case 3 subsumes case 1 and case 5 subsumes case 4, we keep these cases separate for conceptual clarity.
\end{remark}
To prove this result, we need the following structural lemma.
\begin{lemma}[Structural Lemma]\label{lem:2-vtx-disjoint-paths}
For all backbone-correction shapes $\tau$ and all 2-colorings of the edges of $\tau$, if not all of the edges of $\tau$ have the same color then there are at least two vertices of $\tau$ which are incident to an edge of both colors.
\end{lemma}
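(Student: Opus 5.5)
We will prove the Structural Lemma using the fact that a backbone-correction shape $\tau$ is either the single edge $\fp_1$ or a $2$-connected outerplanar graph. Its outer face is the Hamiltonian cycle
\[
C = w_0 w_1 \cdots w_l w_0,
\]
formed by the backbone path together with the boundary edge $\{w_0,w_l\}$. All other edges are non-crossing chords of $C$, by the No Crossings condition in \cref{def:backbone-correction}.

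The degenerate case $\tau=\fp_1$ has a single edge, so any $2$-coloring is monochromatic and the statement is vacuous. Henceforth $l\ge 2$, and we call a vertex \emph{bichromatic} if it is incident to edges of both colors.

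The key step is to show that the set of bichromatic vertices cannot be a single vertex, and it cannot be empty when both colors are used.

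\textbf{Empty case.} If no vertex is bichromatic, then every vertex has all its incident edges of one color. Since $\tau$ is connected (it contains the Hamiltonian path), adjacent vertices share their edge's color. Hence all edges have the same color, contradicting the hypothesis.

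\textbf{Single-vertex case.} Suppose exactly one vertex $v$ is bichromatic. Assign every other vertex $x$ the color $c(x)$ of all its incident edges. Every edge not incident to $v$ joins two vertices of the same assigned color. So in $\tau - v$, each connected component is monochromatic, and its edges to $v$ have that component's color as well. Now $\tau$ is $2$-connected: the Hamiltonian cycle has length $l+1\ge 3$, so it is a genuine cycle. Therefore $\tau - v$ is connected, since it still contains the Hamiltonian path $C - v$. It follows that all edges of $\tau - v$ and all edges at $v$ get a single color, contradicting the assumption that $v$ sees both colors.

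Hence at least two vertices are bichromatic. Note that the argument only uses the $2$-connectivity supplied by the Hamiltonian cycle. Non-crossing and outerplanarity are not needed here; they matter later for the boundary-choice bound.

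The only delicate point is making sure the Hamiltonian cycle is genuine, i.e.\ that $l\ge 2$ and $\{w_0,w_l\}\in E(\tau)$ is distinct from the backbone edges. This holds because a backbone shape has no boundary edge while its correction adds exactly that edge. With this in place, the vertex-deleted connectivity argument above completes the proof.
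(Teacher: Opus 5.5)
Your proof is correct and follows the paper's route: both arguments rest on the $2$-connectivity supplied by the Hamiltonian cycle formed by the backbone plus the boundary edge. The paper splits into two cases, namely when one color touches every vertex (so any edge of the other color has two bichromatic endpoints) and when the bichromatic set separates a purely-one-color vertex from a purely-other-color vertex. You instead rule out zero and exactly one bichromatic vertex directly, which is a slightly more streamlined packaging of the same idea.
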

\begin{proof}
If $\tau = \fp_1$ then the result is trivial as $\tau$ only has one edge. If $\tau \neq \fp_1$, observe that $\tau$ is $2$-connected as $\tau$ contains a Hamiltonian cycle. We have the following cases:
\begin{enumerate}
\item If every vertex of $\tau$ is incident to an edge of one color then we can choose any edge of the other color and its endpoints are incident to edges of both colors.
\item If for both colors, there is a vertex which is only incident to edges of that color then letting $W$ be the set of vertices which are incident to both colors, deleting the vertices of $W$ from $\tau$ disconnects $\tau$. Since $\tau$ is 2-connected, $W$ must have at least two vertices, as needed.
\end{enumerate}
\end{proof}
We are now ready to prove Lemma \ref{lem:blockpairing}.
\begin{proof}[Proof of Lemma \ref{lem:blockpairing}]
Assume that there is an intersection pattern $I \in Int^{Trace}_{\tau_1,\ldots,\tau_j}$ and an index $i \in [j]$ such that none of the cases of Lemma \ref{lem:blockpairing} hold. We must have that $|S_I(\tau_i)| = 1$ as otherwise the first case of Lemma \ref{lem:blockpairing} would hold. By Proposition \ref{prop:earlierorlater}, one of the following cases holds:
\begin{enumerate}
\item All vertices of $\tau_i$ appear later and no vertex of $\tau_i$ appears earlier except $u_{\tau_i}$.
\item All vertices of $\tau_i$ appear earlier and no vertex of $\tau_i$ appears later except $v_{\tau_i}$.
\end{enumerate}
Without loss of generality, assume the first case holds. Let $k \in [i+1,j]$ be the first index such that $E(\tau_{i}) \cap E(\tau_{k}) \neq \emptyset$ (after taking intersections into account). We must have that $|S_I(\tau_k)| = 1$ as otherwise the fourth case of Lemma \ref{lem:blockpairing} would hold.

If $E(\tau_i) \setminus E(\tau_{k}) \neq \emptyset$ then by Lemma \ref{lem:2-vtx-disjoint-paths}, there are two vertices $v,w \in V(\tau_i)$ which are incident to both an edge in $E(\tau_{k})$ and an edge in $E(\tau_i) \setminus E(\tau_{k})$. Observe that 
\begin{enumerate}
\item Since $v$ and $w$ are both incident to an edge in $E(\tau_k)$, $v$ and $w$ both appear in $\tau_k$.
\item Since $v$ and $w$ are both incident to an edge in $E(\tau_i) \setminus E(\tau_{k})$ and $k$ is the first index such that $E(\tau_{i}) \cap E(\tau_{k}) \neq \emptyset$, both $v$ and $w$ appear in $\tau_i$ and a block which comes later than $\tau_k$.
\end{enumerate}
Combining these observations, both $v$ and $w$ appear in $S_I(\tau_k)$ so both $\hat{v}$ and $\hat{w}$ are slack witness vertices which appear in $\tau_i$. Thus, the third case of Lemma \ref{lem:blockpairing} holds, which contradicts our assumption that none of the cases of  Lemma \ref{lem:blockpairing} hold.

If $E(\tau_i) \subsetneq E(\tau_k)$ then by Proposition \ref{prop:earlierorlater}, since $|S_I(\tau_k)| = 1$ and at least two vertices of $\tau_k$ appear earlier, all vertices of $\tau_k$ must appear earlier and no vertex of $\tau_k$ except $v_{\tau_k}$ can appear later. Let $k' \in [k-1]$ be the last index such that $k' \neq k$ and $E(\tau_{k'}) \cap E(\tau_{k}) \neq \emptyset$. We have the following cases:
\begin{enumerate}
\item If $E(\tau_{k}) \setminus E(\tau_{k'}) \neq \emptyset$ then we can use similar logic as before. By Lemma \ref{lem:2-vtx-disjoint-paths}, there are two vertices $v,w \in V(\tau_{k})$ which are incident to both an edge in $E(\tau_{k'})$ and an edge in $E(\tau_{k}) \setminus E(\tau_{k'})$. Both $v$ and $w$ must appear in $\tau_k$, $\tau_{k'}$, and a block which is earlier than $\tau_{k'}$ so both $v$ and $w$ appear in $S_I(\tau_{k'})$. Thus, $\hat{v}$ and $\hat{w}$ are both slack witness vertices which appear in $\tau_k$ so case 5 of \cref{lem:blockpairing} holds (unless $k' = i$ in which case case 3 of \cref{lem:blockpairing} holds), which contradicts our assumption that none of the cases of Lemma \ref{lem:blockpairing} hold.
\item If $E(\tau_i) \subsetneq E(\tau_{k}) \subseteq E(\tau_{k'})$ then $i < k' < k$ and for any edge $e = \{u,v\} \in E(\tau_i)$, $e$ must appear in $\tau_{k'}$ and $\tau_k$ as well. This implies that $|S_I(\tau_{k'})| \geq 2$ so case 4 of Lemma \ref{lem:blockpairing} holds, which contradicts our assumption that none of the cases of Lemma \ref{lem:blockpairing} hold.
\end{enumerate}

If $E(\tau_i) = E(\tau_k)$ then if there is an index $k' \in [j] \setminus \{i,k\}$ such that $E(\tau_i) \cap E(\tau_{k'}) \neq \emptyset$, letting $v,w \in V(\tau_k)$ be the endpoints of an edge $e = \{v,w\} \in E(\tau_k) \cap E(\tau_{k'})$, $v,w \in S_I(\tau_k)$ so case 4 of Lemma \ref{lem:blockpairing} holds, which contradicts our assumption that none of the cases of Lemma \ref{lem:blockpairing} hold. If there is no index $k' \in [j] \setminus \{i,k\}$ such that $E(\tau_i) \cap E(\tau_{k'}) \neq \emptyset$ then case 2 of Lemma \ref{lem:blockpairing} holds, which contradicts our assumption that none of the cases of Lemma \ref{lem:blockpairing} hold.
\end{proof}
Note that if only case 2 of Lemma \ref{lem:blockpairing} holds for some block $\tau_i$ then only case 2 of Lemma \ref{lem:blockpairing} holds for the block $\tau_k$ such that $E(\tau_i) = E(\tau_k)$. To see this, observe that since $\tau_i$ and $\tau_j$ have the same vertices and edges (though their boundaries could be different), 
\begin{enumerate}
\item Case 1 or case 4 holds for $\tau_i$ if and only if case 1 or case 4 holds for $\tau_k$.
\item Case 3 or case 5 holds for $\tau_i$ if and only if case 3 or case 5 holds for $\tau_k$.
\end{enumerate}
Based on Lemma \ref{lem:blockpairing}, we make the following definition.
\begin{definition}[Forward Blocks, Return Blocks, and Slack Blocks]
Given an intersection pattern $I \in Int^{Trace}_{\tau_1,\ldots,\tau_j}$, for each block $\tau_i$,
\begin{enumerate}
\item We say that $\tau_i$ is a slack block if at least one of the following conditions holds:
\begin{enumerate}
\item[1.] There is an index $i' \in [j] \setminus \{i\}$ such that $E(\tau_i) \cap E(\tau_{i'}) \neq \emptyset$ and $E(\tau_i) \neq E(\tau_{i'})$.
\item[2.] At least two slack witness vertices appear in $\tau_i$.
\end{enumerate}
\item We say that $\tau_i$ is a forward block if $\tau_i$ is not a slack block and $S_I(\tau_i) = U_{\tau_i}$.
\item We say that $\tau_i$ is a return block if $\tau_i$ is not a slack block and $S_I(\tau_i) = V_{\tau_i}$.
\end{enumerate}
\end{definition}
Lemma \ref{lem:blockpairing} implies that for each $i \in [j]$ such that $\tau_i$ is a slack block, we can extract a little bit of slack from $\tau_i$.
\begin{lemma}\label{lem:extractingslack}
For all intersection patterns $I \in Int^{Trace}_{\tau_1,\ldots,\tau_j}$, $n^{-\frac{\sum_{i \in [j]}{(|S_I(\tau_i)| - 1)}}{2}} \leq n^{-\frac{1}{2}\lceil\frac{\# \text{ of slack blocks}}{24D_V}\rceil}$.
\end{lemma}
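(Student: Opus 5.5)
The goal is a counting inequality. Write $\Sigma := \sum_{i\in[j]}(|S_I(\tau_i)|-1)$, which is a sum of nonnegative integers, and let $s$ be the number of slack blocks. It suffices to show $\Sigma \ge \lceil s/(24D_V)\rceil$. Since $\Sigma$ is an integer, this reduces to showing $24D_V\cdot\Sigma \ge s$, together with $\Sigma\ge 1$ whenever $s\ge 1$. I will charge each slack block to a block $\tau_k$ with $|S_I(\tau_k)|\ge 2$, which I call a \emph{heavy} block. Then I will bound how many slack blocks can be charged to a single heavy block in terms of $D_V$ and $|S_I(\tau_k)|-1$.

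\textbf{Step 1: every slack block reaches a heavy block.} Let $\tau_i$ be a slack block. By definition, either:
\begin{itemize}
\item[(a)] at least two slack witness vertices appear in $\tau_i$, or
\item[(b)] there is some $i'$ with $E(\tau_i)\cap E(\tau_{i'})\neq\emptyset$ but $E(\tau_i)\neq E(\tau_{i'})$.
\end{itemize}
In case (a), each slack witness vertex lies in $S_I(\tau_k)$ for some heavy $\tau_k$, so I charge $\tau_i$ to that $\tau_k$ via the shared vertex $\hat v$.

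In case (b), I rerun the argument in the proof of Lemma~\ref{lem:blockpairing}. Its case analysis shows that whenever case 2 fails, one of cases 1, 3, 4 or 5 holds. Cases 1 and 4 produce a heavy block either equal to $\tau_i$ or sharing an edge with it. Cases 3 and 5 produce slack witness vertices in $\tau_i$ or in an edge-neighbor of $\tau_i$. In every case $\tau_i$ is charged to a heavy $\tau_k$ through a witness object: a vertex of $\gamma_I$ lying in $S_I(\tau_k)$, possibly reached through one intermediate block sharing an edge with $\tau_i$. This also gives $\Sigma\ge1$ whenever $s\ge 1$.

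\textbf{Step 2: bound the multiplicity of charges (main obstacle).} Fix a heavy $\tau_k$ and a vertex $\hat v\in S_I(\tau_k)$. I need to bound how many blocks $\tau_i$ can contain $\hat v$, or contain an edge whose neighboring block contains $\hat v$. A single equivalence class may appear in many blocks, so this count is not obviously bounded. The approach is to use the appearance structure. Every block other than the first and last in which $\hat v$ makes a middle appearance has $\hat v$ in its canonical separator. If that separator has size $1$, then $\hat v$ is a boundary vertex of that block, and Proposition~\ref{prop:earlierorlater} forces the block to be forward or return relative to $\hat v$.

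I would try to show that a vertex making non-separator appearances can do so in at most $O(1)$ blocks: its first appearance, its last appearance, and the boundary transitions. The remaining appearances all land in separators. Each such appearance either lies in a heavy block, contributing to $\Sigma$, or sits as a size-$1$ boundary separator. I would charge the boundary-separator appearances to the adjacent block sharing that boundary. The factor $24$ would then come from a constant number of such appearance types, multiplied by the at most $D_V$ vertices per heavy block and the at most $D_V$ edges or neighbors per intermediate block.

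Concretely, the bound I aim for is that each heavy $\tau_k$ receives at most $12 D_V\cdot|S_I(\tau_k)|$ charges. Since $|S_I(\tau_k)|\le 2(|S_I(\tau_k)|-1)$ when $|S_I(\tau_k)|\ge2$, this gives at most $24D_V(|S_I(\tau_k)|-1)$ charges per heavy block.

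\textbf{Step 3: conclude.} Summing the per-block bound over heavy blocks gives $s\le 24D_V\,\Sigma$. Hence $\Sigma\ge\lceil s/(24D_V)\rceil$, and exponentiating yields the claimed bound $n^{-\Sigma/2}\le n^{-\frac12\lceil s/(24D_V)\rceil}$. The real difficulty is making the Step 2 multiplicity bound rigorous, namely ruling out a single slack witness vertex being reused across unboundedly many blocks without creating additional separator mass. If that fails, I would refine the charging so that each repeated middle appearance is itself counted as extra separator size.
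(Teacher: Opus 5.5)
There is a genuine gap, and it sits in Step 2, which is the entire content of the lemma. Your architecture matches the paper's: charge each slack block to a pair $(v,k)$ with $|S_I(\tau_k)|\ge 2$ and $v\in S_I(\tau_k)$, show each pair receives at most $12D_V$ charges, and use $|S_I(\tau_k)|\le 2(|S_I(\tau_k)|-1)$.

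Your Step 1 charging cannot support that multiplicity bound. In case (a) you charge through an \emph{arbitrary} slack witness vertex $\hat v$ of $\tau_i$, and then the multiplicity is unbounded. A single vertex can be the size-one separator of arbitrarily many blocks, for example the base vertex of repeated $F$/$R$ excursions. Those appearances contribute nothing to $\Sigma$. For the same reason your fallback (counting repeated middle appearances as extra separator mass) is false. Charging boundary-separator appearances ``to the adjacent block'' gives no bound either.

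The missing idea is to route every charge only through \emph{first or last appearances}. A non-heavy slack block always allows this, and you already noted the key fact that each vertex (and each edge) has a unique first block and a unique last block. Concretely:
\begin{itemize}
\item If $|S_I(\tau_i)|\ge 2$, charge $\tau_i$ to $(v,i)$. Only $\tau_i$ lands there.
\item If $|S_I(\tau_i)|=1$ and two slack witness vertices appear in $\tau_i$, at most one lies in $S_I(\tau_i)$. So the other makes its first or last appearance in $\tau_i$, and you charge via that one. At most $2$ blocks reach a given pair this way.
\item If $|S_I(\tau_i)|=1$ and $\tau_i$ is slack through an edge shared with a heavy $\tau_k$ (case 4 of \cref{lem:blockpairing}), use \cref{prop:earlierorlater}. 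It says every vertex of $\tau_i$ except one boundary vertex is new, or every vertex except one is final. Hence every edge of $\tau_i$ makes its first or last appearance in $\tau_i$. The shared edge $e\in E(\tau_k)$ therefore determines $\tau_i$ up to $2$ choices, giving at most $2|E(\tau_k)|\le 4D_V-6$ blocks.
\item If instead the neighbor $\tau_{k'}$ has $|S_I(\tau_{k'})|=1$ but two slack witness vertices (case 5), again pick the witness making its first or last appearance in $\tau_{k'}$. That vertex fixes $k'$ up to $2$ choices, the edge up to $2D_V-3$ choices, and $\tau_i$ up to $2$ choices, for at most $8D_V-12$ blocks.
\end{itemize}
Summing gives at most $1+2+(4D_V-6)+(8D_V-12)\le 12D_V$ charges per pair, which is the paper's constant. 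The fix is therefore not to control separator appearances, as your Step 2 attempts, but to choose the witness vertex or edge so that the charge never uses one.
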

\begin{proof}
To prove this, we need to show that $\frac{\# \text{ of slack blocks}}{24D_V} \leq \sum_{i \in [j]}{(|S_I(\tau_i)| - 1)}$. To show this, we show that there is a mapping from slack blocks to pairs $(v,k)$ such that $|S_I(\tau_k)| \geq 2$ and $v \in S_I(\tau_k)$ where at most $12D_V$ slack blocks are mapped to the same pair $(v,k)$. Since the number of such pairs is at most two times $\sum_{i \in [j]}{(|S_I(\tau_i)| - 1)}$, this implies that the number of slack blocks is at most $24D_V$ times $\sum_{i \in [j]}{(|S_I(\tau_i)| - 1)}$.

Our mapping is as follows. Given a slack block $\tau_i$,
\begin{enumerate}
\item If $|S_I(\tau_i)| \geq 2$ then we map $\tau_i$ to $(v,i)$ for some vertex $v \in S_I(\tau_i)$.
\item If $|S_I(\tau_i)| = 1$ and $\tau_i$ contains at least two slack witness vertices then $\tau_i$ must contain a slack witness vertex $\hat{v}$ which appears for the first or last time in $\tau_i$. Since $\hat{v}$ is a slack witness vertex, there is an index $k \in [j]$ such that $|S_I(\tau_k)| \geq 2$ and $\hat{v}$ appears in $S_I(\tau_k)$. In this case, we map $\tau_i$ to $(v,k)$ where $v$ is the vertex in $S_I(\tau_k)$ corresponding to $\hat{V}$ (i.e., $v \in \hat{v}$).
\item If $|S_I(\tau_i)| = 1$, $\tau_i$ contains at most one slack witness vertex, and there exists an index $k \in [j] \setminus \{i\}$ such that $E(\tau_i) \cap E(\tau_k) \neq \emptyset$ and $|S_I(\tau_k)| \geq 2$ then we map $\tau_i$ to a pair $(v,k)$ such that $v \in S_I(\tau_k)$.
\item If $|S_I(\tau_i)| = 1$, $\tau_i$ contains at most one slack witness vertex, and there exists an index $k' \in [j] \setminus \{i\}$ such that $E(\tau_i) \cap E(\tau_{k'}) \neq \emptyset$, $|S_I(\tau_{k'})| = 1$, and $\tau_{k'}$ contains at least two slack witness vertices then $\tau_{k'}$ must contain a slack witness vertex $\hat{v}$ which appears for the first or last time in $\tau_{k'}$. Since $\hat{v}$ is a slack witness vertex, there is an index $k \in [j]$ such that $|S_I(\tau_k)| \geq 2$ and $\hat{v}$ appears in $S_I(\tau_k)$. In this case, we map $\tau_i$ to $(v,k)$ where $v$ is the vertex in $S_I(\tau_k)$ corresponding to $\hat{V}$ (i.e., $v \in \hat{v}$).
\end{enumerate}
We now consider how many different slack blocks $\tau_i$ can map to a given pair $(v,k)$. We observe that 
\begin{enumerate}
\item Case 1 can only happen if $i = k$.
\item Case 2 can only happen if $\tau_i$ is the block where $v$ appears for the first or last time so there are at most two slack blocks $\tau_i$ which lead to $(v,k)$ via case 2.
\item Case 3 can only happen if there is an edge $e \in E(\tau_k)$ which appears for the first or last time in $\tau_i$. Since $|E(\tau_k)| \leq 2D_V - 3$, there are at most $2|E(\tau_k)| \leq 4D_V - 6$ slack blocks $\tau_i$ which map to $(v,k)$ via case $3$.
\item Case 4 can only happen if there is an index $k' \in [k] \setminus \{i,k\}$ such that $v$ appears for the first or last time in $\tau_{k'}$ and there is an edge $e \in E(\tau_{k'})$ which appears for the first or last time in $\tau_i$. There are at most $2$ choices for $k'$, $2D_V-3$ choices for $e$, and $2$ choices for $i$ so there are at most $8D_V - 12$ slack blocks $\tau_i$ which are mapped to $(v,k)$ via case 4.
\end{enumerate}
Putting everything together, there are at most $12D_V$ slack blocks $\tau_i$ which are mapped to $(v,k)$, as needed.
\end{proof}
Since the forward and return blocks are paired up with each other, we can treat these blocks as if they were a single edge. This gives us a walk which we call a block walk. However, we have to be careful because forward and return blocks which are paired up with each other may not have the same endpoints!
\subsection{Block Walks}
For our analysis, it is helpful to think of the blocks as steps of a walk. In particular, forward and return blocks can be thought of as forward and return steps. Slack blocks can be thought of as steps which can warp anywhere but reduce the size of the contribution from the intersection pattern $I$.
\begin{definition}[Block walks]
We define a block walk $BW$ to consist of a sequence of steps which travel on and modify an underlying hypergraph $H$ where each hyperedge $e$ represents a block with boundary vertices and a Hamiltonian cycle and each step of the walk has one of the following types:
\begin{enumerate}
\item In a forward step, we start at our current vertex $u$, go to a new vertex $v$, and draw a hyperedge $e$ with boundary vertices $u$ and $v$ and unknown vertices between $u$ and $v$ (which correspond to the backbone from $u$ to $v$ and remain unknown unless they are identified later). We call each forward step an $F$ step. 
\item In a return step, we start at our current vertex $u$, choose a hyperedge $e'$ incident to $u$, choose our destination vertex $v$ to be either the vertex after $u$ or the vertex before $u$ on the Hamiltonian cycle of $e'$, and then delete $e'$ from our hypergraph and go to $v$ (which we draw as a new vertex if it was previously unknown). We say that this return step is matched with the forward step which created $e'$ as this represents the blocks corresponding to the current step and $e'$ being paired with each other.

If the boundary vertices of $e'$ were $u$ and $v$ (in either order), then we call this return step an $R$ step. Otherwise, we call this return step an $R'$ step.
\item In a slack step, we go from our current vertex $u$ to an arbitrary vertex $w$ which represents a slack block with boundary vertices $u$ and $w$. If $w$ was previously an unknown vertex incident to some hyperedge $e$, we identify  this vertex and record the number of steps it takes to reach $w$ if we start from the left boundary of the block for $e$ and travel along the Hamiltonian cycle for $e$. We do not draw a hyperedge for a slack step because for a trace walk, slack blocks only share edges with other slack blocks and we do not track how these edges are paired up.

We call each slack step an $S$ step.
\end{enumerate}
We start each block walk from a single starting vertex.
\end{definition}
For block walks corresponding to trace intersection patterns, the block walk must return to its starting vertex at the end and have no remaining hyperedges. When analyzing products of backbone-correction shapes, we will consider block walks which do not return to the starting vertex and/or have hyperedges remaining at the end.
\begin{definition}
We say that a block walk $BW$ is a trace block walk if $BW$ ends at the vertex it starts at and has no hyperedges remaining at the end.
\end{definition}
The reason that trace block walks are useful for enumerating intersection patterns $I \in Int^{Trace}_{\tau_1,\ldots,\tau_j}$ is that they give most of the information needed to reconstruct $I$.
\begin{definition}
Given an intersection pattern $I \in Int^{Trace}_{\tau_1,\ldots,\tau_j}$, we define the block walk $BW_I$ to be the block walk obtained by going through the blocks $\tau_1,\ldots,\tau_j$ and taking the corresponding block walk steps. More precisely, for each $i \in [j]$, 
\begin{enumerate}
\item If $\tau_i$ is a forward block, we make the $i$th step of $BW_I$ a forward step which draws a hyperedge corresponding to $\tau_i$ and goes to $v_{\tau_i}$.
\item For each return block $\tau_i$, letting $i' \in [i-1]$ be the index such that $\tau_{i'}$ is the forward block which is paired with $\tau_i$, we make the $i$th step of $BW_I$ a return step which deletes the hyperedge drawn by the $i'$th step, goes to $v_{\tau_i}$, and records whether $v_{\tau_i}$ is the vertex before or after $u_{\tau_i}$ on the Hamiltonian cycle of $\tau_{i'}$.
\item If $\tau_i$ is a slack block, we make the $i$th step of $BW_I$ a slack step which goes directly to $v_{\tau_i}$ and records the needed information about $v_{\tau_i}$.
\end{enumerate}
\end{definition}
\begin{definition}
Given a block walk $BW$, we define $s_{BW}$ to be the number of slack blocks in $BW$.
\end{definition}
\begin{proposition}\label{prop:blockwalkimplications}
Given a trace block walk $BW$ with $k$ steps and backbone-correction shapes $\{\tau_i: i \in [j], \text{the } i\text{th step of } BW \text{ is not a return step of } BW\}$
for the steps of $BW$ which are not return steps, the backbone-correction shapes $\{\tau_i: i \in [j], \text{the } i\text{th step of } BW \text{ is a return step of } BW\}$ for the return steps of $BW$ are uniquely determined and there are at most $(jD_V)^{s_{BW}(D_V-1)}$ intersection patterns $I \in Int^{Trace}_{\tau_1,\ldots,\tau_j}$ such that $BW_{I} = BW$.
\end{proposition}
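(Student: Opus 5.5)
We prove the two claims separately, and both rest on the same reconstruction idea: the block walk $BW$ fixes every boundary vertex, and the only vertices it leaves undetermined are the internal vertices of slack blocks.

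\textbf{The return shapes are determined.} Fix $BW$ and the shapes of its non-return steps. Each return step of $BW$ is matched with a unique earlier forward step, namely the one that drew the hyperedge $e'$ that the return step deletes. In any $I$ with $BW_I = BW$, the return block $\tau_i$ is paired with that forward block $\tau_{i'}$. By \cref{lem:blockpairing} and the definition of non-slack blocks, the pairing gives $E(\tau_i)=E(\tau_{i'})$ as labeled edge sets after intersections.

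A nontrivial backbone-correction shape has a unique Hamiltonian cycle, consisting of the backbone plus the boundary edge. Its boundary vertices are adjacent on that cycle. So $\tau_i$ is recovered from $\tau_{i'}$ by three pieces of data:
\begin{enumerate}
\item the same underlying graph;
\item the start vertex $u_{\tau_i}$, which is the walk's current vertex;
\item the end vertex $v_{\tau_i}$, which is the neighbor of $u_{\tau_i}$ on the cycle recorded by the return step.
\end{enumerate}
The backbone of $\tau_i$ is then the Hamiltonian path from $u_{\tau_i}$ to $v_{\tau_i}$ obtained by deleting the edge $\{u_{\tau_i},v_{\tau_i}\}$ from the cycle, so $\tau_i$ is uniquely determined. (For $\fp_1$ this is trivial, since the block is a single edge.) In particular, $\tau_i$ is $\tau_{i'}^\top$ when this is an $R$ step and a "rotated" shape when it is an $R'$ step.

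\textbf{Counting intersection patterns.} We reconstruct $I$ step by step from $BW$:
\begin{enumerate}
\item A forward step creates fresh equivalence classes for the non-start vertices of $\tau_i$. No choices arise, since these vertices do not appear earlier.
\item A return step identifies each vertex of $\tau_i$ with the corresponding vertex of $\tau_{i'}$ along the cycle isomorphism. This is forced, because equal edge sets together with the matched orientation determine the vertex bijection. Its endpoint is recorded by the walk.
\item A slack step has its boundary vertices fixed by the walk, including any previously unknown vertex, through the recorded cycle position. The remaining at most $D_V-1$ vertices of the slack block, excluding $u_{\tau_i}$, must each be given an equivalence class. Each either joins one of the at most $jD_V$ existing vertices of $\tau_1\circ\cdots\circ\tau_j$ or is new, which we count as one extra option absorbed into the bound (for instance by naming it by its own position). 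That gives at most $jD_V$ choices per vertex.
\end{enumerate}
Multiplying over the $s_{BW}$ slack steps gives $(jD_V)^{s_{BW}(D_V-1)}$.

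\textbf{Main obstacle.} The delicate point is justifying that forward and return steps involve no choices. Intersections inside a paired forward/return block with vertices outside the pair must be excluded, or else charged to slack. For a forward block $\tau_{i'}$ with $S_I=U$, its vertices other than $u$ do not appear earlier. They can appear later only through the matched return block, since the edges are disjoint from all other blocks and a further appearance would make some separator size at least $2$, producing slack witnesses. The same holds for a return block. The exception is a vertex coinciding with a vertex of a slack block, and that coincidence is already counted in the slack step's choices. I would check this carefully using Proposition \ref{prop:earlierorlater} and the slack definitions, making sure each vertex's first appearance is either in a forward block or is chosen within a slack block.
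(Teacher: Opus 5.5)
Your proof is correct and follows the paper's argument essentially verbatim: forward blocks contribute no choices because their non-start vertices are new, and each return shape together with its vertex identification is forced by the unique Hamiltonian cycle of the paired forward block plus the recorded direction. Each slack block then contributes at most $jD_V$ choices for each of its at most $D_V-1$ non-start vertices. The worry in your final paragraph is already handled by this sequential reconstruction, since each vertex's class is fixed by the earliest vertex it equals, which is forced for forward and return blocks and counted for slack blocks. So you do not need the claim that vertices of a forward block reappear only in the matched return block, which is in fact false: $v_{\tau_{i'}}$ reappears as $u_{\tau_{i'+1}}$, and the endpoint of an $R'$ step starts the next block.
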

\begin{proof}
Consider what happens when we go through the vertices of the blocks one by one. When we consider the $i$th block,
\begin{enumerate}
\item If the $i$th block is a forward block with shape $\tau_i$, there are no choices to be made as each vertex of $\tau_i$ except $u_{\tau_i}$ is appearing for the first time. 
\item If the $i$th block is a return block with an unspecified backbone-correction shape $\tau_i$, letting $\tau_{i'}$ be the forward block it is paired with, since $\tau_{i'}$ and $\tau_i$ have the same vertices and edges and both have a unique Hamiltonian cycle, these Hamiltonian cycles must coincide. If $v_{\tau_i}$ is the vertex after $u_{\tau_i}$ on the Hamiltonian cycle of $\tau_{i'}$, the backbone of $\tau_i$ must start from $u_{\tau_i}$ and go backwards along this Hamiltonian cycle until it reaches $v_{\tau_i}$. If $v_{\tau_i}$ is the vertex before $u_{\tau_i}$ on the Hamiltonian cycle of $\tau_{i'}$, the backbone of $\tau_i$ must start from $u_{\tau_i}$ and go forwards along this Hamiltonian cycle until it reaches $v_{\tau_i}$. Note that no other cases are possible and $BW$ specifies which case holds. Since $\tau_{i'}$ and $\tau_i$ have the same vertices and edges, this uniquely determines $\tau_i$.
\item If the $i$th block is a slack block then for each vertex of $\tau_i$, we need to choose which of the existing vertices it is equal to, if any. There are at most $jD_V-1$ existing vertices so for each vertex of $\tau_i$, there are at most $jD_V$ choices. Since there are at most $D_V-1$ vertices in $V(\tau_i) \setminus \{u_{\tau_i}\}$, the total number of choices for $\tau_i$ is at most $(jD_V)^{D_V-1}$.
\end{enumerate}
\end{proof}
\subsection{Trace Power Analysis via Block Walks}
\label{sec:trace-power-block-walks}
We now use block walks to analyze trace powers. We have the following analog of free independence for backbone-correction shapes.
\begin{definition}
Given a sequence of $k$ backbone-correction shapes $\tau_1,\ldots,\tau_{j}$, we define $N_{F/R}(\tau_1,\ldots,\tau_j)$ to be the number of possible block walks such that each step of the walk is either an $F$ step or an $R$ step and for each pair of indices $i < i' \in [j]$ such that the $i$th and $i'$th steps are forward and return steps which are matched with each other, $\tau_{i'} = \tau_i^T$.

Note that if $j=2q$ is even and we have that $\tau_{2j-1} = \tau$ and $\tau_{2j} = \tau^T$ for all $j \in [q]$ for some backbone-correction shape $\tau$ then $N_{F/R}(\tau_1,\ldots,\tau_{2q}) = C_q$ where $C_q = \frac{1}{q+1}\binom{2q}{q}$ is the $q$th Catalan number. Also note that if $j$ is odd then $N_{F/R}(\tau_1,\ldots,\tau_j) = 0$.
\end{definition}
\begin{theorem}\label{thm:shape-sequence-moments}
For all $j \in \mathbb{N}$ and all backbone-correction shapes $\tau_1,\ldots,\tau_j$, 
\[
\E\left[\Tr\left(\cmp_{\tau_1} \ldots \cmp_{\tau_j}\right)\right] = N_{F/R}(\tau_1,\ldots,\tau_j) \cdot n \pm O_{k,D_V}(1).
\]
As a special case, for all backbone-correction shapes $\tau$ and all $q \in \mathbb{N}$,
\[
\E\left[\Tr\left(\left(\cmp_{\tau}\cmp_{\tau}^T\right)^{q}\right)\right] = {C_q}\cdot n \pm O_{q,D_V}(1).
\]
\end{theorem}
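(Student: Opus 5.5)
We expand the trace via the intersection-pattern formula: with the normalization $\cmp_{\tau_i}=n^{-(|V(\tau_i)|-1)/2}\cm_{\tau_i}$,
\[
\E\left[\Tr\left(\cmp_{\tau_1}\cdots\cmp_{\tau_j}\right)\right]=\sum_{I\in Int^{Trace}_{\tau_1,\ldots,\tau_j}} \frac{n!}{(n-|V(\gamma_I)|)!}\, n^{-\frac{1}{2}\sum_{i}(|V(\tau_i)|-1)} .
\]
By the corollary expressing $|V(\gamma_I)|=1+\frac12\sum_i(|V(\tau_i)|-|S_I(\tau_i)|)$, each term is at most $n^{1-\frac12\sum_i(|S_I(\tau_i)|-1)}$. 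Since $j$ and $D_V$ are fixed, $|Int^{Trace}|$ is bounded by a constant $O_{j,D_V}(1)$. Hence every $I$ with some $|S_I(\tau_i)|\ge 2$ contributes $O(n^{1/2})$. This is where the error term arises, and I note that the honest bound here may be $O(\sqrt n)$ rather than $O(1)$ unless parity forces the total deficiency to be even. I would check the parity: $|V(\gamma_I)|$ is an integer, so $\sum_i(|S_I(\tau_i)|-1)$ is even, and therefore a positive deficiency is at least $2$. This gives contribution $O(1)$. Likewise, $\frac{n!}{(n-m)!}=n^m(1-O(1/n))$ turns the main terms into $n\pm O(1)$.

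So it remains to show that the patterns with all $|S_I(\tau_i)|=1$ are in bijection with the $F/R$ block walks counted by $N_{F/R}$. Given such an $I$, there are no slack witness vertices, so by \cref{lem:blockpairing} (only case 2 can hold) every block is paired with exactly one other block with identical edge set and disjoint from all others. By \cref{prop:earlierorlater}, each block is a forward or return block, so $BW_I$ consists only of $F$ and return steps. It is a trace block walk, and by \cref{prop:blockwalkimplications} (with $s_{BW}=0$) the pattern $I$ is determined by $BW_I$.

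The main obstacle is to rule out $R'$ steps and to show that matched blocks satisfy $\tau_{i'}=\tau_i^{\top}$. The key point is that deficiency $0$ means every vertex of a forward block $\tau_i$ except $u_{\tau_i}$ is new, so the only old vertex is $u_{\tau_i}$. When the walk returns through the partner block, it must exit at a vertex that also appears later. All other internal vertices of the pair then make exactly their two appearances, in the two blocks, so the only vertex of the return block appearing later is its right endpoint. That endpoint must be $u_{\tau_i}$, since this vertex already appeared earlier and must continue the walk, or else close the trace. Here I would argue that if the exit were another cycle-neighbor, then $u_{\tau_i}$ would have to appear later through some other block, forcing a separator of size $\ge 2$ there.

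With exit at $u_{\tau_i}$ and entry at $v_{\tau_i}$, the backbone of the return block traverses the same Hamiltonian cycle in reverse, which gives $\tau_{i'}=\tau_i^{\top}$ and an $R$ step. The matching is non-crossing because of the tree structure: each $F$ step creates a fresh vertex, and the walk can only return along the hyperedge stack. Conversely, every such $F/R$ walk with matched transposes yields a valid $I$ in which every edge has multiplicity $2$. This gives exactly $N_{F/R}$ patterns, each contributing $n\pm O(1)$. The Catalan special case follows because alternating $\tau,\tau^{\top}$ forces every matched pair to have the right parity, so all Dyck paths are admissible.
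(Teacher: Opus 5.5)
\textbf{The parity step fails.} Your per-pattern bound $n^{1-\frac12\sum_i(|S_I(\tau_i)|-1)}$ is the correct one. (The paper's proof writes $n^{1-\sum_i(|S_I(\tau_i)|-1)}\le 1$, dropping the $\tfrac12$ from the vertex-count corollary, so you found a real issue.) But integrality of $|V(\gamma_I)|=1+\frac12\sum_i(|V(\tau_i)|-|S_I(\tau_i)|)$ only gives $\sum_i(|S_I(\tau_i)|-1)\equiv\sum_i(|V(\tau_i)|-1)\pmod 2$. So the deficiency is forced to be even only when $\sum_i(|V(\tau_i)|-1)$ is even.

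That condition holds in the Catalan special case $(\cmp_\tau\cmp_\tau^T)^q$. It also holds whenever $N_{F/R}>0$, since a deficiency-$0$ pattern forces the sum to be even. It does not hold in general, and deficiency $1$ actually occurs. Take $\tau_1=\tau_3=\overline{\fp}_2$, $\tau_2=\overline{\fp}_3$, and distinct labels $p,q,a,b$:
\begin{itemize}
\item $\tau_1$ has backbone $q,p,a$ and boundary edge $qa$;
\item $\tau_2$ has backbone $a,q,b,p$ and boundary edge $ap$;
\item $\tau_3$ has backbone $p,b,q$ and boundary edge $pq$.
\end{itemize}
These are the three cycles of the theta graph with edges $pq,pa,qa,qb,pb$, so every edge has multiplicity exactly $2$. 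Here $|V(\gamma_I)|=4$ and the normalization is $n^{-7/2}$. Since all patterns contribute nonnegatively, $\E[\Tr(\cmp_{\overline{\fp}_2}\cmp_{\overline{\fp}_3}\cmp_{\overline{\fp}_2})]\ge n^{-7/2}\,n(n-1)(n-2)(n-3)=\Theta(\sqrt n)$, while $N_{F/R}=0$.

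So the first display cannot be proved with an $O(1)$ error. What your argument actually gives is $N_{F/R}\cdot n\pm O_{j,D_V}(\sqrt n)$ in general, and $\pm O_{j,D_V}(1)$ when $\sum_i(|V(\tau_i)|-1)$ is even. The even case includes the Catalan special case, which your proof does establish. The $O(\sqrt n)$ version is still $o(n)$ and suffices for the fixed-$q$ limiting-spectrum uses. The large-$q$ norm bound goes through \cref{thm:shapesequencetracebound}, which keeps the $\tfrac12$.

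\textbf{Main-term bijection.} Your route is the paper's: \cref{lem:blockpairing}, then \cref{prop:earlierorlater}, then uniqueness from the block walk. However, your exclusion of $R'$ steps is asserted rather than derived. You claim the internal vertices of a matched pair appear only in those two blocks, and that $u_{\tau_i}$ ``must continue the walk'', but neither follows from what you wrote. The paper's argument is cleaner and local to the block walk:
\begin{itemize}
\item after an $R'$ step, the current vertex is incident to no remaining hyperedge and lies outside the component containing $u_{\tau_1}$;
\item with only $F$ steps (to new vertices) and return steps (along existing hyperedges), the walk can never re-enter that component;
\item so it cannot close, contradicting that it is a trace block walk.
\end{itemize}
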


\begin{proof}
The key idea is that the dominant terms are the trace block walks with no slack steps and these terms contribute roughly $N_{F/R}(\tau_1,\ldots,\tau_j)n$. 

To show this, we first observe that trace block walks with no $S$ steps cannot have any $R'$ steps as after the first $R'$ step, the current vertex is disconnected from the starting vertex and there is no way to return to the connected component containing the starting vertex without an $S$ step. 

We now observe that for each trace block walk $BW$ which only has $F$ and $R$ steps, there exists a unique intersection pattern $I \in Int^{Trace}_{\tau_1,\ldots,\tau_j}$ such that $BW_I = BW$ if and only if for all indices $i < i' \in [j]$ such that steps $i$ and $i'$ of $BW$ are a forward step and a return step which are matched up, $\tau_{i'} = \tau_i^T$. Otherwise, there are no intersection patterns $I \in Int^{Trace}_{\tau_1,\ldots,\tau_j}$ such that $BW_I = BW$. This can be shown using reasoning similar to that used to prove Proposition \ref{prop:blockwalkimplications}. For each $i \in [j]$ such that step $i$ of $BW$ is a return step, letting $i'$ be the forward step which step $i$ is matched with, step $i$ of $BW$ must start at $v_{\tau_{i'}}$ and end at $u_{\tau_{i'}}$. This is possible if and only if $\tau_i = \tau_{i'}^T$ and $I$ matches each vertex in $\tau_{i'}$ to its mirror image in $\tau_i = \tau_{i'}^T$. Thus, the number of intersection patterns $I \in Int^{Trace}_{\tau_1,\ldots,\tau_j}$ which have no slack blocks is $N_{F/R}(\tau_1,\ldots,\tau_j)n$. Each such intersection pattern contributes $n^{1 - |V(\gamma_I)|}\frac{n!}{(n-|V(\gamma_I)|)!}$ to $\E\left[\Tr\left(\cmp_{\tau_1} \ldots \cmp_{\tau_j}\right)\right]$ so the total contribution from these intersection patterns is $N_{F/R}(\tau_1,\ldots,\tau_j)n - O_{j,D_V}(1)$.

For each trace block walk $BW$ which has a slack step, for all intersection patterns $I$ such that $BW_I = BW$, $\sum_{i \in [j]}{(|S_I(\tau_i)| - 1)} > 0$ so $\sum_{i \in [j]}{(|S_I(\tau_i)| - 1)} \geq 1$. The contribution from each such intersection pattern is at most $n^{1-\sum_{i \in [j]}{(|S_I(\tau_i)| - 1)}} \leq 1$. Since the number of intersection patterns $I \in Int^{Trace}_{\tau_1,\ldots,\tau_j}$ is at most $(jD_V)^{jD_V}$, the contribution from intersection patterns which have at least one slack block is $O_{j,D_V}(1)$.
\end{proof}
In order to prove probabilistic norm bounds for backbone-correction shapes, we need an upper bound for the contribution from intersection terms with at least one slack block which has a smaller dependence on $j$.
\begin{theorem}\label{thm:shapesequencetracebound}
For all $k \in \mathbb{N}$ and all backbone-correction shapes $\tau_1,\ldots,\tau_j$, 
\[
\E\left[\Tr\left(\cmp_{\tau_1} \ldots \cmp_{\tau_j}\right)\right] \leq N_{F/R}(\tau_1,\ldots,\tau_j)n + \sum_{s=1}^{j}{2^{j+2s}\binom{j}{s}\left(jD_V\right)^{sD_V}n^{1-\frac{1}{2}\lceil\frac{\# \text{ of slack blocks}}{24D_V}\rceil
}}.
\]
\end{theorem}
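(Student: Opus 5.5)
We will decompose $\E\left[\Tr\left(\cmp_{\tau_1} \ldots \cmp_{\tau_j}\right)\right]$ according to the block walk $BW_I$ of each intersection pattern $I \in Int^{Trace}_{\tau_1,\ldots,\tau_j}$, and then group by the number $s$ of slack steps. Write the contribution of $I$ as $n^{-\sum_i (|V(\tau_i)|-1)/2}\frac{n!}{(n-|V(\gamma_I)|)!} \le n^{-\sum_i(|V(\tau_i)|-1)/2} n^{|V(\gamma_I)|}$. By the corollary expressing $|V(\gamma_I)| = 1+\frac{1}{2}\sum_i(|V(\tau_i)|-|S_I(\tau_i)|)$, this equals $n^{1-\frac12\sum_i(|S_I(\tau_i)|-1)}$. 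For $s=0$ we reuse the argument from the proof of \cref{thm:shape-sequence-moments}: with no slack steps there are no $R'$ steps, each $F/R$ block walk with matched shapes transposed yields exactly one $I$, and each contributes at most $n$. This gives the term $N_{F/R}(\tau_1,\ldots,\tau_j)\, n$.

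For $s \ge 1$, \cref{lem:extractingslack} bounds each contribution by $n^{1-\frac12\lceil s/(24D_V)\rceil}$, because the slack blocks of $I$ are exactly the $S$ steps of $BW_I$. It then remains to count the pairs $(BW, I)$ with $BW_I = BW$ and exactly $s$ slack steps by $2^{j+2s}\binom{j}{s}(jD_V)^{sD_V}$. The factor $\binom{j}{s}$ chooses which positions are slack steps. Within each slack step we must pick the destination $w$, including possibly an unknown vertex of an existing hyperedge together with its position on that hyperedge's Hamiltonian cycle; this costs at most $jD_V$. By \cref{prop:blockwalkimplications}, reconstructing $I$ from $BW$ costs at most $(jD_V)^{s(D_V-1)}$, so the slack data and the reconstruction together fit within $(jD_V)^{sD_V}$.

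The main obstacle is the non-slack steps. Each must be specified by its $F/R/R'$ type, by which hyperedge at the current vertex is being returned (for a return step), and by which of the two Hamiltonian-cycle neighbours is the destination. Naively this is more than $2$ per step. The target is a factor of $2$ per non-slack step plus an extra $4$ per slack step, giving $2^{j+2s}$. This is the encoding lemma described in the overview (the paper's \cref{lem:possibleblockwalksbound}). We will process steps in order while maintaining a bit indicating whether the next step is forward or return.

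For a return step, the hyperedge is forced to be the most recently created unmatched one incident to the current vertex. Any other choice would create an edge crossing, which would have to be charged as a slack step. For the destination and the next step's type, use the following dichotomy. If some vertex of the hyperedge being closed has an incident single-multiplicity edge outside the current block, then that vertex must reappear, so it is forced to be the boundary vertex; one bit then records the next step's type. Otherwise the next step must be an $F$ step, since an $L$ step needs an incident once-used edge, and one bit records which cycle neighbour is the destination. Forward steps need only the bit for the next step's type. Slack steps reset this bookkeeping and absorb the extra factor $2^2$ for the type bits around them. Verifying that this charging is consistent when $R'$ steps and slack steps interleave is the delicate part. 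Once it holds, summing over $s$ gives the stated bound.
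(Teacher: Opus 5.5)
Your proposal is correct and follows the paper's proof. It decomposes by $BW_I$, handles $s=0$ with the $F/R$ argument from \cref{thm:shape-sequence-moments}, bounds each pattern with slack steps via \cref{lem:extractingslack}, and counts $(BW,I)$ pairs via \cref{prop:blockwalkimplications} together with the factor-$2$-per-step encoding of \cref{lem:possibleblockwalksbound}, using the same two-case dichotomy. One fix: in a non-slack return step the hyperedge is forced not because of ``edge crossings'' but because a return block has $S_I(\tau_i)=\{v_{\tau_i}\}$, so its start vertex cannot appear later and therefore lies on exactly one open hyperedge (a second open hyperedge there would force it to reappear and make the block slack).
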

\begin{proof}
We prove this by bounding the number of possible trace block walks with a given number of slack steps.
\begin{lemma}\label{lem:possibleblockwalksbound}
For all $j,s \in \mathbb{N}$, the number of trace block walks with $j$ steps where $s$ of these steps are slack steps is at most $2^{j+2s}\binom{j}{s}\left(jD_V\right)^{s}$.
\end{lemma}
\begin{proof}
For the $s$ slack steps, we have the following choices:
\begin{enumerate}
\item There are $\binom{j}{s}$ choices for which of the $j$ steps are slack steps.
\item For each slack step, there are at most $jD_V$ choices for which existing hyperedge $e$ (if any) the destination vertex is on and the location of this destination vertex $v$ on the Hamiltonian cycle of $e$ relative to the existing known vertices. There are $4$ choices for whether there are unknown vertices between $v$ and the two known vertices on the Hamiltonian cycle of $e$ which are closest to $v$.
\item If the following step is not a slack step, we specify whether this step is a forward step or a return step (we explain why this is useful below).
\end{enumerate}
Thus, the total number of possibilities for the slack steps is at most $2^{3s}\binom{j}{s}(jD_V)^{s}$.

For the $j-s$ non-slack steps, as described in the proof overview, the key idea is that if we know whether the current step is a forward step or a return step, a factor of $2$ is sufficient to specify both the destination vertex of the current step and whether the next step is a forward step or a return step (if it is not a slack step). To show this, we make the following observations:
\begin{enumerate}
\item If the current step is an $F$ step, its destination vertex does not need to be specified as it is new so we just need to specify whether the next step is a forward step or a return step.
\item If the current step is an $R$ step, the current vertex $u$ is incident to exactly one hyperedge $e$ as if it were incident to more than one hyperedge $e$, $u$ would have to appear later on in which case the current step would not be a return step. We now have the following cases:
\begin{enumerate}
\item[1.] If $e$ is incident to a vertex $v$ which is incident to another hyperedge $e'$, $v$ must appear later on so $v$ must be the destination vertex of the current step as otherwise it wouldn't be a return step. In this case, we only need to specify whether the next step is a forward step or a return step.
\item[2.] If no vertex of $e$ is incident to an existing hyperedge then there are two choices for the destination vertex $v$. However, in this case the next step must be a forward step (unless it is a slack step which would have already been specified).
\end{enumerate}
Thus, the total number of choices for the non-slack steps is at most $2^{j-s}$. Combining this with the bound for the slack steps proves the lemma.
\end{enumerate}
\end{proof}
To complete the proof of Theorem \ref{thm:shapesequencetracebound}, we recall that 
\begin{enumerate}
\item By Proposition \ref{prop:blockwalkimplications}, for each trace block walk $BW$ with $s$ slack steps, there are at most $(jD_V)^{s(D_V-1)}$ intersection patterns $I \in Int^{Trace}_{\tau_1,\ldots,\tau_j}$ such that $BW_{I} = BW$.
\item By Lemma \ref{lem:extractingslack}, for each $I \in Int^{Trace}_{\tau_1,\ldots,\tau_j}$ with $s$ slack blocks, the contribution to $\E\left[\Tr\left(\cmp_{\tau_1} \ldots \cmp_{\tau_j}\right)\right]$ from $I$ is at most $n^{1-\frac{1}{2}\lceil\frac{\# \text{ of slack blocks}}{24D_V}\rceil
}$.
\end{enumerate}
The result follows from combining these bounds.
\end{proof}
We can use the same reasoning to analyze linear combinations of backbone-correction shapes.
\begin{theorem}\label{thm:linearcombinationnormbound}
If $M = \sum_{\tau \in \calB(M)}{c(\tau)\cmp_{\tau}}$ then recalling that $\Var(M) = \sqrt{\sum_{\tau \in \calB(M)}{c(\tau)^2}}$ and letting $B = \frac{1}{\Var(M)}\sum_{\tau \in \calB(M)}{|c(\tau)|}$, for all non-negative integers $q$, 
\[
\E\left[\Tr\left((MM^T)^q\right)\right] \leq \Var(M)^{2q}\left(C_{q}n + \sum_{s=1}^{2q}{B^{s}2^{2q+2s}\binom{2q}{s}\left(2qD_V\right)^{sD_V}n^{1-\frac{1}{2}\lceil\frac{s}{24D_V}\rceil}}\right).
\]
\end{theorem}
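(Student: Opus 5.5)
We expand $(MM^T)^q$ multilinearly and reduce each term to \cref{thm:shapesequencetracebound}. Writing $M^T=\sum_{\tau}c(\tau)\cmp_{\tau^T}$, we get
\[
\E\left[\Tr\left((MM^T)^q\right)\right]=\sum_{\tau_1,\ldots,\tau_{2q}}\left(\prod_{i=1}^{2q}c(\tau_i)\right)\E\left[\Tr\left(\cmp_{\tau_1}\cmp_{\tau_2^T}\cmp_{\tau_3}\cdots\cmp_{\tau_{2q}^T}\right)\right],
\]
where each $\tau_i$ ranges over $\calB(M)$. Every factor is a backbone-correction matrix, because the transpose of a backbone-correction shape is again one. \cref{thm:shapesequencetracebound} therefore applies to each summand with $j=2q$, and bounds it by a main term $N_{F/R}\cdot n$ plus a slack term. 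Since coefficients may be negative, we cannot just sum signed bounds. We take absolute values, using $\prod_i c(\tau_i)\,\E[\cdots]\le\prod_i|c(\tau_i)|\cdot|\E[\cdots]|$. Each expectation counts walks with value $0$ or $1$, so it is nonnegative, and the bound from the theorem is also an upper bound on its absolute value.

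\emph{Main term.} The quantity $N_{F/R}(\tau_1,\tau_2^T,\ldots)$ counts $F/R$ block walks. In such a walk the matched forward and return steps form a non-crossing perfect matching of the $2q$ positions, and a matched pair must have the return shape equal to the transpose of the forward shape. The alternation of $M$ and $M^T$ forces a matched pair to sit at positions of opposite parity. For such a pair the constraint says the two underlying base shapes coincide, and then the product of their coefficients is $c(\tau)^2\ge 0$. We swap the order of summation, first fixing the non-crossing pairing compatible with $F/R$ structure and then summing over shapes. Each matched pair contributes $\sum_\tau c(\tau)^2$, so the main term is
\[
C_q\, n\left(\sum_\tau c(\tau)^2\right)^{q}=C_q\,n\,\Var(M)^{2q}.
\]
Here we count the $C_q$ Dyck-type walks exactly as in the single-shape case noted after the definition of $N_{F/R}$, with $\Var(M)$ read as $\sqrt{\sum c(\tau)^2}$, as in the statement. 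One point needs care: a single walk must not be counted several times across different shape sequences. This holds because the walk together with the forward shapes determines the return shapes, by \cref{prop:blockwalkimplications}.

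\emph{Slack terms.} These are the main obstacle. Summing the crude bound of \cref{thm:shapesequencetracebound} over all shape sequences would give a factor $(\sum_\tau|c(\tau)|)^{2q}$, which is far too large. We instead rerun the proof of \cref{thm:shapesequencetracebound} in the multilinear setting. We fix a trace block walk with $s$ slack steps; \cref{lem:possibleblockwalksbound} bounds the number of such walks by $2^{2q+2s}\binom{2q}{s}(2qD_V)^{s}$. We then sum over shapes in the following way.
\begin{itemize}
\item Each slack step carries an arbitrary shape. Summing $|c(\tau)|$ over these shapes costs a factor $\sum|c|=B\,\Var(M)$ per slack step.
\item The non-slack steps come in matched $F/R$ pairs, each with a common shape. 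Each pair contributes $\sum c(\tau)^2=\Var(M)^2$.
\end{itemize}
The slack steps must be even in number for this bookkeeping to produce exactly $\Var(M)^{2q}\cdot B^s$. Indeed, there are $2q-s$ paired non-slack steps, giving $\Var(M)^{2q-s}$, and the $s$ slack steps give $(B\Var(M))^s$, for a total of $\Var(M)^{2q}B^s$ as required. Proposition~\ref{prop:blockwalkimplications} contributes $(2qD_V)^{s(D_V-1)}$ intersection patterns per walk. Lemma~\ref{lem:extractingslack} gives a weight of $n^{1-\frac12\lceil s/(24D_V)\rceil}$ per pattern. Multiplying these gives the stated sum over $s$.

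The delicate point is justifying the factorization over shapes. The walk structure, meaning which steps are slack and how the $F/R$ steps are matched, must be fixed before the shapes are chosen. The slack-extraction bound must also be uniform over shapes, which it is, since it depends only on $D_V$. I would verify this by reorganizing the enumeration in \cref{thm:shapesequencetracebound} so that walks are enumerated first and shapes second.
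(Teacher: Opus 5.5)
Your architecture matches the paper's: enumerate trace block walks first, then sum over shapes, charging $\sum_\tau |c(\tau)| = B\,\Var(M)$ per slack step and $\Var(M)^2$ per matched forward/return pair. Your main-term computation is correct: with no slack steps there are no $R'$ steps, and the Dyck structure puts matched steps at opposite parity, so the two base shapes coincide.

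The gap is in the slack terms. There you assert that the non-slack steps come in matched pairs ``each with a common shape,'' each contributing $\sum_\tau c(\tau)^2$. This is false once slack steps are present.
\begin{itemize}
\item A return step may then be an $R'$ step. The paper excludes $R'$ steps only for walks without $S$ steps. For an $R'$ step, the return block has the same vertices and edges as the forward block $\tau$ but different boundary vertices. Its base shape is therefore some other $\tau'\in\calB(M)$, and the factor is $c(\tau)c(\tau')$.
\item Your parity argument relies on the Dyck structure, which slack steps destroy. A forward step coming from $M$ can be matched with a return step that also comes from $M$.
\item An $R$ step may traverse the hyperedge in the same direction as its forward step.
\end{itemize}
The last two cases give factors such as $c(\tau)c(\tau^T)$. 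The coefficients here are arbitrary, with no symmetry assumed, so none of these factors equals $c(\tau)^2$.

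The missing ingredient is the paper's first observation. Once the walk is fixed, the shape of a return step is determined by the shape of its matched forward step, as in the proof of \cref{prop:blockwalkimplications}. This correspondence $\tau\mapsto\phi(\tau)$ is one-to-one. By Cauchy--Schwarz and injectivity,
\[
\sum_\tau |c(\tau)|\,|c(\phi(\tau))| \le \sum_\tau c(\tau)^2 = \Var(M)^2.
\]
Replace your ``common shape'' claim with this inequality, so that each matched pair costs at most $\Var(M)^2$ in absolute value. The rest of your bookkeeping then yields the stated bound.
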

\begin{proof}
We can prove this by summing the contribution from each trace block walk and using the same reasoning that we used to prove Theorem \ref{thm:shapesequencetracebound}. The main additional observations that we need are as follows:
\begin{enumerate}
\item Whenever we have a forward step and a return step which are matched up, there is a one-to-one correspondence between the backbone-correction shape $\tau$ for the forward step and the corresponding backbone-correction shape $\tau'$ for the return step (though $\tau'$ and $\tau$ may not be equal). Letting $c'(\tau) = c(\tau')$, by Cauchy--Schwarz, we have
\[
    \left|\sum_{\tau \in \calB(M)}c(\tau)c'(\tau)\right| \leq \sqrt{\left(\sum_{\tau \in \calB(M)}c(\tau)^2\right)
    \left(\sum_{\tau \in \calB(M)}(c'(\tau))^2\right)} = {\Var(M)}^2.
\]
\item Whenever we have a slack step, we can choose any backbone-correction shape $\tau$ for this slack step so this gives a factor of $B\Var(M) = \sum_{\tau \in \calB(M)}{|c(\tau)|}$.
\end{enumerate}
\end{proof}
\begin{corollary}\label{cor:linear-combination-norm-bound}
If $D_V \geq 4$ and $n \geq B^{100D_V}(4D_V)^{100D_V^2}$ then for all $\epsilon \geq n2^{-n^{\frac{1}{200D_V^2}}}$, with probability at least $1 - \epsilon$,
$||M|| < 2\Var(M)\left(1 + \frac{\log_2\left(\frac{n}{\epsilon}\right)}{n^{\frac{1}{200D_V^2}}}\right)$.
\end{corollary}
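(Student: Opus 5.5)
We apply the trace moment method with Markov's inequality to the bound in \cref{thm:linearcombinationnormbound}. For any nonnegative integer $q$,
\[
\Pr\left[\|M\| \geq \theta\right] \leq \frac{\E\left[\Tr\left((MM^T)^q\right)\right]}{\theta^{2q}}.
\]
We therefore need to show that, for a suitable $q$, the bracketed quantity in \cref{thm:linearcombinationnormbound} is at most $\poly(n)\cdot 4^q$. The threshold $\theta = 2\Var(M)(1+\eta)$ then gives a failure probability of at most $\poly(n)(1+\eta)^{-2q}$.

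\textbf{Choice of $q$.} We take $q$ on the order of $n^{\frac{1}{200D_V^2}}$, up to constants. The main term satisfies $C_q n \leq 4^q n$. The slack terms use $\binom{2q}{s} \leq (2q)^s$ and give the bound
\[
4^q\, n \sum_{s \geq 1} \left(4B(2q)(2qD_V)^{D_V}\right)^{s}\, n^{-\frac12\lceil s/(24D_V)\rceil}.
\]
We group the indices $s$ into consecutive blocks of length $24D_V$. A block with ceiling value $r$ contributes at most $24D_V \cdot X^{24D_V r} n^{-r/2}$, where
\[
X = 4B(2q)(2qD_V)^{D_V}.
\]
So it suffices to have $X^{24D_V} \leq n^{1/4}$. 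Then each block is at most $n^{-r/4}$, and the whole slack sum is $O(D_V n^{-1/4})$, which is negligible.

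\textbf{Verifying the condition.} We check $X^{24D_V}\le n^{1/4}$ using the hypotheses. We have $X^{24D_V} \leq (4B)^{24D_V}(2qD_V)^{24D_V(D_V+1)}$. The hypothesis $n \geq B^{100D_V}$ makes the $B$ factor at most about $n^{1/4}$. The main obstacle is this quantitative bookkeeping: the $q$-dependence enters through $q^{24D_V^2+O(D_V)}$, and it must be kept below a small power of $n$. This is exactly why $q$ is chosen as $n^{c/D_V^2}$ with $c$ tiny, here $1/200$. We will need to check the exponents carefully: $(n^{1/(200D_V^2)})^{\approx 25 D_V^2} = n^{1/8}$. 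The factors $(4D_V)^{O(D_V^2)}$ and $B^{O(D_V)}$ are absorbed by the assumption $n \geq B^{100D_V}(4D_V)^{100D_V^2}$, possibly after splitting $n^{1/4}$ into pieces of size $n^{1/8}$, or by slightly shrinking $q$ by a constant factor. The condition $D_V\geq4$ helps absorb the lower-order $+O(D_V)$ terms in the exponent.

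\textbf{Conclusion.} We get $\E[\Tr((MM^T)^q)] \leq 2n\cdot 4^q \Var(M)^{2q}$. Set $1+\eta = 1+\log_2(n/\epsilon)/q'$ with $q' = n^{1/(200D_V^2)}$ and $q \approx q'$. Using $(1+\eta)^{2q} \geq 2^{\min(\eta q,\,\ldots)}$ (via $\log(1+x)\geq x\ln 2$ for $x\le1$, and the regime $x>1$ being trivial), we get $(1+\eta)^{-2q} \leq \epsilon/(2n)$. Hence $\Pr[\|M\|\geq\theta]\le\epsilon$. The lower bound $\epsilon\geq n2^{-n^{1/(200D_V^2)}}$ guarantees $\eta$ is not so small that rounding $q$ to an integer matters, and keeps $\log_2(n/\epsilon)\le q'$, so that $\eta \leq 1$ and the elementary inequality applies.
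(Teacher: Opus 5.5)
Your overall route is the paper's: Markov's inequality applied to the trace bound of \cref{thm:linearcombinationnormbound} with $q\asymp n^{1/(200D_V^2)}$. There is a genuine gap in the quantitative step you flag as the main obstacle. The target $X^{24D_V}\le n^{1/4}$ does not follow from the hypothesis in the way you describe.

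Raising the hypothesis to the power $0.24$ gives exactly $B^{24D_V}(4D_V)^{24D_V^2}\le n^{0.24}$. So the only bound the hypothesis provides on the $B$- and $D_V$-factors already uses essentially all of $n^{1/4}$. The $q$-factor $q^{24D_V(D_V+1)}$ then contributes a further $n^{0.12(1+1/D_V)}$, up to $2^{O(D_V^2)}$ factors.

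None of your suggested patches fixes this. Rescaling $q$ by a constant only changes the $2^{O(D_V^2)}$ factor, not the power of $n$. You cannot shrink $q$ polynomially either: $\eta$ is pinned to $n^{1/(200D_V^2)}$, and $(1+\eta)^{2q}\gtrsim n/\epsilon$ forces $q=\Omega(n^{1/(200D_V^2)})$. Splitting into $n^{1/8}$ pieces would need $B^{24D_V}(4D_V)^{24D_V^2}\le n^{1/8}$, which is not assumed. For $q\in[q',\tfrac54 q']$, what the hypothesis actually gives is $X^{24D_V}\le n^{2/5}$, not $n^{1/4}$.

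The repair is that $n^{1/4}$ is more than you need. The block with ceiling value $r$ is at most $24D_V\bigl(X^{24D_V}n^{-1/2}\bigr)^r$, so any bound $X^{24D_V}\le n^{1/2-c}$ suffices. With $n^{2/5}$, the slack sum is at most $48D_V n^{-1/10}$, which is negligible since $n\ge(4D_V)^{100D_V^2}$. This is the paper's accounting in per-term form: it uses $n^{-\frac12\lceil s/(24D_V)\rceil}\le n^{-s/(50D_V)}$ and checks that the per-slack-step cost $B(4q^2D_V)^{D_V}$ is at most $\tfrac12 n^{1/(50D_V)}$.

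Your $n^{1/4}$ target could also be rescued using two facts you never invoke. First, $B\le\sqrt{|\calB(M)|}\le 3^{D_V}$ by \cref{lem:backbone-count}. Second, the statement is vacuous unless $n^{1/(200D_V^2)}>\log_2 n$.

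A smaller point concerns the last step. Your trace bound carries an extra factor $2$, so you need $(1+\eta)^{2q}\ge 2n/\epsilon$. Use $(1+\eta)^{2q}\ge 2^{2q\eta}$ with $q\ge n^{1/(200D_V^2)}$; the exponent $\eta q$ you wrote only yields $n/\epsilon$. The paper sidesteps this by keeping the $1/s!$ from $\binom{2q}{s}$ and using $C_q<4^q/6$. That gives $\E[\Tr((MM^T)^q)]<(2\Var(M))^{2q}n$ strictly, so $q=\lceil\tfrac12 n^{1/(200D_V^2)}\rceil$ together with $(1+1/x)^x\ge 2$ suffices.
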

\begin{proof}
Observe that using Markov's inequality, for all $q \in \mathbb{N}$, 
\[
P\left(||M|| \geq \sqrt[2q]{\frac{E\left[\Tr\left((MM^T)^q\right)\right]}{\epsilon}}\right) \leq P\left(\Tr\left((MM^T)^q\right) \geq \frac{E\left[\Tr\left((MM^T)^q\right)\right]}{\epsilon}\right) \leq \epsilon.
\] 
We now take $q = \lceil{\frac{1}{2}n^{\frac{1}{200{D_V^2}}}}\rceil$ and make the following observations:
\begin{enumerate}
\item For all $s \in \mathbb{N}$,
\[
\begin{aligned}
B^{s}2^{2q+2s}\binom{2q}{s}\left(2qD_V\right)^{sD_V}n^{1-\frac{1}{2}\lceil\frac{s}{24D_V}\rceil}
&\leq \frac{2^{2q}B^{s}\left(4q^{2}D_V\right)^{sD_V}}{s!n^{\frac{s}{50D_V}}}n \\
&\leq \frac{1}{s!}\left(\frac{B(2D_V)^{D_V}}{n^{\frac{1}{100D_V}}}\right)^{s}n
\leq \frac{2^{2q-s}}{s!}n.
\end{aligned}
\]
\item Since $q \geq 2$, $C_q = \frac{1}{q+1}\binom{2q}{q} < \frac{2^{2q}}{6}$ and $\sum_{s=1}^{2q}{\frac{2^{-s}}{s!}} < \frac{3}{4}$ so $\E\left[\Tr\left((MM^T)^q\right)\right] < (2\Var(M))^{2q}n$.
\item $\left(1 + \frac{\log_2\left(\frac{n}{\epsilon}\right)}{n^{\frac{1}{200D_V^2}}}\right)^{2q} \geq 2^{\log_2\left(\frac{n}{\epsilon}\right)} = \frac{n}{\epsilon}$ so $\sqrt[2q]{\frac{\E\left[\Tr\left((MM^T)^q\right)\right]}{\epsilon}} < 2\Var(M)\left(1 + \frac{\log_2\left(\frac{n}{\epsilon}\right)}{n^{\frac{1}{200D_V^2}}}\right)$. To see this, observe that $x := \frac{n^{\frac{1}{200D_V^2}}}{\log_2\left(\frac{n}{\epsilon}\right)} \geq 1$ and for all $x \geq 1$, $\left(1+\frac{1}{x}\right)^{x} \geq 2$.
\end{enumerate}
\end{proof}

%% file: cheby_shape.tex
\section{Chebyshev Polynomials via Shape Concatenation}\label{sec:cheby-shape-approx}
The goal of this section is to show that Chebyshev polynomials of a graph-matrix linear combination admit a clean approximation by proper shape concatenations. The main task is therefore to classify and bound the error terms arising from nontrivial intersections.
Throughout this section, $D_V$ bounds the total number of vertices in each shape product and correction under consideration.
\subsection{Intuition From Block Walks}
To start with, we give a precise analysis of what happens when we multiply backbone-correction shapes together. In particular, we identify the terms which have constant norm and show that all of the other terms have smaller norm.

We can gain much of the intuition for what happens when we take products of backbone-correction shapes from block walks. The intuition is that products whose block walk contains a slack block have small norm. Thus, the only block walks which can result in terms with constant norm are block walks where each step is an $F$ step, an $R$ step, or an $R'$ step.

These block walks result in shapes $\gamma_I$ which are split into connected components where each component of $\gamma_I$ is the concatenation of backbone-correction shapes, $u_{\gamma_I}$ is at the start of one of these connected components, and $v_{\gamma_I}$ is at the end of one of these connected components (possibly the same component which contains $u_{\gamma_I}$).

However, it turns out that some of these terms also have small norm. In particular, if there is a connected component of $\gamma_I$ which is not an isolated vertex and is not connected to $u_{\gamma_I}$ or $v_{\gamma_I}$, then $n^{-\frac{\sum_{i=1}^{j}{(|V(\tau_i)| - 1)}}{2}}||\cm_{\gamma_I}||$ is small. Thus, the terms which have constant norm are either compositions of backbone-correction shapes (which are the main terms in our analysis) or terms where there are two disconnected components: a composition of backbone-correction shapes starting at $u_{\gamma_I}$ and a composition of backbone-correction shapes ending at $v_{\gamma_I}$. We make this precise below.
\begin{definition}
For all intersection patterns $I \in Int_{\tau_1,\ldots,\tau_j}$, for all $i \in [2,j-1]$, we define $S_I(\tau_i)$ to be the set of vertices of $\tau_i$ which appear both earlier and later. We define $S_I(\tau_1) = \{u_{\tau_1}\}$ if $u_{\tau_1}$ appears later and $S_I(\tau_1) = \emptyset$ otherwise. Similarly, we define $S_I(\tau_j) = \{v_{\tau_j}\}$ if $v_{\tau_j}$ appears earlier and $S_I(\tau_j) = \emptyset$ otherwise.

Note that $S_I(\tau_i)$ may not be a vertex separator of $\tau_i$.
\end{definition}
\begin{definition}
We say that a vertex $\hat{v} \in V(\gamma_I)$ is a slack witness vertex if there exists an index $i \in [j]$ such that $|S_I(\tau_i)| \geq 2$ and $\hat{v}$ appears in $S_I(\tau_i)$.
\end{definition}
\begin{definition}[Block types for backbone-correction shape products]
Given an intersection pattern $I$,
\begin{enumerate}
\item We say that $\tau_i$ is a slack block if at least one of the following holds:
\begin{enumerate}
\item[1.] There are at least two slack witness vertices which appear in $\tau_i$. Note that this includes the case when $|S_I(\tau_i)| \geq 2$.
\item[2.] $\tau_i$ does not have an edge in common with any other block and there is a vertex $v \in V(\tau_i) \setminus \{v_{\tau_i}\}$ which appears later or a vertex $v \in V(\tau_i) \setminus \{v_{\tau_i}\}$ which appears earlier (or both).
\item[3.] There exists an index $i' \in [j] \setminus \{i\}$ such that $E(\tau_i) \cap E(\tau_{i'}) \neq \emptyset$ and $E(\tau_i) \neq E(\tau_{i'})$.
\end{enumerate}
\item We say that $\tau_i$ is a forward block if $\tau_i$ is not a slack block and no vertex of $\tau_i$ except $u_{\tau_i}$ appears earlier.
\item We say that $\tau_i$ is a return block if $\tau_i$ is not a slack block, no vertex of $\tau_i$ appears later except $v_{\tau_i}$, and there is a forward block $\tau_{i'}$ such that $E(\tau_{i'}) = E(\tau_{i})$. In this case, we say that the forward block $\tau_{i'}$ and the return block $\tau_i$ are paired up.
\end{enumerate}
Note that forward blocks are either paired with a return block (i.e., they have exactly the same vertices and edges, though the boundaries may be different) or do not have any edges in common with any other block.
\end{definition}
\begin{lemma}\label{lem:noslackblockpossibilities}
If $I \in Int_{\tau_1,\ldots,\tau_{j}}$ is an intersection pattern which does not have any slack blocks, then the block walk for $I$ consists of the following types of steps:
\begin{enumerate}
\item $F$ steps which start at the current vertex $u$, go to a new vertex $v$, and draw a hyperedge $e$ with boundary vertices $u$ and $v$. As long as $e$ is still present, the unknown vertices of $e$ will remain unknown, so we can think of $e$ as an edge.
\item $R$ steps which start at the current vertex $v$, delete a hyperedge $e$ which is still present and is incident to $v$ (which will be unique and will be the most recent hyperedge which is still present), and go to the other boundary vertex $u$ of $e$.
\item $R'$ steps which start at the current vertex $v$, delete a hyperedge $e$ which is still present and is incident to $v$ (which will be unique and will be the most recent hyperedge which is still present), and go to an unknown vertex of $e$ which is then identified.

Note that $R'$ steps can only be followed by $F$ steps as the newly identified vertex is not incident to any hyperedges.
\end{enumerate}
This implies that at each point of the block walk, the hyperedges which are still present either consist of a single path (which could have length $0$) from the starting vertex $u_{\tau_1}$ to the current vertex $v$ or consist of two or more disjoint paths of hyperedges where the first such path (which could have length $0$) starts from the starting vertex $u_{\tau_1}$ and the most recent such path leads to the current vertex $v$.
\end{lemma}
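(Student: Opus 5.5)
We proceed block by block along $\tau_1,\ldots,\tau_j$ and maintain the invariant stated at the end of the lemma: the hyperedges still present form a sequence of disjoint hyperedge paths, the first starting at $u_{\tau_1}$ and the last ending at the current vertex. Since there are no slack blocks, every block is a forward or a return block. A forward block has no vertex appearing earlier except $u_{\tau_i}$. Hence all its other vertices are new, and it is exactly an $F$ step: it draws a hyperedge from the current vertex to a new vertex. Its internal vertices stay unknown as long as the hyperedge is present. Indeed, a later identification of such a vertex by some block $\tau_{i''}$ other than its return partner would make that vertex appear in $\tau_{i''}$. If $\tau_{i''}$ shares no edge with $\tau_i$, this contradicts condition 2 or 3 of the slack definition applied to $\tau_{i''}$. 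Otherwise $E(\tau_{i''})=E(\tau_i)$, so $\tau_{i''}$ is the partner and the hyperedge is deleted there.

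For a return block $\tau_i$ paired with forward block $\tau_{i'}$ (so $E(\tau_i)=E(\tau_{i'})$), the key step is to show that the hyperedge $e$ of $\tau_{i'}$ is incident to the current vertex $u_{\tau_i}$ and is the most recent hyperedge still present. Incidence holds because $u_{\tau_i}\in V(\tau_i)=V(\tau_{i'})$ after intersections. For ``most recent,'' suppose a later hyperedge $e''$ (from forward block $\tau_{i''}$, $i'<i''<i$) is still present. Since the walk moved from the paths through $e$ to the current vertex and uses only $F$ and $R/R'$ steps, the current vertex lies on the last path. Then either $u_{\tau_i}$ is an unknown or boundary vertex of $e''$, or some vertex of $e$ is identified into the later component. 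In the first case $u_{\tau_i}$ appears both in $\tau_{i''}$ and in $\tau_i$. Since $\tau_{i''}$ is still present, it is returned later, so $u_{\tau_i}$ appears later. Together with its earlier appearance, this gives $u_{\tau_i}\in S_I(\tau_i)$. But a return block has no vertex appearing later except $v_{\tau_i}$. So $u_{\tau_i}=v_{\tau_i}$, which is impossible since $\{u_{\tau_i},v_{\tau_i}\}$ is an edge. In the second case some block shares vertices without sharing edges, which violates condition 2 of the slack definition. We will also need the Hamiltonian-cycle uniqueness argument from \cref{prop:blockwalkimplications}: since $\tau_i$ and $\tau_{i'}$ have the same vertex and edge set, $v_{\tau_i}$ is a vertex of the common cycle. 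If $v_{\tau_i}$ is the other boundary vertex of $e$, the step is an $R$ step. Otherwise $v_{\tau_i}$ is an internal (unknown) vertex of $e$, and the step is an $R'$ step that identifies it.

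Next we verify the remark and the invariant. After an $R'$ step, the newly identified vertex was internal to $e$, which has just been deleted. By the argument above it was never identified with a vertex of any other present hyperedge, so it is incident to no hyperedge. Hence the next step cannot be a return step and must be an $F$ step. The invariant then follows by induction. An $F$ step extends the last path. An $R$ step deletes the last hyperedge of the last path and moves to its start, so the paths remain. An $R'$ step removes that hyperedge and lands on an isolated new vertex, which starts a new (length-$0$) path; the following $F$ steps grow it, possibly creating two or more disjoint paths. The first path always contains $u_{\tau_1}$ because we never leave it except through an $R'$ step, which opens a new component.

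The main obstacle is the ``most recent hyperedge'' claim, which must exclude nested or interleaved pairings. The plan is to derive such configurations from the slack conditions, most likely by finding two slack witness vertices or a vertex that appears both earlier and later, exactly as in the case analysis of \cref{lem:blockpairing}. I expect the delicate part to be ruling out the case where a vertex of a deleted-later hyperedge coincides with the current vertex.
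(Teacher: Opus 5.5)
Your skeleton is the same induction the paper uses: forward blocks are $F$ steps, return blocks delete a hyperedge and land on its other boundary vertex ($R$) or an interior vertex ($R'$), and the path invariant is updated. But the load-bearing step is not established, and the arguments you give for it fail. That step is that a return block must delete the \emph{most recent} present hyperedge, and that interior vertices of present hyperedges are never identified with anything.

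\textbf{Problem 1: misuse of slack condition 2.} You twice use condition 2 for a block that shares a vertex but no edge with $\tau_i$. That condition only applies to blocks sharing no edge with \emph{any} other block. Take a return block $\tau_{i''}$ paired with some other forward block $\tau_k$. If one of its vertices coincides with an interior vertex of $\tau_i$, it triggers neither condition 2 nor condition 3. So your argument does not exclude this case.

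\textbf{Problem 2: not every present hyperedge is returned.} In the ``most recent'' argument you use that the later present hyperedge $e''$ ``is returned later''. This is false when $\tau_{i''}$ is an unpaired forward block. Such blocks are allowed here, since $I\in Int_{\tau_1,\ldots,\tau_j}$ rather than $Int^{Trace}$; they are exactly what produce the main term $\cmp_{\tau_1\circ\cdots\circ\tau_{j+1}}$. The case split before it is also unjustified, and your last paragraph concedes this step is only a plan.

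\textbf{The missing idea.} No further appeal to slack conditions is needed once the invariant is strengthened. Use the slack conditions only to show that every block is forward or return, and that pairings are unique. Both follow because a middle block among three with equal edge sets, or a paired first block with a non-$u$ vertex appearing earlier, has $|S_I|\ge 2$. Then add two facts to the induction hypothesis: no interior vertex of a present hyperedge has been identified with a vertex of another block, and the current vertex is the endpoint of the last path.

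\textbf{How the two cases then go.}
\begin{itemize}
\item A forward block reuses only $u_{\tau_i}$, the current vertex. That vertex is a boundary or isolated vertex, so the block touches no interior vertex.
\item Take a return block $\tau_i$ with partner $\tau_{i'}$. The hyperedge $e$ of $\tau_{i'}$ is still present and contains the current vertex $v_{\tau_{i-1}}$. This vertex cannot be interior to $e$: if $i-1\neq i'$ that would be a forbidden identification, and if $i-1=i'$ it equals $v_{\tau_{i'}}$. By the invariant, the current vertex is a boundary vertex of at most one present hyperedge, namely the last edge of the last path. Hence $e$ is that edge.
\end{itemize}
This is exactly what the paper's one-line justification (``this is the only hyperedge which is incident to the current vertex'') relies on. With this in place, the rest of your argument goes through, including the $R'$ remark and the path-invariant update.
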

\begin{proof}
We prove this by induction. The base case $j = 0$ is trivial. For the inductive step, assume the result is true for all intersection patterns $I \in Int_{\tau_1,\ldots,\tau_{j-1}}$ and consider an intersection pattern $I' \in Int_{\tau_1,\ldots,\tau_{j}}$ which extends $I$.

If the $j$th step is an $F$ step, this step extends the most recent path. If the $j$th step is a return step, it must backtrack by deleting the most recent hyperedge $e$ which is still present as this is the only hyperedge which is incident to the current vertex $v$. If this step is an $R$ step, it stays on the most recent path after backtracking. If this step is an $R'$ step, it instead abandons this path and starts a new path from the previously unknown vertex $v$ it goes to.
\end{proof}
\begin{remark}
Note that since $F$ steps do not need to be paired up, when the current vertex $v$ is incident to a hyperedge $e$, we can have an $R'$ step even if another vertex $u$ of $e$ is incident to another hyperedge $e'$.
\end{remark}
\subsection{Products of Backbone-Correction Shapes}
We now give a more precise analysis of the products of backbone-correction matrices.
\begin{definition}
Given shapes $\tau_1,\ldots,\tau_j$, for all $i \in \{0,1,\ldots,j\}$, we define the shape $\tau_1 \circ \ldots \circ \tau_i | \tau_{i+1} \circ \ldots \circ \tau_j$ to be the shape $\alpha$ consisting of the following components:
\begin{enumerate}
\item $\tau_1 \circ \ldots \circ \tau_i$ (viewed as a graph). If $i = 0$ then we instead take this component to be a single vertex $u$ which is not in any of the shapes $\tau_1,\ldots,\tau_j$.
\item $\tau_{i+1} \circ \ldots \circ \tau_j$ (viewed as a graph). If $i = j$ then we instead take this component to be a single vertex $v$ which is not in any of the shapes $\tau_1,\ldots,\tau_j$.
\end{enumerate}
If $i > 0$, we take $U_{\alpha} = \{u_{\tau_1}\}$. Otherwise, we take $U_{\alpha} = \{u\}$. Similarly, if $i < j$, we take $V_{\alpha} = \{v_{\tau_j}\}$. Otherwise, we take $U_{\alpha} = \{v\}$.
\end{definition}
\begin{theorem}\label{thm:backbonecorrectionproductapproximation}
For all $j \in \mathbb{N}$ and all backbone-correction shapes $\tau_1,\ldots,\tau_{j+1}$, let ${\tau'_j}^T$ be the other backbone-correction shape which can be paired with $\tau_j$ (i.e., ${\tau'_j}^T$ has the same vertices and edges as $\tau_j^T$ but starts at $v_{\tau_j}$ and ends at the vertex preceding $v_{\tau_j}$ on the Hamiltonian cycle of $\tau_j$). Then the following approximations hold:
\begin{enumerate}
\item $\cmp_{\tau_1 \circ \ldots \circ \tau_j}\cmp_{\tau_{j+1}} \approx \cmp_{\tau_1 \circ \ldots \circ \tau_{j+1}} + 1_{\tau_{j+1} = \tau_j^T}\cmp_{\tau_1 \circ \ldots \circ \tau_{j-1}} + 1_{\tau_{j+1} = {\tau'_j}^T}\cmp_{\tau_1 \circ \ldots \circ \tau_{j-1} | \emptyset}$.
\item $\cmp_{\tau_1 \circ \ldots \circ \tau_j | \emptyset}\cmp_{\tau_{j+1}} \approx \cmp_{\tau_1 \circ \ldots \circ \tau_j | \tau_{j+1}}$.
\item $\cmp_{\tau_1 \circ \ldots \circ \tau_{j-1} | \tau_j}\cmp_{\tau_{j+1}} \approx \cmp_{\tau_1 \circ \ldots \circ \tau_{j-1} | \tau_j \circ \tau_{j+1}} + \left(1_{\tau_{j+1} = \tau_j^T} + 1_{\tau_{j+1} = {\tau'_j}^T}\right)\cmp_{\tau_1 \circ \ldots \circ \tau_{j-1} | \emptyset}$.
\item $\forall i \in [j-2]$, $\cmp_{\tau_1 \circ \ldots \circ \tau_{i} | \tau_{i+1} \circ \ldots \circ \tau_j}\cmp_{\tau_{j+1}} \approx  \cmp_{\tau_1 \circ \ldots \circ \tau_{i} | \tau_{i+1} \circ \ldots \circ \tau_{j+1}} + 1_{\tau_{j+1} = \tau_j^T}\cmp_{\tau_1 \circ \ldots \circ \tau_{i} | \tau_{i+1} \circ \ldots \circ \tau_{j-1}}$.
\end{enumerate}
In particular, if $D_V \geq 4$ and $n \geq (4D_V)^{400D_V^2}$, then for each of these cases, with probability at least $1 - n{D_V^{D_V}}2^{-n^{\frac{1}{400D_V^2}}}$, the difference between the left-hand side and the right-hand side has norm at most $2D_V^{D_V}n^{-\frac{1}{400D_V^2}}$.
\end{theorem}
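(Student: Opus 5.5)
We expand each left-hand product via the graph matrix multiplication proposition, $\cm_{\alpha}\cm_{\tau_{j+1}} = \sum_{I \in Int_{\alpha,\tau_{j+1}}} \cm_{\gamma_I}$. Each term is classified by its block walk, viewing the components of $\alpha$ as the blocks $\tau_1,\ldots,\tau_j$ already glued together and $\tau_{j+1}$ as a new final block. Within $\alpha$ itself the earlier blocks are forward blocks with no intersections (or form the two components of a "$|$" shape). Hence $I$ is determined by how the vertices of $\tau_{j+1}$ are identified with existing vertices.

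By \cref{lem:noslackblockpossibilities}, if $I$ has no slack block then the last step is an $F$, $R$, or $R'$ step. An $F$ step gives the proper concatenation: $\cm_{\tau_1\circ\cdots\circ\tau_{j+1}}$, or the extension of the second component in cases 2--4. An $R$ step deletes the most recent hyperedge, which requires $\tau_{j+1}=\tau_j^T$ with mirror-image identification. This is exactly the backtracking intersection, and by the earlier proposition on $\cm_{\gamma_{I_{back}}}$ it equals $\prod_i (n-\cdots-i)\,\cm_{\tau_1\circ\cdots\circ\tau_{j-1}}$. After normalization this is $(1-O(D_V^2/n))\cmp_{\tau_1\circ\cdots\circ\tau_{j-1}}$, or the analogous "$|$" shape when the removed block was the whole second component.

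An $R'$ step requires $\tau_{j+1}={\tau_j'}^T$. It leaves the current vertex detached, which produces the disconnected shapes $\tau_1\circ\cdots\circ\tau_{j-1}|\emptyset$ after normalization. The $R'$ step counts toward the main term in cases 1 and 3 only. In case 4 the $R'$ step would create a third component, and in case 2 there is no hyperedge incident to the current vertex. Both of these go to the error.

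It remains to bound the error terms, which fall into three kinds:
\begin{itemize}
\item[(a)] the $1-\prod(1-i/n)$ discrepancies, which are $O(D_V^2/n)$ times a matrix of norm $2+o(1)$ by \cref{cor:linear-combination-norm-bound};
\item[(b)] intersection patterns containing a slack block;
\item[(c)] no-slack patterns whose $\gamma_I$ has a nontrivial component not touching either boundary, as in the $R'$ case of item 4.
\end{itemize}
For (b) and (c), I will not bound each $\|\cm_{\gamma_I}\|$ separately via \cref{thm:roughnormbounds}, because that is too lossy in $D_V$ and the error must be explicitly quantitative. Instead I bound the whole error matrix $E$ with the trace method applied to $\Tr((EE^T)^q)$ at $q\approx n^{1/(400D_V^2)}$. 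Each term of $E$ is a sum over intersection patterns, so $(EE^T)^q$ expands into trace block walks over $2q$ "super-blocks", each consisting of $j+1$ blocks with a prescribed internal pattern. Every super-block contains at least one slack block, or a floating component whose vertices must all be separators later.

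Rerunning \cref{lem:extractingslack}, \cref{prop:blockwalkimplications}, and \cref{lem:possibleblockwalksbound} gives a factor of at least $n^{-1/(48D_V)}$ per super-block, at a combinatorial cost of $(2qD_V)^{O(D_V)}$ per slack step. Choosing $q$ as in \cref{cor:linear-combination-norm-bound} yields $\E\Tr((EE^T)^q)\le n(D_V^{D_V}n^{-1/(400D_V^2)})^{2q}$. Markov's inequality then gives the stated norm and probability bounds, with a union over at most $D_V^{D_V}$ pattern types.

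The main obstacle is the floating-component terms in (c). Such a component has no slack block inside its own product, yet it must be shown to have small normalized norm. My plan is to observe that in the trace expansion this floating component's vertices are disconnected from the boundary. They must therefore be matched entirely against other copies in the walk, which forces them to be separator vertices in some block and so creates slack witness vertices. The other delicate point is that case 4 must exclude $R'$ contributions while case 3 includes both pairings; this needs careful bookkeeping of which component the current vertex lies on.
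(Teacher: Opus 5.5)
Your overall route is the paper's. You expand over intersection patterns and read off the $F$, $R$, $R'$ main terms via \cref{lem:noslackblockpossibilities}. You then kill every other pattern $I'$ by the trace method on $(\cm_{\gamma_{I'}}\cm_{\gamma_{I'}}^T)^q$, split into $2q$ groups, showing every consistent trace pattern $I''$ has at least $2q$ slack blocks before invoking \cref{thm:shapesequencetracebound} and Markov. Bounding all of $E$ at once versus union-bounding over the $\le D_V^{D_V}$ patterns is immaterial, though as written you do both.

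\textbf{The gap: transferring slackness from $I'$ to $I''$.} \cref{lem:extractingslack} counts slack blocks in the \emph{trace} sense. Your assertion that every super-block contains a slack block assumes that a product-slack block of $I'$ produces a trace-slack block in every group of $I''$. That is immediate for partial edge overlap and for $|S_{I'}(\tau_i)|\ge 2$. It fails for the other kind of product slack: a block that shares no edge with the rest of its group but has a non-boundary vertex that reappears. In $I''$ its copies can be paired with blocks in other groups and become honest forward or return blocks. For example, take $\tau_1=\overline{\fp}_2$ on $a,x,b$, take $\tau_2=\tau_3=\fp_1$, and identify $v_{\tau_3}\sim x$. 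In the mirrored trace pattern every copy of $\tau_1$ is a forward block, with $S=\{a\}$ and only one slack witness $x$; the slack sits in the copies of $\tau_3$ instead. The paper needs a separate chain argument for this case:
\begin{enumerate}
\item take the first later vertex $w\sim v$ and show $w=v_{\tau_{i'}}$;
\item take the first $i''\in[i+1,i']$ such that $u_{\tau_{i''}}$ appears before $\tau_i$;
\item derive a contradiction from how $\tau_{i''-1}$ would have to be paired.
\end{enumerate}
This establishes at least one trace-slack block per group.

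\textbf{The floating-component case (c).} Your heuristic points the right way but is not yet an argument. The paper's version is a case analysis on how the blocks of the floating component are paired across groups:
\begin{enumerate}
\item If the first block is paired with an earlier group, the $F$ step that introduced the component's first vertex as an unknown vertex has two separator vertices.
\item If the last block is paired with a later group, the $R'$ step containing the component's last vertex has two separator vertices.
\item Otherwise some consecutive pair of blocks switches from later-paired to earlier-paired, and both blocks are slack.
\end{enumerate}

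\textbf{A smaller point on (a).} The main-term matrices there are concatenations or split shapes, not linear combinations of backbone-correction shapes. So \cref{cor:linear-combination-norm-bound} does not apply, and their norms are not $2+o(1)$: for instance $\cmp_{\fp_k}$ has norm about $k+1$. Any crude $2^{O(D_V)}$ trace bound suffices against the $O(D_V^2/n)$ prefactor.
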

\begin{proof}
We first explain the constant norm terms which we swept under the rug in our overview. 
\begin{enumerate}
\item When we multiply $\cm_{\tau_1 \circ \ldots \circ \tau_j}$ by $\cm_{{\tau'_j}^{\top}}$, the term corresponding to an $R'$ step gives
roughly $n^{|V(\tau_j)| - 2}$ times $\cm_{\tau_1 \circ \ldots \circ \tau_{j-1}|\emptyset}$ as after we delete double edges, $|V(\tau_j)|-2$ vertices of $\tau_j$ become isolated and $D_V = o(n)$ hence each isolated vertex gives a factor of $n$ effectively. Since $\cmp_{\tau_1 \circ \ldots \circ \tau_j} = n^{-\left(\frac{|V(\tau_1 \circ \ldots \circ {\tau_{j-1}})| - 1}{2} + \frac{|V(\tau_j)| - 1}{2}\right)}\cm_{\tau_1 \circ \ldots \circ \tau_j}$, $\cmp_{{\tau'_j}^{\top}} = n^{-\frac{|V(\tau_j)| - 1}{2}}\cm_{{\tau'_j}^{\top}}$,  and 
$\cmp_{\tau_1 \circ \ldots \circ \tau_{j-1}|\emptyset} = n^{-\frac{|V(\tau_1 \circ \ldots \circ {\tau_{j-1}})|+1}{2}}\cm_{\tau_1 \circ \ldots \circ \tau_{j-1}|\emptyset}$ (note that there is an extra factor of $\frac{1}{\sqrt{n}}$ as this shape is disconnected), we have that the term corresponding to an $R'$ step gives roughly 
\[
n^{|V(\tau_j)| - 2 -\left(\frac{|V(\tau_1 \circ \ldots \circ {\tau_{j-1}})| - 1}{2} + \frac{|V(\tau_j)| - 1}{2}\right)-\frac{|V(\tau_j)| - 1}{2}}\cm_{\tau_1 \circ \ldots \circ \tau_{j-1}|\emptyset} = \cmp_{\tau_1 \circ \ldots \circ \tau_{j-1}|\emptyset}.
\]
\item When we multiply $\cm_{\tau_1 \circ \ldots \circ \tau_{j-1} | \tau_j}$ by $\cm_{{\tau'_j}^{\top}}$, the term corresponding to an $R'$ step gives roughly $n^{|V(\tau_j)| - 1}$ times $\cm_{\tau_1 \circ \ldots \circ \tau_{j-1}|\emptyset}$ as after we delete double edges, $|V(\tau_j)|-1$ vertices of $\tau_j$ become isolated. Similiar to before, this factor is canceled out by the normalization factors.
\end{enumerate}
We now show that the other terms are small. In order to take advantage of the machinery we developed for analyzing expected traces of products of backbone-correction shapes, we approximate the matrices appearing on the left-hand sides of these cases by scalar multiples of $\cm_{\gamma_I}$ for appropriately chosen intersection patterns $I$. Multiplying by $\cmp_{\tau_{j+1}}$ gives intersection patterns $I'$ which extend $I$. We show that the error terms are small by applying the trace power method to these terms.
\begin{definition}
Given a natural number $j \in \mathbb{N}$ and backbone-correction shapes $\tau_1,\ldots,\tau_{j+1}$, we say that an intersection pattern $I' \in Int^{Trace}_{\tau_1,\ldots,\tau_{j+1}}$ extends an intersection pattern $I \in Int^{Trace}_{\tau_1,\ldots,\tau_{j}}$ if for all indices $i < i' \in [j]$ and pairs of vertices $u \in V(\tau_i)$, $v \in V(\tau_{i'})$, $u$ and $v$ are in the same equivalence class of $I'$ if and only if $u$ and $v$ are in the same equivalence class of $I$.
\end{definition}
\begin{proposition}
For all backbone-correction shapes $\tau_1,\ldots,\tau_{j+1}$ and all intersection patterns $I \in Int_{\tau_1,\ldots,\tau_{j}}$, 
\[
\cm_{\gamma_I}\cm_{\tau_{j+1}} = \sum_{I' \in Int_{\tau_1,\ldots,\tau_{j+1}}: I' \text{ extends } I}{\cm_{\gamma_{I'}}}.
\]
\end{proposition}

\begin{definition}
Given backbone-correction shapes $\tau_1,\ldots,\tau_{j+1}$ and a natural number $q \in \mathbb{N}$, we say that an intersection pattern $I'' \in Int^{Trace}_{\tau_1,\ldots,\tau_{j+1},\tau_{j+1}^T,\ldots,\tau_1^T,\ldots, \tau_1,\ldots,\tau_{j+1},\tau_{j+1}^T,\ldots,\tau_1^T}$ (where the shapes $\tau_1,\ldots,\tau_{j+1},\tau_{j+1}^T,\ldots,\tau_1^T$ are repeated $q$ times) is consistent with an intersection pattern $I' \in Int_{\tau_1,\ldots,\tau_{j+1}}$ if for all pairs of vertices $u,v$ in the same group of $\tau_1,\ldots,\tau_{j+1}$ or $\tau_{j+1}^T,\ldots,\tau_1^T$, $u$ and $v$ are in the same equivalence class of $I''$ if and only if they are in the same equivalence class of $I'$ applied to this group of $k+1$ backbone-correction shapes.
\end{definition}
\begin{lemma}
For all $I' \in Int_{\tau_1,\ldots,\tau_{j+1}}$ and $I'' \in Int^{Trace}_{\tau_1,\ldots,\tau_{j+1},\tau_{j+1}^T,\ldots,\tau_1^T,\ldots, \tau_1,\ldots,\tau_{j+1},\tau_{j+1}^T,\ldots,\tau_1^T}$ such that $I''$ is consistent with $I'$, if $I'$ has a slack block then $I''$ has at least $2q$ slack blocks.
\end{lemma}
\begin{proof}
We make the following observations:
\begin{enumerate}
\item If there are indices $i < i' \in [j+1]$ such that, according to $I'$, $E(\tau_i) \cap E(\tau_{i'}) \neq \emptyset$ but $E(\tau_i) \neq E(\tau_{i'})$, then this will hold for all copies of $\tau_i$ and $\tau_{i'}$, giving at least $4q$ slack blocks.
\item If there is a block $\tau_i$ such that at least two vertices of $\tau_i$ appear both earlier and later according to $I'$ then this will hold for all copies of $\tau_i$ which gives at least $2q$ slack blocks. This implies that if there are slack witness vertices in any of the blocks then there must be at least $2q$ slack blocks.
\item If according to $I'$, $\tau_i$ does not share an edge with any other block and there is a vertex $v \in V(\tau_i) \setminus \{v_{\tau_i}\}$ which appears later then there must be a slack block in each group of $\tau_1,\ldots,\tau_{j+1}$ and each group of $\tau_{j+1}^T,\ldots,\tau_1^T$ so the total number of slack blocks must be at least $2q$.

To see this, assume that there is a group of $\tau_1,\ldots,\tau_{j+1}$ which does not contain a slack block. Let $w$ be the first vertex in $\tau_{i+1} \circ \ldots \circ \tau_{j+1}$ which is in the same equivalence class of $I'$ as $v$ and let $i' \in [i+1,j+1]$ be the index such that $w \in V(\tau_{i'}) \setminus \{u_{\tau_{i'}}\}$.  We must have that $w = v_{\tau_{i'}}$ as otherwise $\tau_{i'}$ would be a slack block. 

Now consider $I''$ and let $i''$ be the first index in $[i+1,i']$ such that $u_{\tau_{i''}}$ appears in a block which comes before $\tau_{i}$. We make the following observations which give a contradiction:
\begin{enumerate}
\item[1.] To avoid having $\tau_{i'}$ be a slack block, $\tau_{i'}$ must be matched by $I''$ with a block which comes earlier. Since $v \sim w$ does not appear in blocks $\tau_{i+1},\ldots,\tau_{i'-1}$ and $\tau_i$ is not matched with $\tau_{i'}$, this block must come before $\tau_i$ so $u_{\tau_{i'}}$ must appear in a block which comes before $\tau_i$. Thus, the index $i''$ must exist.
\item[2.] In order to avoid having $\tau_{i''-1}$ be a slack block, we must have that $i'' > i+1$ and $\tau_{i''-1}$ is matched with a block $\tau_k$ which comes earlier. If $\tau_k$ comes before $\tau_i$ then $u_{\tau_{i''-1}}$ appears in a block which comes before $\tau_i$ which contradicts how we chose $i''$. If $k \in [i+1,i''-1]$ then letting $u$ be the vertex in $\tau_k$ such that $u \sim v_{\tau_{i''-1}}$, $u$ appears in a block which comes before $\tau_i$. In order to avoid having $\tau_k$ be a slack block, we must have that $u_{\tau_k} = u$. However, this contradicts how we chose $i''$.
\end{enumerate}

We can use similar logic for the case where, according to $I'$, $\tau_i$ does not share an edge with any other block and there is a vertex $v \in V(\tau_i) \setminus \{u_{\tau_i}\}$ which appears earlier.
\end{enumerate}
\end{proof}
\begin{lemma}
For all $I' \in Int_{\tau_1,\ldots,\tau_{j+1}}$ and $I'' \in Int^{Trace}_{\tau_1,\ldots,\tau_{j+1},\tau_{j+1}^T,\ldots,\tau_1^T,\ldots, \tau_1,\ldots,\tau_{j+1},\tau_{j+1}^T,\ldots,\tau_1^T}$ such that $I'$ has no slack blocks and $I''$ is consistent with $I'$, if the shape $\gamma_{I',red}$ obtained from $\gamma_{I'}$ by first deleting all doubled edges and then deleting all isolated vertices contains a connected component which is not connected to $u_{\gamma_{I'}}$ or $v_{\gamma_{I'}}$ then $I''$ has at least $2q$ slack blocks. 
\end{lemma}
\begin{proof}
Observe that in order to have such a connected component for $\alpha_{I',red}$, 
\begin{enumerate}
\item This connected component must consist of a sequence of forward steps which are not paired up with return steps.
\item The first vertex $u$ of this connected component must have first appeared as an unknown vertex in an $F$ step and then been the destination of the $R'$ step which was paired with this $F$ step.
\item The last vertex $v$ of this connected component must be followed by an $F$ step starting at $v$ and then an $R'$ step which is paired with this $F$ step. Note that this $R'$ step contains $v$.
\end{enumerate}
We now observe that 
\begin{enumerate}
\item Each group of $\tau_1,\ldots,\tau_j$ has a copy of this connected component.
\item For $I''$, each block of this connected component must be paired with a block in either an earlier group or a later group.
\item If the first block of this connected component is paired with an earlier block, the copy of the $F$ step where $u$ first appears is a slack block. 
\item If the last block of this connected component is paired with a later block, the copy of the later $R'$ step containing $v$ is a slack block.
\item If a block of the connected component is paired with a block in a later group and the next block of the connected component is paired with a block in an earlier group, both of these blocks are slack blocks.
\end{enumerate}
In all of these cases, there is at least one slack block.

Following a similar argument, each group of $\tau_j^T,\ldots,\tau_1^T$ has a copy of the transpose of this connected component which yields at least one slack block.
\end{proof}
\begin{corollary}
For each intersection pattern $I'$ which does not give one of the main terms of \cref{thm:backbonecorrectionproductapproximation} and all natural numbers $q$, letting $j' = 2q(j+1)$, we have
\[
\E\left[\Tr\left((M_{\gamma_{I'}}M_{\gamma_{T'}}^T)^q\right)\right] \leq \sum_{s=2q}^{j'}{2^{j'+2s}\binom{j'}{s}\left(j'D_V\right)^{sD_V}n^{1-\frac{1}{2}\lceil\frac{s}{24D_V}\rceil}}.
\]
\end{corollary}
\begin{proof}
This can be proved by applying \cref{thm:shapesequencetracebound} and observing that only intersection patterns $I''$ with at least $q$ slack blocks can be consistent with $I'$.
\end{proof}
\begin{corollary} \label{cor:quant-norm-bound}
For each intersection pattern $I'$ which does not give one of the main terms of \cref{thm:backbonecorrectionproductapproximation}, if $D_V \geq 4$, $j+1 \leq D_V$, and $n \geq (4D_V)^{400D_V^2}$ then with probability at least $1 - n2^{-n^{\frac{1}{400D_V^2}}}$,
$||M|| < 2n^{-\frac{1}{400D_V^2}}$.
\end{corollary}
\begin{proof}
By Markov's inequality, for all $q \in \mathbb{N}$,
\[
P\left(||M|| \geq \sqrt[2q]{\frac{E\left[\Tr\left((MM^T)^q\right)\right]}{\epsilon}}\right) \leq P\left(\Tr\left((MM^T)^q\right) \geq \frac{E\left[\Tr\left((MM^T)^q\right)\right]}{\epsilon}\right) \leq \epsilon.
\]
Take $\epsilon = n2^{-n^{\frac{1}{400D_V^2}}}$ and $q = \lceil{\frac{1}{2}\log_2(\frac{n}{\epsilon})}\rceil$. We make the following observations:
\begin{enumerate}
\item Since $j' \leq 2qD_V$ and $q \leq n^{\frac{1}{400D_V^2}}$, for all $s \geq 2q$,
\[
2^{j'+2s}\binom{j'}{s}\left(j'D_V\right)^{sD_V}n^{-\frac{1}{2}\lceil\frac{s}{24D_V}\rceil} \leq \frac{\left(16q^{2}D_V^{3}\right)^{sD_V}}{s!n^{\frac{s}{50D_V}}} \leq \frac{1}{s!n^{\frac{s}{200D_V}}}.
\]
\item $\E\left[\Tr\left((M_{\gamma_{I'}}M_{\gamma_{T'}}^T)^q\right)\right] \leq n\sum_{s=2q}^{j'}{\frac{1}{s!n^{\frac{s}{200D_V}}}} < n^{1-\frac{q}{200D_V}}$.
\item Since $\frac{n}{\epsilon} \leq 2^{2q}$, $\sqrt[2q]{\frac{E\left[\Tr\left((MM^T)^q\right)\right]}{\epsilon}} < 2n^{-\frac{1}{400D_V^2}}$.
\end{enumerate}
\end{proof}
To prove \cref{thm:backbonecorrectionproductapproximation}, we observe that in each of the cases in \cref{thm:backbonecorrectionproductapproximation}, there are at most $D_V^{D_V}$ intersection patterns $I'$ which do not give one of the main terms of \cref{thm:backbonecorrectionproductapproximation}. Using a union bound, with probability at least $1 - n{D_V^{D_V}}2^{-n^{\frac{1}{400D_V^2}}}$, the norm of the sum of these intersection terms is at most $2D_V^{D_V}n^{-\frac{1}{400D_V^2}}$.
\end{proof}
Note that in addition to the error terms addressed above, \cref{thm:backbonecorrectionproductapproximation} gives a collection of disconnected shapes $\al$ such that $\cmp_{\al}$ has constant norms and these terms are not captured by the Chebyshev polynomials. We now observe that the corrections for these terms have small norms. 
\begin{claim}[Disconnected Shapes have Small-Norm Corrections]
\label{clm:split-boundary-correction}
For all shapes $\alpha$ of the form $\alpha = \tau_1 \circ \ldots \circ \tau_i | \tau_{i+1} \circ \ldots \circ \tau_j$ such that $|V(\al)| \leq D_V$ and all $\epsilon > 0$, with probability at least $1 - \epsilon$,
\[
||\frac{1}{\sqrt{n}}\cmp_{\bar{\al}}|| < \frac{1}{\sqrt{n}}\left(10\sqrt{ln\left(\frac{n}{\epsilon}\right)}D_V\right)^{D_V}.
\]
\end{claim}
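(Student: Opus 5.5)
Let $\alpha = \tau_1 \circ \cdots \circ \tau_i \mid \tau_{i+1} \circ \cdots \circ \tau_j$ and let $\bar\alpha$ be $\alpha$ with the edge $\{u_\alpha, v_\alpha\}$ added. The plan is a crude moment bound that loses only a $\frac{1}{\sqrt{n}}$ factor from the normalization.

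The point is that adding this edge reconnects the two components. Hence $\bar\alpha$ is a connected proper shape with a path from $u_{\bar\alpha}$ to $v_{\bar\alpha}$, while its normalization is inherited from $\alpha$. Since $\alpha$ has no path between its boundary vertices, \cref{def:normalized-graph-matrix} gives $\cmp_{\bar\alpha} = n^{-|V(\alpha)|/2}\cm_{\bar\alpha}$. The natural normalization for $\bar\alpha$ would instead be $n^{-(|V(\alpha)|-1)/2}$. So it suffices to show that $n^{-(|V(\alpha)|-1)/2}\|\cm_{\bar\alpha}\|$ is at most $\left(10\sqrt{\ln(n/\epsilon)}D_V\right)^{D_V}$ with probability $1-\epsilon$. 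The prefactor $\frac{1}{\sqrt n}$ in the claim accounts for the remaining $n^{-1/2}$.

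This is a rough norm bound with explicit constants, in the spirit of \cref{thm:roughnormbounds}, except that we need the explicit dependence on $D_V$ and $\epsilon$. I would prove it directly with the trace method, as follows.

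\textbf{Step 1 (expected trace).} Write $\E\Tr((\cm_{\bar\alpha}\cm_{\bar\alpha}^\top)^q)$ as a sum over intersection patterns of $2q$ alternating copies of $\bar\alpha, \bar\alpha^\top$ in which every edge has even multiplicity. Each pattern contributes at most $n^{|V(\gamma_I)|}$.

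\textbf{Step 2 (vertex count).} Bound $|V(\gamma_I)|$ by $1 + q(|V(\bar\alpha)|-1)$. This uses the vertex-assignment scheme from \cref{sec:norm}: each block receives $\sqrt n$ per first and last appearance. The minimal vertex separator of $\bar\alpha$ between $u$ and $v$ has size $1$, because $\bar\alpha$ is connected and $u, v$ are distinct. The proposition on vertex appearances shows that every vertex must appear earlier or later, so each block contributes at most $n^{(|V(\bar\alpha)|-1)/2}$ per copy, plus one extra factor of $n$ overall.

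\textbf{Step 3 (counting patterns).} The number of patterns is at most $(2q D_V)^{2qD_V}$: each vertex of each block either matches one of at most $2qD_V$ earlier vertices or is new. This gives
\[
\E\Tr\bigl((\cm_{\bar\alpha}\cm_{\bar\alpha}^\top)^q\bigr)\le n\,(2qD_V)^{2qD_V}\,n^{q(|V(\alpha)|-1)}.
\]

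\textbf{Step 4 (Markov and choice of $q$).} Apply Markov's inequality and take $2q$-th roots, which yields a bound of $(n/\epsilon)^{1/2q}(2qD_V)^{D_V}n^{(|V(\alpha)|-1)/2}$. Choose $q \approx \ln(n/\epsilon)$. Then $(n/\epsilon)^{1/2q} = O(1)$, and $(2qD_V)^{D_V} \le (10\ln(n/\epsilon)D_V)^{D_V}$ after absorbing constants.

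The main obstacle is that the claim has $\sqrt{\ln}$ rather than $\ln$. To get it, I would refine the count in Step 3 using the evenness constraint, which makes matched blocks pair up as in a Gaussian moment computation, $\E g^{2q} \le (2q)^q$. That gives roughly $q^{qD_V}$ rather than $q^{2qD_V}$ choices, i.e.\ $\sqrt{q}$ per vertex after taking roots. Failing that, the weaker $\ln$ version still suffices downstream, since the factor $1/\sqrt n$ dominates any polylog power when $D_V = O(\log\log n)$. The other point to check is that Step 2 is unaffected by the doubled-edge and isolated-vertex issues. This holds because $\bar\alpha$ is proper, having only simple edges, and has no isolated vertices once reconnected.
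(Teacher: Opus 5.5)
Your route differs from the paper's. The paper proves this in one line by citing an explicit norm bound from \cite{AMP20} (Theorem 6.1 there) for the proper shape $\bar\alpha$. That shape is connected and has a minimum $u$--$v$ separator of size $1$. You instead re-derive that bound directly with the trace method, reusing the separator and vertex-appearance accounting of \cref{sec:norm}. This gives a self-contained proof with explicit constants, at the cost of doing the counting carefully yourself. Your observation is correct that those arguments extend to $\bar\alpha$: it is not a backbone-correction shape, but it has a $u$--$v$ path and no isolated vertices.

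A minor point on normalization: applying \cref{def:normalized-graph-matrix} to $\bar\alpha$ itself gives $\cmp_{\bar\alpha}=n^{-(|V(\bar\alpha)|-1)/2}\cm_{\bar\alpha}$, as the remark after the claim says. The explicit $\frac{1}{\sqrt n}$ appears because the correction of $\cmp_\alpha$ is $\frac{1}{\sqrt n}\cmp_{\bar\alpha}$. Your reduction target, $n^{-(|V(\alpha)|-1)/2}\|\cm_{\bar\alpha}\|\le(10\sqrt{\ln(n/\epsilon)}D_V)^{D_V}$, is nonetheless exactly the right one.

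\textbf{The gap.} As written, Steps 1--4 only prove the claim with $\ln(n/\epsilon)$ in place of $\sqrt{\ln(n/\epsilon)}$. The reason is that Step 3 charges up to $2qD_V$ choices to every vertex slot, including the roughly half of the slots that are first appearances and need no choice. Your proposed fix, a Gaussian-style pairing of matched blocks, is not the mechanism that works: for a general $\bar\alpha$ the blocks need not pair up, and a vertex may appear many times.

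\textbf{The fix.} What works is vertex-level bookkeeping.
\begin{enumerate}
\item There are $N=2q(|V(\bar\alpha)|-1)$ free slots, since the boundary identifications between consecutive blocks are forced.
\item Encode a pattern with $k$ classes by one bit per slot (new or old), plus one of at most $N$ earlier classes for each of the $N-k$ old slots. This gives at most $2^NN^{N-k}$ such patterns, each contributing at most $n^k$.
\item Your Step 2 gives $k\le 1+q(|V(\bar\alpha)|-1)$. Summing over $k$ when $n\ge 2N$ gives
\[
\E\Tr\bigl((\cm_{\bar\alpha}\cm_{\bar\alpha}^\top)^q\bigr)\le 2^{N+1}N^{q(|V(\bar\alpha)|-1)-1}n^{1+q(|V(\bar\alpha)|-1)},
\]
whose $2q$-th root is at most $(2n)^{1/2q}\bigl(2\sqrt{2qD_V}\bigr)^{|V(\bar\alpha)|-1}n^{(|V(\bar\alpha)|-1)/2}$, i.e.\ $\sqrt q$ per vertex.
\end{enumerate}
The ingredient your sketch omits is this trade-off: each lost class costs a factor of $n$, which outweighs the at most $N$ extra choices it creates.

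Markov's inequality with $q=\Theta(\ln(n/\epsilon))$ then gives the stated bound. The regime $n<2N$ follows from the deterministic bound $\|\cm_{\bar\alpha}\|\le n^{|V(\bar\alpha)|-1}$, since there $10\sqrt{\ln(n/\epsilon)}D_V\gtrsim\sqrt n$. You are right that the $\ln$ version would suffice for the use in \cref{thm:formal-equiv-cheby-shape}, but it is not the statement being proved.
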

\begin{remark}
Note that the correction for $\cmp_{\alpha} = n^{-\frac{|V(\al)|}{2}}\cm_{\al}$ is $\frac{1}{\sqrt{n}}\cmp_{\bar{\al}} = n^{-\frac{|V(\al)|}{2}}\cm_{\bar{\al}}$ as there is a path from $u$ to $v$ in $\bar{\al}$ but not in $\al$.
\end{remark}
\begin{proof}
By Theorem 6.1 of \cite{AMP20}, for all $\epsilon > 0$, with probability at least $1 - \epsilon$, 
\[
||\cm_{\bar{\al}}|| < \left(10\sqrt{\frac{n}{\epsilon}}|V(\bar{\al})|\right)^{|V(\bar{\al})|}n^{\frac{|V(\bar{\al})| - 1}{2}} \leq \left(10\sqrt{\ln\left(\frac{n}{\epsilon}\right)}D_V\right)^{D_V}n^{\frac{|V(\bar{\al})| - 1}{2}}
\]
The result follows as $\tilde{M}_{\bar{\al}} = n^{-\frac{|V(\bar{\al})| - 1}{2}}M_{\bar{\al}}$.
\end{proof}

\subsection{Equivalence of Chebyshev Polynomials and Concatenated Shapes}

In this subsection, we use the properties established above for products of backbone-correction shapes to complete the proof of the equivalence between Chebyshev polynomials and concatenated shapes.

 \begin{theorem}
 [Formal Equivalence Between $P_j(M)$ and Concatenated Shapes]
 \label{thm:formal-equiv-cheby-shape}
Given a symmetric linear combination of backbone-correction shapes $M$ from~\cref{def:sym-lin-comb} with normalized variance $\Var(M)=1$,
 for every $j\geq 1$, 
 \[
 P_j(M) = \fp_j(M)+\mathrm{Error}(j),
 \]
 where $\mathrm{Error}(j)$ admits a small-norm correction. More precisely, there exists a matrix $\mathrm{CorrectionError}(j)$ such that 
 $\mathrm{CorrectionError}(j) +  \mathrm{Error}(j)$ 
  satisfies the constraints for $R_W$ in \cref{def:concrete-target} and if $D_V \geq 4$ and $n \geq (4D_V)^{400D_V^2}$ then with probability at least $1- n{D_V^{5D_V}}2^{-n^{\frac{1}{400D_V^2}}}$,
  \[
 \| \mathrm{CorrectionError}(j)\|_{sp} \leq (4B)^{j}\mathsf{Slack}(n,D_V)
 \] 
     where 
     $B\coloneqq\sum_{\tau\in\calB}|c(\tau)|$ and $\mathsf{Slack}(n,D_V) \coloneqq D_V^{D_V+1}n^{-\frac{1}{400D_V^2}}$.
  \end{theorem}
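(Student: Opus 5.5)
We argue by induction on $j$, following the proof sketch of \cref{eq:equivalence-dream} but keeping track of errors with \cref{thm:backbonecorrectionproductapproximation}. Set $P_0(M)=\mathrm{Id}$ and $P_1(M)=M=\fp_1(M)$, so $\mathrm{Error}(0)=\mathrm{Error}(1)=0$. We use the recurrence $P_{j+1}(M)=P_j(M)M-P_{j-1}(M)$. Suppose inductively that $P_{j'}(M)=\fp_{j'}(M)+\mathrm{Error}(j')$ for all $j'\le j$, where each $\mathrm{Error}(j')$ is a signed sum of (i) graph matrices of norm at most $\mathsf{Slack}$-type size, (ii) disconnected shapes $\tau_1\circ\cdots\circ\tau_i|\tau_{i+1}\circ\cdots$, and (iii) their transposes and products with further backbone-correction shapes. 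We define $\mathrm{CorrectionError}(j')$ shape by shape: for every shape $\alpha$ appearing in $\mathrm{Error}(j')$ with $u_\alpha\ne v_\alpha$ and no boundary edge, add $c\,\cmp_{\bar\alpha}$ with the matching normalization. Then $\mathrm{Error}(j')+\mathrm{CorrectionError}(j')$ pairs every off-diagonal shape with its correction, exactly as in the definition of $R_W$. Diagonal terms, i.e.\ shapes with $u_\alpha=v_\alpha$, must be shown to have vanishing coefficient or to be absorbed. Every term in $\fp_j(M)$ is already uncorrected, and it is handled by the construction of $Q$, not here.

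For the inductive step, write $\fp_j(M)M=\sum c(\tau_1)\cdots c(\tau_{j+1})\,\cmp_{\tau_1\circ\cdots\circ\tau_j}\cmp_{\tau_{j+1}}$ and apply case 1 of \cref{thm:backbonecorrectionproductapproximation} termwise. This produces three groups of terms. The first is $\fp_{j+1}(M)$. The second is the backtracking terms $\sum c(\alpha')c(\tau_j)c(\tau_j^T)\cmp_{\alpha'}$. Since $M$ is a symmetric linear combination, $c(\tau_j^T)=c(\tau_j)$, so these terms sum to $\Var(M)\fp_{j-1}(M)=\fp_{j-1}(M)$. The third is the $R'$ terms $\cmp_{\cdots|\emptyset}$, together with residuals of norm at most $2D_V^{D_V}n^{-1/400D_V^2}$. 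Subtracting $P_{j-1}(M)=\fp_{j-1}(M)+\mathrm{Error}(j-1)$ gives
\[
\mathrm{Error}(j+1)=\mathrm{Error}(j)M-\mathrm{Error}(j-1)+(\text{disconnected }R'\text{ terms})+(\text{small residuals}).
\]
To push $\mathrm{Error}(j)M$ forward, we apply cases 2--4 of \cref{thm:backbonecorrectionproductapproximation} to the disconnected shapes it contains. This keeps them in the class of disconnected shapes, up to further small residuals. For the small-norm pieces we use submultiplicativity: $\|X M\|\le\|X\|\,\|M\|\le\|X\|B$.

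The correction bound then comes in two parts. Small residual graph matrices $X$ have correction $\bar X$, and that correction is itself an intersection term of the same type, obtained by adding a boundary edge. So \cref{cor:quant-norm-bound} bounds it the same way, possibly after arguing that adding the edge $\{u,v\}$ does not create main terms. Disconnected shapes have corrections bounded by \cref{clm:split-boundary-correction}, which is $\tilde O(n^{-1/2})\ll\mathsf{Slack}$. Counting is done by weight. At step $j$, the total coefficient mass of terms is at most $B^{j}$ times the number of intersection-pattern types. Each step multiplies the weight by $B$ and the number of branches (main, backtrack, $R'$, residual, the $-\mathrm{Error}(j-1)$ subtraction) by at most $4$. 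This yields the recursion $e_{j+1}\le 4B\,e_j+\ldots$, hence the bound $(4B)^j\mathsf{Slack}(n,D_V)$. The failure probability comes from a union bound over at most $D_V^{5D_V}$ shape/pattern pairs, each controlled by \cref{thm:backbonecorrectionproductapproximation} or \cref{cor:quant-norm-bound}.

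The main obstacle is making the correction of error terms compatible with further multiplication. A term such as $X\cmp_{\tau}$, where $X$ is a small-norm intersection term, is again a sum of graph matrices. Its correction is not $\bar X\cmp_\tau$; it is the sum of the corrections of the resulting shapes. So we must show that the class of ``bad'' shapes (those with a slack block, or disconnected with a floating component) is closed under right-multiplication by a backbone-correction shape, and that adding a boundary edge to such a shape keeps it bad. For the first, we would reuse the slack-block lemmas: a slack block in $I'$ persists in any extension, and a floating component stays floating. This lets us bound the corrections directly through \cref{cor:quant-norm-bound} on the final shapes, rather than through norms of products.
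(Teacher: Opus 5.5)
Your treatment of the constant-norm part matches the paper: case~1 of \cref{thm:backbonecorrectionproductapproximation} for $\fp_j(M)M$, the cancellation $\sum_\tau c(\tau)c(\tau^T)=\Var(M)=1$ against $\fp_{j-1}(M)$, cases~2--4 to push the disconnected shapes forward, and \cref{clm:split-boundary-correction} for their corrections. The gap is in the lower-order terms. You propose to repair them one shape at a time by adding boundary-edge corrections, and this cannot be completed, for three reasons.

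\emph{Diagonal terms.} Some lower-order terms are diagonal with nonzero coefficient. For $P_2(M)=M^2-\mathrm{Id}$, the backtracking terms give $\sum_\tau c(\tau)^2\prod_{i=1}^{|V(\tau)|-1}(1-i/n)\,\mathrm{Id}$. So $\mathrm{Error}(2)$ contains a nonzero multiple of $\mathrm{Id}$ of size between $1/n$ and $O(D_V^2/n)$. Patterns with $u_{\tau_1}\sim v_{\tau_{j+1}}$ give further random diagonal matrices. No added shape can cancel a diagonal entry.

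\emph{Terms that already contain the boundary edge.} In $\cmp_{\overline{\fp}_2}\cmp_{\fp_1}$, identify the new vertex with the middle vertex of $\overline{\fp}_2$. This gives $n^{-3/2}\cm_\gamma$ with $\cm_\gamma[a,b]=A_G[a,b]\sum_{t\notin\{a,b\}}A_G[a,t]$, which has norm $\tilde O(n^{-1/2})$. Its only repair is the edge-deleted shape. But $n^{-3/2}$ times that matrix contains $n^{-3/2}\mathrm{diag}(d)J$, where $d$ is the vector of $\pm1$-degrees and $J$ the all-ones matrix, and this has norm $\Theta(1)$.

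\emph{Remaining off-diagonal terms.} Here you would need $\|\cmp_{\bar\alpha}\|$ to be small. \cref{cor:quant-norm-bound} does not cover $\bar\gamma_{I'}$: its trace argument is for intersection terms of sequences of backbone-correction blocks. The added edge $\{u,v\}$ is a Hadamard product with $A_G$, not a matrix product, so it lies in no block. Moreover, you control propagated terms only through $\|XM\|\le\|X\|\|M\|$. That gives no shape-by-shape information about $XM$, so the closure property you flag stays open exactly where you need it.

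The missing idea is the one you gesture at for diagonal terms (``absorbed''), applied to all lower-order terms. $\mathrm{CorrectionError}(j)$ only has to be small in norm and make $\mathrm{Error}(j)+\mathrm{CorrectionError}(j)$ vanish on the diagonal and on non-edges. So lower-order terms can simply be subtracted. The paper takes $\mathrm{CorrectionError}(j)$ to be the boundary-edge corrections of the constant-norm disconnected terms, minus all lower-order terms of $P_j(M)$. Then $\mathrm{Error}(j)+\mathrm{CorrectionError}(j)$ is exactly the disconnected terms plus their corrections. This has zero diagonal, since the two boundary vertices receive distinct labels, and it vanishes on non-edges.

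The norm bound then needs only matrix norms:
\begin{enumerate}
\item \cref{clm:split-boundary-correction} for the disconnected corrections, with total at most $(4B)^jn^{-1/4}$;
\item \cref{thm:backbonecorrectionproductapproximation} for the residual created each time a constant-norm term of $P_{j'-1}(M)$ is multiplied by a term of $M$;
\item $\|M\|<3$ from \cref{cor:linear-combination-norm-bound} for each later multiplication. Your $\|M\|\le B$ is off, since each $\|\cmp_\tau\|\approx 2$.
\end{enumerate}
Together these give $\sum_{j'=2}^{j}4^{j-j'}(4B)^{j'-1}\cdot 2BD_V^{D_V}n^{-1/(400D_V^2)}\le(j-1)(4B)^jD_V^{D_V}n^{-1/(400D_V^2)}$. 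With this choice, the closure properties you identify as the main obstacle are never needed.
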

  
\begin{proof}
We use the recurrence relation $P_{j}(M) = P_{j-1}(M)M - P_{j-2}(M)$ and make the following observations:
\begin{enumerate}
\item As discussed in \cref{sec:graph-mat-cheb}, the connected constant norms which don't cancel with each other are precisely the $j$-wise concatenations of terms in $M$.
\item By \cref{thm:backbonecorrectionproductapproximation}, each constant norm term in $P_{j-1}(M)$ gives at most $3$ constant norm terms when multiplied by $M$. Using this, it is easy to show by induction that the sum of the magnitudes of the constant norm terms in $P_j(M)$ is at most $(4B)^j$.
\item By \cref{cor:linear-combination-norm-bound}, $||M||$ is $(2+o_n(1))$ with high probability. More precisely, with probability at least $1 -  n2^{-n^{\frac{1}{200D_V^2}}}$, $||M|| < 2\left(1 + n^{-\frac{1}{400D_V^2}}\right)$. As long as this bound holds, whenever we multiply a smaller order term by $M$, this multiplies its norm by at most $3$.
\end{enumerate}
We take $CorrectionError(j)$ to be the corrections for the constant norm terms in $P_j(M)$ minus the lower order terms of $M$. Observe that 
\begin{enumerate}
\item 
By \cref{clm:split-boundary-correction}, for each constant norm term $\cmp_{\al}$ where $\al$ is disconnected, with probability at least $1 - n2^{-n^{\frac{1}{400D_V^2}}}$, 
$
||\cmp_{\bar{\al}}|| < \frac{n^{\frac{1}{800D}}}{\sqrt{n}}\left(10\sqrt{ln(2)}D_V\right)^{D_V} < \frac{1}{\sqrt[4]{n}}
$.

As long as these bounds hold for all correction terms $\cmp_{\bar{\al}}$, the norm of the sum of these correction terms is at most $\frac{(4B)^j}{\sqrt[4]{n}}$.
\item By \cref{thm:backbonecorrectionproductapproximation}, whenever we multiply a constant norm term in $P_{j'-1}(M)$ by a term in $M$, with probability at least $1 - n{D_V^{D_V}}2^{-n^{\frac{1}{400D_V^2}}}$, the norm of the sum of the smaller order terms is at most $2D_V^{D_V}n^{-\frac{1}{400D_V^2}}$. If this bound holds for all such products, the norm of the sum of the error terms is at most 
$
\sum_{j'=2}^{j}{4^{j-j'}(4B)^{j'-1}}2BD_V^{D_V}n^{-\frac{1}{400D_V^2}} \leq (j-1)(4B)^{j}D_V^{D_V}n^{-\frac{1}{400D_V^2}}
$.

To see this, consider error terms which are obtained by taking lower order terms of  $P_{j'-1}(M)M$ and then potentially multiplying by $M$ some number of times later on. Observe that the sum of the magnitudes of the coeffecients of the constant norm terms in $P_{j'-1}(M)$ is at most $(4B)^{j'-1}$, the sum of the magnitudes of the coefficients of the terms in $M$ is $B$, and if we iteratively apply the recurrence relation $P_j(M) = P_{j-1}(M)M - P_{j-2}(M)$, each step either reduces $j$ by $1$ and multiplies by $M$ (which multiplies the norm of an error term by at most $3$) or reduces $j$ by $2$ so these potential later multiplications give a factor of at most $4^{j-j'}$.
\end{enumerate}
Putting everything together, if all of the above norm bounds hold then $||CorrectionError(j)|| \leq (4B)^{j}D_V^{D_V+1}n^{-\frac{1}{400D_V^2}}$. Finally, to bound the probability that one of our norm bounds fails, we bound the number of cases we need to consider.
\begin{proposition}
There are at most $(D_V)^{4D_V}$ possible pairs of the form $(\al,\tau_{j'})$ where $\al = \tau_1 \circ \ldots \circ \tau_{j'-1}$ or $\al = \tau_! \circ \ldots \circ \tau_i | \tau_{i+1} \circ \ldots \circ \tau_{j'-1}$ and $j' \leq j$.
\end{proposition}
\begin{proof}
Since $\al \circ \tau_{j'}$ is outerplanar, $\al \circ \tau_{j'}$ has at most $2|V(\al \circ  \tau_{j'})| - 3 \leq 2D_v$ edges. To specify $\al \circ \tau_{j'}$ as well as how the vertices are split up among the components $\tau_1,\ldots,\tau_{j'}$, we can specify the number of vertices in $\al \circ \tau_{j'}$, order the vertices according to the backbones of the components, specify the edges, and then go through the vertices one by one and specify whether they are a boundary vertex or not.
\end{proof}
Taking a union bound, the probability that one or more of our norm bounds fail is at most $n{D_V^{5D_V}}2^{-n^{\frac{1}{400D_V^2}}}$.
\end{proof}

%% file: appendix.tex
\section{Equivalence of Lower and Upper Bounds for the Theta Function on \texorpdfstring{$G(n,\frac{1}{2})$}{G(n,1/2)}}
\label{app:dual-lowerbound}
\begin{proposition}[From $M_G$ to $\pE$ for Maximum Clique]
\label{prop:ind-set-from-dual}
For any graph $G$, suppose $M_G$ is dual-feasible with objective value $\lambda$, namely
\[
M_G[i,j]=1 \qquad \text{for } i=j \text{ or } (i,j)\notin E(G),
\]
and
\[
\lambda I_n - M_G \succeq 0.
\]
Then there exists a valid degree-$2$ Sum-of-Squares pseudo-expectation $\pE'$ for the Maximum Clique problem on $G$ with value
$
\frac{n}{\lambda}.
$
Moreover, $\pE'$ assigns uniform weights across vertices.
\end{proposition}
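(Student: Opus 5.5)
The plan is to write down the degree-$2$ pseudo-moments explicitly as an affine function of the dual slack matrix $\lambda I_n - M_G$, and then check PSDness of the full moment matrix by a Schur complement. I read "Maximum Clique on $G$" as the program whose constraints are $x_i^2 = x_i$ for all $i$ and $x_ix_j = 0$ for every nonedge $\{i,j\}\notin E(G)$ with $i\neq j$. These are exactly the positions where $M_G$ is pinned to $1$. Throughout I assume $M_G$ is symmetric, as it is in \cref{def:dual-sdp}.

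\textbf{Construction.} Fix a scalar $a>0$ and set
\[
Y \coloneqq a\left(\lambda I_n - M_G + J\right),
\]
where $J$ is the all-ones matrix. Define $\pE'$ on monomials of degree at most $2$ by
\[
\pE'[1]=1,\qquad \pE'[x_i] = Y[i,i],\qquad \pE'[x_ix_j] = Y[i,j].
\]
The entries of $Y$ are forced by the constraints:
\begin{enumerate}
\item On the diagonal, $M_G[i,i]=1$, so $Y[i,i] = a\lambda$. Hence $\pE'[x_i^2]=\pE'[x_i]=a\lambda$ for every $i$, which gives the Booleanity constraint and uniform weights.
\item On a nonedge, $M_G[i,j]=1$, so $Y[i,j]= a(-1+1)=0$. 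Hence $\pE'[x_ix_j]=0$, which gives the clique constraints.
\end{enumerate}
Symmetry of $Y$ makes $\pE'$ well defined on commuting monomials. It is therefore a linear functional on degree-$\le 2$ polynomials that respects the constraint ideal.

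\textbf{PSDness.} The degree-$2$ moment matrix is
\[
\begin{pmatrix} 1 & p\mathbf 1^\top \\ p\mathbf 1 & Y\end{pmatrix}, \qquad p = a\lambda.
\]
By the Schur complement, this matrix is PSD if and only if $Y - p^2 J \succeq 0$. Expanding,
\[
Y - p^2J = a(\lambda I_n - M_G) + (a - a^2\lambda^2)J .
\]
Both summands are PSD provided $a\le 1/\lambda^2$, because $\lambda I_n - M_G\succeq 0$ by hypothesis and $J\succeq 0$. Choosing $a = 1/\lambda^2$ gives $p = 1/\lambda$. The objective is then $\sum_i \pE'[x_i] = n/\lambda$, as claimed.

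\textbf{Main obstacle.} The only real choice is the $+J$ shift. Taking $Y$ proportional to $\lambda I_n - M_G$ alone would put $-1$ on the nonedges rather than $0$, violating the clique constraints. Adding a multiple of $J$ cancels those entries without destroying PSDness. The remaining subtlety is the Schur-complement bookkeeping: the rank-one term $p^2J$ must be absorbed by the $aJ$ term, and this is exactly what pins down $a=1/\lambda^2$ and hence the value $n/\lambda$.
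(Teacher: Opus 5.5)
Your proposal is correct and builds exactly the paper's moment matrix. With $a=1/\lambda^2$, your $Y=\frac{1}{\lambda^2}(\lambda I_n-M_G+J)$ has entries $\frac{1}{\lambda^2}+\frac{1}{\lambda}\delta_{ij}-\frac{1}{\lambda^2}M_G[i,j]$. These match the paper's $\pE'[i,j]=(k/n)^2+(k/n)Q[i,j]$ with $k=n/\lambda$ and $Q=\frac{1}{\lambda}(\lambda I_n-M_G)$, and your first-row entries $p=1/\lambda$ match its $k/n$. Your Schur-complement check is the same argument as the paper's explicit factorization $\pE'=L\,\mathrm{diag}(1,Q)\,L^\top$, which is just the block $LDL^\top$ form of the same Schur complement $\frac{1}{\lambda}Q=\frac{1}{\lambda^2}(\lambda I_n-M_G)$.
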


Notably, the converse direction does not hold in general: it is not always the case that a valid degree-$2$ SoS moment matrix $\pE$ for Max-Clique or Max-Independent-Set can be converted into a dual witness for the $\vartheta$ function (\cref{def:dual-sdp}), while such a transformation exists for $\pE$ assigning uniform weights across vertices (e.g., $\pE[x_i]=k/n$ for all $i$ as below).

In particular, although it is known that $\vartheta(G)\ge \sqrt{n}$ via a non-explicit argument (corresponding to a degree-$2$ SoS solution of value at least $\sqrt{n}$), such a solution need not satisfy the symmetry condition above, and therefore need not yield a dual witness certifying an upper bound of $\sqrt{n}$ via this route.

From this perspective, we view these two problems as searching for equivalent matrices as we further constrain our pseudo-expectation to satisfy the extra symmetry conditions. Therefore, it is still interesting for us to obtain a lower bound of \emph{only} $(1-o_n(1))\cdot \sqrt{n}$ via an explicit construction despite the known bound of $\sqrt{n}$, since our solution crucially \emph{can} be transformed into a dual witness that additionally certifies an upper bound of $(1+o_n(1)) \cdot \sqrt{n}$.

\begin{proof}[Proof of \cref{prop:ind-set-from-dual}]
Let
\[
k \coloneqq \frac{n}{\lambda},
\qquad\text{and}\qquad
Q \coloneqq \frac{1}{\lambda}(\lambda I_n - M_G).
\]
Since $\lambda I_n-M_G \succeq 0$, we have $Q \succeq 0$.

Next, define
\[
L \coloneqq \frac{k}{n}L_{1,0} + \sqrt{\frac{k}{n}}\,L_{1,1} + L_{0,0},
\]
where these are $(n+1)\times(n+1)$ matrices with nonzero entries given by
\begin{align*}
L_{1,0}[i,\emptyset] &= 1 \qquad &&\text{for } i\in[n],\\
L_{1,1}[i,i] &= 1 \qquad &&\text{for } i\in[n],\\
L_{0,0}[\emptyset,\emptyset] &= 1.
\end{align*}
We define the degree-$2$ moment matrix by
\[
\pE' \;:=\; L
\begin{pmatrix}
1 & 0\\
0 & Q
\end{pmatrix}
L^\top .
\]
By construction, this matrix is positive semidefinite.

We now verify that $\pE'$ is a feasible degree-$2$ pseudo-expectation for Maximum Clique on $G$ and compute its value.

\paragraph{Moment entries.}
A direct expansion gives
\[
\pE'[i,\emptyset]=\pE'[\emptyset,i]=\frac{k}{n}
\qquad\text{for all } i\in[n],
\]
and
\[
\pE'[i,j]
=
\left(\frac{k}{n}\right)^2
+
\frac{k}{n}\,Q[i,j]
\qquad\text{for all } i,j\in[n].
\]
Since
\[
Q[i,j]=\delta_{ij}-\frac1\lambda M_G[i,j],
\]
we may rewrite this as
\[
\pE'[i,j]
=
\left(\frac{k}{n}\right)^2
+
\frac{k}{n}\delta_{ij}
-
\frac{k}{n}\cdot \frac1\lambda M_G[i,j].
\]

\paragraph{Booleanity and symmetry.}
Symmetry is immediate since $\pE'$ is a symmetric matrix. For the diagonal entries, using $M_G[i,i]=1$ and $k/n=1/\lambda$, we obtain
\[
\pE'[i,i]
=
\left(\frac{k}{n}\right)^2
+
\frac{k}{n}\left(1-\frac1\lambda\right)
=
\left(\frac{k}{n}\right)^2 + \frac{k}{n} - \left(\frac{k}{n}\right)^2
=
\frac{k}{n}
=
\pE'[i,\emptyset].
\]
Hence $\pE'[x_i^2]=\pE'[x_i]$ for every $i$, as required in the degree-$2$ SoS relaxation.

\paragraph{Clique constraints.}
For $i\neq j$, we have
\[
\pE'[i,j]
=
\left(\frac{k}{n}\right)^2 - \frac{k}{n}\cdot \frac1\lambda M_G[i,j].
\]
Since $k/n=1/\lambda$, this simplifies to
\[
\pE'[i,j]
=
\left(\frac{k}{n}\right)^2 \bigl(1-M_G[i,j]\bigr).
\]
If $(i,j)\notin E(G)$, then dual feasibility gives $M_G[i,j]=1$, and therefore
\[
\pE'[x_i x_j]=0.
\]
Thus the clique constraints are satisfied.

\paragraph{Objective value.}
Finally,
\[
\sum_{i=1}^n \pE'[x_i]
=
\sum_{i=1}^n \pE'[i,\emptyset]
=
n\cdot \frac{k}{n}
=
k
=
\frac{n}{\lambda}.
\]

Therefore $\pE'$ is a valid degree-$2$ Sum-of-Squares pseudo-expectation for Maximum Clique on $G$ with value $n/\lambda$.
\end{proof}

\begin{claim}[Primal Lower Bound from an Upper Bound Witness]\label{claim:lower-bound-from-upper}  Let $M_G$ be a matrix that w.h.p. satisfies the dual SDP requirements in~\cref{def:dual-sdp} with objective value $\lambda$. Then we have $\vartheta(G)\geq \frac{n}{\lambda}   $ w.h.p.  \end{claim}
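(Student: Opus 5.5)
The plan is to combine \cref{prop:ind-set-from-dual} with the standard complementation relationship between cliques and independent sets, and then use the symmetry of $G(n,\tfrac12)$ under complementation.

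First, on the high-probability event where $M_G$ is dual-feasible with value $\lambda$, \cref{prop:ind-set-from-dual} gives a valid degree-$2$ SoS pseudo-expectation $\pE'$ for Maximum Clique on $G$ with value $n/\lambda$. Maximum Clique on $G$ is exactly Maximum Independent Set on the complement $\bar G$: the clique constraints $\pE'[x_ix_j]=0$ for $\{i,j\}\notin E(G)$ are precisely the independent-set constraints for edges of $\bar G$.

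Next, we convert $\pE'$ into a primal solution of \cref{def:primal-sdp} for $\bar G$. Since the moment matrix (indexed by $\emptyset$ and $[n]$) is PSD, it is a Gram matrix of vectors $v_0,v_1,\dots,v_n$ with:
\begin{itemize}
\item[(i)] $\langle v_0,v_0\rangle=1$;
\item[(ii)] $\langle v_i,v_i\rangle=\pE'[x_i^2]=\pE'[x_i]=\langle v_i,v_0\rangle$;
\item[(iii)] $\langle v_i,v_j\rangle=0$ on every edge of $\bar G$.
\end{itemize}
The objective is then $\sum_i\langle v_i,v_0\rangle=n/\lambda$. This is the equivalence recorded in the footnote and \cref{sec:lovaszequivalence}, so $\vartheta(\bar G)\ge n/\lambda$.

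Finally, we transfer the statement from $\bar G$ to $G$. If $G\sim G(n,\tfrac12)$, then $\bar G$ has the same distribution. The witness construction is applied to $\bar G$, i.e.\ we use the hypothesis that w.h.p.\ $M_{\bar G}$ is a dual witness with value $\lambda$ for $\bar G$. Running the argument above with the roles of $G$ and $\bar G$ swapped then yields $\vartheta(G)\ge n/\lambda$ w.h.p.

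The only delicate point is this bookkeeping. The proposition produces a clique solution, while the primal $\vartheta$ as defined is an independent-set program. So the step that needs care is checking that the w.h.p.\ hypothesis is applied to the correct graph, and that the complementation symmetry of $G(n,\tfrac12)$ preserves the high-probability event. Everything else is a direct invocation of \cref{prop:ind-set-from-dual}.
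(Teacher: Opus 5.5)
Your proposal is correct and follows essentially the same route as the paper. You invoke \cref{prop:ind-set-from-dual} to get a clique pseudo-expectation of value $n/\lambda$ on $G$, read it as an independent-set solution for $\bar G$, and transfer to $G$ using the complementation invariance of $G(n,\tfrac12)$.

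The differences are only in presentation. You spell out the Gram-vector conversion into a feasible point of \cref{def:primal-sdp}, where the paper simply cites the equivalence of $\vartheta$ with degree-$2$ SoS for independent set. You also apply the w.h.p.\ hypothesis to $\bar G$, whereas the paper transfers the conclusion event $\{\vartheta(\bar G)\ge n/\lambda\}$ by equality in distribution; the two amount to the same thing.
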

\begin{proof}[Proof of \cref{claim:lower-bound-from-upper}]
	
The proof follows in two steps. We first observe that the matrix $M_G$ can be transformed into a degree-$2$ SoS lower bound for maximum clique on $G$ via \cref{prop:ind-set-from-dual}.

	Since a pseudo-expectation operator for maximum clique on $G$ is at the same time a pseudo-expectation for maximum independent set on the complement graph $\bar{G}$ with the same objective value, and $M_G$ is dual-feasible with high probability for $G\sim G(n,\frac{1}{2})$,
	$\pE[\mathsf{IndSet}(\bar{G})] \geq \frac{n}{\lambda}  $ with high probability. 
	
	Finally, since $G(n,\frac{1}{2})$ is invariant under complementation, i.e., graph samples $G$ and $\bar{G}$ appear with the same probability, this also implies the pseudo-expectation has independent-set value $\pE[\mathsf{IndSet}(G)] \geq \frac{n}{\lambda}  $
	with high probability for $G\sim G(n,\frac{1}{2})$, showing $ 
	\vartheta(G)\geq \frac{n}{\lambda}
$ via the equivalent formulation of the theta function as the degree-$2$ SoS relaxation for maximum independent set.
\end{proof}

\input{theta-function/formal_dual_veritifcation.tex}

\section{Deferred Details for Our Iterative Procedure}

\input{recurrence.tex}

%% file: theta-function/formal_dual_veritifcation.tex
\section{Formal Verification of the Dual Witness Matrix}\label{sec:final-parameter}
\begin{lemma}[Verification of Non-Edge Constraints]\label{lem:edge-constraints-verification}
For any $\eps>0$, the matrix
	    \[
M_{\mathrm{final}}
\coloneqq
\eps I+F_D(\widetilde{Q})
+\chebyerror+\mathsf{EarlyTerm}
\]
satisfies the non-edge constraints of \cref{def:concrete-target} after normalizing its diagonal.
\end{lemma}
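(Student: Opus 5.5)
The plan is to expand $F_D(\widetilde{Q})$ in the Chebyshev basis and show that every off-diagonal contribution except the leading $\widetilde{Q}_0$ term comes in pairs $\cmp_\beta+\cmp_{\bar\beta}$, or is supported only on edges of $G$. Concretely, I write $F_D(x)=C_F+x+\sum_{j=2}^{D}b_jP_j(x)$. By \cref{thm:formal-equiv-cheby-shape} applied to $\widetilde{Q}$ (a symmetric linear combination of backbone-correction shapes with variance $1$ by the choice of $c_\gamma$), we have $P_j(\widetilde{Q})=\fp_j(\widetilde{Q})+\mathrm{Error}(j)$, so
\[
F_D(\widetilde{Q})=C_F I+\widetilde{Q}+\sum_{j=2}^{D}b_j\fp_j(\widetilde{Q})+\sum_{j=2}^{D}b_j\,\mathrm{Error}(j).
\]
By that theorem, $\mathrm{Error}(j)+\mathrm{CorrectionError}(j)$ already satisfies the $R_W$ constraints. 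Hence the last sum together with $\chebyerror$ vanishes on the diagonal and on non-edges. For $j=1$ we have $P_1(x)=x$ with no error, so only $j\ge 2$ matters.

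The main step is bookkeeping between $\widetilde{Q}=\widetilde{Q}_{t^*}$ and $\sum_j b_j\fp_j(\widetilde{Q})$. I split $\widetilde{Q}=\widetilde{Q}_0+\sum_{\bar\beta}c(\bar\beta)\cmp_{\bar\beta}$, where each $\bar\beta$ was introduced at some level $i+1\le t^*$ as the correction of some $\beta\in\textsf{New-}\calP_j(\widetilde{Q}_i)$ with $j\le D$ and $c(\bar\beta)=c(\beta)=b_j\prod_t c(\tau_t)$. Each $\beta\in\calP_j(\widetilde{Q})$ with $2\le j\le D$ is a proper concatenation whose factors all lie in some $\calB(\widetilde{Q}_{i})$; I let $i$ be the minimal such level. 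If $i<t^*$, then $\beta\in\textsf{New-}\calP_j(\widetilde{Q}_i)$ and $\bar\beta\in\calB(\widetilde{Q})$ with the matching coefficient. Matching these gives paired sums $b_j c(\beta)(\cmp_\beta+\cmp_{\bar\beta})$.

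These vanish on non-edges, since $\cm_{\bar\beta}[a,b]=A_G[a,b]\cm_\beta[a,b]$ for $a\ne b$. They also vanish on the diagonal, since $u_\beta\ne v_\beta$ under injectivity, so $\cm_\beta$ and $\cm_{\bar\beta}$ have zero diagonal. I also need that the map $\beta\mapsto\bar\beta$ is injective and that no $\bar\beta$ is matched twice. Both follow because a backbone-correction shape determines its backbone by removing the boundary edge, and the decomposition into $\tau_t$'s is determined by where the concatenation cuts. The unmatched terms are those $\beta$ newly formed at level $t^*$; I define $\mathsf{EarlyTerm}:=\sum_{j\ge2}\sum_{\beta\in\textsf{New-}\calP_j(\widetilde{Q}_{t^*})}b_j c(\beta)\cmp_{\bar\beta}$, which closes them into pairs.

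The remaining piece is $\widetilde{Q}_0=\frac{(1+c_\gamma)C_F}{\sqrt n}A_G$. It has zero diagonal and is the allowed $\frac{c_k}{\sqrt n}A_G$ term. The total is therefore $C_FI+\frac{(1+c_\gamma)C_F}{\sqrt n}A_G+R$ with $R$ vanishing on the diagonal and on non-edges. Adding $\eps I$ and dividing by $C_F+\eps$ gives the form of \cref{def:concrete-target}; PSDness is not part of this lemma. I expect the main obstacle to be the exact correspondence between the level structure and the concatenations that appear in $F_D(\widetilde{Q})$. In particular, the truncation $j\le D$ at every level must match the outer truncation of $F_D$, which I would verify carefully by induction on $i$. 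I would also check that disconnected terms $\al=\tau_1\circ\cdots|\cdots$ are handled inside $\mathrm{Error}(j)$, as the proof of \cref{thm:formal-equiv-cheby-shape} asserts.
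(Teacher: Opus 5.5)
Your proposal is correct and follows the paper's proof. Both split $F_D(\widetilde{Q})$ into the same pieces: the leading $C_FI+\widetilde{Q}_0$; the concatenations built only from shapes present before the last update, which are paired with corrections already in $\widetilde{Q}$ (the paper's term $A$); the concatenations using level-$t^*$ shapes, which $\mathsf{EarlyTerm}$ closes off (term $B$); and the Chebyshev error, which $\chebyerror$ cancels via \cref{thm:formal-equiv-cheby-shape} (term $E$). Your minimal-level bookkeeping and injectivity check just spell out what the paper asserts "by construction of our iterative process". One small fix: in the pairing and in your definition of $\mathsf{EarlyTerm}$ the coefficient must be $b_j\prod_t c(\tau_t)$, so write $c(\beta)=\prod_t c(\tau_t)$ as in \cref{def: j-way-product}, not the iterative-step value that already contains $b_j$, and restrict the sum to $2\le j\le D$.
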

Here the corrections are:
\begin{enumerate}
	\item $\chebyerror(\widetilde{Q})\coloneqq\sum_{j=1}^D b_j\,\mathrm{CorrectionError}(j)$, using the corrections from \cref{thm:formal-equiv-cheby-shape} for the Chebyshev expansion of $F_D$;
	\item $\mathsf{EarlyTerm}(\widetilde{Q})$, the correction for terminating at $Q_{t^*}$, whose norm is bounded in \cref{lem:corr-small}.
\end{enumerate}

\begin{proof}
	Following our notation for the iterative process, define $\textsf{Old-}\calP_j(\widetilde{Q})$ and $\textsf{New-}\calP_j(\widetilde{Q})$ to be the collections of proper $j$-way concatenations that do not involve shapes from the final update at $Q_{t^*}$, and those that do, respectively.
	\begin{align*}
F_D(\widetilde{Q})
&=C_FI+b_1\widetilde{Q}+\sum_{j=2}^D b_j\fp_j(\widetilde{Q})
  +\sum_{j=1}^D b_j\bigl(P_j(\widetilde{Q})-\fp_j(\widetilde{Q})\bigr)\\
&=\underbrace{C_FI+b_1\widetilde{Q}
  +\sum_{j=2}^D b_j\fp_j(\textsf{Old-}\widetilde{Q})}_{A}
  +\underbrace{\sum_{j=2}^D b_j\fp_j(\textsf{New-}\widetilde{Q})}_{B}\\
&\qquad+\underbrace{\sum_{j=1}^D b_j\bigl(P_j(\widetilde{Q})-\fp_j(\widetilde{Q})\bigr)}_{E}.
\end{align*}
	We discuss these three terms separately. By the construction of our iterative process, $A$ satisfies the non-edge constraint, as any term in $\tilde{Q}$ is a backbone-correction shape for some $j$-way concatenation using shapes prior to the final update.
	
	The correction $\mathsf{EarlyTerm}(\widetilde{Q})$ handles $B$; its norm is bounded in \cref{lem:corr-small} using the truncated-variance estimate of \cref{lem:truncated-var}.
	
	For the third term, \cref{thm:formal-equiv-cheby-shape} shows that $E+\chebyerror$ vanishes on the diagonal and nonedges.
Thus, using $b_1=1$ and the initialization $\widetilde{Q}_0=(1+c_\gamma)C_F\cmp_{\fp_1}$, the corrected sum has the form
\[
F_D(\widetilde{Q})+\chebyerror+\mathsf{EarlyTerm}
=C_FI+\frac{(1+c_\gamma)C_F}{\sqrt n}A_G+R,
\]
where $R$ is symmetric and vanishes on the diagonal and nonedges. Consequently, $M_{\mathrm{final}}$ has diagonal $C_F+\eps$, and division by $C_F+\eps$ gives the required non-edge constraints.
	\end{proof}
	
\begin{restatable}{lemma}{finalcorrsmall}
\label{lem:corr-small}
For $2\leq t^*,D\leq O(\log\log n/\log\log\log n)$ with a sufficiently small implied constant, the early-termination correction satisfies, with high probability,
\[
\|\mathsf{EarlyTerm}(\widetilde{Q})\| = O\!\left(D^{-3/2}+(t^*)^{-3/2}\right).
\]
In particular, this is $O(1/t^*)$ when $D=t^*$.
\end{restatable}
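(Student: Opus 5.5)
We take $\mathsf{EarlyTerm}(\widetilde{Q})$ to be the correction of the term $B=\sum_{j=2}^D b_j\fp_j(\textsf{New-}\widetilde{Q})$ from the proof of \cref{lem:edge-constraints-verification}:
\[
\mathsf{EarlyTerm}(\widetilde{Q})=\sum_{j=2}^D b_j\sum_{\beta\in\textsf{New-}\calP_j(\widetilde{Q})}c(\beta)\,\cmp_{\bar\beta}.
\]
These are exactly the corrections the iteration would add at step $t^*+1$. The plan is to note that this is a linear combination of normalized backbone-correction matrices, and then to bound its norm with \cref{cor:linear-combination-norm-bound}. That corollary gives, with high probability, $\|\mathsf{EarlyTerm}\|\le (2+o_n(1))\sqrt{\Var(\mathsf{EarlyTerm})}$. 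Its hypotheses hold because every shape has at most $D_V\le D\cdot(\text{max base-shape size})$ vertices, which is $\poly(D)^{t^*}$. The $\ell_1$ mass $B$ of the coefficients is at most $\exp(\poly(D)^{t^*})$. Hence $n\ge B^{100D_V}(4D_V)^{100D_V^2}$ holds once $t^*,D\le c\log\log n/\log\log\log n$ with $c$ small. So everything reduces to bounding the variance.

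For the variance, distinct $\beta$ give distinct $\bar\beta$, so
\[
\Var(\mathsf{EarlyTerm})=\sum_{j=2}^D b_j^2\bigl(S_{t^*}^j-S_{t^*-1}^j\bigr),
\]
where $S_i=\sum_{\tau\in\calB(\widetilde Q_i)}c(\tau)^2$. This is the same telescoping identity as in the heuristic variance recurrence (\cref{claim:variance-evolution-recurrence}), now with the truncated $D$ and initial value $(1+c_\gamma)^2C_F^2$. It is also exactly $S_{t^*+1}-S_{t^*}$. By \cref{lem:truncated-var}, $c_\gamma$ is chosen so that $S_{t^*}\le 1$, so $S_{t^*}^j-S_{t^*-1}^j\le j\,(S_{t^*}-S_{t^*-1})$. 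This gives
\[
\Var(\mathsf{EarlyTerm})\le\Bigl(\sum_j j\,b_j^2\Bigr)\bigl(S_{t^*}-S_{t^*-1}\bigr).
\]
Here $\sum_j j b_j^2<\infty$ because the Chebyshev coefficients of $F(x)=2x\,1_{x\ge0}$ decay like $b_j=O(j^{-2})$, coming from the kink at $0$.

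The main obstacle is the quantitative convergence rate of the recurrence $S_{i+1}=S_0+\sum_{j\le D}b_j^2 S_i^j$ toward its fixed point $1$. The map $g(s)=S_0+\sum_j b_j^2 s^j$ has $g'(1)=\sum_j j b_j^2$, and since $g(1)=1$, $g$ is convex, and $g(0)>0$, the fixed point $1$ may be tangential or repelling rather than attracting. Convergence is therefore not geometric, and I expect only a polynomial rate. I would analyze the increments $\Delta_i=S_{i}-S_{i-1}$ near $1$ through the Taylor expansion of $g$ around the fixed point. If $g'(1)=1$, the standard recursion $1-S_{i+1}\approx (1-S_i)-\tfrac{g''(1)}{2}(1-S_i)^2$ for a parabolic fixed point yields $1-S_i=\Theta(1/i)$ and $\Delta_i=\Theta(i^{-2})$. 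The variance is then $O((t^*)^{-2})$ up to the $D$-truncation contribution $\sum_{j>D}b_j^2=O(D^{-3})$, and taking the square root gives $O((t^*)^{-1}+D^{-3/2})$. To reach the claimed $(t^*)^{-3/2}$, I would use the decay $b_j^2\sim j^{-4}$ to compute the exact local behavior of $g$ at $1$; the singular tail of $\sum b_j^2 s^j$ gives a non-analytic correction of type $(1-s)^{3}\log$, which sharpens the rate. This is the delicate step, and I would settle it using the same bounds \cref{lem:truncated-var} establishes for $c_\gamma=O(D^{-3})$.
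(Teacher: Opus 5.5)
Your reduction is the same as the paper's. $\mathsf{EarlyTerm}$ is a combination of normalized backbone-correction matrices, so \cref{cor:linear-combination-norm-bound} bounds its norm by roughly twice the square root of its coefficient mass. That mass is $\sum_{j=2}^D b_j^2(S_{t^*}^j-S_{t^*-1}^j)$, and $x^j-y^j\le j(x-y)$ bounds it by $\beta(1-S_{t^*-1})$, where $\beta=\sum_{j\ge2}jb_j^2$ and $S_{t^*}=1$ by the choice of $c_\gamma$.

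\textbf{The gap is the convergence rate.} You never evaluate $\beta=g'(1)$. Instead you guess that $1$ is a parabolic or repelling fixed point with only polynomial convergence. The explicit coefficients from the proof of \cref{lem:bi-square-tail} are $b_j=0$ for odd $j\ge3$ and $b_j^2=64/(\pi^2(j-1)^2(j+3)^2)$ for even $j$. They give $\beta=16/\pi^2-1\approx0.62<1$, so $1$ is strictly attracting. The repelling case was never possible anyway: if $\beta>1$, convexity of $g$ and $g(0)=C_F^2>0$ would produce a fixed point $s^*<1$ with $S_i<s^*$ for all $i$, so the variance could not approach $1$.

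\textbf{How $\beta<1$ finishes the argument.} Let $R_D=\sum_{j>D}b_j^2=O(D^{-3})$. The untuned initialization $a_0=C_F^2$ satisfies $1-S_{i+1}(a_0)\le R_D+\beta(1-S_i(a_0))$. Hence $1-S_{t^*-1}(a_0)\le\beta^{t^*-1}(1-C_F^2)+R_D/(1-\beta)$. Since $S_{t^*-1}$ is monotone in the initial variance and the tuned initialization is at least $a_0$, the same bound holds there. This gives variance $O(D^{-3}+\beta^{t^*})$, which is $O(D^{-3}+(t^*)^{-3})$. That is exactly the last assertion of \cref{lem:truncated-var}. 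The paper's proof is just that assertion combined with \cref{cor:linear-combination-norm-bound}. You cite \cref{lem:truncated-var} only for the choice of $c_\gamma$, but its final bound already closes your argument.

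\textbf{Your fallback would not reach the statement even on its own terms.} Under the parabolic hypothesis you get variance $O((t^*)^{-2})$, hence norm $O(1/t^*)$. That covers only the ``in particular'' clause, not the stated $O(D^{-3/2}+(t^*)^{-3/2})$. The proposed sharpening through the non-analytic $(1-s)^3\log$ part of $g$ cannot work. At a parabolic fixed point the error $e_i=1-S_i$ satisfies $e_{i+1}=e_i-\tfrac{g''(1)}{2}e_i^2+O(e_i^3\log(1/e_i))$, with $g''(1)=\sum_j j(j-1)b_j^2$ finite and positive. So $e_i\sim 2/(g''(1)\,i)$ regardless of the higher-order terms. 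The missing idea is the single computation $\sum_{j\ge2}jb_j^2<1$, which makes the variance recursion a contraction.
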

\begin{proof}
Write $S_i=\sum_{\tau\in\calB(\widetilde{Q}_i)}c(\tau)^2$. The exact-normalization and early-termination estimate in \cref{lem:truncated-var} gives
\[
\Var(\mathsf{EarlyTerm}(\widetilde{Q}))
=\sum_{j=2}^D b_j^2\bigl(S_{t^*}^j-S_{t^*-1}^j\bigr)
=O\!\left(D^{-3}+(t^*)^{-3}\right).
\]
Every added shape is a backbone-correction shape within the size bound, so \cref{cor:linear-combination-norm-bound} gives the asserted norm estimate.
\end{proof}

\paragraph{Picking the Final Parameters}
\FinalParamChoiceSpec*
\begin{proof}
We take $t^*=D$ and use the slack function from \cref{thm:formal-equiv-cheby-shape} with $\epsilon=n^{-1}$. Set
\[
B\coloneqq\sum_{\tau\in\calB(\widetilde{Q})}|c(\tau)|,
\qquad D_V\coloneqq2D^{D+1}+4.
\]
The truncation in \cref{def:inner-trunc} gives base shapes with at most $D^D+1$ vertices, so $D_V$ bounds all products and corrections through degree $D$. Variance normalization and \cref{lem:backbone-count} give
\[
B\leq\sqrt{|\calB(\widetilde{Q})|}\leq3^{D^D+1}.
\]
We work on the common event of the norm estimates below; their parameter conditions are verified at the end.

\paragraph{Bounding the Negative Spectrum of $F_D$}
By \cref{cor:linear-combination-norm-bound}, since $\Var(\widetilde{Q})=1$,
\[
\|\widetilde{Q}\|
\leq2+2\frac{\log_2(n^2)}{n^{1/(200D_V^2)}}
\leq2+2\mathsf{Slack}(n,D_V,n^{-1}).
\]
Taking $\delta_{sp}=2\mathsf{Slack}(n,D_V,n^{-1})$, \cref{prop:FD-psd} gives
\[
F_D(\widetilde{Q})
\succeq-\tilde{O}\!\left(\frac1D+
\delta_{sp}\poly(D)\exp(D\sqrt{\delta_{sp}})\right)I.
\]

\paragraph{Final Correction Norm}
By \cref{thm:formal-equiv-cheby-shape}, summing over $D$ levels of expansion gives us
\begin{align*}
\|\chebyerror\|
&\leq(O(B\cdot D))^D\mathsf{Slack}(n,D_V,n^{-1}),
\end{align*}
On the other hand, \cref{lem:corr-small} gives 
\[
\|\mathsf{EarlyTerm}\| = O(1/t^*)\,.
\]

\paragraph{Summary}
Choose
\[
t^*=D=\left\lfloor c\frac{\log\log n}{\log\log\log n}\right\rfloor
\]
for a fixed constant $c>0$. Then $D_V=(\log n)^{c+o(1)}$ and $\log B=O(D^D)$. The slack function satisfies
\[
\log\mathsf{Slack}(n,D_V,n^{-1})
=-\frac{\log n}{400D_V^2}
+D_V\log D_V+\log\log_2(n^2).
\]
Since $D\log(B\cdot D)\leq(\log n)^{c}$ for some constant $c>0$, it follows that
\[
(O(BD))^D\mathsf{Slack}(n,D_V,n^{-1})
=\exp\!\left(- (\log n)^{1-2c+o(1)}\right)=o(1/D).
\]
This choice satisfies the size and probability conditions of \cref{cor:linear-combination-norm-bound,thm:backbonecorrectionproductapproximation}. Moreover, $\delta_{sp}=\exp(-(\log n)^{1-2c+o(1)})$, so $D\sqrt{\delta_{sp}}=o(1)$ and
\[
\delta_{sp}\poly(D)\exp(D\sqrt{\delta_{sp}})
=\delta_{sp}\poly(D)(1+o(1))=o(1/D).
\]
Therefore,
\[
F_D(\widetilde{Q})\succeq-\tilde{O}(1/D)I,
\qquad
\|\chebyerror+\mathsf{EarlyTerm}\|=O(1/D).
\]
Consequently, for some $\eta=\tilde{O}(1/D)$,
\[
F_D(\widetilde{Q})+\chebyerror+\mathsf{EarlyTerm}\succeq-\eta I.
\]
Taking $\eps_{\mathrm{spec}}=\tilde{\Theta}(1/D)\geq\eta$ with a sufficiently large implied constant gives
\[
M_{\mathrm{final}}\succeq(\eps_{\mathrm{spec}}-\eta)I\succeq0,
\]
as required.
\end{proof}

%% file: recurrence.tex
\subsection{Evolution of the Variance of \texorpdfstring{$Q_i$}{Q\_i}}
\begin{claim}[Convergence without Truncation] \label{claim:variance-evolution-recurrence}
Define
\[
S_i \;\coloneqq\; \sum_{\tau \in \calB(Q_i)} c(\tau)^2.
\]
Then
\[
\lim_{i\to\infty} S_i = 1.
\]
\end{claim}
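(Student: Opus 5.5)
We reduce the claim to a one-dimensional fixed-point iteration. Define the power series
\[
g(x) \coloneqq C_F^2 + \sum_{j=2}^{\infty} b_j^2 x^j .
\]
By Parseval, $\sum_{j\ge 2} b_j^2 = 1 - C_F^2$, so $g(1)=1$.

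The first step is to establish the recurrence $S_{i+1} = g(S_i)$ with $S_0 = C_F^2$. As sketched in the overview, the new shapes added at step $i+1$ are the corrections $\bar\beta$ of $\beta\in\textsf{New-}\calP_j(Q_i)$, each with coefficient $b_j\prod_t c(\tau_t)$. So we need
\[
S_{i+1}-S_i=\sum_{j\ge2} b_j^2\,(S_i^j - S_{i-1}^j),
\]
and two facts are required to justify it.

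First, the map $(\tau_1,\ldots,\tau_j)\mapsto \overline{\tau_1\circ\cdots\circ\tau_j}$ must be injective across all $j$ and all levels, and its image must avoid previously present shapes. I would argue this from the outerplanar structure in \cref{def:backbone-correction}. The decomposition of a backbone shape into backbone-correction pieces is unique: the cut vertices along the backbone path, i.e. the vertices not strictly covered by a chord, determine the pieces. After adding the boundary edge, the chordless outer cycle and its boundary determine $\beta$.

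Second, summing $\prod_t c(\tau_t)^2$ over $j$-tuples from $\calB(Q_i)$ gives $S_i^j$, and restricting to tuples using at least one new shape gives $S_i^j - S_{i-1}^j$. Telescoping from $i=0$ then yields $S_{k+1}=S_0+\sum_{j\ge2} b_j^2 S_k^j = g(S_k)$, with $S_{-1}=0$.

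The second step analyzes the iteration on $[0,1]$. The function $g$ has nonnegative coefficients, so it is increasing and convex on $[0,1]$, and $g(1)=1$. Since $S_1 = g(C_F^2) \ge C_F^2 = S_0$, monotonicity of $g$ gives by induction that $S_i$ is nondecreasing. Since $S_0\le 1$ and $g(1)=1$, induction also gives $S_i\le 1$. So $S_i$ converges to some $S_\infty\le 1$, and continuity of $g$ on $[0,1]$ (the series converges there because $g(1)<\infty$) gives $g(S_\infty)=S_\infty$.

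It remains to show $S_\infty = 1$, i.e. that $1$ is the least fixed point of $g$ in $[C_F^2,1]$. This is the main obstacle. Set $h(x)=g(x)-x$. Then $h$ is convex, $h(0)=C_F^2>0$, and $h(1)=0$. A convex function that is positive at $0$ can have a root in $(0,1)$ before $1$ only if $h'(1)>0$, i.e. $g'(1)>1$. So it suffices to show
\[
g'(1)=\sum_{j\ge2} j\,b_j^2 \le 1 .
\]

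I would try to compute this sum directly from $F(x)=2x\,1_{x\ge0}$ on the semicircle. The quantity $\sum_j j\,b_j^2$ is a Dirichlet-type energy. One route is to write $P_j(2\cos\theta)=\sin((j+1)\theta)/\sin\theta$ and express $b_j$ as a Fourier coefficient of $F(2\cos\theta)\sin\theta$. Then $\sum_j (j+1) b_j^2$ becomes an $H^{1/2}$ seminorm, which can be evaluated or bounded using the kink of $F$ at $0$. Alternatively, the coefficients $b_j$ can be computed in closed form: $b_j$ vanishes for odd $j\ge3$, and for even $j$ it decays like $j^{-5/2}$. The sum can then be bounded numerically with an explicit tail estimate.

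If it turns out that $g'(1)=1$ exactly, the convexity argument still works: $h$ is then nonincreasing up to $1$, so its only root on $[0,1]$ is $1$ unless $h$ vanishes on an interval, which analyticity rules out. If instead $g'(1)>1$, the claim as stated would fail and the limit would be the smaller fixed point. So verifying $g'(1)\le1$ is the crux.
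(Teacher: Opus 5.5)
Your first two steps essentially match the paper's proof. You derive the increment identity $S_{i+1}-S_i=\sum_{j\ge2}b_j^2(S_i^j-S_{i-1}^j)$, telescope it to $S_{i+1}=C_F^2+\sum_{j\ge2}b_j^2S_i^j$, and obtain monotonicity and $S_i\le 1$ by induction. Your remark that the block decomposition of a backbone shape makes $(\tau_1,\dots,\tau_j)\mapsto\overline{\tau_1\circ\cdots\circ\tau_j}$ injective is correct, and the paper uses it only implicitly.

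Your endgame is a valid alternative to the paper's: $g(x)-x$ is convex, so $1$ is the least fixed point whenever $g'(1)\le 1$. The paper instead uses $1-x^j\le j(1-x)$ to get the contraction $1-S_{i+1}=\sum_{j\ge2}b_j^2(1-S_i^j)\le\beta(1-S_i)$ with $\beta=\sum_{j\ge2}j\,b_j^2$. That route also gives the rate $1-S_i\le\beta^i(1-C_F^2)$, which the paper reuses for the truncated recursion.

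The genuine gap is that you never prove $\sum_{j\ge2}j\,b_j^2\le 1$. This inequality is exactly what forces the limit to be $1$. Without it, monotonicity and boundedness only show that $S_i$ converges to the least fixed point of $g$ in $[C_F^2,1]$, which, as you note, could be below $1$. Listing possible routes ($H^{1/2}$ seminorm, closed form plus numerics) is not a verification.

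The missing step is a short computation. Substituting $x=2\cos\theta$ gives $b_j=\frac{2}{\pi}\int_0^{\pi/2}\bigl(\cos((j-1)\theta)-\cos((j+3)\theta)\bigr)\,d\theta=\frac{8}{\pi}\cdot\frac{\sin((j-1)\pi/2)}{(j-1)(j+3)}$. Hence $b_j=0$ for odd $j\ge3$ and $b_j^2=\frac{64}{\pi^2(j-1)^2(j+3)^2}$ for even $j$. This decays like $j^{-2}$, not $j^{-5/2}$ as you asserted, so a tail estimate built on your rate would be wrong.

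The paper sums the series exactly by partial fractions and gets $\beta=16/\pi^2-1\approx 0.62$. Even a crude bound suffices: keep the $j=2$ term and use $j(j-1)\le (j+3)^2$ for even $j\ge 4$, which gives $\beta\le\frac{64}{\pi^2}\bigl(\frac{2}{25}+\sum_{m\ge2}(2m-1)^{-3}\bigr)<0.9$. With this in hand, your convexity argument closes.
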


\begin{proof}
We prove this in two steps:\begin{enumerate}
	\item Identify a recurrence relation satisfied by $\{S_i\}$;
	\item Show the recurrence converges to $1$.
\end{enumerate}

\paragraph{Identifying the Recurrence}
By construction, when passing from $Q_i$ to $Q_{i+1}$, the newly added shapes are obtained from
$
\textsf{New-}P_j(Q_i)
$
for $j\ge 2$, i.e., the $j$-way concatenations that use at least one term from $\calB(Q_i)\setminus \calB(Q_{i-1})$. Each such term $\tau$ has coefficient
$
c(\tau) = b_j \prod_{t=1}^j c(\tau_t).
$
Therefore, the total squared mass added at level $i+1$ is
\[
S_{i+1} - S_i
=
\sum_{j\ge 2} b_j^2
\sum_{\tau\in \textsf{New-}P_j(Q_i)} \prod_{t=1}^j c(\tau_t)^2.
\]

For fixed $j$, summing over all $j$-way concatenations from $\calB(Q_i)$ gives $S_i^j$, while those using only $\calB(Q_{i-1})$ give $S_{i-1}^j$. Hence
\[
\sum_{\tau\in \textsf{New-}P_j(Q_i)} \prod_{t=1}^j c(\tau_t)^2
=
S_i^j - S_{i-1}^j.
\]
Thus
\[
S_{i+1} - S_i
=
\sum_{j\ge 2} b_j^2 \bigl(S_i^j - S_{i-1}^j\bigr).
\]
Rearranging gives the closed recursion
\[
S_{i+1}
=
C_F^2 + \sum_{j\ge 2} b_j^2 S_i^j
\;=:\;
\Phi(S_i),
\]
where $S_0 = C_F^2$.

\paragraph{Convergence of the Recurrence }
We show that $(S_i)$ is monotone and bounded by $1$. Monotonicity follows directly from the definition of $Q_i$: the ground set $\calB(Q_i)$ is increasing, and every term in $Q_i$ appears in $Q_{i+1}$ with the same coefficient, hence
\[
S_{i+1}
=
\sum_{\tau\in \calB(Q_{i+1})} c(\tau)^2
\;\ge\;
\sum_{\tau\in \calB(Q_i)} c(\tau)^2
=
S_i.
\]
For boundedness, applying Parseval to
\(
F(x)=C_F + x + \sum_{j\ge2} b_j P_j(x)
\)
under the semicircle law gives
\[
C_F^2 + \sum_{j\ge2} b_j^2 = 1.
\]
Using the recursion
\[
S_{i+1} = C_F^2 + \sum_{j\ge2} b_j^2 S_i^j,
\]
we prove by induction that $S_i \le 1$: if $S_i \le 1$, then $S_i^j \le 1$ for all $j\ge2$, and thus
\[
S_{i+1}
\le
C_F^2 + \sum_{j\ge2} b_j^2
=
1.
\]
To prove convergence to $1$, we compute a contraction constant. The coefficient formula in the proof of \cref{lem:bi-square-tail} gives $b_j=0$ for odd $j\ge3$ and
\[
b_j^2=\frac{64}{\pi^2(j-1)^2(j+3)^2}
\qquad\text{for even }j\ge2.
\]
The partial-fraction identity
\[
\frac{j}{(j-1)^2(j+3)^2}
=\frac{1}{32}\left(\frac{1}{j-1}-\frac{1}{j+3}\right)
+\frac{1}{16(j-1)^2}-\frac{3}{16(j+3)^2}
\]
and $\sum_{m\ge1}(2m-1)^{-2}=\pi^2/8$ give
\begin{equation*}\label{eq:variance-contraction-constant}
\beta\coloneqq\sum_{j\ge2}j\,b_j^2
=\frac{64}{\pi^2}
\sum_{\substack{j\ge2\\j\text{ even}}}
\frac{j}{(j-1)^2(j+3)^2}
=\frac{16}{\pi^2}-1<1.
\end{equation*}
Since $0\le S_i\le1$ and $1-x^j\le j(1-x)$ for $x\in[0,1]$, we have
\[
0\le1-S_{i+1}
=\sum_{j\ge2}b_j^2(1-S_i^j)
\le\beta(1-S_i).
\]
Iterating yields $0\le1-S_i\le\beta^i(1-C_F^2)\to0$, proving that $S_i\to1$.
\end{proof}

\subsection{Quantitative Convergence of the Variance of \texorpdfstring{$Q_i$}{Q\_i}}

\begin{lemma}[Quantitative convergence of the variance process] \label{lem:convergence_variance}
Let
\[
S_i \coloneqq \sum_{\tau\in\calB(Q_i)} c(\tau)^2,
\qquad
\Phi(x)\coloneqq S_0+\sum_{j\ge2} b_j^2 x^j,
\]
so that \(S_{i+1}=\Phi(S_i)\) and \(S_0=C_F^2\).
Define
\(
\beta \coloneqq \sum_{j\ge2} j\,b_j^2.
\)
Then for every \(t\ge0\),
\[
1-S_{t+1}
=
\sum_{j\ge2} b_j^2\bigl(1-S_t^j\bigr),
\]
and consequently
\[
(1-C_F^2)^{t+1}
\;\le\;
1-S_t
\;\le\;
\beta^t(1-C_F^2).
\]
There exists some $\gamma<1$ such that
\[
1-S_t \le \gam^t(1-C_F^2),
\]
and hence \(S_t\to 1\) exponentially fast.
\end{lemma}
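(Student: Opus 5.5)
The plan is to derive everything from the closed recursion $S_{t+1}=\Phi(S_t)$ established in the proof of \cref{claim:variance-evolution-recurrence}, together with the Parseval identity $C_F^2+\sum_{j\ge2}b_j^2=1$ and the a priori bounds $0\le S_t\le 1$, which were also shown there by induction.

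\textbf{The identity.} Subtract the recursion $S_{t+1}=C_F^2+\sum_{j\ge2}b_j^2S_t^j$ from Parseval written as $1=C_F^2+\sum_{j\ge2}b_j^2$. This gives
\[
1-S_{t+1}=\sum_{j\ge2}b_j^2\bigl(1-S_t^j\bigr).
\]
All series involved converge absolutely, since $\sum_j b_j^2<\infty$ and $S_t\in[0,1]$.

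\textbf{The two-sided bounds.} Both follow by sandwiching $1-S_t^j$ for $x=S_t\in[0,1]$ and $j\ge2$.
\begin{enumerate}
\item For the upper bound, use $1-x^j=(1-x)(1+x+\cdots+x^{j-1})\le j(1-x)$. This gives $1-S_{t+1}\le\beta(1-S_t)$.
\item For the lower bound, use $x^j\le x$, so $1-x^j\ge 1-x$. This gives $1-S_{t+1}\ge\bigl(\sum_{j\ge2}b_j^2\bigr)(1-S_t)=(1-C_F^2)(1-S_t)$.
\end{enumerate}
Iterate both inequalities from the base value $1-S_0=1-C_F^2$. The result is
\[
(1-C_F^2)^{t+1}\le 1-S_t\le\beta^t(1-C_F^2).
\]

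\textbf{Exponential convergence.} Take $\gamma:=\beta$. The computation in the proof of \cref{claim:variance-evolution-recurrence} uses the explicit coefficients $b_j^2=\frac{64}{\pi^2(j-1)^2(j+3)^2}$ for even $j$ and $b_j=0$ for odd $j\ge3$. It evaluates $\beta=\frac{16}{\pi^2}-1\approx0.621<1$, so $1-S_t\le\gamma^t(1-C_F^2)\to0$ exponentially fast.

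The only step with real content is the bound $\beta<1$, which rests on the partial-fraction evaluation already carried out in the claim. If that evaluation were unavailable, I would bound $\sum_j jb_j^2$ numerically by a few explicit terms plus an integral tail estimate, since $jb_j^2=O(j^{-3})$. The remaining steps are elementary once $S_t\in[0,1]$ is in hand.
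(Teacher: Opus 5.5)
Your proposal is correct and matches the paper's proof step for step: subtract the recursion from the Parseval identity $C_F^2+\sum_{j\ge2}b_j^2=1$, sandwich $1-x^j$ between $1-x$ and $j(1-x)$ for $x\in[0,1]$, iterate from $1-S_0=1-C_F^2$, and take $\gamma=\beta=16/\pi^2-1$. You also cite $0\le S_t\le 1$ explicitly from \cref{claim:variance-evolution-recurrence}, which makes precise a step the paper uses only implicitly.
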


\begin{proof}
Set \(E_t\coloneqq 1-S_t\). Since
\[
S_{t+1}=C_F^2+\sum_{j\ge2} b_j^2 S_t^j
\quad\text{and}\quad
C_F^2+\sum_{j\ge2} b_j^2=1,
\]
we have
\[
E_{t+1}
=
1-S_{t+1}
=
\sum_{j\ge2} b_j^2(1-S_t^j).
\]
For \(x\in[0,1]\),
\[
1-x^j=(1-x)(1+x+\cdots+x^{j-1})\le j(1-x),
\]
hence
\[
E_{t+1}
\le
\sum_{j\ge2} j\,b_j^2 (1-S_t)
=
\beta E_t.
\]
Iterating gives
\[
E_t \le \beta^t E_0=\beta^t(1-C_F^2).
\]

Similarly, since \(1-x^j\ge 1-x\) for \(x\in[0,1]\), we get
\[
E_{t+1}
\ge
\sum_{j\ge2} b_j^2(1-S_t)
=
(1-C_F^2)E_t.
\]
Iterating yields
\[
E_t \ge (1-C_F^2)^t E_0=(1-C_F^2)^{t+1}.
\]

Finally, \eqref{eq:variance-contraction-constant} gives $\beta=16/\pi^2-1<1$, so we may take $\gamma=\beta$.
This proves the claim.

\end{proof}

%% file: polyapprox.tex
\section{Deferred Details for Truncation}
\subsection{Spectral Impact of Outer Truncation of \texorpdfstring{$F_D$}{F\_D}}
Now we analyze the spectral impact of the outer truncation of $F(\widetilde{Q})$.
Throughout this section, $D\ge 1$ is an integer and
\[
F_D(x)=\sum_{j=0}^D b_j P_j(x),
\qquad P_j(x)=U_j(x/2),
\qquad b_0=C_F,\quad b_1=1.
\]
\begin{restatable}{lemma}{FDwithinsupport}
\label{lem:F_D-witin-support}
For any $x\in [-2,2]$, we have
\[
|F_D(x) - F(x)| \le \tilde{O}\!\left(\frac{1}{D}\right).
\]
\end{restatable}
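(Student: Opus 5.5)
The plan is to pass to the angular variable and treat $F_D$ as a truncated Fourier sine series. Substitute $x=2\cos\theta$ with $\theta\in[0,\pi]$. Then $P_j(x)=U_j(\cos\theta)=\frac{\sin((j+1)\theta)}{\sin\theta}$, so
\[
\sin\theta\cdot F(2\cos\theta)=\sum_{j\ge0}b_j\sin((j+1)\theta)=:g(\theta),
\]
where $g(\theta)=2\sin 2\theta$ on $[0,\pi/2]$ and $g(\theta)=0$ on $[\pi/2,\pi]$. The odd $2\pi$-periodic extension of $g$ is continuous and piecewise smooth. Its only non-smooth points are the kinks at $\theta=\pm\pi/2$; in particular $g$ is smooth near $\theta=0$ and near $\theta=\pi$.

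First I will write the coefficients explicitly by integrating $g$ against $\sin((j+1)\theta)$. The squared form $b_j^2=\frac{64}{\pi^2(j-1)^2(j+3)^2}$ (even $j$) quoted in \cref{claim:variance-evolution-recurrence} should come from
\[
b_{2m}=\frac{\pm 8(-1)^m}{\pi(2m-1)(2m+3)},\qquad b_j=0 \text{ for odd } j\ge3.
\]
The sign alternation $(-1)^m$ is essential: it encodes the fact that the only singularity of $g$ sits at $\theta=\pi/2$. Writing $\varphi=\theta-\pi/2$, one has $(-1)^m\sin((2m+1)\theta)=\pm\cos((2m+1)\varphi)$. Hence the tail becomes
\[
T_D(\theta)=\sum_{2m>D}b_{2m}\sin((2m+1)\theta)=\pm\sum_{2m>D}\frac{8}{\pi(2m-1)(2m+3)}\cos((2m+1)\varphi).
\]
This is a cosine series whose coefficients $a_m$ are positive, decreasing and smooth in $m$.

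Next I bound $T_D$ in two ways.
\begin{enumerate}
\item Trivially, $|T_D|\le\sum_{2m>D}|a_m|=O(1/D)$, which gives uniform convergence and identifies the limit of $F_D$ with $F$.
\item By Abel summation, using $\bigl|\sum_{m\le N}\cos((2m+1)\varphi)\bigr|\le\frac{1}{|\sin\varphi|}$ together with $a_m-a_{m+1}=O(m^{-3})$, we get $|T_D(\theta)|\le O\bigl(\frac{1}{D^2|\sin\varphi|}\bigr)=O\bigl(\frac{1}{D^2|\cos\theta|}\bigr)$.
\end{enumerate}
The quantity we need is $|F_D(x)-F(x)|=|T_D(\theta)|/\sin\theta$. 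On the middle range $\sin\theta\ge c$ (a neighbourhood of the kink $x=0$), bound (1) already gives $O(1/D)$.

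The main obstacle is the endpoint regime $\theta\to0$ or $\theta\to\pi$ (i.e.\ $x\to\pm2$), where we divide by $\sin\theta\to0$. The naive bound $|P_j|\le j+1$ is useless there, since $\sum_j|b_j|(j+1)$ diverges logarithmically. So cancellation must be used. Since $T_D$ is odd in $\theta$ about $0$ (and about $\pi$), $T_D(0)=0$, and by the mean value theorem $|T_D(\theta)|\le|\theta|\sup_{[0,\theta]}|T_D'|$. Here
\[
T_D'(\theta)=\sum_{2m>D}b_{2m}(2m+1)\cos((2m+1)\theta)=\pm\sum_{2m>D}a_m(2m+1)\sin((2m+1)\varphi).
\]
The new coefficients $a_m(2m+1)=\Theta(1/m)$ are still monotone with increments $O(m^{-2})$. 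Abel summation against the bounded partial sums of $\sin((2m+1)\varphi)$ (valid since $|\sin\varphi|\approx1$ near $\theta=0,\pi$) then gives $|T_D'|=O(1/D)$ on $\sin\theta\le c$. Dividing, $|T_D|/\sin\theta\le\frac{\theta}{\sin\theta}\,O(1/D)=O(1/D)$. The $\theta\to\pi$ end is symmetric, with $\theta-\pi$ in place of $\theta$.

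Combining the three regimes gives $\sup_{[-2,2]}|F_D-F|=O(1/D)$, which is within the $\tilde O(1/D)$ claimed. If the Abel bounds turn out slightly lossy at the transition between regimes, a $\log D$ factor is absorbed by the $\tilde O$.
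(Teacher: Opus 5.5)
Your argument is correct and, like the paper's, gives $O(1/D)$ with no logarithmic loss, but it takes a different route.

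\textbf{The paper's route.} It never leaves the polynomial basis. It pairs consecutive nonzero terms,
$b_{2k}P_{2k}+b_{2k+2}P_{2k+2}=(-1)^{k-1}\bigl[(a_k-a_{k+1})P_{2k}-a_{k+1}(P_{2k+2}-P_{2k})\bigr]$,
and uses the identity $P_{2k+2}(2\cos\theta)-P_{2k}(2\cos\theta)=2\cos((2k+2)\theta)$. Each pair then has sup norm at most $(2k+1)|a_k-a_{k+1}|+2a_{k+1}=O(k^{-2})$ on all of $[-2,2]$. Summing over the pairs in the tail gives $O(1/D)$ in one stroke. This is a single summation by parts done on the polynomials themselves. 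The division by $\sin\theta$ cancels exactly in the difference $P_{2k+2}-P_{2k}$. So the crude bound $\|P_{2k}\|_\infty\le 2k+1$ is only ever multiplied by $a_k-a_{k+1}=O(k^{-3})$. The $1/\sin\theta$ blow-up at $x=\pm2$, which you isolate as the main obstacle, never appears.

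\textbf{Your route.} You treat $g(\theta)=\sin\theta\,F(2\cos\theta)$ as a sine series and recenter at the kink, so the tail becomes a positive monotone cosine series. You then need a three-regime split:
\begin{itemize}
\item in the middle range, the trivial bound suffices;
\item near $x=\pm2$, you use $T_D(0)=T_D(\pi)=0$, the mean value theorem, and Abel summation on the differentiated series.
\end{itemize}
The differentiated series has $\Theta(1/m)$ coefficients and converges only conditionally. You should therefore either justify termwise differentiation, via uniform convergence of the derivative series (Dirichlet test) on intervals where $|\cos\theta|$ is bounded below, or apply the mean value theorem to finite partial sums, whose Abel bounds are uniform in the cutoff.

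\textbf{Comparison.} Your approach gives a clearer picture of where the error lives: the kink at $x=0$ and the edges $x=\pm2$. Your bound (2) also gives a sharper local estimate $O\bigl(1/(D^2|x|\sqrt{4-x^2})\bigr)$ away from those points. Note, however, that bound (2) is never used in your final combination and can be dropped. The paper's pairing is shorter and uniform, and needs no case analysis or differentiation of a conditionally convergent series.
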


\begin{restatable}{lemma}{FDedgedeviation}
\label{lem:edge-deviation}
For any $\delta \ge 0$, we have
\[
|F_D(2+\delta) - F_D(2)|
\le
O\!\big( \delta \cdot \poly(D) \cdot  \exp(D\sqrt{\delta})\big),
\]
and similarly,
\[
|F_D(-2-\delta) - F_D(-2)|
\le
O\!\big(\delta \cdot \poly(D) \exp(D\sqrt{\delta})\big).
\]
\end{restatable}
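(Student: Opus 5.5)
The plan is to reduce the claim to a uniform bound on $F_D'$ just outside the spectrum and apply the mean value theorem:
\[
|F_D(2+\delta)-F_D(2)|\le \delta\cdot\sup_{x\in[2,2+\delta]}|F_D'(x)| \le \delta\sum_{j=1}^D |b_j|\sup_{x\in[2,2+\delta]}|P_j'(x)|.
\]
The $-2-\delta$ case will follow from parity. Since $P_j(-x)=(-1)^jP_j(x)$, we get $|P_j'(-x)|=|P_j'(x)|$, so the same derivative bound holds on $[-2-\delta,-2]$.

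\textbf{Coefficients.} I will take $|b_j|=O(1/j^2)$, which is what the explicit formula $b_j^2=\frac{64}{\pi^2(j-1)^2(j+3)^2}$ (even $j$) used in the proof of \cref{claim:variance-evolution-recurrence} gives. Even the crude bound $|b_j|\le 1$ would suffice, at the cost of a larger $\poly(D)$ factor.

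\textbf{Derivative of $P_j$ outside $[-2,2]$.} Write $x=2\cosh\theta$ with $\theta\ge 0$, so $y=x/2=\cosh\theta$. The standard identity $U_j(\cosh\theta)=\sum_{m=0}^{j}e^{(j-2m)\theta}$ gives $0\le U_k(y)\le (k+1)e^{k\theta}$ for $y\ge1$. To avoid the $1/(y^2-1)$ singularity that appears in the closed-form derivative formula, I will use the Chebyshev expansion of the derivative:
\[
U_j'(y)=2\sum_{\substack{0\le k<j\\ k\equiv j-1 \ (2)}}(k+1)U_k(y).
\]
This identity can be checked from $U_j(\cos\phi)=\sin((j+1)\phi)/\sin\phi$, or verified by induction from $U_{j+1}=2yU_j-U_{j-1}$. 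Hence for $y\ge1$,
\[
|U_j'(y)|\le 2\sum_{k<j}(k+1)^2e^{k\theta}\le 2j^3e^{j\theta},
\]
and therefore $|P_j'(x)|=\tfrac12|U_j'(x/2)|\le j^3e^{j\theta}$.

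\textbf{Controlling $\theta$.} For $x\in[2,2+\delta]$ we have $\cosh\theta\le 1+\delta/2$. Since $\cosh\theta\ge 1+\theta^2/2$, this gives $\theta\le\sqrt{\delta}$, and so $e^{j\theta}\le e^{D\sqrt\delta}$ for every $j\le D$. Combining the pieces,
\[
\sup_{[2,2+\delta]}|F_D'|\le\sum_{j=1}^D O(j^{-2})\,j^3e^{D\sqrt\delta}=O(D^2e^{D\sqrt\delta}),
\]
which yields the claimed bound $O(\delta\,\poly(D)\exp(D\sqrt\delta))$.

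The main point requiring care is the derivative estimate near $x=2$. The naive route through the logarithmic derivative, or through the closed form $U_j'=\frac{(j+1)T_{j+1}-yU_j}{y^2-1}$, blows up as $y\to1$. The expansion of $U_j'$ in the $U_k$ basis with nonnegative coefficients sidesteps this, because every term is nonnegative and explicitly bounded for $y\ge 1$. The remaining steps, including the $\theta\le\sqrt\delta$ inequality, are routine.
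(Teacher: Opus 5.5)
Your proof is correct, but it reaches the bound by a somewhat different route than the paper.

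\textbf{The paper's route.} The paper never differentiates. It sets $1+\delta/2=\cosh t$ and writes
\[
U_j(\cosh t)-U_j(1)=\sum_{r=0}^{j}\bigl(\cosh((j-2r)t)-1\bigr).
\]
It bounds each summand by $j^2t^2e^{jt}\le j^2\delta e^{j\sqrt\delta}$, using $\cosh u-1\le u^2e^u$ together with the same $t\le\sqrt\delta$ step you use. It controls the coefficients only through $B_D=\sum_{j\le D}|b_j|\le\sqrt{2(D+1)}$, which comes from Cauchy--Schwarz and Parseval ($\sum_j b_j^2=2$). The result is $O(\delta D^{7/2}e^{D\sqrt\delta})$.

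\textbf{Your route.} You use the mean value theorem together with the nonnegative expansion $U_j'=2\sum_{k<j,\,k\equiv j-1\ (2)}(k+1)U_k$. This identity checks out, for example from $T_m'=mU_{m-1}$ and the expansion of $U_j$ as twice a parity-restricted sum of $T_k$'s minus a constant. You also exploit the decay $|b_j|=O(j^{-2})$ from the explicit coefficient formula. This gives the slightly sharper $O(\delta D^2e^{D\sqrt\delta})$.

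\textbf{Comparison.} Both arguments arrive at the same per-polynomial factor $j^3e^{j\sqrt\delta}$.
\begin{itemize}
\item The paper's version is more self-contained, since it needs no derivative identity. It is also more robust: it uses only square-summability of the $b_j$, so it applies verbatim to any $F$ with bounded $L^2$ norm under the semicircle law.
\item Your version gives better polynomial dependence on $D$ by using the specific decay of this $F$'s coefficients. Your fallback remark that $|b_j|\le 1$ would also suffice is essentially the paper's coefficient treatment, giving a $D^4$ factor.
\end{itemize}
Since the statement only asks for a $\poly(D)$ factor, either argument is enough.
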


\begin{restatable}[Approximate PSDness of $F_D(\widetilde{Q})$]{proposition}{FDpsd}
\label{prop:FD-psd}
Suppose $\mathsf{spec}(\widetilde{Q}) \subseteq [-2-\delta,\,2+\delta]$. Then
\[
F_{D}(\widetilde{Q})
\succeq
- \tilde{O}\!\left(\frac{1}{D} + \delta \cdot \poly(D)\cdot \exp(D\sqrt{\delta})\right)\cdot I.
\]\end{restatable}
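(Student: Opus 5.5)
The plan is to use the spectral theorem to reduce \cref{prop:FD-psd} to a scalar inequality. Since $\widetilde{Q}$ is a symmetric linear combination, it is a real symmetric matrix. So $F_D(\widetilde{Q})$ has eigenvalues $\{F_D(\lambda):\lambda\in\mathsf{spec}(\widetilde{Q})\}$. It therefore suffices to show that
\[
\min_{x\in[-2-\delta,\,2+\delta]} F_D(x) \geq -\tilde{O}\!\left(\frac{1}{D}+\delta\cdot\poly(D)\cdot\exp(D\sqrt{\delta})\right).
\]
I will split $[-2-\delta,2+\delta]$ into the bulk $[-2,2]$ and the two edge intervals $(2,2+\delta]$ and $[-2-\delta,-2)$.

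On the bulk I will use that the untruncated function $F$ is nonnegative there: $F(x)=2x\cdot 1_{x\geq 0}\geq 0$ for all $x\in[-2,2]$. Then \cref{lem:F_D-witin-support} gives $F_D(x)\geq F(x)-\tilde{O}(1/D)\geq -\tilde{O}(1/D)$ for every $x\in[-2,2]$. In particular this holds at the endpoints, where $F_D(2)\geq 4-\tilde{O}(1/D)$ and $F_D(-2)\geq -\tilde{O}(1/D)$.

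For $x=2+\delta'$ with $0<\delta'\leq\delta$, I will apply \cref{lem:edge-deviation} with $\delta'$ in place of $\delta$:
\[
F_D(x)\geq F_D(2)-O\!\big(\delta'\poly(D)\exp(D\sqrt{\delta'})\big).
\]
Because $\delta'\poly(D)\exp(D\sqrt{\delta'})$ is monotone increasing in $\delta'$, this error is at most the same expression at $\delta$. The case $x=-2-\delta'$ is symmetric: I combine the second inequality of \cref{lem:edge-deviation} with $F_D(-2)\geq-\tilde{O}(1/D)$. Adding the bulk and edge bounds gives the stated lower bound on the whole interval, hence on every eigenvalue, which is exactly the claimed PSD-type inequality.

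The main obstacle is not this assembly but the two input lemmas, which I am allowed to assume.
\begin{itemize}
\item For \cref{lem:F_D-witin-support}, the natural route is to bound the tail $\sum_{j>D}|b_jP_j(x)|$. Since $|P_j(x)|\leq j+1$ on $[-2,2]$ and $b_j=O(j^{-2})$ (only even $j$ contribute), this sum is $O(\sum_{j>D}j^{-1})$. That is only logarithmic, not $\tilde{O}(1/D)$. So that lemma must really exploit the smoothness of $F$ away from $x=0$ and oscillation cancellation, or use a Fejér/Jackson-type argument.
\item For \cref{lem:edge-deviation}, one writes $P_j(2\cosh\theta)=\sinh((j+1)\theta)/\sinh\theta$ with $\theta\approx\sqrt{\delta}$. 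Bounding derivatives by $\poly(j)e^{j\theta}$ and summing over $j\leq D$ gives the $\poly(D)\exp(D\sqrt{\delta})$ factor.
\end{itemize}
Given those lemmas, the proposition itself is a short eigenvalue-wise argument. The only care needed is the monotonicity in $\delta'$ and the fact that $F(\pm2)\geq0$.
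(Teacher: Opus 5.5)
Your argument is correct and is essentially the paper's proof: show $F_D\geq-\tilde{O}(1/D)$ on $[-2,2]$ via \cref{lem:F_D-witin-support} and $F\geq 0$, extend to $[-2-\delta,2+\delta]$ by anchoring \cref{lem:edge-deviation} at $F_D(\pm 2)$, and conclude by functional calculus on the symmetric matrix $\widetilde{Q}$. Your monotonicity-in-$\delta'$ step is a detail the paper leaves implicit, and your guess about \cref{lem:F_D-witin-support} is right: the paper pairs consecutive alternating-sign terms using $P_{2k+2}(2\cos\theta)-P_{2k}(2\cos\theta)=2\cos((2k+2)\theta)$. One small correction there: the naive absolute tail $\sum_{j>D}j^{-1}$ diverges, so it is not merely logarithmic.
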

\begin{proof}
By~\cref{lem:F_D-witin-support},  $F_D(x)\ge -\tilde{O}(1/D)$ for all $x\in[-2,2]$. For $x$ outside this interval, \cref{lem:edge-deviation} shows that the deviation from the boundary values $F_D(\pm 2)$ is at most $O( \delta \cdot \poly(D)\cdot  \exp(D\sqrt{\delta}))$. Since $F_D(\pm 2)\ge -\tilde{O}(1/D)$, we obtain
\[
F_D(x)\ge -\tilde{O}\!\left(\frac{1}{D} + \delta \cdot \poly(D)\cdot\exp(D\sqrt{\delta})\right)
\]
for all $x\in[-2-\delta,\,2+\delta]$. The claim then follows by applying functional calculus to the spectrum of $\widetilde{Q}$.
\end{proof}

\subsection{Truncation Error within the Support}
\FDwithinsupport*

\begin{proof}[Proof of~\cref{lem:F_D-witin-support}]
We prove the stronger bound $O(1/D)$ directly in the second-kind basis.
The coefficient calculation in the proof of \cref{lem:bi-square-tail} gives, for $k\ge 1$,
\[
b_{2k}=(-1)^{k-1}a_k,\qquad
a_k=\frac{8}{\pi(2k-1)(2k+3)},\qquad b_{2k+1}=0.
\]
In particular, $a_k=O(k^{-2})$ and $a_k-a_{k+1}=O(k^{-3})$.
Writing $x=2\cos\theta$, we have
\[
\|P_{2k}\|_\infty\le 2k+1,\qquad
P_{2k+2}(2\cos\theta)-P_{2k}(2\cos\theta)
=2\cos((2k+2)\theta),
\]
where all supremum norms are on $[-2,2]$ and the identities extend to the endpoints by continuity.
Pairing consecutive nonzero terms, we obtain
\begin{align*}
\|b_{2k}P_{2k}+b_{2k+2}P_{2k+2}\|_\infty
&=\|(a_k-a_{k+1})P_{2k}
-a_{k+1}(P_{2k+2}-P_{2k})\|_\infty\\
&\le (2k+1)|a_k-a_{k+1}|+2a_{k+1}
=O(k^{-2}).
\end{align*}
The sum of these paired norms converges, and an unpaired terminal term has norm
$a_k\|P_{2k}\|_\infty=O(k^{-1})\to 0$.
Thus the ordinary partial sums converge uniformly. Their limit is $F$, since they also converge to $F$ in $L^2$ of the semicircle measure and both functions are continuous.

For $m=\lfloor D/2\rfloor$, pair the tail starting at $k=m+1$. The bound above gives
\[
\|F_D-F\|_{L^\infty([-2,2])}
=O\!\left(\sum_{r=0}^{\infty}\frac{1}{(m+1+2r)^2}\right)
=O\!\left(\frac{1}{m+1}\right)
=O\!\left(\frac{1}{D}\right).
\]
This also covers $D=1$.
\end{proof}

\subsection{Deviation at the Edge}

\begin{lemma} \label{lem:edge-deviation-detailed}
Suppose
\[
F_D(x)=\sum_{j=0}^D b_j\,U_j(x/2),
\qquad
B_D \coloneqq \sum_{j=0}^D |b_j|.
\]
Then for every $\delta\ge 0$,
\[
|F_D(2+\delta)-F_D(2)|
\le
B_D\,\delta D^2(D+1) e^{D\sqrt{\delta}},
\]
and similarly,
\[
|F_D(-2-\delta)-F_D(-2)|
\le
B_D\,\delta D^2(D+1) e^{D\sqrt{\delta}}.
\]
\end{lemma}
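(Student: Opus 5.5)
We reduce the statement to a bound on each individual term and then sum. By the triangle inequality,
\[
|F_D(2+\delta)-F_D(2)| \le \sum_{j=0}^D |b_j|\,\bigl|U_j(1+\tfrac{\delta}{2})-U_j(1)\bigr| \le B_D \max_{j\le D}\bigl|U_j(1+\tfrac{\delta}{2})-U_j(1)\bigr|.
\]
So it suffices to show that for every $j\le D$ and $y\in[1,1+\delta/2]$ the difference is at most $\delta D^2(D+1)e^{D\sqrt{\delta}}$. The case $x=-2-\delta$ follows from the parity $U_j(-y)=(-1)^jU_j(y)$: it gives $|U_j(-1-\tfrac{\delta}{2})-U_j(-1)|=|U_j(1+\tfrac{\delta}{2})-U_j(1)|$, so the same bound applies.

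For a fixed $j$ we use the mean value theorem, $|U_j(1+\delta/2)-U_j(1)|\le \frac{\delta}{2}\sup_{y\in[1,1+\delta/2]}|U_j'(y)|$, so we need to bound $U_j'$ just to the right of $1$. Write $y=\cosh t$ with $t\ge 0$, so that $U_j(y)=\sinh((j+1)t)/\sinh t$.

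One option is to differentiate this closed form directly. A cleaner route uses the derivative of the Chebyshev polynomials of the first kind, $U_j' $ expressed through $T_m' = mU_{m-1}$. Concretely, $U_j'(y)=\sum_{k} c_k\,U_{k}(y)$ with nonnegative coefficients summing to at most $O(j^2)$. Alternatively, since all roots of $U_j$ lie in $(-1,1)$, the quantity $U_j'$ is increasing on $[1,\infty)$, so it is enough to evaluate it at $y=1+\delta/2$.

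Next we bound $U_k(\cosh t)$. We have $U_k(\cosh t)=\sum_{i=0}^k e^{(k-2i)t}\le (k+1)e^{kt}$. Since $\cosh t = 1+\delta/2$ and $\cosh t\ge 1+t^2/2$, we get $t\le\sqrt{\delta}$, hence $U_k(1+\delta/2)\le (k+1)e^{k\sqrt\delta}$.

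Now we combine the pieces. Each $U_k$ appearing in the expansion of $U_j'$ has $k\le j-1\le D$, and the coefficient sum is $O(j^2)\le 2D^2$. This gives
\[
\sup_{[1,1+\delta/2]}|U_j'| \le 2D^2(D+1)e^{D\sqrt\delta},
\]
and multiplying by $\delta/2$ yields $\delta D^2(D+1)e^{D\sqrt\delta}$.

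The main obstacle is keeping the derivative constant tight enough to land exactly on $D^2(D+1)$. If the expansion of $U_j'$ in the $U_k$ basis is awkward, we fall back on a simpler route. We write $U_j(y)=\prod_{i=1}^j\bigl(y-\cos\frac{i\pi}{j+1}\bigr)\cdot 2^j$, which gives $U_j'(y)=U_j(y)\sum_i \frac{1}{y-\cos(i\pi/(j+1))}$. We then bound this by $U_j'(y)\le j\,U_{j-1}(y)\cdot(\text{const})$, using the identity $(y^2-1)U_j'=(j+1)T_{j+1}-yU_j$ only away from $y=1$. This route may cost an extra constant, which can be absorbed into the $O(\cdot)$ of \cref{lem:edge-deviation}.

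The exponential factor enters only through $t\le\sqrt\delta$. That step is routine.
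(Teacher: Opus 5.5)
Your main argument is correct, but it takes a different route from the paper. The paper never differentiates. It writes $U_j(\cosh t)=\sum_{r=0}^{j}\cosh((j-2r)t)$, subtracts $U_j(1)=j+1$ term by term, and uses $\cosh u-1\le u^2e^u$ with $t^2\le\delta$. This gives $|U_j(1+\delta/2)-(j+1)|\le (j+1)j^2\delta e^{j\sqrt\delta}$ in one line.

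You instead apply the mean value theorem and bound $U_j'$ uniformly on $[1,1+\delta/2]$ through its expansion in the $U_k$ basis, together with the same estimates $U_k(\cosh t)\le(k+1)e^{kt}$ and $t\le\sqrt\delta$. Because the lemma has an explicit constant, writing ``$O(j^2)\le 2D^2$'' is not enough on its own. You should state the identity $U_j'=2\sum_k (k+1)U_k$, with the sum running over $k=j-1,j-3,\ldots\ge 0$. It follows by differentiating $U_j=2(T_j+T_{j-2}+\cdots)$ (valid up to an additive constant) and using $T_k'=kU_{k-1}$.

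The coefficients in this identity are positive and sum to $j(j+2)/2$ or $(j+1)^2/2$. In either case the sum is at most $2D^2$ for $1\le j\le D$. Combined with $U_k\ge 0$ on $[1,\infty)$, this gives exactly your bound $\sup|U_j'|\le 2D^2(D+1)e^{D\sqrt\delta}$, and hence the stated constant after multiplying by $\delta/2$.

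The paper's route is shorter and needs only one elementary inequality about $\cosh$. Yours yields a uniform derivative bound on the whole interval and is slightly sharper, giving roughly $\tfrac14 j(j+1)^2\delta e^{(j-1)\sqrt\delta}$ for each $j$. The price is that it relies on the Chebyshev derivative identity.

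You should drop the fallback paragraph. The inequality $U_j'(y)\le C\,jU_{j-1}(y)$ with an absolute constant $C$ fails at $y=1$, which lies in your interval: there $U_j'(1)=j(j+1)(j+2)/3$ while $jU_{j-1}(1)=j^2$. Moreover, absorbing an extra constant into the $O(\cdot)$ of \cref{lem:edge-deviation} would prove that lemma, not the present statement, whose constant is explicit.
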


\begin{proof}
We prove the bound at the positive edge; the negative case is identical.

Let
\[
y = 1+\frac{\delta}{2} = \cosh t,
\qquad
t = \arccosh\!\left(1+\frac{\delta}{2}\right)\ge 0.
\]
Then
\[
U_j(y)=\sum_{r=0}^j\cosh((j-2r)t),
\qquad U_j(1)=j+1.
\]
Using $\cosh u - 1 \le u^2 e^u$ for $u\ge 0$ and $\cosh t \ge 1+t^2/2$, we get
\[
|U_j(y)-(j+1)|
\le (j+1)j^2 t^2 e^{jt}
\le (j+1)j^2 \delta\, e^{j\sqrt{\delta}}
\le (j+1)j^2 \delta\, e^{D\sqrt{\delta}}.
\]
Hence
\begin{align*}
|F_D(2+\delta)-F_D(2)|
&=
\left|
\sum_{j=0}^D b_j\bigl(U_j(1+\delta/2)-(j+1)\bigr)
\right| \\
&\le
\sum_{j=0}^D |b_j|\,(j+1)j^2 \delta\, e^{D\sqrt{\delta}} \\
&\le
B_D\,\delta D^2(D+1) e^{D\sqrt{\delta}}.
\end{align*}

For the negative edge, use $U_j(-x)=(-1)^jU_j(x)$ to reduce to the same bound.
\end{proof}

We are now ready to prove the edge-deviation bound in~\cref{lem:edge-deviation}.
\begin{proof}[Proof of \cref{lem:edge-deviation}]
Since the second-kind polynomials $P_j$ are orthonormal for the semicircle measure $\mu$, Parseval's identity gives
\[
\sum_{j\ge 0} b_j^2=\int_{-2}^2 F(x)^2\,d\mu(x)=2.
\]
By Cauchy--Schwarz,
\[
B_D=\sum_{j=0}^D |b_j|
\le \sqrt{D+1}\left(\sum_{j=0}^D b_j^2\right)^{1/2}
\le \sqrt{2(D+1)}.
\]
Substituting into \cref{lem:edge-deviation-detailed} bounds each edge deviation by
\[
O\!\left(\delta D^2(D+1)^{3/2}e^{D\sqrt{\delta}}\right)
=O\!\left(\delta\cdot\poly(D)\cdot e^{D\sqrt{\delta}}\right).
\]
\end{proof}
\subsection{Tail Bound for Correction Terms}
\begin{lemma}[Tail bound on the squared coefficients]
\label{lem:bi-square-tail}
Let
\[
F(x)=C_F + x + \sum_{j\ge 2} b_j\,P_j(x),
\qquad
P_j(x)=U_j(x/2),
\]
where
\[
F(x)=
\begin{cases}
2x,& x\in[0,2],\\
0,& x\in[-2,0].
\end{cases}
\]
Then
\[
|b_j| = O\!\left(\frac{1}{j^2}\right),
\qquad\text{and hence}\qquad
\sum_{i\ge D} b_i^2 = O\!\left(\frac{1}{D^3}\right).
\]
\end{lemma}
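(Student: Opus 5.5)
We compute the coefficients $b_j$ exactly by passing to the angular variable. Substitute $x = 2\cos\theta$ with $\theta\in[0,\pi]$. Under this change of variables the semicircle measure $\frac{\sqrt{4-x^2}}{2\pi}\,dx$ becomes $\frac{2}{\pi}\sin^2\theta\,d\theta$. The basis functions become $P_j(2\cos\theta)=U_j(\cos\theta)=\frac{\sin((j+1)\theta)}{\sin\theta}$. Orthonormality of the $P_j$ then gives
\[
b_j=\frac{2}{\pi}\int_0^{\pi} F(2\cos\theta)\,\sin\theta\,\sin((j+1)\theta)\,d\theta .
\]
Since $F(2\cos\theta)=4\cos\theta$ for $\theta\in[0,\pi/2]$ and $0$ on $[\pi/2,\pi]$, this reduces to
\[
b_j=\frac{8}{\pi}\int_0^{\pi/2}\cos\theta\,\sin\theta\,\sin((j+1)\theta)\,d\theta=\frac{4}{\pi}\int_0^{\pi/2}\sin(2\theta)\sin((j+1)\theta)\,d\theta .
\]

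The remaining integral is elementary. Apply the product-to-sum identity $\sin a\sin b=\tfrac12(\cos(a-b)-\cos(a+b))$ and integrate to get
\[
b_j=\frac{2}{\pi}\left[\frac{\sin((j-1)\pi/2)}{j-1}-\frac{\sin((j+3)\pi/2)}{j+3}\right]
\]
for $j\ge 2$. This expression exhibits the parity structure already used in the proofs of \cref{lem:F_D-witin-support} and \cref{claim:variance-evolution-recurrence}.
\begin{itemize}
\item For odd $j\ge3$, both $j-1$ and $j+3$ are even multiples of $\pi/2$ in the sine arguments, so both terms vanish and $b_j=0$.
\item For even $j=2k$, we have $\sin((2k-1)\pi/2)=(-1)^{k-1}$ and $\sin((2k+3)\pi/2)=(-1)^{k+1}=(-1)^{k-1}$. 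Combining the two fractions gives
\[
b_{2k}=(-1)^{k-1}\frac{2}{\pi}\cdot\frac{4}{(2k-1)(2k+3)}=(-1)^{k-1}\frac{8}{\pi(2k-1)(2k+3)} .
\]
\end{itemize}
This matches $a_k$ and the formula $b_j^2=\frac{64}{\pi^2(j-1)^2(j+3)^2}$ used earlier. As sanity checks, set $j=0$ to get $C_F=\frac{8}{3\pi}$ and $j=1$ to get $b_1=1$.

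Both claims now follow directly. From the explicit formula, $|b_j|\le \frac{8}{\pi(j-1)(j+3)}=O(1/j^2)$ for all $j\ge2$. For the tail, $\sum_{i\ge D}b_i^2\le\sum_{i\ge D}O(i^{-4})$, and comparing with the integral $\int_{D-1}^\infty x^{-4}\,dx$ gives $O(D^{-3})$.

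The only genuine care point is the endpoint bookkeeping in the sine evaluations at $\theta=\pi/2$. Those evaluations fix the sign and the parity pattern, and the earlier lemmas rely on that pattern, so they should be checked carefully. Everything else is a routine calculation.
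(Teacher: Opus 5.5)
Your proposal is correct and follows essentially the same route as the paper's proof: the substitution $x=2\cos\theta$, the reduction to $\frac{4}{\pi}\int_0^{\pi/2}\sin(2\theta)\sin((j+1)\theta)\,d\theta$, and the product-to-sum evaluation giving $b_j=\frac{8}{\pi}\frac{\sin((j-1)\pi/2)}{(j-1)(j+3)}$ for $j\ge 2$. The bound $|b_j|\le \frac{8}{\pi(j-1)(j+3)}$ and the tail estimate $\sum_{i\ge D} i^{-4}=O(D^{-3})$ then match the paper, and your explicit parity split is just a more detailed version of the same step; the only nit is that your $j=1$ sanity check needs the limit $\sin((j-1)\pi/2)/(j-1)\to\pi/2$ (or the direct integral), since the closed form is $0/0$ there.
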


\begin{proof}
Since $\{P_j\}_{j\ge 0}$ forms an orthonormal basis on $[-2,2]$ with respect to the semicircle measure
\[
d\mu(x)=\frac{1}{2\pi}\sqrt{4-x^2}\,dx,
\]
we have for all $j\ge 2$,
\[
b_j=\int_{-2}^2 F(x)\,P_j(x)\,d\mu(x).
\]

Make the change of variables $x=2\cos\theta$, $\theta\in[0,\pi]$. Then
\[
dx=-2\sin\theta\,d\theta,\quad
\sqrt{4-x^2}=2\sin\theta,\quad
P_j(2\cos\theta)=\frac{\sin((j+1)\theta)}{\sin\theta}.
\]
It follows that
\[
d\mu(x)=-\frac{2}{\pi}\sin^2\theta\,d\theta,
\]
and hence
\[
b_j
=
\frac{2}{\pi}\int_0^\pi F(2\cos\theta)\sin\theta\,\sin((j+1)\theta)\,d\theta.
\]

By definition of $F$, we have
\[
F(2\cos\theta)=
\begin{cases}
4\cos\theta,& \theta\in[0,\pi/2],\\
0,& \theta\in[\pi/2,\pi].
\end{cases}
\]
Thus
\[
b_j
=
\frac{8}{\pi}\int_0^{\pi/2}\cos\theta\,\sin\theta\,\sin((j+1)\theta)\,d\theta.
\]
Using $2\sin\theta\cos\theta=\sin(2\theta)$,
\[
b_j
=
\frac{4}{\pi}\int_0^{\pi/2}\sin(2\theta)\sin((j+1)\theta)\,d\theta.
\]
Applying $\sin A\sin B=\tfrac12(\cos(A-B)-\cos(A+B))$, we obtain
\[
b_j
=
\frac{2}{\pi}\int_0^{\pi/2}
\bigl(\cos((j-1)\theta)-\cos((j+3)\theta)\bigr)\,d\theta.
\]
Evaluating the integral gives
\[
b_j
=
\frac{2}{\pi}
\left(
\frac{\sin((j-1)\pi/2)}{j-1}
-
\frac{\sin((j+3)\pi/2)}{j+3}
\right).
\]
Since\[
\sin\!\left(\frac{(j+3)\pi}{2}\right)
=
\sin\!\left(\frac{(j-1)\pi}{2}\right),
\]
we have\[
b_j
=
\frac{2}{\pi}\sin\!\left(\frac{(j-1)\pi}{2}\right)
\left(
\frac{1}{j-1}-\frac{1}{j+3}
\right)
=
\frac{8}{\pi}\,
\frac{\sin((j-1)\pi/2)}{(j-1)(j+3)}.
\]

Hence, for all $j\ge 2$,
\[
|b_j|
\le
\frac{C}{(j-1)(j+3)}
=
O\!\left(\frac{1}{j^2}\right).
\]
Combining the above gives us
\[
\sum_{i\ge D} b_i^2
\le
C\sum_{i\ge D}\frac{1}{i^4}
=
O\!\left(\frac{1}{D^3}\right)\,.
\]
for a sufficiently large constant $D$.
\end{proof}

\subsection{\texorpdfstring{$\ell_1$}{l1} Bound for Normalized Coefficients}
\begin{lemma}
	Let $c(\tau)$ be the normalized coefficient of shape $\tau$ from the iterative scheme, i.e., $c(\tau) = c_\tau \cdot \sqrt{n}^{|V(\tau)|-1}$. For inner truncation threshold $D$, we have \[
	\sum_{\tau\in \calB} |c(\tau) | \leq C^{D^D}
	\] 
	for some constant $C>1$.
\end{lemma}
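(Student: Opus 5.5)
The plan is to bound the $\ell_1$ norm by the $\ell_2$ norm and a count of shapes. By Cauchy--Schwarz,
\[
\sum_{\tau\in\calB}|c(\tau)| \leq \sqrt{|\calB|}\cdot\Big(\sum_{\tau\in\calB}c(\tau)^2\Big)^{1/2}.
\]
So the statement reduces to two facts. The first is that the normalized variance is at most $1$. The second is that $|\calB|\leq C'^{D^D}$ for an absolute constant $C'$.

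\textbf{Variance.} I would rerun the recurrence argument of \cref{claim:variance-evolution-recurrence} in the truncated setting. With inner truncation at $D$, the update becomes $S_{i+1}=S_0+\sum_{j=2}^{D}b_j^2 S_i^j$, where $S_0=(1+c_\gamma)^2C_F^2$. By monotonicity, $S_i$ stays below the fixed point. Since $c_\gamma$ is chosen in \cref{lem:truncated-var} exactly so that the final variance is $1$, we get $\sum_\tau c(\tau)^2\leq 1$. Even without that normalization, the inductive bound $S_{i+1}\leq (1+c_\gamma)^2C_F^2+\sum_{j\ge2}b_j^2\le 2$ holds for small $c_\gamma$, and this already suffices for a bound of the form $C^{D^D}$.

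\textbf{Size of shapes.} I would show by induction on the iteration level $i$ that every $\tau\in\calB(\widetilde Q_i)$ has a backbone of length at most $D^i$, hence at most $D^i+1$ vertices. For $\fp_1$ the backbone length is $1$. A new shape $\bar\beta$ with $\beta=\tau_1\circ\cdots\circ\tau_j$ and $j\leq D$ has backbone length equal to the sum of the constituent lengths, which is at most $D\cdot D^{i}=D^{i+1}$. Taking $t^*=D$ gives at most $V_{\max}:=D^D+1$ vertices per shape.

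\textbf{Counting.} A backbone-correction shape on $V$ vertices is determined by three pieces of data:
\begin{enumerate}
\item its Hamiltonian cycle, which we may label canonically $w_0,\dots,w_{V-1}$ with the boundary edge $\{w_0,w_{V-1}\}$;
\item a non-crossing set of chords of this cycle;
\item the orientation and boundary placement.
\end{enumerate}
Non-crossing chord sets of a $V$-gon are dissections of a polygon, and there are at most $8^{V}$ of them (for instance, encode each vertex's chords via a balanced-parenthesis word of length at most $2(2V-3)$). By the remark on the form of $c(\tau)$, the boundary is just a pair of adjacent vertices on the cycle, giving at most $2V$ choices. Summing over $V\leq V_{\max}$ gives
\[
|\calB|\leq\sum_{V\le V_{\max}}2V\,8^V\leq 9^{V_{\max}+1}.
\]
Combined with the two facts above, this yields $\sum_\tau|c(\tau)|\leq 3^{D^D+2}\leq C^{D^D}$.

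The main obstacle is making the vertex-size induction match the actual truncation rules in \cref{def:inner-trunc}. If the truncation bounds the number of factors $j$ rather than the resulting size, the per-level multiplicative factor $D$ has to be checked against the number of levels $t^*$. If instead $t^*$ and $D$ are taken independently, one gets $D^{t^*}$ in place of $D^D$, and I would state the bound with that exponent. Everything else is routine once the variance is known to be $O(1)$.
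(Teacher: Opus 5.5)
\textbf{Verdict: correct, but by a different route.} Your argument proves the lemma, and its key step differs from the paper's proof of this statement.

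\textbf{The paper's route.} The paper does not use the variance here. It bounds each coefficient pointwise, $|c(\tau)|\le 1$. This is immediate from $c(\bar\beta)=b_j\prod_t c(\tau_t)$, $|b_j|<1$ for $j\ge 2$, and $(1+c_\gamma)C_F<1$ for the small $c_\gamma$ in question. It then multiplies by the number of backbone-correction shapes on at most $D^D(+1)$ vertices from \cref{lem:backbone-count}, giving $\sum_\tau|c(\tau)|\le|\calB|\le 9^{D^D+1}$.

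\textbf{Your route.} You instead combine Cauchy--Schwarz with the exact normalization $S_{t^*}=1$ from \cref{lem:truncated-var}, which gives $\sqrt{|\calB|}\le 3^{D^D+2}$. This is exactly the estimate the paper itself uses later, in the proof of \cref{claim:final-parameter-choice-for-eps-spec}. The smaller base in the exponent does not matter for a statement of the form $C^{D^D}$.

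\textbf{Trade-offs.} The pointwise route has the advantage of not depending on how $c_\gamma$ is chosen, while yours needs the variance normalization. Your fallback remark is not right as written. The hypothesis $S_i\le 2$ only gives $S_i^j\le 2^j$, so the displayed bound $S_{i+1}\le a+\sum_{j\ge2}b_j^2$ actually needs $S_i\le 1$. If you want to avoid the normalization, the pointwise bound $|c(\tau)|\le 1$ is the clean substitute. This does not affect your main line, since $S_i$ is nondecreasing in $i$ and $S_{t^*}=1$.

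\textbf{What your version adds.} Your explicit induction, showing backbone length at most $D^{i}$ at level $i$, is a genuine improvement in care. It shows that the vertex bound is really $D^{t^*}+1$. So the paper's phrase ``regardless of the outer iteration level'' tacitly relies on $t^*\le D$, which does hold for the final choice $t^*=D$.
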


\begin{proof}
	We make the following observations. Regardless of the outer iteration level, with inner truncation at degree $D$, our final shape has size bound $D_V\leq D^D$ vertices. Moreover, it suffices for us to bound the number of such shapes as $|c_\tau |\leq 1$ for all $\tau$. We apply~\cref{lem:backbone-count} to bound the number of backbone-correction shapes with at most $D^D$ vertices.
\end{proof}

\begin{remark}
	$C^{D^D} \leq n^\delta$ for any $\delta>0$ provided $D=O_{C,\delta}( \frac{\log \log n}{\log \log \log n}  )$.
\end{remark}

\begin{lemma}[Counting backbone-correction shapes] \label{lem:backbone-count}
There exists an absolute constant $C>1$ such that the number of backbone-correction shapes with at most $T$ vertices is at most $C^T$.
\end{lemma}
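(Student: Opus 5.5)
We will encode each backbone-correction shape with at most $T$ vertices as a short word over a fixed finite alphabet, with an injective encoding, and count the words. By \cref{def:backbone-correction}, a backbone-correction shape $\tau$ is determined by three pieces of data: its number of vertices $l+1\le T$; the labeling $w_0,\ldots,w_l$ along the backbone, with $U_\tau=\{w_0\}$ and $V_\tau=\{w_l\}$; and the set of non-backbone edges (chords). Because the boundary edge $\{w_0,w_l\}$ is always present, the backbone path together with this edge forms a Hamiltonian cycle $w_0w_1\cdots w_lw_0$. The chords are then a set of pairwise non-crossing chords of a convex $(l+1)$-gon whose vertices are labeled in cyclic order. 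The degenerate case $\tau=\fp_1$ (two vertices, one edge) contributes only one shape.

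\textbf{Step 1.} We reduce to counting non-crossing chord sets. Since the vertex ordering and the boundary are fixed by the labeling, the number of backbone-correction shapes with exactly $m=l+1$ vertices equals the number of sets of pairwise non-crossing diagonals of a labeled convex $m$-gon. Here "non-crossing" is the \textbf{(No Crossings)} condition, and a polygon side can never cross anything, so every such set is allowed.

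\textbf{Step 2.} We bound the number of non-crossing diagonal sets by an exponential. We use a bijection with bracket words. Traverse the vertices $w_0,\ldots,w_l$ in order. At each vertex, record an opening bracket for each chord going to a later vertex and a closing bracket for each chord coming from an earlier vertex, with all closings written before openings. Also insert a separator symbol between consecutive vertices.

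Non-crossingness means the chords nest like parentheses, so the bracket structure is a well-formed parenthesization. Given the separators, each closing bracket is matched to the most recent unmatched opening bracket, and this matching recovers the chords uniquely. A triangulated outerplanar graph on $m$ vertices has at most $2m-3$ edges, as the paper already notes in the proof of \cref{thm:formal-equiv-cheby-shape}. Hence the word uses at most $2(m-3)$ brackets and $m-1$ separators, so its length is $O(m)$ over a three-letter alphabet. This gives at most $3^{3m}$ chord sets.

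A cleaner route is also available. Every non-crossing diagonal set is a subset of the diagonals of some triangulation. There are Catalan-many $C_{m-2}\le 4^m$ triangulations, and each has $m-3$ diagonals, giving at most $4^m2^{m}=8^m$ sets. We will use whichever bound is cleanest.

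\textbf{Step 3.} We sum over $m\le T$: $\sum_{m\le T}8^m\le 2\cdot 8^T$, so $C=16$ suffices.

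The only delicate point is verifying that the correspondence in Step 1 is injective. Two different chord sets on the same labeled backbone give different shapes, and we count shapes only up to this labeling. Overcounting is harmless for an upper bound, so even if distinct encodings yield isomorphic shapes, the bound still holds. The main thing to check is therefore that the "subset of a triangulation" argument really covers every non-crossing set. This holds because any non-crossing diagonal set extends greedily to a maximal one, and a maximal non-crossing diagonal set of a convex polygon is a triangulation.
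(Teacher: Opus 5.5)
Your proposal is correct, and the route you actually use in Step 3 is exactly the paper's argument: extend each non-crossing diagonal set of the labeled $m$-gon to one of the $C_{m-2}\le 4^m$ triangulations, take subsets of its $m-3$ diagonals to get at most $8^m$ chord sets, and sum over $m\le T$ (the paper bounds the sum by $9^T$ rather than your $16^T$). The bracket-word encoding is a valid but unnecessary alternative. Treating the two-vertex shape separately is a small improvement, since the paper's intermediate bound $C_{m-2}2^{m-3}$ equals $1/2$ at $m=2$.
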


\begin{proof}
Fix $m\le T$. A backbone-correction shape on $m$ vertices is uniquely determined by the collection of arcs not part of the backbone path. Thus, it suffices to count noncrossing arc sets on $m$ ordered vertices. 

We next appeal to the connection between noncrossing partitions and Catalan numbers. Place the vertices on a convex $m$-gon. Any noncrossing arc set can be extended to a triangulation of the polygon, and every triangulation has exactly $m-3$ diagonals. Therefore the number $N_m$ of such arc sets satisfies
\[
N_m \le \mathrm{C}_{m-2}\,2^{m-3},
\]
where $\mathrm{C}_{m-2}$ is the $(m-2)$-nd Catalan number.
Using $\mathrm{C}_{m-2}\le 4^{m-2}$, we obtain
\[
N_m \le 2^{m-3}4^{m-2}\le 8^m.
\]
Summing over $m\le T$ gives
\[
\sum_{m=1}^T N_m \le \sum_{m=1}^T 8^m \le 9^T.
\]
Hence the total number of backbone-correction shapes on at most $T$ vertices is at most $C^T$ for $C=9$.
\end{proof}

%% file: trunc-convergence.tex
\subsection{Truncated Variance}
\label{sec:appendix-variance-bound}

\begin{lemma}[Exact normalization and early termination]
\label{lem:truncated-var}
Fix $t^*\geq 2$ and $D\geq 2$. For $c_\gamma\geq0$, run the degree-$D$
truncated recursion initialized by
\[
Q_0^{(c_\gamma)}
=
(1+c_\gamma)C_F\widetilde M_{P_1},
\]
and let
\[
S_i(c_\gamma)
\coloneqq
\sum_{\tau\in\calB(Q_i^{(c_\gamma)})}c(\tau)^2.
\]
There exists a unique $c_\gamma=c_\gamma(t^*,D)\geq0$ such that
\[
S_{t^*}(c_\gamma)=1.
\]
Moreover,
\[
c_\gamma(t^*,D)
=
O\!\left(
\frac{1}{D^3}+\frac{1}{(t^*)^3}
\right).
\]

For this choice of $c_\gamma$, the squared coefficient mass of the update
omitted by terminating at level $t^*$ satisfies
\[
\sum_{j=2}^{D}b_j^2
\left(
S_{t^*}(c_\gamma)^j-S_{t^*-1}(c_\gamma)^j
\right)
=
O\!\left(
\frac{1}{D^3}+\frac{1}{(t^*)^3}
\right).
\]
\end{lemma}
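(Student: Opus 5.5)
The plan is to reduce everything to the scalar recursion for the truncated variances, exactly as in \cref{claim:variance-evolution-recurrence}. Write $T_D \coloneqq \sum_{j>D} b_j^2$, which is $O(D^{-3})$ by \cref{lem:bi-square-tail}.

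\textbf{Step 1: the truncated recursion.} The same counting used in \cref{claim:variance-evolution-recurrence} applies with the inner truncation: sum over $j$-way concatenations and telescope in $i$. This gives
\[
S_{i+1}(c_\gamma) = \Phi_{D,c_\gamma}(S_i(c_\gamma)), \qquad \Phi_{D,c}(x) \coloneqq (1+c)^2C_F^2 + \sum_{j=2}^{D} b_j^2 x^j, \qquad S_0(c) = (1+c)^2C_F^2.
\]
Here the initialization scales the single coefficient $c(\fp_1)$ by $(1+c_\gamma)$.

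\textbf{Step 2: existence and uniqueness of $c_\gamma$.} Each $S_i(c)$ is a composition of polynomials with nonnegative coefficients in $(1+c)^2$. Hence $c\mapsto S_{t^*}(c)$ is continuous and strictly increasing on $[0,\infty)$, and it tends to $\infty$. At $c=0$, induction as in \cref{claim:variance-evolution-recurrence} gives $S_{t^*}(0)\le C_F^2+\sum_{j\le D}b_j^2 = 1-T_D<1$. The intermediate value theorem together with monotonicity then gives a unique root $c_\gamma\ge 0$.

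\textbf{Step 3: size of $c_\gamma$.} First I bound the deficit at $c=0$. Using $\sum_{j\ge 2} b_j^2 = 1-C_F^2$ and $1-x^j\le j(1-x)$, we get
\[
1-S_{i+1}(0) = T_D + \sum_{j=2}^D b_j^2\bigl(1-S_i(0)^j\bigr) \le T_D + \beta\bigl(1-S_i(0)\bigr),
\]
with $\beta=16/\pi^2-1<1$. Iterating gives
\[
1-S_{t^*}(0)\le \frac{T_D}{1-\beta}+\beta^{t^*}(1-C_F^2) = O\bigl(D^{-3}+(t^*)^{-3}\bigr),
\]
since $\beta^{t^*}$ decays exponentially.

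Next I need a lower bound on the growth rate in $c$. Differentiating the recursion,
\[
\frac{dS_{i+1}}{dc} = 2(1+c)C_F^2 + \Phi_{D,c}'(S_i)\frac{dS_i}{dc}.
\]
The second term is nonnegative, so by induction $\frac{dS_{t^*}}{dc}\ge 2C_F^2$. The mean value theorem then yields
\[
c_\gamma \le \frac{1-S_{t^*}(0)}{2C_F^2} = O\bigl(D^{-3}+(t^*)^{-3}\bigr).
\]
This derivative lower bound, which converts the deficit into a bound on $c_\gamma$, is the only step needing care. I expect it to be the main (mild) obstacle, and the monotone chain rule above handles it.

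\textbf{Step 4: the omitted mass.} The omitted sum is exactly $S_{t^*+1}-S_{t^*}$ computed via $\Phi_{D,c_\gamma}$. Substituting $S_{t^*}=1$,
\[
S_{t^*+1} = (1+c_\gamma)^2C_F^2 + \sum_{j=2}^{D} b_j^2 = (1+c_\gamma)^2C_F^2 + 1 - C_F^2 - T_D .
\]
Therefore the omitted mass equals
\[
\bigl((1+c_\gamma)^2-1\bigr)C_F^2 - T_D,
\]
whose absolute value is $O(c_\gamma+T_D)=O\bigl(D^{-3}+(t^*)^{-3}\bigr)$.
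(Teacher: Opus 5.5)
Your proof is correct, and Steps 1--3 are essentially the paper's argument. The paper parametrizes by the initial variance $a=(1+c_\gamma)^2C_F^2$ and proves $S_i(a')-S_i(a)\ge a'-a$ by the same induction. That is your bound $\frac{dS_{t^*}}{dc}\ge 2(1+c)C_F^2$ in other coordinates, and it gives the same estimate $c_\gamma\le (1-S_{t^*}(0))/(2C_F^2)$.

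Step 4 is where you genuinely diverge. The paper bounds the omitted mass by $\sum_{j=2}^{D}b_j^2(1-S_{t^*-1}^j)\le\beta\,(1-S_{t^*-1}(c_\gamma))$. It then uses monotonicity in the initialization, $S_{t^*-1}(c_\gamma)\ge S_{t^*-1}(0)$, together with the deficit bound at level $t^*-1$, and does not use the upper bound on $c_\gamma$. You instead use $S_{t^*}=1$ to get the exact identity that the omitted mass equals $\Phi_{D,c_\gamma}(1)-1=\bigl((1+c_\gamma)^2-1\bigr)C_F^2-T_D$. The bound then follows directly from Step 3 and $T_D=O(D^{-3})$.

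Your identity is shorter and more informative. Since $S_i$ is nondecreasing in $i$, the omitted mass is nonnegative, which forces $(2c_\gamma+c_\gamma^2)C_F^2\ge T_D$. Hence $c_\gamma=\Omega(D^{-3})$, so the $D$-dependence of the bound on $c_\gamma$ is sharp. The paper's route, by contrast, keeps the omitted-mass bound independent of the estimate on $c_\gamma$ and relies only on the contraction constant $\beta$.
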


\begin{proof}
Parameterize the process by its initial variance
\[
a\coloneqq(1+c_\gamma)^2C_F^2,
\]
and write $S_i(a)$ for the resulting variance, with
$S_{-1}(a)=0$ and $S_0(a)=a$. The variance-increment recurrence,
restricted to degrees $j\leq D$, is
\[
S_{i+1}(a)-S_i(a)
=
\sum_{j=2}^{D}b_j^2
\left(S_i(a)^j-S_{i-1}(a)^j\right).
\]
Telescoping gives
\[
S_{i+1}(a)
=
a+\sum_{j=2}^{D}b_j^2S_i(a)^j.
\tag{E.1}\label{eq:truncated-variance-recurrence}
\]

First consider the original initialization $a_0=C_F^2$. Set
\[
E_i\coloneqq1-S_i(a_0),
\qquad
R_D\coloneqq\sum_{j>D}b_j^2.
\]
Since
\[
C_F^2+\sum_{j\geq2}b_j^2=1,
\]
we have
\[
E_{i+1}
=
R_D+\sum_{j=2}^{D}b_j^2
\left(1-S_i(a_0)^j\right).
\]
Using $1-x^j\leq j(1-x)$ for $x\in[0,1]$,
\[
E_{i+1}
\leq
R_D+\beta E_i,
\qquad
\beta\coloneqq\sum_{j\geq2}j\,b_j^2<1.
\]
Here $\beta<1$ follows from \eqref{eq:variance-contraction-constant}. Consequently,
\[
E_i
\leq
\beta^i(1-C_F^2)+\frac{R_D}{1-\beta}.
\]
By \cref{lem:bi-square-tail}, $R_D=O(D^{-3})$, and since
$\beta<1$, $\beta^i=O(i^{-3})$. Thus
\[
1-S_i(a_0)
=
O\!\left(
\frac{1}{D^3}+\frac{1}{i^3}
\right).
\tag{E.2}\label{eq:truncated-variance-deficit}
\]

For fixed $i$, the map $a\mapsto S_i(a)$ is continuous and strictly
increasing. Moreover, induction in
\cref{eq:truncated-variance-recurrence} gives, for $a'\geq a$,
\[
S_i(a')-S_i(a)\geq a'-a.
\]
Let
\[
\delta_{t^*,D}\coloneqq1-S_{t^*}(a_0).
\]
Then
\[
S_{t^*}(a_0)=1-\delta_{t^*,D},
\qquad
S_{t^*}(a_0+\delta_{t^*,D})\geq1.
\]
Hence there is a unique
\[
a_{t^*,D}
\in[a_0,a_0+\delta_{t^*,D}]
\]
such that $S_{t^*}(a_{t^*,D})=1$.

Define
\[
c_\gamma(t^*,D)
\coloneqq
\frac{\sqrt{a_{t^*,D}}}{C_F}-1.
\]
Then
\[
c_\gamma(t^*,D)
\leq
\frac{\delta_{t^*,D}}{2C_F^2}
=
O\!\left(
\frac{1}{D^3}+\frac{1}{(t^*)^3}
\right).
\]

It remains to bound the omitted update. Since
$a_{t^*,D}\geq a_0$, monotonicity in the initial variance gives
\[
S_{t^*-1}(a_{t^*,D})
\geq
S_{t^*-1}(a_0).
\]
Also, the variance process is nondecreasing in $i$ and
$S_{t^*}(a_{t^*,D})=1$. Therefore,
\[
0
\leq
1-S_{t^*-1}(a_{t^*,D})
\leq
1-S_{t^*-1}(a_0)
=
O\!\left(
\frac{1}{D^3}+\frac{1}{(t^*)^3}
\right).
\]
Finally,
\begin{align*}
\sum_{j=2}^{D}b_j^2
\left(S_{t^*}^j-S_{t^*-1}^j\right)
&=
\sum_{j=2}^{D}b_j^2
\left(1-S_{t^*-1}^j\right)\\
&\leq
\beta\left(1-S_{t^*-1}\right)\\
&=
O\!\left(
\frac{1}{D^3}+\frac{1}{(t^*)^3}
\right).
\end{align*}
\end{proof}

%% file: previous_attempt.tex
\section{Initial Attempt with a Constant Improvement} \label{sec:previous-attempt-appendix}
\paragraph{Explicit Factorization with Value $\approx 1.388$.}
Observe that 
\begin{align*}
\left(aId + b\cmp_{P_1} + c\cmp_{P_2} + d\cmp_{\bar{P_2}}\right)^2 &\approx (a^2 + b^2 + c^2 + d^2)Id + (2ab + 2bc)\cmp_{P_1} \\
&+(b^2 + c^2 + 2ac)\cmp_{P_2} +2ad\cmp_{\bar{P_2}} + 2bc\cmp_{P_3}\\
&+ bd\left(\cmp_{P_1 \circ \bar{P}_2} + \cmp_{\bar{P}_2 \circ P_1}\right) + cd \left(\cmp_{P_2 \circ \bar{P}_2} + \cmp_{\bar{P}_2 \circ P_2}\right)\\
&+{d^2}\frac{\cm_{\bar{P}_2 \circ \bar{P}_2}}{n^2}
\end{align*}
In order to satisfy the clique constraints, we need to add the following correction.
\begin{align*}
&(b^2 + c^2 + 2ac - 2ad)\cmp_{\bar{P}_2} + 2bc\cmp_{\bar{P}_3} + bd\left(\cmp_{\overline{P_1 \circ \bar{P}_2}} + \cmp_{\overline{\bar{P}_2 \circ P_1}}\right) \\
&+ cd \left(\cmp_{\overline{P_2 \circ \bar{P}_2}} + \cmp_{\overline{\bar{P}_2 \circ P_2}}\right)
+{d^2}\cmp_{\overline{\bar{P}_2 \circ \bar{P}_2}}
\end{align*}
Our main theorem shows that the shapes in this correction are freely independent and that the norm of this correction is approximately two times the square root of the sum of squares of the coefficients of these shapes.

Thus, to make the overall matrix PSD, we need to add approximately 
\[
2\sqrt{(b^2 + c^2 + 2ac - 2ad)^2 + 4{b^2}{c^2} + 2{b^2}{d^2} + 2{c^2}{d^2} + d^4}Id
\]
Thus, the overall ratio we obtain is approximately
\[
\frac{2ab + 2bc}{a^2 + b^2 + c^2 + d^2 2\sqrt{(b^2 + c^2 + 2ac - 2ad)^2 + 4{b^2}{c^2} + 2{b^2}{d^2} + 2{c^2}{d^2} + d^4}}
\]

This achieves a value of $0.7204$ by numerical optimization. One maximizing direction is approximately
\[
(a,b,c,d) \propto (0.84506,\;0.51997,\;-0.00614,\;0.12437).
\]

Equivalently, setting $b=1$, an optimizer is
\[
(a,b,c,d) \approx (1.6252,\;1,\;-0.0118,\;0.2392),
\]
which attains the same value.
This corresponds to an upper bound of value $\approx 1.388$.

%% file: lovaszequivalence.tex
\section{Equivalence of \Lovasz-Theta and Degree $2$ Sum of Squares for Independent Set}\label{sec:lovaszequivalence}
To confirm that $\vartheta(G)$ is the same as the value of the degree $2$ sum of squares relaxation for independent set, we use the following semidefinite program for $\vartheta(G)$ which is dual to the semidefinite program in \cref{def:dual-sdp}:

	\[ \max \tr(BJ) \quad  \text{ s.t. } B \succeq 0, \tr(B) = 1, \text{ and }B_{ij} =   0 \text{ whenever } \{i,j\} \in E(G)\]
	where $J$ is the all ones matrix.
\begin{lemma}
If $B$ is an optimal solution of the above SDP then letting $k$ be its objective value, for all $i \in [n]$, $\sum_{j=1}^{n}{B_{ij}} = kB_{ii}$.
\end{lemma}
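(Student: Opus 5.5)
We argue directly from optimality, combining a symmetrization step with a first-order (variational) condition. The objective is $\tr(BJ) = \mathbf{1}^\top B \mathbf{1}$. Write $B = \sum_{\ell} w_\ell w_\ell^\top$, or equivalently $B = V^\top V$ with columns $v_1,\ldots,v_n$. The edge constraints say $\langle v_i,v_j\rangle = 0$ for $\{i,j\}\in E(G)$, and $\tr(B)=\sum_i \|v_i\|^2 = 1$. The claimed identity $\sum_j B_{ij} = kB_{ii}$ reads $\langle v_i, s\rangle = k\|v_i\|^2$, where $s=\sum_j v_j$ and $\|s\|^2 = k$.

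The plan is to consider perturbations that preserve feasibility and compare objective values. The natural family rescales a single vector. For $t$ near $1$ and a fixed index $i$, replace $v_i$ by $t v_i$ and renormalize the whole configuration by dividing by $\sqrt{1+(t^2-1)\|v_i\|^2}$. Scaling preserves orthogonality on edges, so the edge constraints still hold. Positive semidefiniteness is automatic since the new matrix is again a Gram matrix, and the trace is restored to $1$ by the normalization. The new objective is
\[
g(t)=\frac{\|s+(t-1)v_i\|^2}{1+(t^2-1)\|v_i\|^2}.
\]
This is differentiable in $t$, and since $t=1$ is an interior maximizer, $g'(1)=0$.

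Next we compute $g'(1)$. The numerator has derivative $2\langle s,v_i\rangle$ at $t=1$, and the denominator has derivative $2\|v_i\|^2$. With numerator value $k$ and denominator value $1$, this gives $g'(1)=2\langle s,v_i\rangle - 2k\|v_i\|^2$. Setting this to zero yields $\langle v_i,s\rangle = k\|v_i\|^2$, which is exactly $\sum_j B_{ij}=kB_{ii}$. The case $v_i=0$ is trivial: both sides vanish, since $B_{ij}=\langle v_i,v_j\rangle=0$ for all $j$.

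The main obstacle is making sure the perturbation is legitimate, which splits into two points. First, scaling a column of a Gram factorization corresponds to $B\mapsto DBD$ with $D=\mathrm{diag}(1,\ldots,t,\ldots,1)$, followed by a positive scalar rescaling, and this clearly preserves both the zero pattern and PSDness. Second, $t$ ranges over an open interval around $1$ (for instance $t>0$), so one-sided issues do not arise. With both points checked, no further machinery is needed.
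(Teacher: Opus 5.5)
Your proof is correct and is essentially the paper's argument: the paper also multiplies the $i$th row and column of $B$ by $1+\epsilon$, rescales the remaining rows and columns to keep $\tr(B)=1$, and sets the first-order change in $\tr(BJ)$ to zero. That is exactly your map $B\mapsto DBD/\tr(DBD)$ in a different parametrization, and your exact quotient $g(t)$ is slightly cleaner packaging, since it avoids the approximate bookkeeping for the compensating factor $\epsilon'\approx \epsilon B_{ii}/(1-B_{ii})$ and needs no special care when $B_{ii}=1$.
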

\begin{proof}
Consider what happens when we tweak $B$ by multipyling the $i$th row and column of $B$ by $1+\epsilon$ for some small $\epsilon > 0$. In order to maintain that $tr(B) = 1$, we need to multiply the other rows and columns of $B$ by $1-\epsilon'$ where $\epsilon' \approx \frac{B_{ii}}{\sum_{j \in [n] \setminus \{i\}}{B_{jj}}} = \frac{B_{ii}}{1-B_{ii}}$.

Let $s_i = \sum_{j \in [n] \setminus \{i\}}{B_{ij}}$, and observe that after applying this tweak to $B$, the change to $tr(BJ)$ is approximately 
\[
\epsilon\left(2B_{ii} + 2\left(1 - \frac{B_{ii}}{1-B_{ii}}\right)s_i - \frac{2B_{ii}}{1-B_{ii}}(k - 2s_i - B_{ii})\right) = \frac{\epsilon}{1-B_{ii}}\left(2s_i-2(k-1)B_{ii}\right).
\]
Since $B$ is an optimal solution of the SDP, we must have that $s_i = (k-1)B_{ii}$ and thus $\sum_{j=1}^{n}{B_{ij}} = s_i + B_{ii} = kB_{ii}$.
\end{proof}
Given an optimal solution to this SDP, we can obtain a solution to the SDP given by \cref{def:primal-sdp} by taking vectors $v_1,\ldots,v_n$ such that $\langle v_i, v_j\rangle = kB_{ij}$. We then take $v_0 = \frac{1}{k}\sum_{i=1}^{n}{v_i}$.

Conversely, given vectors $v_0,v_1,\ldots,v_n$ for the SDP given by \cref{def:primal-sdp}, observe that if $\sum_{i=1}^{n}{\langle v_i, v_0\rangle} = k$ then
\[
k^2 = \left(\sum_{i=1}^{n}{\langle v_i, v_0\rangle}\right)^2 = \left(\langle{\sum_{i=1}^{n}{v_i},v_0\rangle}\right)^2 \leq \langle{\sum_{i=1}^{n}{v_i},\sum_{j=1}^{n}{v_j}}\rangle
\]
so taking $B_{ij} = \frac{1}{k}\langle v_i, v_j\rangle$, we will have that $tr(BJ) \geq k$, $B \succeq 0$, and $tr(B) = 1$, as needed.
\begin{remark}
Note that this implies that for degree $2$ sum of squares, adding in the constraint that the independent set has size exactly $k$ and considering the maximum $k$ for which the resulting program is feasible gives the same answer as just maximizing the size of the independent set.
\end{remark}